\documentclass[12pt,amsfonts]{article}
\usepackage{mathrsfs}
\usepackage{amsfonts,amsmath,amssymb}
\usepackage{multirow}
\usepackage{longtable}
\usepackage{graphicx}
\usepackage{epstopdf}
\usepackage{enumerate}
\usepackage{stmaryrd}
\usepackage{hyperref}

\textwidth=6.5in \textheight=8.9in \topmargin-0.5in
\parskip 6pt

\oddsidemargin=0.1in \evensidemargin=0.1in
\newtheorem{theorem}{Theorem}[section]
\newtheorem{lemma}[theorem]{Lemma}
\newtheorem{proposition}[theorem]{Proposition}
\newtheorem{corollary}[theorem]{Corollary}

\newtheorem{construction}[theorem]{Construction}

\def\pbd#1,#2/{$\left(#1,\left\{5,#2^\star\right\}\!,1\right)$-PBD} 

\newenvironment{proof}{\noindent{\bf Proof:}}{\quad\hfill$\Box$\vspace*{2ex}}

\newcommand{\A}{{\mathcal A}}
\newcommand{\B}{{\mathcal B}}
\newcommand{\C}{{\mathcal C}}
\newcommand{\D}{{\mathcal D}}

\newcommand{\G}{{\mathcal G}}

\newcommand{\vu}{{u}}
\newcommand{\vv}{{v}}

\newcommand{\supp}{{\rm supp}}

\newcommand{\bbZ}{{\mathbb{Z}}}

\begin{document}

\title{Optimal Ternary Constant-Composition Codes with Weight Four and Distance Six}

\author{Hengjia Wei$^{\text{a,b}}$, Hui Zhang$^{\text{b}}$, Mingzhi Zhu$^{\text{b}}$ and Gennian Ge$^{\text{a,c,}}$\thanks{Corresponding author. Email address: gnge@zju.edu.cn.  Research supported by the National Natural Science Foundation of China under Grant No.~61171198 and  Grant No.~11431003, and Zhejiang Provincial Natural Science Foundation of China under Grant No.~LZ13A010001.}\\
  \footnotesize $^{\text{a}}$ School of Mathematical Sciences, Capital Normal University, Beijing, 100048, China\\
  \footnotesize $^{\text{b}}$ Department of Mathematics, Zhejiang University, Hangzhou 310027, Zhejiang, China\\
\footnotesize $^{\text{c}}$ Beijing Center for Mathematics and Information
Interdisciplinary Sciences,
Beijing, 100048, China.
}

\date{}
\maketitle

\vskip .1 in

\begin{abstract}
The sizes of optimal constant-composition codes of
weight three have been determined by Chee, Ge and Ling with  four
cases in doubt. Group divisible codes played an important role in
their constructions. In this paper, we study the problem of
constructing optimal ternary constant-composition codes with
Hamming weight four and minimum distance six.  The problem is solved with a small number of lengths undetermined.
The previously known results are those with code length no greater than 10.


\medskip
\noindent {{\it Key words and phrases\/}: Constant-composition codes, group divisible codes,
ternary codes}
\smallskip

\noindent {{\it AMS subject classifications\/}: Primary 05B05, 94B60.}
\smallskip
\end{abstract}

\section{Introduction}
Constant-composition codes (CCCs) are a special type of
constant-weight codes  (CWCs) which  are important in coding theory. The
class of constant-composition codes includes the important
permutation codes and has attracted recent interest due to their
numerous applications, such as in determining the zero error
decision feedback capacity of discrete memoryless channels
\cite{Telatar:1990}, multiple-access communications
\cite{D'yachkov:1984}, spherical codes for modulation
\cite{Ericson:1995}, DNA codes
\cite{King:2003,Milenkovic:2006,Chee:2008}, powerline communications
\cite{Chu:2004,Colbourn:2004}, frequency hopping \cite{Chu:2006}, frequency permutation arrays \cite{Huczynska:2006}, and
coding for bandwidth-limited channels \cite{Costello:2007}.

Systematic study began in late 1990's \cite{Bogdanova:1998,Svanstrom:1999,Bogdanova_Ocetarova:1998}. Today, various methods have been applied to
the problem of determining the maximum size of a
constant-composition code, such as computer search methods \cite{Bogdanova:2003}, packing designs
\cite{Ding:2005(2),Chu:2006,Ding:2006,Yan:2008,YinTang:2008,Wen:2009,Yan:2009,Huczynska:2010},
tournament designs \cite{YinYan:2008}, polynomials and nonlinear
functions
\cite{Ding:2005,Ding:2005(3),Chu:2006,Ding:2008,YDing:2008},
difference triangle sets \cite{Chee:2010}, PBD-closure methods
\cite{Yeow:2007,Yeow:2008} and some other methods \cite{Svanstrom:2000,Luo:2003}.

In the paper of Svanstr{\"o}m et al. \cite{Svanstrom:2002}, some
methods for providing upper and lower bounds on the maximum size
$A_3(n,d,\overline{w})$ of a ternary code with length $n$, minimum
Hamming distance $d$, and constant composition $\overline{w}$ were
presented, and a table of exact values or bounds for $A_3(n,d,\overline{w})$ in the
range $n\leq 10$ was also given there. Here we list the exact values
of $A_3(n,6,[2,2])$ and $A_3(n,6,[3,1])$ for codes with length no
greater than 10 in Table~\ref{presult}.

The sizes of optimal ternary CCCs
with weight three have been determined completely by Chee, Ge and
Ling in \cite{Yeow:2008}. The sizes of optimal ternary
CCCs with weight four and distance five
have been determined completely by Gao and Ge in \cite{Gao:2010}.

In this paper, we will concentrate our attention on ternary CCCs with  weight four and  distance six. We shall use group divisible codes as the main tools, which were first introduced
by Chee et al. in \cite{Yeow:2008} and have been
shown to be useful in recursive constructions of CWCs and CCCs. The
article is organized as follows. Section II provides some basic
definitions and results on combinatorial designs and coding
theory. In Sections III and IV, we focus on the determination for the exact values of $A_3(n,6,[2,2])$ and $A_3(n,6,[3,1])$ respectively. A brief conclusion is presented in Section V.


\begin{table}
\caption{Values of $A_3(n,6,[2,2])$ and $A_3(n,6,[3,1])$ for $n\leq 10$} \label{presult}
\centering
\begin{tabular}{c|c c c c c c c}
\hline
$n$              & $4$ & $5$ & $6$ & $7$ & $8$ & $9$ & $10$ \\
\hline
$A_3(n,6,[2,2])$ & $1$ & $1$ & $3$ & $3$ & $5$ & $9$ & $15$ \\
\hline
$A_3(n,6,[3,1])$ & $1$ & $1$ & $2$ & $2$ & $4$ & $6$ & $10$ \\
\hline
\end{tabular}
\end{table}

%
%
%

\section{Preliminaries}
\vskip 10pt

\subsection{Definitions and Notations}
The set of integers $\{i,i+1,\ldots,j\}$ is denoted by $[i,j]$. When $i=0$ and $j=q-1$, the set is also denoted by $I_q$. The ring $\bbZ/q\bbZ$ is denoted by $\bbZ_q$. The notation $\Lbag \cdot
\Rbag$ is used for multisets.

All sets considered in this paper are finite if not obviously
infinite. If $X$ and $R$ are finite sets, $R^X$ denotes the set of
vectors of length $|X|$, where each component of a vector $\vu\in
R^X$ has value in $R$ and is indexed by an element of $X$, that is,
$\vu=(\vu_x)_{x\in X}$, and $\vu_x\in R$ for each $x\in X$. A
{\it $q$-ary code of length $n$} is a set $\C \subseteq \bbZ_q^X$ for some
$X$ with size $n$. The elements of $\C$ are called {\it codewords}. The {\it support} of a vector $\vu \in \bbZ_q^X$, denoted $\supp(\vu)$, is the set $\{x \in X : \vu_x \not= 0\}$. The
{\it Hamming norm} or the {\it Hamming weight} of a vector $\vu\in\bbZ_q^X$ is
defined as $\|\vu\|=| \supp(\vu)|$. The distance
induced by this norm is called the {\it Hamming distance}, denoted $d_H$,
so that $d_H(\vu,\vv)=\| \vu-\vv \|$, for $\vu,\vv\in\bbZ_q^X$.
A code $\C$ is said to have {\it minimum distance $d$} if $d_H(\vu,\vv)\geq d$ for
all distinct $\vu,\vv\in\C$. The {\it composition} of a vector $\vu\in \bbZ_q^X$ is the tuple
$\overline{w}=[w_1,\ldots,w_{q-1}]$, where $w_j=|\{x\in X:
\vu_x=j\}|$. A code $\C$ is said to have {\it constant weight $w$} if every codeword in $\C$ has weight $w$, and is said to have {\it constant composition $\overline{w}$} if every codeword in $\C$ has composition $\overline{w}$. Hence, every constant-composition code is
a constant-weight code. We refer to a $q$-ary code of length $n$,
distance $d$, and constant weight $w$ as an $(n,d,w)_q$-code. If in addition, the code has constant composition $\overline{w}$, then it is referred to as an $(n,d,\overline{w})_q$-code.
The maximum size of an $(n,d,w)_q$-code is denoted as $A_q(n,d,w)$ and that of an $(n,d,\overline{w})_q$-code is denoted as $A_q(n,d,\overline{w})$.
Any $(n,d,w)_q$-code or $(n,d,\overline{w})_q$-code achieving the maximum size is called
{\it optimal}.

The following operations do not affect distance
and weight properties of an $(n,d,\overline{w})_q$-code:
\begin{enumerate}[(i)]
\item reordering the components of $\overline{w}$, and
\item deleting zero components of $\overline{w}$.
\end{enumerate}
Consequently, throughout this paper, we restrict our attention to
those compositions $\overline{w}=[w_1,\ldots,w_{q-1}]$, where
$w_1\geq \cdots\geq w_{q-1}\geq 1$.

Suppose $\vu\in\bbZ_q^X$ is a codeword of an
$(n,d,\overline{w})_q$-code, where
$\overline{w}=[w_1,\ldots,w_{q-1}]$. Let $w=\sum_{i=1}^{q-1} w_i$.
We can represent $\vu$ equivalently as a $w$-tuple $\langle
a_1, a_2, \ldots, a_w\rangle\in X^w$, where
\begin{align*}
\vu_{a_1}=\cdots=\vu_{a_{w_1}} &=1, \\
\vu_{a_{w_1+1}}=\cdots=\vu_{a_{w_1+w_2}} &=2, \\
\vdots \\
\vu_{a_{\sum_{i=1}^{q-2} w_i+1}} =\cdots=\vu_w &= q-1.
\end{align*}
Throughout this paper, we shall often represent codewords of
constant-composition codes in this form. This has the advantage of
being more succinct and more flexible in manipulation.

\subsection{Upper Bounds}
\vskip 10pt
For constant-composition codes, we have

\vskip 10pt

\begin{lemma}[Chee et al. \cite{Yeow:2008,Chee:2010}]\label{bound1}
\begin{equation*}
\begin{split}
& A_q(n,d,[w_1,\ldots,w_{q-1}])=\\
& \begin{cases} \binom{n}{\sum_{i=1}^{q-1}w_i}
\binom{\sum_{i=1}^{q-1}w_i}{w_1,\ldots,w_{q-1}}, & \text{if $d \leq 2$}\\
\lfloor n/w_1\rfloor, &  \text{if $d=2\sum_{i=1}^{q-1}w_i-1$ and $n$ is sufficiently large} \\
\left\lfloor \frac{n}{\sum_{i=1}^{q-1}w_i} \right\rfloor, & \text{if
$d=2
\sum_{i=1}^{q-1}w_i$}\\
1, & \text{if $d \geq 2\sum_{i=1}^{q-1}w_i+1$}.
\end{cases}
\end{split}
\end{equation*}
\end{lemma}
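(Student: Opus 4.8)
The plan is to treat the four regimes separately, all resting on a single identity that expresses the Hamming distance between two constant-composition codewords in terms of how their supports overlap. Write $w=\sum_{i=1}^{q-1}w_i$ for the common weight and fix distinct codewords $\vu,\vv$ of composition $\overline{w}$. Let $t=|\supp(\vu)\cap\supp(\vv)|$ and let $e$ be the number of coordinates $x$ in this common support with $\vu_x=\vv_x$. Counting the coordinates that contribute to $d_H(\vu,\vv)$ — each of the two symmetric-difference parts of the supports contributes $w-t$, while the common support contributes its $t-e$ disagreements — gives $d_H(\vu,\vv)=2w-t-e$, where $0\le e\le t\le w$. I would establish this identity first, since each case below is an immediate consequence of it.

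Next the two extreme regimes. For $d\ge 2w+1$: since $d_H\le 2w$ always, two distinct codewords cannot coexist, so the value is $1$. For $d\le 2$: the identity shows that any two distinct codewords of a fixed composition already satisfy $d_H\ge 2$ (if $t<w$ then $d_H\ge 2(w-t)\ge 2$; if $t=w$, a single disagreement would alter the symbol counts and contradict equal composition, so there are at least two disagreements), hence the \emph{entire} set of composition-$\overline{w}$ words is a legal code, of cardinality $\binom{n}{w}\binom{w}{w_1,\ldots,w_{q-1}}$ by choosing the $w$ nonzero coordinates and then distributing the symbols.

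The two middle regimes use the identity to pin down the support structure and then match it by a construction. For $d=2w$ the inequality $2w-t-e\ge 2w$ forces $t=0$, so the supports are pairwise disjoint; there are then at most $\lfloor n/w\rfloor$ of them, and disjoint $w$-blocks each filled with a fixed composition attain this for every $n\ge w$. For $d=2w-1$ the inequality $t+e\le 1$ forces $e=0$ (and $t\le1$) in every case: no two distinct codewords ever carry the \emph{same} nonzero symbol in the same coordinate. In particular the sets of coordinates holding the symbol $1$ are pairwise disjoint across the code, and since each codeword uses $w_1$ such coordinates, $|\C|\,w_1\le n$, whence $A_q\le\lfloor n/w_1\rfloor$ (symbol $1$ is chosen because $w_1$ is largest, giving the sharpest floor).

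The main obstacle is the lower bound in the regime $d=2w-1$: one must exhibit $\lfloor n/w_1\rfloor$ codewords that are pairwise disjoint on their $1$-coordinates, share at most one coordinate overall, and never repeat a symbol at a shared coordinate. This is exactly where ``$n$ sufficiently large'' enters, since room is needed to place the symbols $2,\dots,q-1$ while respecting these near-disjointness conditions, and it is genuinely design-theoretic rather than a counting argument; I would obtain it from a suitable packing or group-divisible design construction (the tools developed later in the paper), this being the only step not reducible to the distance identity.
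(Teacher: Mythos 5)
The paper does not prove this lemma at all: it is quoted verbatim from Chee et al.\ \cite{Yeow:2008,Chee:2010}, so there is no internal proof to compare against. Judged on its own terms, your argument is sound where it is complete. The identity $d_H(\vu,\vv)=2w-t-e$ is correct, and it does immediately dispose of three regimes: for $d\ge 2w+1$ no two distinct codewords can coexist; for $d\le 2$ the full constant-composition set works because a single disagreement between two words of equal composition is impossible (it would shift a symbol count); and for $d=2w$ the forced disjointness of supports gives $\lfloor n/w\rfloor$ with the obvious block construction. Your upper bound for $d=2w-1$ is also right: $t+e\le 1$ forces $e=0$, so for each fixed symbol $j$ the $j$-supports are pairwise disjoint, giving $|\C|\,w_j\le n$ for every $j$ and hence $|\C|\le\lfloor n/w_1\rfloor$ since $w_1$ is the largest part.

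The genuine gap is the matching lower bound in the $d=2w-1$ case, which you name but do not supply. This is not a small omission: the counting argument shows that to attain $\lfloor n/w_1\rfloor$ each coordinate must typically be reused by up to $q-1$ codewords carrying pairwise distinct symbols, while any two codewords may share at most one coordinate in total --- exactly the incidence structure of a resolvable-type packing, and the reason the statement carries the caveat ``$n$ sufficiently large'' (for small $n$ the true value can fall below $\lfloor n/w_1\rfloor$). Without exhibiting such a construction, or at least citing a specific packing theorem that yields one, the equality in that line of the lemma is unproven; everything else in your write-up is a complete and correct elementary derivation.
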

\vskip 10pt


The following Johnson-type bound has been proven for constant-composition codes.

\begin{theorem}[Svanstr{\"o}m et al. \cite{Svanstrom:2002}]\label{bound2}
\begin{equation*}
A_q(n,d,[w_1,\ldots,w_{q-1}]) \leq \frac{n}{w_1}A_q(n-1,d,[w_1-1,
\ldots ,w_{q-1}]).
\end{equation*}
\end{theorem}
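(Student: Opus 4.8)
The plan is to prove this by a standard double-counting (averaging) argument over the coordinate positions, the usual technique behind Johnson-type bounds. Let $\C \subseteq \bbZ_q^X$ be an optimal $(n,d,[w_1,\ldots,w_{q-1}])_q$-code, so that $|\C| = A_q(n,d,[w_1,\ldots,w_{q-1}])$ and $|X| = n$. For each position $x \in X$, I set $\C_x = \{\vu \in \C : \vu_x = 1\}$, the subcode of codewords taking the value $1$ in coordinate $x$.

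First I would count the pairs $(\vu, x)$ with $\vu \in \C$ and $\vu_x = 1$ in two ways. Since every codeword has composition $[w_1,\ldots,w_{q-1}]$, each $\vu$ contributes exactly $w_1$ such pairs, so the total is $w_1 |\C|$; grouping the count by position instead gives $\sum_{x \in X} |\C_x|$. Hence $\sum_{x \in X} |\C_x| = w_1 |\C|$.

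Next I would bound each $|\C_x|$ by deleting the coordinate $x$. Fix $x$ and restrict every codeword of $\C_x$ to $X \setminus \{x\}$, obtaining a code $\C_x'$ of length $n-1$. Because all codewords in $\C_x$ share the value $1$ at coordinate $x$, deletion removes exactly one symbol equal to $1$ from each, so every codeword of $\C_x'$ has composition $[w_1-1, w_2, \ldots, w_{q-1}]$. Moreover, for distinct $\vu, \vv \in \C_x$ we have $\vu_x = \vv_x$, so the deleted coordinate never contributed to $d_H(\vu,\vv)$; thus distances are unchanged, the restriction map is injective on $\C_x$, and $\C_x'$ still has minimum distance at least $d$. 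Therefore $\C_x'$ is an $(n-1,d,[w_1-1,\ldots,w_{q-1}])_q$-code and $|\C_x| = |\C_x'| \leq A_q(n-1,d,[w_1-1,\ldots,w_{q-1}])$.

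Combining the two steps gives $w_1 |\C| = \sum_{x \in X} |\C_x| \leq n\, A_q(n-1,d,[w_1-1,\ldots,w_{q-1}])$, and dividing by $w_1$ yields the claimed inequality. The argument is essentially routine; the only point requiring a moment's care — and the closest thing to an obstacle — is verifying that coordinate deletion preserves the minimum distance, which holds precisely because the codewords in each $\C_x$ agree on the deleted coordinate, so no distance contribution is lost.
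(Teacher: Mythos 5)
Your proof is correct. The paper does not actually prove this statement --- it is quoted as a known result of Svanstr\"{o}m, \"{O}sterg\r{a}rd and Bogdanova --- so there is no internal proof to compare against, but your argument is the standard Johnson-type averaging argument that establishes it: the double count $\sum_{x\in X}|\C_x|=w_1|\C|$ is right, and the key observation that deleting a coordinate on which all codewords of $\C_x$ agree preserves all pairwise Hamming distances (hence both injectivity and the minimum distance $d$) is exactly the point that makes the puncturing step work. The only cosmetic remark is that when $w_1=1$ the resulting composition has a zero entry, which the paper's conventions handle by deleting zero components, so nothing breaks.
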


As a consequence of Lemma~\ref{bound1} and Theorem~\ref{bound2}, we have
the following result.

\begin{corollary}
\begin{equation*}
\begin{split}
A_3(n,6,[2,2]) \leq \left\lfloor \frac{n}{2} \left\lfloor
\frac{n-1}{3} \right\rfloor \right\rfloor,\\
\end{split}
\end{equation*}
\begin{equation*}
A_3(n,6,[3,1]) \leq \left\lfloor \frac{n}{3} \left\lfloor
\frac{n-1}{3} \right\rfloor \right\rfloor.
\end{equation*}
\end{corollary}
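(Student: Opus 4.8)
The plan is to obtain both inequalities by applying the Johnson-type bound of Theorem~\ref{bound2} exactly once to peel off a single unit from the largest coordinate of the composition, thereby reducing each problem to one of the closed-form cases of Lemma~\ref{bound1}.

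First I would treat $A_3(n,6,[2,2])$. Applying Theorem~\ref{bound2} with $w_1=2$ gives $A_3(n,6,[2,2]) \le \frac{n}{2}\,A_3(n-1,6,[1,2])$, and after reordering the composition (operation (i)) this is $\frac{n}{2}\,A_3(n-1,6,[2,1])$. The crucial observation is that for the composition $[2,1]$ we have $\sum_i w_i = 3$, so $d=6=2\sum_i w_i$, which is precisely the third case of Lemma~\ref{bound1}; hence $A_3(n-1,6,[2,1]) = \lfloor (n-1)/3\rfloor$. Substituting yields $A_3(n,6,[2,2]) \le \frac{n}{2}\lfloor(n-1)/3\rfloor$.

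The bound for $A_3(n,6,[3,1])$ is entirely parallel: Theorem~\ref{bound2} with $w_1=3$ gives $A_3(n,6,[3,1]) \le \frac{n}{3}\,A_3(n-1,6,[2,1])$, and the same evaluation of the reduced code via the $d=2\sum_i w_i$ branch of Lemma~\ref{bound1} produces $A_3(n,6,[3,1]) \le \frac{n}{3}\lfloor(n-1)/3\rfloor$.

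Finally, since the left-hand side of each inequality is the size of a code and hence a nonnegative integer, I would invoke integrality to replace each right-hand side by its floor, giving the stated bounds. There is no serious obstacle here; the only point requiring care is the bookkeeping that after removing one unit the reduced composition $[2,1]$ lands exactly on the $d=2\sum_i w_i$ branch of Lemma~\ref{bound1} (rather than the ``sufficiently large $n$'' branch $d=2\sum_i w_i-1$), so that the evaluation $\lfloor (n-1)/3\rfloor$ is valid for all admissible $n$.
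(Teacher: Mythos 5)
Your argument is correct and is precisely the derivation the paper intends: the corollary is stated as an immediate consequence of Theorem~\ref{bound2} (applied once to reduce $[2,2]$ and $[3,1]$ to the composition $[2,1]$, after reordering) together with the branch $d=2\sum_i w_i$ of Lemma~\ref{bound1}, which evaluates $A_3(n-1,6,[2,1])=\lfloor (n-1)/3\rfloor$, followed by integrality of the code size. The paper omits these details entirely, so your write-up simply supplies the routine verification.
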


For most cases, we will show that the above Johnson-type bound is tight. However, for the cases $n\equiv 4,5,7\pmod{9}$ and $\overline{w}=[3,1]$, other arguments can give better bounds.

\begin{lemma}
\begin{equation*}
A_3(9t+4,6,[3,1]) \leq 9t^2+6t+1+\lfloor\frac{t}{4}\rfloor,\\
\end{equation*}
\begin{equation*}
A_3(9t+5,6,[3,1]) \leq 9t^2+7t+1+\lfloor\frac{t+1}{4}\rfloor,\\
\end{equation*}
\begin{equation*}
A_3(9t+7,6,[3,1]) \leq 9t^2+11t+3+\lfloor\frac{t+1}{2}\rfloor.
\end{equation*}
\end{lemma}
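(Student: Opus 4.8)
The plan is to improve the Johnson-type bound from the Corollary in the three residue classes by a more careful double-counting argument that exploits the rigidity of the composition $[3,1]$ under distance $6$. Recall that a codeword of an $(n,6,[3,1])_3$-code can be written as a $4$-tuple $\langle a_1,a_2,a_3,a_4\rangle$, where the three coordinates $a_1,a_2,a_3$ carry the symbol $1$ (call this the \emph{triple}) and $a_4$ carries the symbol $2$ (the \emph{special point}). Two codewords at distance less than $6$ must agree in many coordinates; since each codeword occupies only $4$ coordinates, the distance-$6$ condition forces that any two distinct codewords can share at most one common coordinate position carrying the same symbol, and in fact the triples must behave like blocks of a packing. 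The first step is to make this precise: show that the triples $\{a_1,a_2,a_3\}$, together with the incidence of the special point $a_4$, form a combinatorial structure in which each pair of points lies in at most one triple and the special-point assignments are tightly constrained.

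Next I would apply Theorem~\ref{bound2} (the Svanstr\"om Johnson-type bound) to peel off the special point. Deleting the coordinate structure associated with the symbol $2$ relates an $(n,6,[3,1])_3$-code to an $(n-1,6,[3])_3$-code, i.e. essentially a packing of triples with pairwise intersection at most one. The count $\lfloor \frac{n}{3}\lfloor\frac{n-1}{3}\rfloor\rfloor$ is exactly the packing-number bound, and it is tight only when a resolvable or near-resolvable structure exists. For $n\equiv 4,5,7\pmod 9$ the relevant divisibility fails, so the packing cannot be perfect, and one loses a controlled number of blocks. I would quantify this deficiency by a fractional/parity argument: writing $n=9t+r$ with $r\in\{4,5,7\}$, I would count incidences of points in triples and show that the number of triples is bounded by the stated quadratic $9t^2+\cdots$ plus a small correction term coming from the leftover points that cannot be resolved. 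The floor terms $\lfloor t/4\rfloor$, $\lfloor(t+1)/4\rfloor$, $\lfloor(t+1)/2\rfloor$ are exactly the residual slack allowed once the bulk of the points are optimally packed.

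The key calculation is to track, for each point $x\in X$, the number $r_x$ of codewords in which $x$ appears, and to bound $\sum_x r_x$ from the distance constraint. The distance-$6$ condition on composition $[3,1]$ means that two codewords sharing a point must differ in all three remaining positions in a way that rules out a second common symbol-position; summing the local degree bounds and combining with the global count $4|\C|=\sum_x r_x$ yields an inequality of the form $|\C|\le \frac{n}{3}\lfloor\frac{n-1}{3}\rfloor - \delta(t)$, where $\delta(t)$ is determined by the residue $r$. Comparing $\delta(t)$ against the integer part of the Johnson bound in each case produces the three stated expressions. The main obstacle will be the third step: establishing the exact value of the correction term $\delta(t)$. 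The naive packing bound overcounts, and the gain over the Johnson-type bound is small (a floor of a linear function of $t$), so the argument must be sharp enough to detect a low-order deficiency without slack. This requires a careful case analysis of how the leftover points distribute among partial parallel classes, treating $r\equiv 4,5,7$ separately because the arithmetic of $\lfloor (n-1)/3\rfloor$ and the resulting parity of the residual block-count differ in each case.
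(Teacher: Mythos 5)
Your plan circles the right double count ($4M=\sum_i(x_i+y_i)$ with a local degree bound at each position) but is missing the one idea that actually produces these improved bounds. The paper's proof fixes a position $i$ and splits its degree by symbol: $x_i$ counts codewords with a $1$ at $i$ and $y_i$ counts those with a $2$ at $i$. Distance $6$ forces the remaining supports of the $x_i$ codewords to be pairwise disjoint, giving $3x_i\le n-1$; it further forces each of the $y_i$ codewords to place its three $1$s outside the $2x_i$ positions where the $x_i$ codewords carry a $1$, and these triples of $1$s must also be pairwise disjoint, giving the asymmetric inequality $2x_i+3y_i\le n-1$. Combining the two gives, for $n=9t+4$, $x_i+y_i\le 3t+1+\lfloor x_i/3\rfloor\le 4t+1$, whence $4M\le(9t+4)(4t+1)$ and the stated bound; the other residues are identical. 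Nothing in your proposal produces the inequality $2x_i+3y_i\le n-1$: you track only the undifferentiated degree $r_x$, and the structural fact you do isolate (two codewords share at most one same-symbol position) yields only $3x_i\le n-1$, i.e.\ exactly the Johnson-type bound you are trying to improve. The intersection analysis is also incomplete: two codewords of composition $[3,1]$ at distance $6$ may share \emph{two} positions provided the symbols disagree at both, and it is precisely the interaction between a $1$-codeword and a $2$-codeword at position $i$ (they may overlap once more, but only at a position where one has a $1$ and the other a $2$) that drives the coefficient $2$ in $2x_i+3y_i\le n-1$.

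The packing-deficiency route of your second step is also off target quantitatively. For $n=9t+4$ the Johnson-type bound is $9t^2+7t+1$ while the claimed bound is $9t^2+6t+1+\lfloor t/4\rfloor$, a saving of roughly $3t/4$; for $n\equiv 7\pmod 9$ the saving is roughly $3t/2$. A gain that is linear in $t$ cannot come from a bounded number of leftover points in an imperfect near-resolvable packing; such global divisibility or parallel-class arguments shave off $O(1)$, not a per-position amount proportional to $n$. Moreover, peeling off the special point via Theorem~\ref{bound2} as you suggest (reordering the composition to put the $2$-coordinate first) gives $M\le n\cdot A_3(n-1,6,[3])=n\lfloor(n-1)/3\rfloor$, which is far weaker than the bound you start from, not stronger. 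The correction terms $\lfloor t/4\rfloor$, $\lfloor(t+1)/4\rfloor$, $\lfloor(t+1)/2\rfloor$ are not residual packing slack; they are simply what survives the final division by $4$ in $M\le\lfloor n(4t+1)/4\rfloor$ (and its analogues). As written, the proposal does not close.
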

\begin{proof}
Let $\cal C$ be a code of length $n$ with composition $[3,1]$, minimum distance six and $M$ codewords. Let ${\cal C}_i^{1}$ denote the set of codewords with $1$ in position $i$ and let ${\cal C}_i^{2}$ denote the set of codewords with $2$ in position $i$. Let $x_i,y_i$ be the sizes of ${\cal C}_i^{1}$ and ${\cal C}_i^{2}$ respectively. Count the number of nonzero symbols in the code in two ways to get $4M = \sum_i (x_i +y_i)$. We want to bound the value of $x_i + y_i$.
Consider a fixed position $i$ of $\cal C$. Since the minimum distance of $\cal C$ is six, the remaining nonzero symbols of the codewords in ${\cal C}_i^{1}$ should lie in different positions. So we have $3x_i \leq n-1$. Now, we consider a codeword in ${\cal C}_i^{2}$. Such a codeword cannot have its $1$s in any of the $2x_i$ positions where a codeword of ${\cal C}_i^{1}$ has a $1$, as this would
give a minimum distance smaller than six. Furthermore, the $1$s of two codewords in ${\cal C}_i^{2}$ cannot overlap, so we have $2x_i+3y_i\leq n-1$.

When $n=9t+4$, it follows from $2x_i+3y_i\leq 9t+3$ that $x_i+y_i\leq 3t+1+\lfloor \frac{x_i}{3}\rfloor$. Noting that $3x_i \leq 9t+3$, we can get $x_i+y_i\leq 4t+1$. Thus $M \leq \lfloor \frac {(9t+4)(4t+1)}{4}\rfloor$. We then obtain the first inequation. The other inequations can be obtained by similar arguments.
\end{proof}

In the rest of this paper, we use the notation $U(n,6,[2,2])=\left\lfloor \frac{n}{2} \left\lfloor \frac{n-1}{3}\right\rfloor \right\rfloor$ as the upper bound for the maximum size of an $(n,6,[2,2])_3$-code. For the composition $[3,1]$, when $n\equiv 0,1,2,3,6,8\pmod{9}$, we denote $U(n,6,[3,1])=\left\lfloor \frac{n}{3} \left\lfloor \frac{n-1}{3}\right\rfloor \right\rfloor$; for $n\equiv 4,5,7\pmod{9}$, write $n=9t+i$ with $i=4$, $5$ or $7$ and denote
\begin{equation*}
U(9t+4,6,[3,1]) = 9t^2+6t+1+\lfloor\frac{t}{4}\rfloor,\\
\end{equation*}
\begin{equation*}
U(9t+5,6,[3,1]) = 9t^2+7t+1+\lfloor\frac{t+1}{4}\rfloor,\\
\end{equation*}
\begin{equation*}
U(9t+7,6,[3,1]) = 9t^2+11t+3+\lfloor\frac{t+1}{2}\rfloor.
\end{equation*}

%


\subsection{Designs}

Our recursive construction is based on some combinatorial structures
in design theory. The most important tools are pairwise balanced
designs (PBDs) and group divisible designs (GDDs).

Let $K$ be a subset of positive integers and let $\lambda$ be a
positive integer. A {\it pairwise balanced design} ($(v, K, \lambda)$-PBD or $(K, \lambda)$-PBD of order $v$) is a pair
($X,\B$), where $X$ is a finite set ({\it the point set}) of cardinality
$v$ and $\B$ is a family of subsets ({\it blocks}) of $X$ that satisfy
(1) if $B\in \B$, then $|B|\in K$ and (2) every pair of distinct
elements of $X$ occurs in exactly $\lambda$ blocks of $\B$. The
integer $\lambda$ is the index of the PBD.

\begin{theorem}[Abel et al. \cite{AbelBG:2007}]
\label{PBD4-9} For any integer $v\geq 10$, a $(v,\{4,5,6,7,8,9\},1)$-PBD
exists with exceptions $v\in \{10,11,12,$ $14,15,18,19,23\}$.
\end{theorem}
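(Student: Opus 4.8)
The plan is to treat this as a standard pairwise balanced design existence problem and attack it with the usual two-pronged strategy: direct constructions for a finite set of small and medium ``ingredient'' orders, followed by recursive constructions that propagate existence to all large $v$. First I would record that there are no nontrivial necessary conditions: with $K=\{4,5,6,7,8,9\}$ one has $\gcd\{k-1:k\in K\}=1$ and $\gcd\{k(k-1):k\in K\}=2$, the latter being automatically satisfied since $v(v-1)$ is always even. Hence Wilson's asymptotic existence theorem already yields a $(v,K,1)$-PBD for every sufficiently large $v$, and the real content of the statement is to drive the threshold all the way down to $v=10$ and to pin down the exact exceptional set.

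The recursive engine I would use is Wilson's fundamental construction together with the ``filling in groups'' operation. Concretely, a truncated transversal design supplies a $K$-GDD: starting from a $\mathrm{TD}(9,m)$ (equivalently, seven MOLS of order $m$), retaining four groups at full size $m$ and truncating up to five of the remaining groups, every transversal meets between four and nine surviving groups, so all block sizes lie in $\{4,\dots,9\}$. Choosing $m$ and the truncation sizes appropriately then lets me realise $v=4m+\sum(\text{truncated sizes})$ across each residue class modulo the relevant small moduli. I would next fill the groups of this GDD: if each group of size $g$ is completed by a $(g,K,1)$-PBD the output is already a $(v,K,1)$-PBD, while when that is not directly possible I would adjoin one common new point $\infty$ and fill each group together with $\infty$ by a $(g+1,K,1)$-PBD. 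Because the admissible orders of $(v,K,1)$-PBDs are PBD-closed, it suffices to produce designs on a finite generating set and let this machinery do the rest.

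For the base layer I would construct the ingredient designs directly: many orders are handled by difference families over $\bbZ_v$ or by finite-geometry designs whose uniform line size already falls in $\{4,\dots,9\}$, such as small affine planes, and the few awkward orders near the bottom of the range would be settled by ad hoc constructions or a short computer search. I would keep a running list of exactly which seed orders the recursion calls upon and verify that each is constructed, so that no circularity or gap remains. Separately, the claimed exceptions $v\in\{10,11,12,14,15,18,19,23\}$ must be shown genuinely impossible; here a local count at a fixed point (the blocks through it partition the other $v-1$ points into parts of sizes in $\{3,\dots,8\}$), combined with global pair-counting and Fisher-type inequalities, should eliminate each order, with exhaustive case analysis as a fallback for the most stubborn values.

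The part I expect to be hardest is closing the gap precisely at the boundary, not the asymptotics. Wilson's theorem delivers existence for all large $v$ cheaply, but lowering the threshold to $v=10$ while isolating exactly the eight listed exceptions requires that every residue class admit a recursive recipe whose ingredient orders are all already available, and that the exceptional orders be excluded by rigorous non-existence arguments rather than by mere failure to find a construction. Coordinating the case analysis so that the truncated transversal designs, the group-filling steps, and the verified seed designs interlock to cover the entire medium range with no omissions is where the bulk of the careful bookkeeping will lie.
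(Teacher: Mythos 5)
This statement is not proved in the paper at all: it is imported verbatim from Abel, Bennett and Greig's chapter on PBD-closure in the CRC Handbook, so there is no in-paper argument to compare yours against. Your outline does match the methodology by which such results are actually established in the literature (no nontrivial congruence conditions since $\gcd\{3,4,5,6,7,8\}=1$ and $v(v-1)$ is always even; truncated transversal designs feeding Wilson's fundamental construction; filling groups, possibly after adjoining a point; a finite seed list of direct constructions; PBD-closure to finish), so as a roadmap it is the right one.

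As a proof, however, it has genuine gaps, all concentrated exactly where you predict the difficulty lies. First, everything of substance is deferred: no seed design is actually exhibited, and no verification is given that the truncated-$\mathrm{TD}$ recursion, with ingredients that genuinely exist, reaches every $v\geq 13$ outside $\{14,15,18,19,23\}$ --- note that the required values start at $v=13$, below the range where a $\mathrm{TD}(9,m)$ with $m\geq 4$ can help, so several orders in $[13,40]$ or so must each be settled by an individual ad hoc construction, and that finite list is the actual content of the theorem. Second, the non-existence half is understated: you suggest that local partition counts at a point plus global pair-counting ``should eliminate each order,'' but for most of the eight exceptions the local conditions are numerically satisfiable (e.g.\ for $v=23$ one has $22=8+8+6=8+8+3+3=\cdots$, all with parts in $[3,8]$), so ruling them out requires structural case analysis on how blocks of specific sizes can intersect, or exhaustive search; a Fisher-type inequality alone does not do it. Finally, leaning on Wilson's asymptotic theorem buys nothing here, since its threshold is far above $23$ and ineffective in its classical form; the entire theorem is the finite verification you have left unexecuted.
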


\begin{theorem}[Abel et al. \cite{AbelBG:2007}]
\label{PBD5-9} For any integer $v\geq 10$, a
$(v,\{5,6,7,8,9\},1)$-PBD exists with exceptions $v\in
[10,20]\cup[22,24]\cup[27,29]\cup[32,34]$.
\end{theorem}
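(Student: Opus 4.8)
The plan is to treat this as a pairwise balanced design existence problem and attack it with the standard combinatorial machinery: record the numerical conditions, give direct constructions for a set of ``seed'' orders, and then deploy recursive constructions together with PBD-closure to sweep up the rest. First I would pin down the relevant invariants of the block-size set $K=\{5,6,7,8,9\}$, namely $\alpha(K)=\gcd\{k-1:k\in K\}=\gcd\{4,5,6,7,8\}=1$ and $\beta(K)=\gcd\{k(k-1):k\in K\}=\gcd\{20,30,42,56,72\}=2$. Since $v(v-1)$ is always even, neither of the associated congruences imposes any restriction on $v$, so by Wilson's theorem a $(v,K,1)$-PBD exists for every sufficiently large $v$; the entire content of the statement is the precise determination of the finite set of small exceptions.

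Second, I would produce direct constructions at the low end of the admissible range. Prime-power orders can be handled through points and lines of affine or projective geometries, truncated so that all surviving block sizes lie in $K$, while other orders succumb to short difference families over $\bbZ_v$ (or $\bbZ_{v-1}\cup\{\infty\}$), where one selects base blocks of sizes in $K$ whose differences cover each nonzero element exactly once. The transversal design $TD(k,m)$, equivalent to $k-2$ mutually orthogonal Latin squares of order $m$, is the other workhorse: it is a $\{k\}$-GDD of type $m^k$, and deleting points from one or more groups yields GDDs whose blocks have sizes in a narrow window around $k$. For the borderline orders just above the exception band I expect to need hand-tailored constructions for each residue class.

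Third comes the recursive engine. I would apply Wilson's Fundamental Construction with a truncated transversal design as the master GDD: assign integer weights to its points, replace blocks by ingredient GDDs all of whose block sizes lie in $K$, and thereby obtain a $K$-GDD of the desired type; adjoining a fixed number of extra points and then filling in each group together with the extra points by a seed PBD (again with blocks in $K$) produces the target $(v,K,1)$-PBD. Because the collection of orders admitting such a PBD is PBD-closed, it suffices to construct seed designs on a generating set of orders and invoke closure, and the related $(v,\{4,5,6,7,8,9\},1)$-PBDs of Theorem~\ref{PBD4-9} may serve as scaffolding, provided the blocks of size $4$ are dealt with separately.

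The hard part will be twofold. On the constructive side, confirming the exact boundary --- that every $v$ outside the listed intervals is genuinely realizable --- forces a residue-by-residue case analysis and repeated verification that the recursion actually has the MOLS and ingredient GDDs it requires, since gaps in the MOLS spectrum are the usual source of stubborn leftover orders. On the other side, one must show that the claimed exceptions really fail: for those small orders I would rule out existence by local counting at a point (partitioning the remaining $v-1$ points into block-deficiencies drawn from $\{4,5,6,7,8\}$) combined with global replication and Fisher-type inequalities, resorting to exhaustive computer search only for the handful of orders where the elementary arguments remain inconclusive.
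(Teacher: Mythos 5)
You should first note that the paper does not prove this statement at all: it is quoted verbatim from Abel, Bennett and Greig's chapter in the CRC Handbook and used as a black box, so there is no internal proof to compare yours against. Judged on its own terms, your proposal correctly identifies the standard machinery (the numerical invariants $\alpha(K)=1$, $\beta(K)=2$ showing no congruence obstructions, Wilson's asymptotic existence theorem, truncated transversal designs and the Fundamental Construction, PBD-closure, and Fisher-type counting for non-existence), and these are indeed the tools used in the literature to establish results of this shape.

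However, as a proof it has a genuine and essentially total gap: every step that carries actual content is deferred. The statement is an exact spectrum determination, and, as you yourself observe, ``the entire content of the statement is the precise determination of the finite set of small exceptions.'' Your outline exhibits no seed design, no difference family, no specific truncation of a ${\rm TD}(k,m)$, and no verification that the recursion covers every $v\geq 21$ with $v\notin[22,24]\cup[27,29]\cup[32,34]$; gaps in the MOLS spectrum (e.g.\ the missing ${\rm TD}(6,m)$ for $m\in\{10,14,18,22\}$ in Theorem~\ref{TD}) are exactly where such sweeps break down, and you give no argument that they do not here. On the non-existence side, the elementary counting you propose disposes only of the smallest orders (for $v\leq 20$ one gets $b\geq v$ from Fisher but $b\leq\binom{v}{2}/10<v$, a contradiction), while for $v\in[22,24]\cup[27,29]\cup[32,34]$ these bounds are compatible and the exclusion requires a detailed analysis of block-size distributions through a point or exhaustive search, which you explicitly leave open. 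A plan that names the right techniques but constructs nothing and excludes nothing is not a proof of this theorem; the correct course in the context of this paper is to cite the result, as the authors do.
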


A {\it group divisible design} (GDD) is a triple $(X,{\cal G},{\cal B})$ where $X$ is a set of points, ${\cal G}$ is a partition of $X$ into {\it groups}, and ${\cal B}$ is a collection of subsets of $X$ called {\it blocks} such that any pair of distinct points from $X$ occurs either in some group or in exactly one block, but not both.  A $K$-GDD of type $g^{u_1}_1 g^{u_2}_2\ldots g^{u_s}_s$ is a GDD in which every block has size from the set $K$ and in which there are $u_i$ groups of size $g_i, i=1,2,\ldots,s$. When $K=\{k\}$, we simply write $k$ for $K$.  A {\it parallel
class} or {\it resolution class} is a collection of blocks that partitions
the point set of the design. A GDD is {\it resolvable} if the blocks of
the design can be partitioned into parallel classes. A resolvable
GDD is denoted by RGDD.

A $k$-GDD of type $m^k$ is also called a {\em transversal
design} and denoted by TD$(k,m)$.

\begin{theorem}[Abel et al. \cite{Abel:2007}]
\label{TD} Let $m$ be a positive integer. Then:
\begin{enumerate}
\item[i)] a TD$(4,m)$ exists if $m\not\in \{2,6\}$;
\item[ii)] a TD$(5,m)$ exists if $m\not\in \{2,3,6,10\}$;
\item[(iii)] a TD$(6,m)$ exists if $m\not\in\{2,3,4,6,10,14,18,22\}$;
\item[(iv)] a TD$(7,m)$ exists if $m\not\in\{2,3,4,5,6,10,14,15,18,20,22,26,30,34,38,46,60\}$;
\item[v)] a TD$(8,m)$ exists if $m\not\in \{2,3,4,5,6,10,12,14,15,18,20,21,22,26,28,30,33,34,35,38,$ $39,42,44,46,51,52,54,58,60,62,66,68,74\}$;
\item[vi)] a TD$(m+1,m)$ exists if $m$ is a prime power.
\end{enumerate}
\end{theorem}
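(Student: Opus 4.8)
The plan is to recast the statement in the language of mutually orthogonal Latin squares (MOLS) and then assemble the existence results from a direct algebraic construction for prime powers together with recursive product constructions for the remaining orders. The starting point is the standard equivalence: a TD$(k,m)$ exists if and only if there are $k-2$ MOLS of order $m$. Writing $N(m)$ for the maximum number of MOLS of order $m$, parts i)--v) amount to the lower bounds $N(m)\geq 2,3,4,5,6$ respectively for every $m$ outside the stated finite exceptional sets, while part vi) is the sharper assertion $N(q)=q-1$ for each prime power $q$.

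First I would dispose of vi), the clean self-contained case. For a prime power $q$, list the nonzero elements of $\bbF_q$ as $a_1,\ldots,a_{q-1}$ and define Latin squares on $\bbF_q$ by $L^{(s)}_{i,j}=a_s i+j$. For $s\neq t$ the map $(i,j)\mapsto(a_s i+j,\,a_t i+j)$ has determinant $a_s-a_t\neq 0$ and is therefore a bijection of $\bbF_q^2$, so $L^{(s)}$ and $L^{(t)}$ are orthogonal. This yields a complete set of $q-1$ MOLS, hence a TD$(q+1,q)$; it is exactly the affine plane AG$(2,q)$ read as a transversal design.

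Next I would treat the generic $m$ in i)--v) by two recursive tools. The first is MacNeish's multiplicative inequality $N(m_1m_2)\geq\min\{N(m_1),N(m_2)\}$, which, combined with vi) applied to the prime-power factors of $m$, already forces $N(m)$ to be large for most composite orders. The second is a Wilson-type product construction built on GDDs and PBDs, using transversal designs themselves as ingredients, to close the gaps left by the multiplicative bound--in particular to push $N(m)$ above the required threshold when $m$ carries a small prime-power factor such as $2$ or $3$. The small-block designs of Theorems~\ref{PBD4-9} and~\ref{PBD5-9} are precisely the kind of ingredient one feeds into such constructions. For each fixed $k\in\{4,\ldots,8\}$ these arguments show that beyond some explicit bound every $m$ admits the needed number of MOLS, reducing the problem to a finite check.

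The main obstacle--and the reason the theorem is quoted rather than reproved--is pinning down the exact exceptional sets for the small and boundary orders. Establishing the lower bounds for the stubborn values just above each threshold, and especially handling the genuine non-existences (most famously the Euler--Tarry result $N(6)=1$, which forces $6$ into every exceptional list), is where the difficulty concentrates. The delicate arguments or exhaustive searches that either rule out, or merely fail to produce, enough MOLS for orders such as $10,14,18,22$ when $k$ is large rest on a mixture of direct difference-matrix constructions, incidence and counting arguments, and extensive computer search; it is the accumulation of these individual determinations across many papers that yields the precise lists recorded in i)--v).
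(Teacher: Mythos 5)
This statement is not proved in the paper at all: it is quoted verbatim from Abel, Brouwer, Colbourn and Dinitz's MOLS tables in the CRC Handbook, so there is no internal proof to compare your attempt against. Judged on its own terms, your writeup correctly identifies the right framework. The equivalence TD$(k,m)\Leftrightarrow k-2$ MOLS of order $m$ is standard, and your proof of part vi) via the squares $L^{(s)}_{i,j}=a_si+j$ over $\bbF_q$ is complete and correct --- the determinant argument for orthogonality is exactly right, and this is the classical construction of the complete set of $q-1$ MOLS from AG$(2,q)$.

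For parts i)--v), however, what you have written is a strategy, not a proof. MacNeish's inequality plus vi) handles orders whose smallest prime-power factor is large, and Wilson-style PBD/GDD constructions do close most of the remaining gaps, but the actual content of the theorem lies precisely in the finite exceptional sets: one must exhibit, for each $m$ just outside the listed exceptions, an explicit construction of the required number of MOLS (e.g.\ $N(10)\geq 2$, $N(12)\geq 5$, $N(14)\geq 3$, $N(15)\geq 4$, and so on), and these come from ad hoc difference-matrix and computer-generated constructions accumulated over decades. You acknowledge this openly, which is the honest thing to do, but it means your proposal does not establish i)--v); it reduces them to exactly the literature the paper cites. Since the paper itself treats the whole theorem as an external citation, your treatment is consistent with the paper's, and the one part you do prove in full (vi)) is proved correctly.
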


A {\it double group divisible design} (DGDD) is a quadruple
$(X,{\cal H},{\cal G},{\cal B})$ where $X$ is a set of points,
${\cal H}$ and ${\cal G}$ are partitions of $X$ (into {\it holes} and
{\it groups}, respectively) and ${\cal B}$ is a collection of subsets of
$X$ ({\it blocks}) such that
\begin{enumerate}
\item [(i)]for  each block $B\in {\cal B}$ and each hole $H\in
{\cal H}, |B\cap H|\leq 1$, \item [(ii)] any pair of distinct
points from $X$ which are not in the same hole occurs either in
some group or in exactly one block, but not both.
\end{enumerate}
A $K$-DGDD of type $(g_1,h^v_1)^{u_1}(g_2,h_2^v)^{u_2} \ldots
(g_s,h_s^v)^{u_s}$ is a double group divisible design in which every
block has size from the set $K$ and in which there are $u_i$ groups
of size $g_i$, each of which intersects each of the $v$ holes in
$h_i$ points. A {\it modified group divisible design} $K$-MGDD of
type $g^u$ is a $K$-DGDD of type $(g,1^g)^u$.
%
%
%

\subsection{Group Divisible Codes}

Given $\vu \in \bbZ_q^X$ and $Y \subseteq X$, the
\emph{restriction of $\vu$ to $Y$}, written $\vu\mid_Y$, is the
vector $\vv \in \bbZ_q^X$ such that
$$\vv_x=
\begin{cases}
\vu_x, & \text{if $x \in Y$} \\
0, & \text{if $x \in X \backslash Y$}. \\
\end{cases}$$

A {\it group divisible code} (GDC) of distance $d$ is a triple
($X,\G,\C$), where $\G=\{G_1,\ldots,G_t\}$ is a partition of $X$
with cardinality $|X|=n$ and $\C\subseteq \bbZ_q^X$ is a $q$-ary
code of length $n$, such that $d_H(u,v) \ge d$ for each distinct
$u,v\in \C$, and $\|u|_{G_i}\|\le 1$ for each $u \in \C$, $1\le i \le
t$. Elements of $\G$ are called groups. We denote a GDC($X,\G,\C$)
of distance $d$ as $w$-GDC($d$) if $\C$ is of constant weight $w$.
If we want to emphasize the composition of the codewords, we denote
the GDC as $\overline{w}$-GDC($d$) when every $u\in \C$ has
composition $\overline{w}$. The type of a GDC($X,\G,\C$) is the
multiset $\Lbag |G|: G\in \G \Rbag$. As in the case of GDDs, the
exponential notation is used to describe the type of a GDC. The size
of a GDC($X,\G,\C$) is $|\C|$. Note that an
$(n,d,\overline{w})_q$-code with size $s$ is equivalent to a
$\overline{w}$-GDC($d$) of type $1^n$ with size $s$.

Constant-composition codes of larger orders can often be obtained from
GDCs via the following two constructions.

\begin{construction}[Filling in Groups, see \cite{Yeow:2008}]
\label{FillGroups} Let $d\leq 2(w-1)$. Suppose there exists a
$w$-${\rm GDC}(d)$ $(X,\G,\C)$ of type $g_1^{t_1}\cdots g_s^{t_s}$
with size $a$. Suppose further that for each $i$, $1\leq i\leq s$,
there exists a $(g_i,d,w)_q$-code $\C_i$ with size $b_i$, then there
exists a $(\sum_{i=1}^s t_ig_i,d,w)_q$-code $\C'$ with size
$a+\sum_{i=1}^s t_ib_i$. In particular, if $\C$ and $\C_i$, $1\leq
i\leq s$, are of constant composition $\overline{w}$, then $\C'$ is
also of constant composition $\overline{w}$.
\end{construction}

\begin{construction}[Adjoining $y$ Points, see \cite{Yeow:2008}]
\label{AdjoinPoints} Let  $d\leq 2(w-1)$ and  $y\in\bbZ_{\geq 0}$. Suppose there exists a
(master) $w$-${\rm GDC}(d)$ of type $g_1^{t_1}\cdots g_s^{t_s}$ with size $a$, and suppose the following (ingredients) also exist:
\begin{enumerate}[(i)]
\item a $(g_1+y,d,w)_q$-code with size $b$,
\item a $w$-${\rm GDC}(d)$ of type $1^{g_i} y^1$ with size $c_i$ for each $2\leq i\leq s$,
\item a $w$-${\rm GDC}(d)$ of type $1^{g_1} y^1$ with size $c_1$ if $t_1\geq 2$.
\end{enumerate}
Then, there exists a $(y+\sum_{i=1}^s t_ig_i,d,w)_q$-code with size $a+b+(t_1-1)c_1+\sum_{i=2}^{s} t_ic_i$. Furthermore, if the master and ingredient codes are of constant composition, then so is the resulting code.
\end{construction}

The  following two  constructions are useful for generating  GDCs of larger orders from smaller ones.

\begin{construction}[Fundamental Construction \cite{Yeow:2008}]
\label{FundCtr} Let $d\le 2(w-1)$, $\D=(X,\G,\A)$ be a (master) GDD,
and $\omega: X\rightarrow \bbZ_{\ge0}$ be a weight function. Suppose
that for each $A\in \A$, there exists an (ingredient) $w$-GDC($d$)
of type $\Lbag \omega(a): a\in A \Rbag$. Then there exists a
$w$-GDC($d$) $\D^{*}$ of type $\Lbag \sum_{x\in G}{\omega(x)}:G\in
\G \Rbag$. Furthermore, if the ingredient GDCs are of constant
composition $\overline{w}$, then $\D^{*}$ is also of constant
composition $\overline{w}$.
\end{construction}

\begin{construction}[Inflation Construction]
\label{Inflation} Let $d\leq 2(w-1)$. Suppose there exists a $w$-GDC($d$) of type
$g_1^{t_1}\dots g_s^{t_s}$ with size $a$. Suppose further that there
exists a TD($w,m$), then there exists a $w$-GDC($d$) of type
$(mg_1)^{t_1}\dots (mg_s)^{t_s}$ with size $am^2$. If the original
GDC is of constant composition $\overline{w}$, then so is the
derived GDC.
\end{construction}

\section{Determining the Value of $A_3(n,6,[2,2])$}

In this section,
 we focus on the determination for the exact values of  $A_3(n,6,[2,2])$ for all positive integers $n$.
We first construct some $[2,2]$-GDC$(6)$s to obtain the optimal $(n,6,[2,2])_3$-codes.

\subsection{Skew Room Frame Construction}

If $\{S_1,\dots,S_n\}$ is a partition of a set $S$, an
$\{S_1,\dots,S_n\}$-{\em Room frame} is an $|S|\times |S|$
array, $F$, indexed by $S$, satisfying:
\begin{itemize}
\item[1.] every cell of $F$ either is empty or contains an
    unordered pair of symbols of $S$,
\item[2.] the subarrays $S_i\times S_i$ are empty, for
    $1\leq i\leq n$ (these subarrays are {\em holes}),
\item[3.] each symbol $x\not\in S_i$ occurs once in row (or
    column) $s$ for any $s\in S_i$, and
\item[4.] the pairs occurring in $F$ are those $\{s, t\}$,
    where $(s, t)\in (S\times S)\backslash \bigcup_{i=1}^n
    (S_i\times S_i)$.
\end{itemize}

The {\it type} of an $\{S_1,\dots,S_n\}$-Room frame $F$ will be
the multiset $\Lbag |S_1|,\dots,| S_n|\Rbag$. We will say that $F$ has
type $t_1^{u_1} \dots t_k^{u_k}$ provided there are $u_j$
$S_i$'s of cardinality $t_j$, for $1\leq j \leq k$. A Room
frame is {\em skew} if cell $(i,j)$ is filled implies that cell
$(j,i)$ is empty. A Room frame of type $1^n$ is called a {\em
Room square}.


\begin{lemma} [\cite{CZ:1996}, \cite{ZG:2007}]
\label{SRF} The necessary conditions for the existence of a
skew Room frame of type $t^u$, namely, $u\geq 4$ and $t(u-1)$
is even, are also sufficient except for $(t,u)\in
\{(1,5),(2,4)\}$ and with possible exceptions:
\begin{itemize}
\item[(i)] $u=4$ and $t\equiv 2\pmod{4}$,
\item[(ii)] $u=5$ and $t\in \{17,19,23,29,31\}$.
\end{itemize}
\end{lemma}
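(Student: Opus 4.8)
The plan is to treat this as a standard existence problem in design theory: first pin down the necessary conditions by a counting argument, then obtain sufficiency from a combination of direct (starter--adder) constructions, recursive constructions, and a finite amount of ad hoc work for small orders. To see necessity, fix a hole $S_i$ and a point $s\in S_i$. By condition~3 every symbol lying outside $S_i$ occurs exactly once in row $s$, so row $s$ must accommodate $t(u-1)$ symbols; since each filled cell holds an unordered pair, row $s$ contains exactly $t(u-1)/2$ filled cells, forcing $t(u-1)$ to be even. The bound $u\geq 4$ is the usual one for Room frames (a Room frame is in effect a pair of orthogonal frame--resolutions, and the cases $u\leq 3$ collapse under the skew restriction), and the two genuine exceptions $(t,u)\in\{(1,5),(2,4)\}$ are eliminated by a short exhaustive analysis of the forced cell pattern, which in each case cannot be completed while respecting condition~4 together with the skew property.

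For sufficiency, the workhorse of the direct part is the method of \emph{skew frame starters}. A frame starter in an abelian group $G$ of order $t(u-1)$, taken relative to a subgroup $H$ of order $t$, is a set of unordered pairs whose differences cover $G\setminus H$ exactly once; together with a compatible skew adder it develops, under the regular action of $G$, into a skew Room frame of type $t^u$ whose holes are the cosets of $H$. I would build such starter--adder pairs for infinite families of $(t,u)$ from patterned and strong starters in $\bbZ_{t(u-1)}$ (for instance via quadratic residues when $t(u-1)$ is a suitable prime power), and then settle the sporadic groups not reached by these families through targeted direct or computer search.

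To cover the entire spectrum I would amplify these base frames by recursion. The principal tool is inflation: given a skew Room frame of type $t^u$ and mutually orthogonal Latin squares of order $m$ (whose existence is governed by Theorem~\ref{TD}, e.g. a $\mathrm{TD}(4,m)$ for $m\notin\{2,6\}$), one replaces each point by $m$ points and fills the enlarged cells using the orthogonal squares to obtain a skew Room frame of type $(mt)^u$. Combining inflation with a filling-in-holes step and with Wilson-type constructions driven by GDDs on the $u$ holes reduces the problem, for each fixed $u$, to finitely many residue classes together with a handful of small base cases, all of which are supplied by the direct constructions above. For $u\geq 6$ there is an abundance of usable transversal designs and GDDs on $u$ points, so the recursion closes the spectrum completely.

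The hard part will be the small values $u=4$ and $u=5$, which is precisely why the possible exceptions cluster there. When $u$ is small the available transversal designs and GDDs on $u$ points are scarce, so the recursion loses its leverage and nearly everything must be produced by direct starter constructions or by exhaustive search; establishing---or failing to rule out---the listed cases $u=4$ with $t\equiv 2\pmod 4$ and $u=5$ with $t\in\{17,19,23,29,31\}$ is exactly where completeness becomes delicate. A secondary obstacle is the number-theoretic input: proving that the required skew frame starters exist across entire infinite families, rather than merely for sampled orders, is the most technical ingredient, and it is what ultimately lets the finitely many residual base cases be isolated and dispatched.
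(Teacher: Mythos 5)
This lemma is not proved in the paper at all: it is quoted verbatim from the literature on skew Room frames (\cite{CZ:1996}, \cite{ZG:2007}), so there is no in-paper argument to compare against. Judged on its own, your proposal correctly identifies the machinery that those papers actually use --- a counting argument for the parity condition $t(u-1)\equiv 0\pmod 2$, skew frame starters and adders developed over $\bbZ_{t(u-1)}$ relative to a subgroup of order $t$ for the direct constructions, inflation by MOLS (with the caveat that the skew property must be preserved, which requires a little care in how the filled cells are expanded), and GDD/filling-in-holes recursions to close the spectrum for each fixed $u$. Your necessity argument is essentially right, though it silently uses the fact that no symbol of $S_i$ can appear in row $s$ for $s\in S_i$; that is true but needs the global count of filled cells against the number of pairs in $(S\times S)\setminus\bigcup_i(S_i\times S_i)$ to justify, and the claim that $u\leq 3$ is impossible is asserted rather than argued.

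The genuine gap is that the proposal is a plan, not a proof: every component that carries the actual mathematical weight is deferred. You exhibit no skew frame starter for any concrete $(t,u)$, you do not verify that the starter--adder development or the inflation step preserves skewness, you do not show that the recursive constructions in fact reach every admissible pair $(t,u)$ outside the listed exceptions (this is exactly the part of \cite{CZ:1996} and \cite{ZG:2007} that consumes dozens of explicit constructions and case distinctions), and the two genuine exceptions $(1,5)$ and $(2,4)$ are dismissed with an unexecuted ``short exhaustive analysis.'' Since the result is a spectrum-completion theorem whose entire content lies in those verifications, the outline --- however accurately it mirrors the published strategy --- does not constitute a proof; for the purposes of this paper the correct move is simply to cite the two references, as the authors do.
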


\begin{proposition}
\label{StoG} If there exists a skew Room frame of type $t^u$,
then there exists a $[2,2]$-GDC$(6)$ of type $(6t)^u$ with size $6t^2u(u-1)$.
\end{proposition}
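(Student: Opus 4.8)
The plan is to blow up each point of the skew Room frame into six points and to attach a fixed family of twelve codewords to each filled cell. Let $(S,\{S_1,\dots,S_u\})$ carry the given skew Room frame $F$, and set $X=S\times\bbZ_6$ with groups $G_i=S_i\times\bbZ_6$, so the object under construction has type $(6t)^u$ on $6tu$ points. The first step is to record the incidence data of $F$ that the construction exploits: every filled cell is a quadruple consisting of a row symbol $r$, a column symbol $c$, and a content pair $\{x,y\}$, and properties~2--4 in the definition of a Room frame force $r,c,x,y$ to lie in four \emph{distinct} groups $S_i$. Hence any codeword supported on $\{r,c,x,y\}\times\bbZ_6$ automatically meets each group in at most one point, which is the group condition demanded of a GDC. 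A short count gives $\binom{tu}{2}-u\binom{t}{2}=t^2u(u-1)/2$ filled cells, so producing twelve codewords per cell yields exactly the target size $6t^2u(u-1)$.

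Next I would fix, once and for all, a template of twelve weight-four words of composition $[2,2]$ on four abstract groups of size six, namely the six $\bbZ_6$-translates of two base words; the two base words distribute the values $1$ and $2$ among the four groups in two different ways, and the $\bbZ_6$-coordinates of their supports are chosen to realize prescribed differences. Applying this template to the quadruple $(r,c,\{x,y\})$ of each filled cell produces the full code $\C$. For the distance bookkeeping it is convenient to note that if two weight-four words of composition $[2,2]$ agree in a nonzero symbol on $\alpha$ positions and have supports meeting with differing symbols on $\beta$ positions, then their Hamming distance is $8-2\alpha-\beta$, so distance at least six is equivalent to $2\alpha+\beta\le 2$. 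Within a single cell, two translates of the same base word have coordinates differing by a fixed nonzero element of $\bbZ_6$ in every symbol, hence disjoint support and distance eight; thus only the pairs drawn from the two different base families need to be verified, and by translation invariance this is a finite check of the chosen differences over $\bbZ_6$.

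The crux, and the step I expect to be the main obstacle, is the distance between codewords arising from two \emph{different} cells $T_1=(r_1,c_1,\{x_1,y_1\})$ and $T_2=(r_2,c_2,\{x_2,y_2\})$. Two such words can share a support point only inside $\{z\}\times\bbZ_6$ for a symbol $z$ common to both quadruples, so if the quadruples meet in at most one symbol the words share at most one position and $2\alpha+\beta\le 1$ holds trivially. The difficulty is confined to the case where the two quadruples share two symbols. Here I would invoke the defining incidences of a Room frame -- each unordered pair occurs as content exactly once, each (row, content) and each (column, content) ordered pair occurs exactly once, each array position hosts at most one cell, and the skew condition orients each pair $\{(i,j),(j,i)\}$ -- to enumerate the few possible role patterns of the two shared symbols across $T_1$ and $T_2$. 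In each surviving pattern the prescribed $\bbZ_6$-differences of the template force the two shared positions to carry conflicting rather than matching symbols, so $\alpha\le 1$ and thus $2\alpha+\beta\le 2$. The skewness is precisely what eliminates the otherwise fatal configuration in which a pair would be realized in both orientations and produce two simultaneous value-agreements.

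Finally I would assemble the pieces. By construction $\C$ has constant composition $[2,2]$ and constant weight four, meets every group in at most one point, and, by the within-cell and cross-cell analyses above, has minimum distance six; moreover $|\C|=12\cdot t^2u(u-1)/2=6t^2u(u-1)$. All of this shows that $\C$ is a $[2,2]$-GDC$(6)$ of type $(6t)^u$ of the asserted size. The only genuine work lies in selecting the twelve-word template with the right differences and in carrying out the cross-cell case analysis; the remaining verifications (group sizes, weight, composition, and the final count) are routine.
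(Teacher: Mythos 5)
Your plan is exactly the paper's construction --- blow each symbol up into six points indexed by $\bbZ_6$ and attach to each filled cell the six $\bbZ_6$-translates of two base codewords of composition $[2,2]$ --- and the scaffolding you set up is sound: the four symbols of a filled cell do lie in four distinct holes, the count of filled cells is $t^2u(u-1)/2$, the distance criterion $2\alpha+\beta\le 2$ is the right bookkeeping, and your remark about skewness is on target (if a pair could occupy both cell $(r,c)$ and cell $(c,r)$, the template below would produce two codewords agreeing in the symbol $2$ at two positions, i.e.\ at distance $4$).

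The gap is that you never exhibit the template: the entire content of the proposition is the existence of two base words whose $\bbZ_6$-differences survive your cross-cell case analysis, and you explicitly defer this as ``the only genuine work.'' The paper's choice, for a cell whose content pair $\{a,b\}$ sits in column $c$ and row $r$, is $\langle(a,j),(b,j),(c,1+j),(r,4+j)\rangle$ and $\langle(c,4+j),(r,1+j),(a,j),(b,j)\rangle$ for $j\in\bbZ_6$. With this choice your own criterion closes every case: within one cell the two families can only meet when $j'=j$ (overlap at $(a,j),(b,j)$ carrying symbols $1$ versus $2$) or when $j-j'=3$ (overlap at the $c$- and $r$-points carrying symbols $2$ versus $1$), giving $\alpha=0$, $\beta=2$ and distance exactly $6$; across cells the Room-frame axioms (each pair once as content, each symbol once per row and once per column, at most one of $(r,c)$ and $(c,r)$ filled) force $2\alpha+\beta\le 2$ in every two-symbol overlap pattern. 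So your outline is the correct proof with its one constructive ingredient missing; until the base words are written down and that finite check is actually performed, the argument is a plan rather than a proof.
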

\begin{proof} Let $F$ be a given skew Room frame of
type $t^u$.  We construct a $[2,2]$-GDC$(6)$ of type $(6t)^u$
on group set $\{\{(i+k,j): 0\leq i\leq t-1, j\in \bbZ_6\}:
k=0,t,\dots,t(u-1)\}$. The code contains all the
codewords $\langle(a,j),(b,j),(c,1+j),(r,4+j)\rangle$,
$\langle(c,4+j),(r,1+j),(a,j),(b,j)\rangle$, where
$j\in \bbZ_6$ and the pair $\{a,b\}$ is contained in column $c$ and row $r$ of $F$.
It is easy to check that this code is of distance $6$ and composition $[2, 2]$.
\end{proof}

Combining  Lemma \ref{SRF} and Proposition \ref{StoG}, we have
the following result.

\begin{lemma}
\label{SRF2GDC(6t)^u} Let $u\geq 4$ and $t(u-1)$ be even. Then there
exists a $[2,2]$-GDC$(6)$ of type $(6t)^u$ with size $6t^2u(u-1)$,
except possibly for:
\begin{itemize}
\item[(i)] $u=4$ and $t\equiv 2\pmod{4}$,
\item[(ii)] $u=5$ and $t\in \{1,17,19,23,29,31\}$.
\end{itemize}
\end{lemma}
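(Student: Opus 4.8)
The plan is to obtain the desired $[2,2]$-GDC$(6)$ of type $(6t)^u$ by feeding a skew Room frame of type $t^u$ into Proposition~\ref{StoG}, and to read off the list of possible exceptions directly from the existence result for skew Room frames in Lemma~\ref{SRF}. The statement has exactly the shape of a ``combining'' lemma: the hypotheses $u\geq 4$ and $t(u-1)$ even are precisely the necessary conditions quoted in Lemma~\ref{SRF}, so the two results are designed to be composed.

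First I would invoke Lemma~\ref{SRF}: under the stated hypotheses $u\geq 4$ and $t(u-1)$ even, a skew Room frame of type $t^u$ exists, except for $(t,u)\in\{(1,5),(2,4)\}$ and the two families of possible exceptions, namely $u=4$ with $t\equiv 2\pmod 4$, and $u=5$ with $t\in\{17,19,23,29,31\}$. Next I would apply Proposition~\ref{StoG} to each such frame: whenever a skew Room frame of type $t^u$ exists, Proposition~\ref{StoG} produces a $[2,2]$-GDC$(6)$ of type $(6t)^u$ of size $6t^2u(u-1)$. Chaining these two steps gives the GDC in every case where the frame exists, so the only values of $(t,u)$ that could fail are the genuine exception $(t,u)=(1,5)$ together with the two possible-exception families from Lemma~\ref{SRF}. (The genuine exception $(t,u)=(2,4)$ already falls inside the family $u=4$, $t\equiv 2\pmod 4$, so it need not be listed separately.)

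The only point requiring a moment's care is the bookkeeping of the exception set, since the lemma's stated exception list differs slightly from that in Lemma~\ref{SRF}. I would note that $(2,4)$ is absorbed into case~(i), that $(1,5)$ must be added to case~(ii) because it is a \emph{definite} non-existence for skew Room frames (hence listed in Lemma~\ref{SRF2GDC(6t)^u} as the extra value $t=1$ when $u=5$), and that the remaining frame exceptions $u=5$, $t\in\{17,19,23,29,31\}$ carry over verbatim. This reconciles the two lists and yields exactly cases~(i) and~(ii) as stated.

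I do not expect a genuine obstacle here: the construction in Proposition~\ref{StoG} is a black box that converts a frame into a GDC of the prescribed type and size, and all the real combinatorial work lives in Lemma~\ref{SRF}. The main (entirely routine) task is simply to verify that the size formula $6t^2u(u-1)$ and the type $(6t)^u$ match the output of Proposition~\ref{StoG}, and to transcribe the exception list correctly.
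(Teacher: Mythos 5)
Your proposal is correct and is exactly the paper's argument: the paper derives this lemma by simply combining Lemma~\ref{SRF} with Proposition~\ref{StoG}, with the same reconciliation of the exception lists (absorbing $(2,4)$ into case~(i) and adding $(1,5)$ as $t=1$ in case~(ii)) that you describe.
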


\subsection{Difference Matrix Construction}

Let $G$  be an abelian group of order $g$. A {\em difference matrix} based on $G$, denoted $(g,k;1)$-DM, is a $k\times g$ matrix $M=[m_{i,j}]$, $m_{i,j}$ in $G$, such that for each $1\leq r < s \leq k$, the differences $m_{r,j}-m_{s,j}, 1\leq j\leq g$, comprise all the elements of $G$.

\begin{theorem} [\cite{Ge:2005}]
\label{DM}
A $(g,4;1)$-DM exists if and only if $g\geq 4$ and $g\not \equiv 2\pmod{4}$.
\end{theorem}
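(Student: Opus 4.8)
The plan is to prove both directions of the equivalence separately: the congruence obstruction (necessity) is short and structural, while the constructions (sufficiency) carry the real weight. I would present necessity first, since it explains \emph{why} $g\equiv 2\pmod 4$ is forbidden and thereby motivates which groups to use in the constructions.

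For necessity I would begin with a normalization: adding a fixed element of $G$ to an entire column leaves every row-difference unchanged, so I may assume the first row is identically $0$, whence each remaining row is a bijection of $G$. Set $\sigma=\sum_{x\in G}x$ and $R_i=\sum_j m_{i,j}$. Since the differences $m_{r,j}-m_{s,j}$ exhaust $G$ as $j$ runs over the $g$ columns, summing gives $R_r-R_s=\sigma$ for every pair of rows. Applying this to three distinct rows forces $2\sigma=\sigma$, i.e. $\sigma=0$. But if $g\equiv 2\pmod 4$ then $G\cong\bbZ_2\times G'$ with $|G'|$ odd, so $G$ has a \emph{unique} element of order $2$ and $\sigma$ equals that element, which is nonzero --- a contradiction. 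Hence no $(g,4;1)$-DM exists when $g\equiv 2\pmod 4$. The lower bound $g\geq 4$ is then a matter of small cases; the only substantive exclusion is $g=3$, for which an exhaustive check over $\bbZ_3$ shows no valid fourth row can be adjoined.

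For sufficiency the strategy is a direct finite-field construction combined with a product recursion. For a prime power $q\geq 4$, take $G$ to be the additive group of $\bbF_q$, index the four rows by distinct field elements $a_1,\dots,a_4$ and the columns by $x\in\bbF_q$, and set $m_{i,x}=a_ix$; each row-difference is $x\mapsto(a_i-a_j)x$, a bijection, so this is a $(q,4;1)$-DM. The product rule --- given $(g_1,4;1)$- and $(g_2,4;1)$-DMs over $G_1,G_2$, the matrix over $G_1\times G_2$ with entry $(m^{(1)}_{i,a},m^{(2)}_{i,b})$ in row $i$ and column $(a,b)$ is a $(g_1g_2,4;1)$-DM --- lets me factor $g$ into prime powers and treat each factor. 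Because $g\not\equiv 2\pmod 4$ forces the $2$-part of $g$ to be $1$ or at least $4$, every prime-power factor is at least $4$ \emph{except} possibly a single factor of $3$, and for all other shapes the field construction plus the product rule already finish the job.

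The main obstacle is precisely this exact factor of $3$. Since $(3,4;1)$-DMs do not exist, the $3$ cannot be peeled off by the product rule, and no ``linear'' device rescues it: any attempt to control the $\bbZ_3$-coordinate of the differences through a map affine in that coordinate demands four residues of $\bbZ_3$ that are pairwise distinct, which is impossible by pigeonhole --- this is exactly the reason $g=3$ itself fails. I would therefore dispose of the residual cases $3\,\|\,g$ by genuinely non-product constructions: explicit difference matrices built directly over groups of the form $\bbZ_3\times\bbF_q$ (in which the coordinates are mixed rather than kept affine), supplemented where necessary by resolvable-design/frame-type recursions that can absorb the factor $3$ together with a companion factor. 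Verifying that the field construction, the product recursion, and this family of direct constructions jointly cover every admissible $g$ is the technical heart of the argument, and it is where I expect essentially all of the difficulty to reside.
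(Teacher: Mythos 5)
This theorem is imported by the paper from the reference [Ge:2005] and is not proved in the paper at all, so there is no internal proof to compare your argument against; I can only judge your proposal on its own terms. Your necessity argument is correct and complete: normalizing a column does not change row-differences, summing the differences of any two rows over all columns gives $R_r-R_s=\sigma$ where $\sigma=\sum_{x\in G}x$, three rows force $2\sigma=\sigma$ and hence $\sigma=0$, while an abelian group of order $\equiv 2\pmod 4$ has a unique involution and therefore $\sigma\neq 0$. The exclusion of $g=3$ by exhaustion is also fine. Likewise, the finite-field construction $m_{i,x}=a_ix$ and the Kronecker product rule are standard and correctly reduce the sufficiency direction to prime-power factors, and you correctly observe that the only factor that cannot be handled this way is an exact factor of $3$.

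The genuine gap is that for the residual class $3\,\|\,g$ (an infinite family: $g=12,15,21,24,33,39,\dots$) you do not actually produce any difference matrix; you only state that one should build ``explicit difference matrices over $\bbZ_3\times\bbF_q$, supplemented where necessary by resolvable-design/frame-type recursions'' and acknowledge that verifying coverage ``is where essentially all of the difficulty resides.'' That residual class is precisely the content of Ge's paper: the earlier literature already had the field construction, the product rule, and the $\bbZ_2$-obstruction, and what remained open --- and what the cited reference exists to settle --- is exactly the family of $(3q,4;1)$-DMs for prime powers $q$ (together with the small and even cases such as $g=12$ and $g=24$ that resist uniform algebraic treatment). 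Without exhibiting at least one such family of constructions, e.g.\ a Weil-sum argument for $(3p,4;1)$-DMs with $p$ a large prime plus explicit matrices for the finitely many small cases, the ``if'' direction is not proved for infinitely many admissible $g$. So the proposal is a correct and well-organized reduction of the problem, but it stops short of the theorem's actual mathematical content.
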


\begin{proposition}
\label{DMtoGDC} If there exists a $(g,4;1)$-DM over $\bbZ_{g}$,
then there exists a $[2,2]$-GDC$(6)$ of type $g^4$ with size $2g^2$.
\end{proposition}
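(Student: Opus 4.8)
The plan is to realize the code as the development of explicit base codewords over $\bbZ_g$, using the difference matrix to place the nonzero coordinates. Write the given $(g,4;1)$-DM as $M=[m_{i,c}]$ with $i\in\{1,2,3,4\}$ and $c\in\bbZ_g$. Take the point set $X=\bbZ_g\times\{1,2,3,4\}$ with groups $G_i=\{(a,i):a\in\bbZ_g\}$, so the type is $g^4$. For each column $c$ and each shift $s\in\bbZ_g$ I would form the two codewords
\[
\langle (m_{1,c}+s,1),(m_{2,c}+s,2),(m_{3,c}+s,3),(m_{4,c}+s,4)\rangle,
\]
\[
\langle (m_{1,c}+s,3),(m_{2,c}+s,4),(m_{3,c}+s,1),(m_{4,c}+s,2)\rangle,
\]
where in each case the first two listed points carry the symbol $1$ and the last two carry the symbol $2$. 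This produces $2g^{2}$ words, each a transversal of the four groups and of composition $[2,2]$; it remains to check that the minimum distance is $6$. Note that distinct supports are forced, since any two weight-four transversals with the same support are at distance at most $4$.

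The first step is the within-family analysis. The words in the first family are exactly the blocks of the transversal design TD$(4,g)$ obtained by developing the columns of $M$: if two such blocks shared points in groups $i$ and $j$, subtracting the two coincidence equations would give $m_{i,c}-m_{j,c}=m_{i,c'}-m_{j,c'}$, and the defining DM property makes $c\mapsto m_{i,c}-m_{j,c}$ a bijection, forcing $c=c'$ and then $s=s'$. Hence two distinct blocks meet in at most one point. A direct count of coinciding versus disagreeing coordinates then shows that two composition-$[2,2]$ transversals agreeing in at most one group are at distance at least $6$, independently of the symbols. The second family is the development of the columns of $M$ after the fixed-point-free relabelling $(1\,3)(2\,4)$ of the groups, which is again a difference matrix, so the same argument gives distance $6$ inside the second family.

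The main obstacle is the cross-family distance. Here the two interacting words carry complementary compositions (symbol $1$ on groups $\{1,2\}$ versus on groups $\{3,4\}$), so any cross pair meeting in exactly two positions automatically has opposite symbols there and hence distance $6$; the only thing that must be excluded is a cross pair meeting in three or four positions. I would rule this out by the same difference bookkeeping: a coincidence in two groups forces an equation between within-column differences of the form $m_{\alpha,c}-m_{\beta,c}=m_{\gamma,c'}-m_{\delta,c'}$, and three simultaneous coincidences would impose two such equations at once, which the bijectivity of each map $c\mapsto m_{i,c}-m_{j,c}$ defeats except on a single degenerate column. That lone degenerate column is where I expect the difficulty to concentrate, and I would absorb it by inserting a fixed offset into the second family (replacing $s$ by $s+e$ in its two symbol-$2$ coordinates, with $e$ chosen through the DM differences and subject to $2e\neq 0$), which destroys the residual three-fold coincidence while preserving both the transversal-design structure of the second family and its disjointness from the first. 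Once the cross case is closed, the $2g^{2}$ words have pairwise distance $6$ and composition $[2,2]$, yielding the desired $[2,2]$-GDC$(6)$ of type $g^{4}$.
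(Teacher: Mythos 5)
Your overall strategy --- develop the columns of the difference matrix into two families of transversal codewords with complementary symbol patterns, observe that cross-family pairs automatically disagree in symbol at every common support point, and reduce everything to forbidding three or four common support points --- is exactly the skeleton of the paper's proof, and your within-family analysis is correct. The gap is in the cross-family case, and it is fatal to your particular construction. Because your second family applies the group relabelling $(1\,3)(2\,4)$ to the rows of $M$, a coincidence in two groups gives an equation between \emph{two different} difference bijections, e.g.\ $m_{1,c}-m_{2,c}=m_{3,c'}-m_{4,c'}$ or $m_{1,c}-m_{3,c}=-(m_{1,c'}-m_{3,c'})$. Such an equation determines $c'$ as a function of $c$ but does not force $c=c'$, and two such equations can hold simultaneously for as many as $g$ columns, not just ``a single degenerate column.'' Your proposed repair by an offset $e$ cannot close this: for the standard DM $m_{i,c}=(i-1)c$ over $\bbZ_5$ one checks that for \emph{every} $e\in\bbZ_5$ the first-family word from column $e$ and the second-family word from column $e$ (with shift $s'=2e+s$) coincide in all four groups, giving distance $4$. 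So no choice of $e$ rescues the construction for an arbitrary admissible DM.

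The paper sidesteps this by never permuting the rows: both families place row $i$ in group $i-1$, the second family merely swaps which groups carry symbol $1$ and translates the two symbol-$2$ coordinates by the fixed constant $+1$. Then any three groups must contain one of the ``like'' pairs $\{0,1\}$ or $\{2,3\}$, a coincidence on a like pair yields $m_{i,j}-m_{i',j}=m_{i,j'}-m_{i',j'}$ with the \emph{same} bijection on both sides, forcing $j=j'$, and the residual $+1$ translation then makes a third coincidence impossible. If you want to keep your two-family framework, replace the row permutation by this ``same rows, shifted positions, swapped symbols'' second family; the rest of your argument (the distance count for $\le 2$ common points and the automatic symbol disagreement across families) then goes through verbatim.
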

\begin{proof} Let $M=[m_{i,j}]$ be a given $(g,4;1)$-DM over $\bbZ_{g}$.  We construct a $[2,2]$-GDC$(6)$ of type $g^4$ on group set $\{\{(i,j): j \in \bbZ_{g}\}: 0\leq i\leq 3\}$. The code contains all the
codewords $\langle(0,m_{1,j}+k),(1,m_{2,j}+k),(2,m_{3,j}+k),(3,m_{4,j}+k)\rangle$,
$\langle(2,m_{3,j}+k),(3,m_{4,j}+k),(0,m_{1,j}+1+k),(1,m_{2,j}+1+k)\rangle$, where
$j,k\in \bbZ_g$.
It is easy to check that this code is of distance $6$ and composition $[2, 2]$.
\end{proof}

Combining  Theorem~\ref{DM} and Proposition~\ref{DMtoGDC}, we have
the following result.

\begin{lemma}
\label{DM2GDCg^4} There exists a $[2,2]$-GDC$(6)$ of type $g^4$ with size $2g^2$ for every $g\geq 4$ and $g\not \equiv 2\pmod{4}$.
\end{lemma}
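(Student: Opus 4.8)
The plan is to obtain this lemma as an immediate corollary of the two preceding results, so that no new construction is needed. By Theorem~\ref{DM}, a $(g,4;1)$-DM exists precisely when $g\geq 4$ and $g\not\equiv 2\pmod 4$, which is exactly the range claimed in the statement. Feeding such a difference matrix into the construction of Proposition~\ref{DMtoGDC} then yields a $[2,2]$-GDC$(6)$ of type $g^4$ whose size is $2g^2$, since that construction produces two codewords for each pair $(j,k)\in\bbZ_g\times\bbZ_g$. Hence the existence of the desired GDC follows for the entire admissible range in a single step.

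The one point that requires attention — and the only real obstacle — is a matter of the underlying group. The construction in Proposition~\ref{DMtoGDC} is carried out over the cyclic group $\bbZ_g$ (its codewords are indexed by $j,k\in\bbZ_g$ and rely on the shifts $m_{i,j}+k$), whereas Theorem~\ref{DM} as stated only guarantees a difference matrix based on \emph{some} abelian group of order $g$. I would therefore confirm that the difference matrices supplied by the cited reference can in fact be realized over $\bbZ_g$. For $g$ coprime to $6$ this is immediate, since the matrix with entries $m_{i,j}=(i-1)j$ is a $(g,4;1)$-DM over $\bbZ_g$: the difference of rows $r$ and $s$ is $(r-s)j$, which sweeps out all of $\bbZ_g$ because $\gcd(r-s,g)=1$ for every $r-s\in\{\pm1,\pm2,\pm3\}$. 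For the remaining values (those divisible by $3$ or by $4$) one appeals to the explicit cyclic constructions underlying Theorem~\ref{DM}. Once the difference matrix is taken over $\bbZ_g$, all hypotheses of Proposition~\ref{DMtoGDC} are met.

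Finally, no separate verification of the distance or composition properties is required: these are already established inside the proof of Proposition~\ref{DMtoGDC}, which checks that each resulting codeword has composition $[2,2]$ and that any two distinct codewords are at Hamming distance at least six. The proof of the lemma thus reduces to the bookkeeping above, with the cyclic realizability of the difference matrix being the single substantive step.
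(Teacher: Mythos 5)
Your proof is correct and is essentially the paper's own argument: the lemma is stated there as an immediate combination of Theorem~\ref{DM} and Proposition~\ref{DMtoGDC}. Your extra remark about realizing the difference matrix over the cyclic group $\bbZ_g$ (rather than an arbitrary abelian group of order $g$) is a legitimate point of care that the paper glosses over, but it does not change the approach.
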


\subsection{Some $[2,2]$-GDC$(6)$s}

\begin{lemma}
\label{[2,2]-GDC:2^10} There exists a $[2,2]$-GDC$(6)$ of type $2^{10}$ with size $60$.
\end{lemma}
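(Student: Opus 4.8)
Since a skew Room frame forces the group size to be a multiple of $6$ (Proposition~\ref{StoG}) and the difference-matrix construction (Proposition~\ref{DMtoGDC}) produces only $4$ groups, neither preceding construction yields a GDC of type $2^{10}$, so I would build this one directly by the method of differences. The plan is to realize the $20$ points as $\bbZ_{10}\times\bbZ_2$, take the groups to be the ten pairs $\{i\}\times\bbZ_2$ with $i\in\bbZ_{10}$, and look for a small set of base codewords whose orbits under the shift $(i,\epsilon)\mapsto(i+1,\epsilon)$ on the first coordinate sweep out the whole code. Since every full orbit has length $10$, I would aim for six base codewords with full orbits, giving $6\times 10=60$ codewords; as in the earlier propositions a codeword of composition $[2,2]$ is recorded as $\langle a,b,c,d\rangle$ with $a,b$ carrying the symbol $1$ and $c,d$ the symbol $2$.

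Before searching I would reduce the distance requirement to a difference condition. Writing $d_H(\vu,\vv)=8-2(\alpha+\beta)-(\gamma+\delta)$, where $\alpha,\beta$ count the positions on which $\vu,\vv$ agree with symbol $1$, resp.\ $2$, and $\gamma,\delta$ count the positions where one has $1$ and the other $2$, one sees that $d_H\ge 6$ is equivalent to $2(\alpha+\beta)+\gamma+\delta\le 2$. For two codewords in the development this translates into statements about the four kinds of first-coordinate differences between base blocks (taken together with their fixed $\bbZ_2$ labels, since the labels do not shift): the $1$--$1$ and $2$--$2$ differences must each occur with multiplicity at most one and must never coincide at the same value, while the mixed $1$--$2$ differences may agree in at most two places and only when no same-symbol coincidence occurs. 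I would also impose the group condition, namely that within each base codeword the four points lie in four distinct groups, i.e.\ their first coordinates are pairwise distinct in $\bbZ_{10}$.

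With these conditions in hand I would carry out a short search, by hand or by machine, for six base codewords satisfying them, list the resulting starter blocks explicitly, and then verify the three defining properties. The composition is $[2,2]$ by construction and the group condition is immediate from the distinctness of the first coordinates; the only substantive point is the distance, which reduces to the finite check that the collection of $1$--$1$, $2$--$2$ and mixed differences among the six base blocks meets the multiplicity bounds above, both for the internal (self-orbit) differences and for the differences between distinct base blocks.

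The main obstacle is exactly this last step. Unlike the binary weight-four case, where distance six merely forbids two blocks from meeting in two common points, the composition $[2,2]$ condition penalizes same-symbol overlaps twice as heavily as opposite-symbol overlaps, so the $1$-structure and the $2$-structure cannot be chosen independently and their difference lists must be interleaved just so. Producing six base codewords that simultaneously respect all of these coupled constraints---and confirming that no unwanted coincidence arises within a single orbit or between two distinct orbits---is where the real work lies; once a valid set is exhibited, the remaining verification is the routine difference count summarized by the phrase ``it is easy to check.''
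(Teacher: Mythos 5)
Your reduction of the distance-$6$ condition to a statement about same-symbol and mixed differences is correct (the identity $d_H(\vu,\vv)=8-2(\alpha+\beta)-(\gamma+\delta)$ and the resulting requirement $2(\alpha+\beta)+\gamma+\delta\le 2$ are exactly the right bookkeeping), and the overall plan -- a difference construction with a point-regular cyclic automorphism and a handful of base codewords -- is the same kind of argument the paper uses. But the proposal never actually produces the object. You set up the constraints, announce that you would ``carry out a short search,'' and then describe why satisfying the coupled $1$--$1$, $2$--$2$ and mixed difference conditions is the hard part. For an existence lemma of this type, exhibiting the starter blocks \emph{is} the proof; everything else is routine. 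As it stands, the argument establishes only that \emph{if} six suitable base codewords over $\bbZ_{10}\times\bbZ_2$ exist, then the lemma holds, which is a genuine gap rather than a stylistic omission.

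For comparison, the paper takes $X=\bbZ_{20}$ with groups $\{i,i+10\}$, $0\le i\le 9$, and develops just three base codewords $\langle0,5,3,7\rangle$, $\langle0,4,1,13\rangle$, $\langle0,8,14,19\rangle$ under $+1\pmod{20}$, so each orbit has length $20$ and the size $3\times 20=60$ follows. Your alternative ambient group $\bbZ_{10}\times\bbZ_2$ with six short orbits of length $10$ is a legitimate variant and might well admit a valid starter set, but that has to be demonstrated; note also that with only $10$ translates the internal difference conditions within a single orbit become more delicate (each base block meets $9$ nontrivial translates of itself rather than $19$, but the difference multiset it must control lives in a group of order $10$, so collisions are easier to create). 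To complete the proof you must either exhibit your six base codewords and run the finite difference check, or fall back on an explicit listing such as the paper's.
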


\begin{proof}
Let $X=\bbZ_{20}$, and ${\cal G}=\{\{i,i+10\}:0\leq i\leq 9\}$. Then $(X,{\cal G},{\cal C})$ is a $[2,2]$-GDC$(6)$ of type $2^{10}$, where  ${\cal C}$ is obtained by developing the elements of $\bbZ_{20}$ in the codewords $\langle0,5,3,7\rangle$, $\langle0,4,1,13\rangle$,  $\langle0,8,14,19\rangle$ $+1\pmod{20}$.
\end{proof}

\begin{lemma}
\label{[2,2]-GDC:6^t} There exists a $[2,2]$-GDC$(6)$ of type $6^t$ with
size $6t(t-1)$ for each $5\leq t\leq 11$.
\end{lemma}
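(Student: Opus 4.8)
The plan is to dispose of the seven values of $t$ by combining the frame and inflation machinery already developed with a handful of direct constructions, isolating the genuinely hard cases. Note first that the target size is exactly the natural upper bound: a codeword with $1$ at a point $x$ forces the other three of its points to be disjoint (in support) from those of any other codeword sharing $(x,1)$, so at most $\lfloor 6(t-1)/3\rfloor = 2(t-1)$ codewords carry a given symbol at a given point, giving $|\C|\le 6t(t-1)$. For the odd values $t\in\{7,9,11\}$ I would use the skew Room square route. A skew Room frame of type $1^{t}$ is precisely a skew Room square of order $t$, and by Lemma~\ref{SRF} this exists for every odd $t\ge 7$ (the only odd exception below $12$ being $t=5$). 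Feeding it into Proposition~\ref{StoG} with frame parameter $1$ produces a $[2,2]$-GDC$(6)$ of type $(6\cdot1)^{t}=6^{t}$ and size $6\cdot1^{2}\cdot t(t-1)=6t(t-1)$, which is exactly the claimed value.

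For $t=10$ I would inflate the small explicit ingredient of Lemma~\ref{[2,2]-GDC:2^10}. Starting from the $[2,2]$-GDC$(6)$ of type $2^{10}$ and size $60$, I apply the Inflation Construction~\ref{Inflation} with multiplier $m=3$; the ingredient a TD$(4,3)$ exists by Theorem~\ref{TD}(i) since $3\notin\{2,6\}$. The output has type $(2\cdot3)^{10}=6^{10}$ and size $60\cdot3^{2}=540=6\cdot10\cdot9$, as required.

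The residual cases $t\in\{5,6,8\}$ resist the general tools: there is no skew Room square of order $5$, the even orders $6,8$ admit no Room square at all, and the type-$2^{10}$ ingredient cannot be inflated down to six groups of size $6$. For these I would build the codes directly with a cyclic automorphism. Take the point set $\bbZ_{6t}$ with the six-element cosets of $\langle t\rangle$ as groups, so that adding $1$ modulo $6t$ permutes the $t$ groups cyclically and is an automorphism of the intended code. Since $6t(t-1)=(t-1)\cdot 6t$, I would seek $t-1$ base codewords whose full-length orbits under $\bbZ_{6t}$ tile $\C$. For each base codeword one checks that it has composition $[2,2]$ and that its four coordinates have four distinct residues modulo $t$ (guaranteeing at most one nonzero entry per group). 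Using the identity $d_H(\vu,\vv)=8-|\supp(\vu)\cap\supp(\vv)|-a$, where $a$ counts coordinates on which $\vu$ and $\vv$ agree with a common nonzero symbol, the distance requirement $d_H\ge 6$ translates, by translation invariance, into difference conditions: every translate of one base codeword must meet any base codeword (including itself) in at most two support coordinates, and never in two coordinates that share a symbol. Concretely one records, for each symbol class, the list of coordinate-differences within the base codewords and verifies that the combined pattern forces $|\supp(\vu)\cap\supp(\vv)|+a\le 2$ for every pair of developed codewords.

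The genuinely hard part is this last step: exhibiting, for each of $t=5,6,8$, an explicit family of $t-1$ base codewords over $\bbZ_{6t}$ satisfying all the difference conditions simultaneously. I would locate these by a short computer search and then display them in a table, in direct analogy with the base codewords $\langle0,5,3,7\rangle$, $\langle0,4,1,13\rangle$, $\langle0,8,14,19\rangle$ used in Lemma~\ref{[2,2]-GDC:2^10}; once listed, the verification that the developed code attains minimum distance six is a finite, mechanical inspection of the difference lists. Everything else the size bookkeeping, the distinct-residue check, and the two reductions above is routine.
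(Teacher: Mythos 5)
Your reductions for $t\in\{7,9,10,11\}$ coincide with the paper's: the odd cases come from skew Room squares of side $7$, $9$, $11$ fed through Proposition~\ref{StoG} (the paper invokes Lemma~\ref{SRF2GDC(6t)^u}, which is exactly that combination of Lemma~\ref{SRF} and Proposition~\ref{StoG} with frame parameter $1$), and $t=10$ is obtained by inflating the type $2^{10}$ code with weight $3$. Your distance bookkeeping ($d_H(\vu,\vv)=8-s-a$, hence the requirement $s+a\le 2$) and its translation into difference conditions on base codewords are also correct.

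The gap is that for $t\in\{5,6,8\}$ you never produce the base codewords, and for a constructive existence lemma of this kind the explicit codewords \emph{are} the proof; asserting that a computer search "would" find them does not establish the statement. The paper supplies them: for $t=5$ the four base codewords $\langle0,24,1,13\rangle$, $\langle0,9,8,11\rangle$, $\langle0,3,17,26\rangle$, $\langle0,12,4,28\rangle$ over $\bbZ_{30}$ developed $+1\pmod{30}$, and analogously seven base codewords over $\bbZ_{48}$ for $t=8$. A second, more substantive concern is $t=6$: you prescribe five full-length orbits under the regular action of $\bbZ_{36}$, but the paper's construction for $t=6$ does not fit that template --- it is built on $\bbZ_{12}\times I_3$ from fifteen base codewords developed under $(+1\bmod{12},-)$, i.e., orbits of length $12$. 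So it is not established (and the authors' departure from the cyclic pattern for precisely this value suggests it may be false) that your narrower search space is nonempty; you should either widen the admissible automorphism group for $t=6$ or actually exhibit a $\bbZ_{36}$-cyclic solution before claiming the case is handled.
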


\begin{proof} For $t\in \{5,8\}$, let $X_t=\bbZ_{6t}$, and ${\cal
G}_t=\{\{i,i+t,i+2t,i+3t,i+4t,i+5t\}:0\leq i\leq t-1\}$. Then
$(X_t,{\cal G}_t,{\cal C}_t)$ is a $[2,2]$-GDC$(6)$ of type
$6^t$ and size $6t(t-1)$, where ${\cal C}_t$ is obtained by
developing the elements of $\bbZ_{6t}$ in the following
codewords $+1\pmod{6t}$.

\noindent $t=5$: $\langle0,24,1,13\rangle$
$\langle0,9,8,11\rangle$ $\langle0,3,17,26\rangle$
$\langle0,12,4,28\rangle$

\noindent $t=8$: $\langle0,18,13,3\rangle$
$\langle0,28,21,25\rangle$ $\langle0,10,22,36\rangle$
$\langle0,46,27,9\rangle$ $\langle0,14,31,37\rangle$
$\langle0,6,5,7\rangle$ $\langle0,4,19,39\rangle$

For $t=6$, let $X_6=\bbZ_{12}\times I_3$, and ${\cal
G}_6=\{\{(i,0),(i+6,0),(i,1),(i+6,1),(i,2),(i+6,2)\}:0\leq
i\leq 5\}$. Then $(X_6,{\cal G}_6,{\cal C}_6)$ is a
$[2,2]$-GDC$(6)$ of type $6^6$ and size $180$, where ${\cal
C}_6$ is obtained by developing the elements of $\bbZ_{12}\times
I_3$ in the following codewords $(+1\pmod{12},-)$.
$$\begin{array}{lllll}
\langle0_1, 9_1, 10_0, 5_0\rangle &
\langle0_0, 5_0, 8_0, 9_0\rangle &
\langle0_0, 2_0, 7_2, 4_2\rangle &
\langle0_0, 11_0, 7_1, 10_1\rangle&
\langle0_2, 2_2, 4_1, 7_0\rangle \\
\langle0_2, 9_2, 1_0, 11_0\rangle &
\langle0_1, 3_0, 4_2, 2_2\rangle &
\langle0_2, 5_2, 8_0, 1_1\rangle &
\langle0_1, 11_0, 8_2, 7_2\rangle &
\langle0_1, 2_1, 4_0, 7_1\rangle \\
\langle0_1, 7_0, 10_2, 5_2\rangle &
\langle0_2, 9_1, 5_1, 8_2\rangle  &
\langle0_1, 9_2, 1_2, 4_1\rangle &
\langle0_1, 9_0, 11_1, 1_1\rangle &
\langle0_2, 11_2, 10_1, 9_0\rangle \\
\end{array}$$

For $t\in\{7,9,11\}$, the desired GDCs are obtained from Lemma~\ref{SRF2GDC(6t)^u}. For $t=10$, inflate a $[2,2]$-GDC$(6)$ of type $2^{10}$ with weight $3$ to obtain the desired code.
\end{proof}

\begin{lemma}
\label{[2,2]-GDC:sg^u} There exists a $[2,2]$-GDC$(6)$ of type $g^u$ with size $u(u-1)g^2/6$ for each $(g,u) \in \{(3,7), (3,11), (3,13),$ $(10,7), (18,4)\}$.
\end{lemma}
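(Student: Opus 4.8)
The plan is to treat the five pairs one at a time, building each code either by a direct cyclic (difference) construction or recursively from a smaller code via the Inflation Construction (Construction~\ref{Inflation}). Throughout one uses that $6=2(w-1)$ with $w=4$, so all of the group-code constructions apply at this boundary value of $d$, and that each target size equals $u(u-1)g^2/6$.

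For the three pairs with $g=3$, namely $(3,7),(3,11),(3,13)$, I would give a direct construction entirely analogous to the $t=5,8$ cases of Lemma~\ref{[2,2]-GDC:6^t}. Take the point set to be $\bbZ_{3u}$ and the groups to be the three-element cosets $\{i,\,i+u,\,i+2u\}$, $0\le i\le u-1$, so the type is $3^u$. Since developing a codeword by $x\mapsto x+1\pmod{3u}$ yields $3u$ distinct codewords, I need exactly $(u-1)/2$ base codewords, that is $3$, $5$ and $6$ base codewords for $u=7,11,13$ respectively. Each base codeword $\langle a,b,c,d\rangle$ (positions $a,b$ carrying the symbol $1$ and $c,d$ the symbol $2$) must have its four coordinates in four distinct groups, i.e. $a,b,c,d$ pairwise distinct modulo $u$, and the cyclic development must have distance $6$.

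For the two remaining pairs I would use inflation. Writing $648=8\cdot 9^2$ and $700=28\cdot 5^2$, it suffices to build a $[2,2]$-GDC$(6)$ of type $2^4$ of size $8$ and one of type $2^7$ of size $28$, and then inflate them by $m=9$ and $m=5$; the transversal designs $\mathrm{TD}(4,9)$ and $\mathrm{TD}(4,5)$ exist by Theorem~\ref{TD}(i), and Construction~\ref{Inflation} then yields types $18^4$ and $10^7$ with sizes $8\cdot 81=648$ and $28\cdot 25=700$, matching the stated values. The codes of type $2^4$ and $2^7$ are themselves tiny cyclic constructions over $\bbZ_8$ and $\bbZ_{14}$ (with groups $\{i,i+4\}$ and $\{i,i+7\}$), requiring only $1$ and $2$ base codewords; alternatively both large codes could be built directly over $\bbZ_{72}$ and $\bbZ_{70}$ with $9$ and $10$ base codewords.

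The main obstacle in each direct construction is the distance-$6$ verification. For two codewords $\vu,\vv$ of weight $4$ and composition $[2,2]$ one has $d_H(\vu,\vv)=8-2a-b$, where $a$ counts positions at which both are nonzero with equal symbol and $b$ counts positions at which both are nonzero with unequal symbol; hence $d_H\ge 6$ is equivalent to $2a+b\le 2$. In the cyclic setting this must hold for every pair of translates, which reduces to a difference condition on the base codewords: each coordinate difference, tagged by the ordered pair of symbols it joins, may recur only as often as $2a+b\le 2$ permits. The real work is exhibiting base codewords whose tagged differences meet this packing requirement; once they are written down the check is finite, exactly as in Lemmas~\ref{[2,2]-GDC:2^10} and~\ref{[2,2]-GDC:6^t}.
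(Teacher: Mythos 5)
Your treatment of the three pairs with $g=3$ coincides with the paper's own proof: a direct cyclic construction on $\bbZ_{3u}$ with groups $\{i,i+u,i+2u\}$ and $(u-1)/2$ full-orbit base codewords (namely $3$, $5$, $6$ of them for $u=7,11,13$), and your distance criterion $d_H(\vu,\vv)=8-2a-b$, hence $2a+b\le 2$, is the correct reduction. The paper also handles $(10,7)$ and $(18,4)$ by direct constructions (ten base codewords over $\bbZ_{70}$ developed $+1$, and eighteen base codewords over $\bbZ_{72}$ developed $+2$, i.e.\ half-orbits), so your fallback option is in the right spirit, though the entire substance of such a lemma is the explicit base codewords, which you do not exhibit.

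The inflation route you propose as the primary argument for $(10,7)$ and $(18,4)$ has a genuine gap: the required ingredients do not exist, or are not known to exist. A $[2,2]$-GDC$(6)$ of type $2^4$ with size $8$ is impossible. By your own criterion, two codewords with the same support (a transversal of the four groups) violate $2a+b\le 2$, and two codewords whose supports agree in three groups give $2a+b\ge 3$; hence the eight supports, viewed as vectors in $\{0,1\}^4$, would form a binary code of size $8$ and minimum distance $2$, which up to swapping the two points of one group is the even-weight code. When two supports agree in exactly two groups one needs $a=0$, i.e.\ the symbols must differ at \emph{both} common positions. But the supports $0000$, $1100$, $1010$, $0110$ pairwise agree at the fourth group, so their symbols there would have to be pairwise distinct --- impossible for more than two codewords over a two-element symbol set. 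As for the other ingredient, a $[2,2]$-GDC$(6)$ of type $2^7$ with size $28$ would in particular be a $(14,6,[2,2])_3$-code meeting the bound $U(14,6,[2,2])=28$, whereas the paper only establishes $A_3(14,6,[2,2])\ge 27$ and lists $n=14$ among its unresolved cases; asserting that this ingredient is a two-base-codeword cyclic construction over $\bbZ_{14}$ claims something the authors themselves could not obtain. So for $(10,7)$ and $(18,4)$ you must fall back on direct constructions (or find ingredients of some other type), as the paper does.
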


\begin{proof} For each $[2,2]$-GDC$(6)$ of type $g^u$, let $X=\bbZ_{gu}$, and ${\cal
G}=\{\{i,i+u,i+2u,\dots,i+(g-1)u\}:0\leq i\leq u-1\}$. Then $(X,{\cal
G},{\cal C})$ is a $[2,2]$-GDC$(6)$ of type $g^u$, where ${\cal C}$ is obtained by developing the elements
of $\bbZ_{gu}$ in the following codewords under the automorphism group as below.

\noindent $3^7$: $+1\pmod{21}$
$$\begin{array}{lll}
\langle7,3,18,13\rangle & \langle4,3,12,16\rangle & \langle0,5,2,3\rangle
\end{array}$$

\noindent $3^{11}$: $+1\pmod{33}$
$$\begin{array}{lllll}
 \langle25,15,0,7\rangle & \langle13,27,15,25\rangle & \langle0,7,1,6\rangle &
 \langle10,5,1,14\rangle & \langle2,15,18,32\rangle \\
\end{array}$$

\noindent $3^{13}$: $+1\pmod{39}$
$$\begin{array}{llllll}
\langle20,2,25,10\rangle & \langle21,6,23,17\rangle  & \langle6,12,5,26\rangle &
\langle30,3,28,19\rangle & \langle21,24,31,4\rangle & \langle0,9,1,4\rangle
\end{array}$$


\noindent $10^7$: $+1\pmod{70}$
$$\begin{array}{llllll}
\langle0,31,50,5\rangle & \langle0,61,24,46\rangle & \langle0,25,6,68\rangle & \langle0,32,16,52\rangle &
\langle0,3,2,40\rangle & \langle0,34,47,64\rangle \\
\langle0,48,1,60\rangle & \langle0,17,27,58\rangle & \langle0,11,26,29\rangle &  \langle0,62,57,66\rangle \\
\end{array}$$

\noindent $18^4$: $+2\pmod{72}$
$$\begin{array}{llllll}
\langle1,3,8,22\rangle &
\langle0,34,1,15\rangle &
\langle1,7,16,54\rangle &
\langle0,2,21,71\rangle &
\langle0,9,54,63\rangle &
\langle0,10,5,47\rangle \\
\langle0,6,51,41\rangle &
\langle0,7,18,25\rangle &
\langle43,5,2,28\rangle &
\langle43,1,4,14\rangle &
\langle0,26,11,13\rangle &
\langle1,15,50,56\rangle \\
\langle0,30,23,61\rangle &
\langle0,14,43,17\rangle &
\langle0,22,49,55\rangle &
\langle1,23,60,18\rangle &
\langle1,27,26,28\rangle &
\langle1,11,62,40\rangle \\
\end{array}$$
\end{proof}

\begin{lemma}
\label{[2,2]-GDC:s6^t3^1} There exists a $[2,2]$-GDC$(6)$ of type
$6^t3^1$ with size $6t^2$ for each $t=4$ or $6\leq t\leq 11$.
\end{lemma}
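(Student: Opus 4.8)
The plan is to give an explicit, direct construction for each admissible $t$, in exactly the style of Lemmas~\ref{[2,2]-GDC:6^t} and \ref{[2,2]-GDC:sg^u}. First I would fix the point set and group partition: take $X=\bbZ_{6t}\cup\{\infty_0,\infty_1,\infty_2\}$, let the three symbols $\infty_0,\infty_1,\infty_2$ form the single group of size $3$, and split $\bbZ_{6t}$ into the $t$ groups of size $6$ given by the cosets $\{i,i+t,\ldots,i+5t\}$, $0\le i\le t-1$, of the order-$6$ subgroup $\langle t\rangle$. Codewords are developed under $x\mapsto x+1\pmod{6t}$ on the finite coordinates, fixing each $\infty_j$. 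Since a generic weight-$4$ codeword has trivial stabilizer under this action, each base codeword generates a full orbit of length $6t$; thus supplying exactly $t$ base codewords yields $6t\cdot t=6t^{2}$ codewords, matching the asserted size.

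Next I would exhibit the $t$ base codewords for each $t\in\{4,6,7,8,9,10,11\}$, arranged as an array as in the preceding lemmas. Some base codewords must involve the points $\infty_j$, so that the pairs between the short group and the size-$6$ groups get covered; the remaining ones lie entirely in $\bbZ_{6t}$. The correctness check is the usual difference count. Writing each base codeword with its two $1$-positions and its two $2$-positions, one verifies (i) that no codeword meets a group twice, i.e.\ the two $1$-positions, the two $2$-positions and the four mixed pairs never fall in a common coset of $\langle t\rangle$ nor both among the $\infty_j$; and (ii) that the list of ordered ``($1$-position, $2$-position)'' differences, together with the monochromatic-pair differences, realizes each required coincidence with the correct multiplicity. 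Concretely, for weight-$4$, composition-$[2,2]$ codewords one shows that $d_H\ge 6$ is equivalent to the condition that two codewords share at most two positions and, when they share two, the symbols differ at both. Translating this through the cyclic development turns the whole verification into checking that a finite list of differences (in $\bbZ_{6t}$ for finite--finite pairs, and in residue classes for pairs involving some $\infty_j$) contains no forbidden repetition.

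The main obstacle is producing the base codewords themselves: this is an explicit search, and the delicate part is not the bulk of finite--finite differences but the interaction with $\infty_0,\infty_1,\infty_2$. Because these points are fixed by the development, every codeword through a given $\infty_j$ has the same $\infty_j$-coordinate, so the pairs $\{\infty_j,x\}$ with $x\in\bbZ_{6t}$ must be covered the right number of times using only the handful of base codewords that touch the short group; balancing this while keeping all finite--finite differences consistent is where the search is tight. I expect $t=4$ to be the most constrained case (fewest base codewords and points) and would treat it first, the larger $t$ offering more freedom. If a direct search proves awkward for an isolated value of $t$, a fallback is to apply the Fundamental Construction (Construction~\ref{FundCtr}) to a GDD with a single short group and a weight function producing $6^{t}3^{1}$, using already-established small $[2,2]$-GDC$(6)$s as ingredients; but I expect the direct construction to succeed for all the listed $t$.
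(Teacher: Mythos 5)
Your overall plan (explicit cyclic constructions with $t$ base codewords developed modulo $6t$) is the same as the paper's, but there is a fatal structural error in how you treat the short group. You fix the three points $\infty_0,\infty_1,\infty_2$ under the development $x\mapsto x+1\pmod{6t}$ and claim every base codeword generates a full orbit of length $6t$. For a base codeword through some $\infty_j$ this is impossible: all $6t$ translates would contain $\infty_j$ with the \emph{same} symbol there. For two weight-$4$ codewords $u,v$ sharing $s$ support positions and agreeing on $a$ of them, $d_H(u,v)=8-s-a$, so distance $6$ forces $s+a\le 2$; two codewords agreeing at $\infty_j$ must therefore be disjoint on their remaining three finite positions, and one cannot fit $6t$ pairwise disjoint $3$-subsets into $\bbZ_{6t}$ (at most $2t$ fit). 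So any base codeword meeting the short group immediately breaks the distance condition under your action. Note also that you cannot avoid such base codewords: a counting argument (each point of $\bbZ_{6t}$ lies in at most $2(2t-2)$ codewords avoiding its own group and the short group) caps the codewords supported entirely on $\bbZ_{6t}$ at $6t(t-1)$, short of the required $6t^2$.

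The paper's construction repairs exactly this point: its automorphism is $(0\ 1\ \cdots\ 6t-1)(6t\ 6t+1\ 6t+2)$ (and a variant for $t=4$), i.e.\ the three points of the short group are cycled rather than fixed. An orbit of a base codeword through the short group then distributes its $6t$ members evenly over $\infty_0,\infty_1,\infty_2$, putting only $2t$ codewords (with a common symbol) through each, which is attainable precisely when the three finite entries of the base codeword lie in distinct residue classes modulo $3$. Your fallback via the Fundamental Construction is not developed enough to substitute for the missing constructions. A secondary quibble: a GDC is a packing-type object, so there is no requirement that pairs between the short group and the long groups be ``covered''; what forces codewords through the short group is the size count above, not a covering condition.
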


\begin{proof} Let $X_t=I_{6t+3}$, and ${\cal
G}_t=\{\{i,i+t,i+2t,\dots,i+5t\}:0\leq i\leq
t-1\}\cup\{\{6t,6t+1,6t+2\}\}$. Then $(X_t,{\cal G}_t,{\cal
C}_t)$ is a $[2,2]$-GDC$(6)$ of type $6^t3^1$, where ${\cal
C}_4$ is obtained by developing
the following codewords under the
automorphism group $G=\langle(0\ \ 2\ \ 4\ \ \cdots\ \ 22)(1\ \
3\ \ 5\ \ \cdots\ \ 23)$ $(24\ \ 25\ \ 26)\rangle$, and ${\cal C}_t$ for $6\leq t\leq 11$ is obtained by developing
the following codewords under the
automorphism group $G=\langle(0\ \ 1\ \ 2\ \ \cdots\ \
6t-1)(6t\ \ 6t+1\ \ 6t+2)\rangle$.

\noindent $t=4$:
$$\begin{array}{llll}
\langle1,25,8,6\rangle & \langle0,10,24,23\rangle & \langle1,7,10,20\rangle & \langle0,26,19,17\rangle \\ \langle0,22,7,1\rangle & \langle1,11,12,26\rangle & \langle0,6,21,11\rangle & \langle1,3,18,0\rangle \\
\end{array}$$

\noindent $t=6$:
$$\begin{array}{llllll}
\langle0,34,27,5\rangle & \langle0,28,3,13\rangle & \langle0,10,19,35\rangle & \langle0,20,17,15\rangle &
\langle0,37,32,4\rangle & \langle0,14,1,38\rangle \\
\end{array}$$

\noindent $t=7$:
$$\begin{array}{llllll}
\langle0,26,25,44\rangle & \langle0,8,12,38\rangle & \langle0,43,20,22\rangle & \langle0,40,1,9\rangle &
\langle0,6,23,33\rangle & \langle0,18,31,37\rangle \\ \langle0,10,15,39\rangle \\
\end{array}$$

\noindent $t=8$:
$$\begin{array}{llllll}
\langle0,10,12,46\rangle & \langle0,28,5,23\rangle & \langle0,26,41,45\rangle & \langle0,18,3,9\rangle &
\langle0,14,13,49\rangle & \langle0,42,1,29\rangle \\ \langle0,48,11,37\rangle & \langle0,4,21,31\rangle\\
\end{array}$$

\noindent $t=9$:
$$\begin{array}{llllll}
\langle0,52,5,21\rangle & \langle0,54,49,53\rangle & \langle0,38,3,51\rangle & \langle0,14,1,47\rangle &
\langle0,8,43,55\rangle & \langle0,34,12,22\rangle \\ \langle0,6,17,37\rangle & \langle0,44,15,29\rangle &
\langle0,4,28,30\rangle  \\
\end{array}$$

\noindent $t=10$:
$$\begin{array}{llllll}
\langle0,12,36,37\rangle & \langle0,54,45,57\rangle & \langle0,8,29,35\rangle & \langle0,34,23,61\rangle &
\langle0,62,19,41\rangle & \langle0,4,9,17\rangle \\ \langle0,1,15,47\rangle & \langle0,58,16,42\rangle &
\langle0,28,7,11\rangle & \langle0,22,53,55\rangle
\end{array}$$

\noindent $t=11$:
$$\begin{array}{llllll}
\langle0,10,1,13\rangle & \langle0,14,43,68\rangle & \langle0,12,30,38\rangle & \langle0,6,15,47\rangle &
\langle0,8,59,61\rangle & \langle0,21,49,63\rangle\\ \langle0,67,7,17\rangle  & \langle0,16,36,40\rangle &
\langle0,64,25,46\rangle & \langle0,34,5,65\rangle & \langle0,4,23,39\rangle   \\
\end{array}$$
\end{proof}

\begin{lemma}
\label{[2,2]-GDC:s6^t9^1} There exists a $[2,2]$-GDC$(6)$ of type
$6^t9^1$ with size $6t(t+2)$ for each $t\in \{6,7,8\}$.
\end{lemma}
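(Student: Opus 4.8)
The plan is to give a direct construction entirely parallel to the one used for type $6^t3^1$ in Lemma~\ref{[2,2]-GDC:s6^t3^1}. I would take the point set to be $I_{6t+9}$, with the $t$ groups of size $6$ being $\{i,i+t,\ldots,i+5t\}$ for $0\leq i\leq t-1$ and the single group of size $9$ being $\{6t,6t+1,\ldots,6t+8\}$. To keep the number of base codewords small, I would develop the code under the cyclic automorphism group $G=\langle\sigma\rangle$ of order $6t$, where
$$\sigma=(0\ 1\ \cdots\ 6t-1)(6t\ 6t+1\ 6t+2)(6t+3\ 6t+4\ 6t+5)(6t+6\ 6t+7\ 6t+8).$$
It is immediate that $\sigma$ preserves the group partition: it cyclically permutes the size-$6$ groups ($G_i\mapsto G_{i+1}$, indices modulo $t$) and fixes the size-$9$ group setwise. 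Since $\sigma$ merely permutes coordinate positions, $G$ acts as a group of automorphisms preserving both the Hamming distance and the composition $[2,2]$, so it carries codewords to codewords.

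Next I would exhibit, for each $t\in\{6,7,8\}$, a list of exactly $t+2$ base codewords $\langle a,b,c,d\rangle$ (value $1$ on $a,b$ and value $2$ on $c,d$), lying in distinct $G$-orbits and found by computer search, and form $\C$ by developing them under $G$. Moreover, no nonidentity element of $G$ fixes a codeword: the two value-$1$ coordinates lie in distinct groups, and the only nonidentity $\sigma^j$ that could preserve their two-element set would force them to differ by $3t$, hence to lie in a common size-$6$ group, a contradiction. Consequently each orbit has full length $6t$ and the $t+2$ orbits are disjoint, giving $|\C|=6t(t+2)$, as required. By $G$-invariance, verifying that $\C$ has distance $6$ then reduces to a finite check: it suffices to confirm $d_H(B,\sigma^jB')\geq 6$ for all pairs of base codewords $B,B'$ and all $0\leq j<6t$ (excluding the pair $(B,B')=(B,B)$ with $j=0$).

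The crux is the second step: selecting base codewords so that the distance condition survives under every translate. For two weight-$4$, composition-$[2,2]$ codewords $u,v$ a short computation gives $d_H(u,v)=8-2a-b$, where $a$ counts the positions on which $u,v$ carry the same nonzero value and $b$ the positions on which they carry different nonzero values; thus distance $6$ is equivalent to $2a+b\leq 2$. This stringent constraint translates into a family of difference conditions on the base codewords, separating the interactions of value-$1$ with value-$1$, of value-$2$ with value-$2$, and of value-$1$ with value-$2$ coordinates. Finding base codewords that simultaneously avoid every forbidden difference (modulo $6t$ on the size-$6$ part and on the size-$9$ group) is the only nontrivial part; once such lists are tabulated, the remaining verification is routine arithmetic.
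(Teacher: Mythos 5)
Your plan is exactly the construction the paper uses: point set $I_{6t+9}$, the same group partition, and development of $t+2$ base codewords under an order-$6t$ automorphism that shifts $\bbZ_{6t}$ cyclically and permutes the size-$9$ group within itself. The only structural difference is cosmetic: the paper's generator acts on $\{6t,\ldots,6t+8\}$ as a $6$-cycle times a $3$-cycle, whereas yours acts as three $3$-cycles; both have order $6t$ and both preserve the partition. Your reduction of the distance condition to $2a+b\leq 2$ is correct, and your full-orbit count is essentially right, though the stabilizer argument as written only treats the case where both value-$1$ coordinates lie in $\bbZ_{6t}$; when one of them sits in the size-$9$ group you should instead note that $\sigma^j$ preserves each of the two parts of the point set, so it must fix the $\bbZ_{6t}$-coordinate individually, forcing $j=0$. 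That is a one-line repair.

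The genuine gap is that you never produce the base codewords. This lemma is a pure existence statement whose entire mathematical content is the explicit list of $t+2$ base codewords for $t=6,7,8$ (which the paper tabulates), together with the routine verification that all translates of all pairs are at distance at least $6$. Saying they are ``found by computer search'' without exhibiting them, and without verifying that suitable codewords exist for your particular choice of $\sigma$ (whose orbit structure on the size-$9$ group differs from the paper's and therefore changes which cross-differences a single base codeword covers), leaves the crux unproved. As it stands the proposal is a correct and well-motivated proof \emph{strategy} --- the same one the authors follow --- but not a proof.
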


\begin{proof} Let $X_t=I_{6t+9}$, and ${\cal
G}_t=\{\{i,i+t,i+2t,\dots,i+5t\}:0\leq i\leq
t-1\}\cup\{\{6t,6t+1,6t+2,\dots,6t+8\}\}$. Then $(X_t,{\cal
G}_t,{\cal C}_t)$ is a $[2,2]$-GDC$(6)$ of type $6^t9^1$, where
${\cal C}_t$ is obtained by developing the following codewords under the
automorphism group $G=\langle(0\ \ 1\ \ 2\ \ \cdots\ \
6t-1)(6t\ \ 6t+1\ \ 6t+2\ \ 6t+3\ \ 6t+4\ \ 6t+5)$ $(6t+6\ \
6t+7\ \ 6t+8)\rangle$.

\noindent $t=6$:
$$\begin{array}{llll}
\langle0,36,5,15\rangle & \langle0,22,13,29\rangle & \langle0,34,21,38\rangle & \langle24,4,7,37\rangle\\
\langle0,10,9,11\rangle & \langle0,43,17,31\rangle & \langle0,41,33,25\rangle & \langle0,8,4,44\rangle \\
\end{array}$$

\noindent $t=7$:
$$\begin{array}{llll}
\langle0,36,34,10\rangle & \langle0,45,8,39\rangle & \langle0,23,25,49\rangle & \langle0,44,3,22\rangle  \\
\langle0,24,41,12\rangle & \langle0,5,46,32\rangle & \langle0,29,33,42\rangle & \langle0,48,38,1\rangle \\
\langle0,11,20,26\rangle \\
\end{array}$$

\noindent $t=8$:
$$\begin{array}{llll}
\langle0,10,3,15\rangle & \langle0,14,54,37\rangle & \langle0,49,27,29\rangle & \langle0,36,31,1\rangle  \\
\langle0,18,9,35\rangle & \langle0,56,47,19\rangle & \langle0,22,51,33\rangle & \langle0,46,42,4\rangle \\
\langle0,48,7,21\rangle & \langle0,20,52,45\rangle\\
\end{array}$$

\end{proof}

\begin{lemma}
\label{[2,2]-GDC:sg^um^1} There exists a $[2,2]$-GDC$(6)$ of type
$g^u m^1$ with size $(g^2u(u-1)+2gum)/6$ for each $(g,u,m) \in
\{(9,5,9), (9,5,15),$ $ (18,4,6), (18,4,12), (18, 6, 33),(24,4,6),$
$(24,4,9)\}$.
\end{lemma}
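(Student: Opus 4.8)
The plan is to construct each of the seven group divisible codes directly by the method of differences, exactly in the spirit of Lemmas~\ref{[2,2]-GDC:sg^u}--\ref{[2,2]-GDC:s6^t9^1}, keeping the recursive constructions in reserve for the cases in which $g$ factors conveniently. The first step is to recast the distance-$6$ requirement into a condition that can be checked on base codewords. For two weight-$4$ codewords $\vu,\vv\in\bbZ_3^X$ of composition $[2,2]$, let $P$ be the number of positions carrying the \emph{same} nonzero symbol in both, and $Q$ the number of positions carrying \emph{different} nonzero symbols in both. Since $\|\vu\|=\|\vv\|=4$, a short count gives $d_H(\vu,\vv)=8-2P-Q$, so the code has minimum distance $6$ exactly when
\[
2P+Q\le 2
\]
for every pair of distinct codewords. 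In particular no two codewords may share two \emph{(position,\,symbol)} incidences, two codewords sharing one such incidence may have no conflicting position, and two codewords sharing none may conflict in at most two positions.

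For each triple $(g,u,m)$ I would take the point set to be $\bbZ_{gu}\cup M$, where the $u$ groups of size $g$ are the residue classes $\{i,i+u,\dots,i+(g-1)u\}$ modulo $u$ inside $\bbZ_{gu}$ and $M$ is a separate $m$-set forming the last group. The automorphism group is $\langle\sigma\rangle$, where $\sigma$ sends $x\mapsto x+1$ on $\bbZ_{gu}$ and acts as a fixed-point-free permutation of $M$ whose order divides $gu$; one checks in each case (e.g.\ with $|M|$-cycles of length $3$ when $(g,u,m)=(18,6,33)$ or $(24,4,9)$) that $|\langle\sigma\rangle|=gu$. The code is then the union of the full $\langle\sigma\rangle$-orbits of a listed set of base codewords. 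Because $x\mapsto x+1$ permutes the $u$ cyclic groups transitively, a base codeword meets four distinct groups iff its four coordinates are pairwise distinct modulo $u$ (with at most one coordinate in $M$), and this is preserved under $\sigma$; this secures the group constraint $\|\vu\mid_{G_i}\|\le 1$. Counting orbits, the number of base codewords must be $(g^2u(u-1)+2gum)/(6gu)$, which is an integer in every case and fixes exactly how many base codewords to exhibit (for instance $9$ of orbit length $45$ for $(9,5,9)$ and $26$ of orbit length $108$ for $(18,6,33)$).

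The verification of $2P+Q\le 2$ then reduces to a finite difference computation on the base codewords. For a shift $\delta=k'-k$, a symbol-$1$ coincidence between $B+k$ and $B'+k'$ occurs precisely when $\delta$ is a difference of a $1$-position of $B$ and a $1$-position of $B'$, and likewise for the $2$-positions; these contribute to $P$, while the mixed $1$-versus-$2$ differences contribute to $Q$. One checks, over all ordered pairs of base codewords and all $\delta$ (including $B=B'$ with $\delta\ne0$, which governs the within-orbit distances), that the simultaneously solvable difference equations never force $2P+Q\ge 3$. For the triples in which $g$ is a multiple of a usable transversal-design order, the recursive route is available as a cross-check: inflating a $[2,2]$-GDC$(6)$ of a smaller type by a suitable TD$(4,\cdot)$ via Construction~\ref{Inflation}, or applying Construction~\ref{FundCtr} to a master GDD, using Theorem~\ref{TD} for the existence of the transversal designs; I would use these to shorten or confirm the direct lists where the arithmetic permits.

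The main obstacle is locating the base codewords in the first place. The packing condition $2P+Q\le 2$ is considerably stronger than merely asking that pairs be distinct, because the term $Q$ \emph{couples} the symbol-$1$ and symbol-$2$ coordinates: a configuration that would be innocuous for a pure $[2,2]$-design can still produce a third incidence across the two symbol classes and destroy the distance. Consequently the base codewords must be found by a (computer-assisted) search that simultaneously balances the within-symbol ($11$ and $22$) and the mixed ($12$) differences so that each shift $\delta$ is realized at most to the permitted multiplicity, while also forcing every base codeword to have full orbit length so that the orbit count meets the optimal size $(g^2u(u-1)+2gum)/6$ exactly. Once such explicit lists and automorphism groups are displayed, the composition, type, size, and distance all follow mechanically from the difference check above.
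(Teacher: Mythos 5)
Your framework is exactly the one the paper uses: each of the seven codes is built directly on $I_{gu+m}$ with the $u$ groups of size $g$ being the residue classes modulo $u$ and the extra $m$-set forming the last group, and the code is the union of full orbits of base codewords under an automorphism that is a $gu$-cycle on the first $gu$ points combined with short cycles on the remaining $m$ points. Your reduction of the distance condition to $2P+Q\le 2$ is correct ($d_H=8-2P-Q$ for two weight-$4$ words), your orbit count $(g^2u(u-1)+2gum)/(6gu)$ matches the number of base codewords the paper actually lists (e.g.\ $9$ for $9^59^1$ and $26$ for $18^633^1$), and your observation that the group constraint amounts to the four coordinates being distinct modulo $u$ is right.

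The gap is that you never produce the base codewords, and for a lemma of this kind the explicit lists \emph{are} the proof. The statement asserts the existence of seven specific combinatorial objects; everything you write --- the distance criterion, the orbit-length bookkeeping, the difference-counting scheme --- is the routine verification that would surround the data, but it does not establish that suitable data exist. You acknowledge this yourself (``the main obstacle is locating the base codewords in the first place'') and defer it to an unspecified computer search; the paper resolves exactly this obstacle by exhibiting, for each of the seven types, the base codewords and the precise cycle structure of the automorphism group (for instance, for $18^633^1$ the $33$ extra points are split into four $6$-cycles and one $9$-cycle, not into $3$-cycles as you suggest). Your proposed recursive cross-checks via Constructions~\ref{Inflation} and~\ref{FundCtr} are not used by the paper for this lemma and would not obviously supply any of these seven types. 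As it stands the proposal is a correct description of the method together with an accurate statement of what remains to be done, rather than a proof.
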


\begin{proof}
For each $[2,2]$-GDC$(6)$ of type $g^u m^1$, let $X=I_{gu+m}$, and ${\cal
G}=\{\{0,u,2u,\dots,(g-1)u\}+i:0\leq i\leq
u-1\}\cup\{\{gu,gu+1,gu+2,\dots,gu+m-1\}\}$. Then $(X,{\cal
G},{\cal C})$ is the desired $[2,2]$-GDC$(6)$, where
${\cal C}$ is obtained by developing the following codewords under the
automorphism group as below.

\noindent $9^5 9^1$: $G=\langle(0\ \ 1\ \ 2\ \ \cdots\ \
44)(45\ \ 46\ \ \cdots\ \ 53)\rangle$
$$\begin{array}{llllll}
\langle2, 45, 43, 1\rangle &
\langle0, 9, 17, 28\rangle &
\langle24, 21, 42, 3\rangle &
\langle4, 45, 30, 8\rangle &
\langle8, 42, 45, 40\rangle &
\langle17, 40, 24, 33\rangle \\
\langle1, 7, 45, 38\rangle &
\langle0, 12, 45, 14\rangle &
\langle14, 45, 27, 15\rangle \\
\end{array}$$

\noindent $9^5 15^1$: $G=\langle(0\ \ 1\ \ 2\ \ \cdots\ \
44)(45\ \ 46\ \ \cdots\ \ 59)\rangle$
$$\begin{array}{llllll}
\langle45, 35, 18, 27\rangle &
\langle45, 13, 9, 16\rangle &
\langle45, 41, 4, 17\rangle &
\langle24, 15, 41, 45\rangle &
\langle45, 22, 40, 6\rangle &
\langle0, 11, 42, 44\rangle \\
\langle18, 16, 22, 45\rangle &
\langle8, 21, 35, 45\rangle &
\langle4, 27, 43, 45\rangle &
\langle10, 17, 29, 45\rangle &
\langle45, 14, 15, 38\rangle &
\end{array}$$

\noindent $18^4 6^1$: $G=\langle(0\ \ 1\ \ 2\ \ \cdots\ \
71)(72\ \ 73\ \ 74\ \ 75\ \ 76\ \ 77)\rangle$
$$\begin{array}{llllll}
\langle0,26,9,19\rangle & \langle0,54,21,59\rangle & \langle0,50,53,76\rangle & \langle0,72,29,51\rangle &
\langle0,2,13,71\rangle & \langle0,30,1,7\rangle \\ \langle0,34,61,73\rangle & \langle0,66,17,35\rangle &
\langle0,10,25,67\rangle & \langle0,58,31,33\rangle & \langle0,77,37,63\rangle \\
\end{array}$$

\noindent $18^4 12^1$: $G=\langle(0\ \ 1\ \ 2\ \ \cdots\ \
71)(72\ \ 73\ \ 74\ \ 75\ \ 76\ \ 77)$ $(78\ \ 79\ \ 80\ \ 81\ \ 82\ \ 83)\rangle$
$$\begin{array}{llllll}
\langle0,46,7,72\rangle & \langle0,77,57,71\rangle & \langle0,34,63,79\rangle & \langle0,22,49,78\rangle &
\langle0,14,1,31\rangle & \langle0,10,47,65\rangle \\ \langle0,83,9,11\rangle & \langle0,18,23,61\rangle &
\langle0,42,39,45\rangle & \langle0,82,21,67\rangle & \langle0,6,19,41\rangle & \langle0,70,51,73\rangle \\
\langle0,76,15,25\rangle\\
\end{array}$$

\noindent $18^6 33^1$: $G=\langle(0\ \ 1\ \ 2\ \ \cdots\ \ 107)(108\
\ 109\ \ \cdots\ \  113)$ $(114\ \ 115\ \ \cdots\ \  119)(120\ \
121\ \ \cdots\ \  125)$ $(126\ \ 127\ \ \cdots\ \  131)$ $(132\ \ 133\
\ \cdots \ \ 140)\rangle$
$$\begin{array}{llllll}
\langle0,138,53,97\rangle & \langle0,20,17,131\rangle & \langle0,82,7,128\rangle &
\langle0,70,92,139\rangle & \langle0,52,27,108\rangle & \langle0,120,4,9\rangle \\
\langle0,121,68,81\rangle & \langle0,95,58,140\rangle & \langle0,47,43,63\rangle &
\langle0,74,67,118\rangle & \langle0,73,51,124\rangle & \langle0,8,40,133\rangle \\
\langle0,115,37,99\rangle & \langle0,136,28,29\rangle & \langle0,39,25,98\rangle &
\langle0,117,10,31\rangle & \langle0,127,11,80\rangle & \langle0,111,19,93\rangle \\
\langle0,87,89,125\rangle & \langle0,137,23,49\rangle & \langle0,44,15,85\rangle &
\langle0,62,45,109\rangle & \langle0,126,14,75\rangle & \langle0,112,57,65\rangle \\
\langle0,107,76,119\rangle & \langle0,103,50,106\rangle &
\end{array}$$

\noindent $24^4 6^1$: $G=\langle(0\ \ 1\ \ 2\ \ \cdots\ \
95)(96\ \ 97\ \ 98\ \ 99\ \ 100\ \ 101)\rangle$
$$\begin{array}{llllll}
\langle0,86,25,75\rangle & \langle0,82,81,97\rangle  & \langle0,42,23,29\rangle & \langle0,101,63,65\rangle  &
\langle0,30,21,47\rangle & \langle0,62,5,19\rangle\\ \langle0,38,11,89\rangle & \langle0,26,3,98\rangle &
\langle0,18,27,49\rangle & \langle0,100,57,67\rangle & \langle0,74,37,71\rangle & \langle0,2,15,45\rangle \\
\langle0,46,41,79\rangle & \langle0,6,7,61\rangle\\
\end{array}$$

\noindent $24^4 9^1$: $G=\langle(0\ \ 1\ \ 2\ \ \cdots\ \
95)(96\ \ 97\ \ 98)(99\ \ 100\ \ 101)$ $(102\ \ 103\ \ 104)\rangle$
$$\begin{array}{llllll}
\langle0,82,15,73\rangle & \langle0,38,3,49\rangle & \langle0,34,23,101\rangle & \langle0,10,81,7\rangle &
\langle0,26,19,25\rangle & \langle0,54,41,59\rangle \\ \langle0,18,45,31\rangle & \langle0,6,39,69\rangle &
\langle0,22,17,104\rangle & \langle0,30,51,9\rangle & \langle0,96,43,77\rangle & \langle0,46,47,98\rangle\\
\langle0,99,53,79\rangle & \langle0,2,57,67\rangle  & \langle0,102,35,37\rangle \\
\end{array}$$
\end{proof}

Let $P=[9,19]\cup[21,23]\cup[26,28]\cup[31,33]$.

\begin{lemma}
\label{[2,2]-GDCa} For each $t\geq 9$ and $t\not\in P$, there exists
a $[2,2]$-GDC$(6)$ of type $24^i30^j36^k42^l48^m$, where $i$,
$j$, $k$, $l$, $m$ are integers such that $4i+5j+6k+7l+8m=t$.
\end{lemma}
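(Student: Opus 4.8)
The plan is to realize the desired GDC by inflating, with constant weight $6$, the blocks of a group divisible design obtained by deleting one point from a suitable pairwise balanced design on $t+1$ points. The crucial bookkeeping observation is that the exceptional set $P$ is precisely the image, under the shift $v=t+1$, of the exceptional set of Theorem~\ref{PBD5-9}: one checks that $t+1\in[10,20]\cup[22,24]\cup[27,29]\cup[32,34]$ if and only if $t\in[9,19]\cup[21,23]\cup[26,28]\cup[31,33]=P$. Hence, for every $t\ge 9$ with $t\notin P$, Theorem~\ref{PBD5-9} supplies a $(t+1,\{5,6,7,8,9\},1)$-PBD, and the hypothesis on $t$ is exactly what is needed for this master design to exist.

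First I would convert this PBD into a GDD by point deletion. Fix a point $\infty$ of a $(t+1,\{5,6,7,8,9\},1)$-PBD $(Y,\B)$. Because the index is $1$, the blocks through $\infty$ partition $Y\setminus\{\infty\}$; deleting $\infty$ from each of them yields a partition $\G$ of the remaining $t$ points into parts whose sizes lie in $\{4,5,6,7,8\}$. Taking as blocks the set $\B'$ of those blocks of $\B$ that avoid $\infty$, I claim $(Y\setminus\{\infty\},\G,\B')$ is a $\{5,6,7,8,9\}$-GDD. Indeed, a pair lying inside a common part was contained in a (unique) block through $\infty$, hence in no block of $\B'$; a pair split between two different parts was contained in a unique block of $\B$ avoiding $\infty$, which is a block of $\B'$ of size in $\{5,6,7,8,9\}$. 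If $i,j,k,l,m$ denote the numbers of groups of sizes $4,5,6,7,8$ respectively, then counting points gives $4i+5j+6k+7l+8m=t$.

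Next I would apply the Fundamental Construction (Construction~\ref{FundCtr}) to this GDD with the constant weight function $\omega\equiv 6$; this is permissible since here $d=6\le 2(w-1)=6$ with $w=4$. For each block $B\in\B'$ the ingredient required is a $[2,2]$-GDC$(6)$ of type $6^{|B|}$ with $|B|\in\{5,6,7,8,9\}$, and every one of these exists by Lemma~\ref{[2,2]-GDC:6^t}. The construction then outputs a $[2,2]$-GDC$(6)$ whose groups are precisely the inflated groups of the GDD, of sizes $6\cdot4,6\cdot5,\dots,6\cdot8$; that is, a $[2,2]$-GDC$(6)$ of type $24^i30^j36^k42^l48^m$ with $4i+5j+6k+7l+8m=t$, as desired.

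I expect the only genuine care to reside in the first two paragraphs: matching the shift $v=t+1$ so that the PBD exception list lands exactly on $P$, and verifying that point deletion produces a bona fide GDD with groups in $\{4,\dots,8\}$ and blocks in $\{5,\dots,9\}$. After the master GDD is secured, the remainder is a routine invocation of the Fundamental Construction, because each ingredient $6^{|B|}$ has already been constructed in Lemma~\ref{[2,2]-GDC:6^t}.
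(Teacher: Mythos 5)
Your proposal is correct and follows essentially the same route as the paper: delete a point from a $(t+1,\{5,6,7,8,9\},1)$-PBD supplied by Theorem~\ref{PBD5-9} to get a $\{5,6,7,8,9\}$-GDD of type $4^i5^j6^k7^l8^m$, then apply the Fundamental Construction with weight $6$ using the type-$6^u$ ingredients from Lemma~\ref{[2,2]-GDC:6^t}. Your explicit check that the exception set of Theorem~\ref{PBD5-9} shifts exactly onto $P$ is a detail the paper leaves implicit, but the argument is the same.
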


\begin{proof} For each $t\geq 9$ and $t\not\in P$, take
a $(t+1,\{5,6,7,8,9\},1)$-PBD from Theorem~\ref{PBD5-9}, and
remove one point from the point set to obtain a
$\{5,6,7,8,9\}$-GDD of type $4^i 5^j 6^k 7^l 8^m$ with
$4i+5j+6k+7l+8m=t$. Apply the Fundamental Construction with
weight $6$ to this GDD and input $[2,2]$-GDC$(6)$s of type
$6^u$ for $u\in \{5,6,7,8,9\}$ from Lemma~\ref{[2,2]-GDC:6^t} to obtain
a $[2,2]$-GDC$(6)$ of type $24^i 30^j 36^k 42^l 48^m$.
\end{proof}

\begin{lemma}
\label{[2,2]-GDCb} The following $[2,2]$-GDC$(6)$s all exist:
\begin{enumerate}
\item[i)] type $18^u$ and size $54u(u-1)$ for $u\in
    \{4,5,6,7,$ $9,11\}$;
\item[ii)] type $24^u$ and size $96u(u-1)$ for $u\in
    \{4,7,8\}$;
\item[iii)] type $24^u 36^1$ and size $96u(u+2)$ for $u\in
    \{4,5\}$;
\item[iv)] type $18^8 42^1$ and size $5040$;
\item[v)] type $30^4 18^1$ and size $2520$;
\item[vi)] type $24^4 18^1$ and size $1728$.
\end{enumerate}
\end{lemma}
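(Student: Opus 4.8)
The plan is to split the six families into the symmetric types (i)--(ii), which I would obtain from the frame/difference-matrix/inflation machinery already available, and the asymmetric types (iii)--(vi), each of which carries one group of a different size. A useful guide throughout is that every claimed size equals the ``complete'' value $\frac{g^{2}u(u-1)+2gum}{6}$: for instance in (iii) one gets $\frac{24^{2}u(u-1)+2\cdot 24\cdot u\cdot 36}{6}=96u(u+2)$, and similarly the values $5040,2520,1728$ arise in (iv)--(vi). This is exactly the size produced by the Fundamental Construction (Construction~\ref{FundCtr}) when one assigns weight $6$ to every point of a master GDD and fills each block with a \emph{complete} $[2,2]$-GDC$(6)$ of type $6^{k}$, so I would try to realise each part either by such a weighting or, failing that, by a direct difference construction.

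For (i), write $18=6\cdot 3$. For the odd values $u\in\{5,7,9,11\}$ a skew Room frame of type $3^{u}$ exists, since $t(u-1)=3(u-1)$ is then even and none of the exceptions of Lemma~\ref{SRF2GDC(6t)^u} apply; that lemma thus yields type $18^{u}$. The value $u=4$ is supplied directly by Lemma~\ref{[2,2]-GDC:sg^u}, which already lists $(g,u)=(18,4)$. For $u=6$ I would inflate the type-$6^{6}$ code of Lemma~\ref{[2,2]-GDC:6^t} by a TD$(4,3)$ (which exists by Theorem~\ref{TD}(i)) using Construction~\ref{Inflation}, producing type $(3\cdot 6)^{6}=18^{6}$ of size $180\cdot 3^{2}=1620=54\cdot 6\cdot 5$. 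For (ii), write $24=6\cdot 4$: here $t(u-1)=4(u-1)$ is always even and $t=4$ avoids every exception, so Lemma~\ref{SRF2GDC(6t)^u} gives type $24^{u}$ for all $u\ge 4$, in particular for $u\in\{4,7,8\}$ (for $u=4$ one may instead use the difference matrix of Lemma~\ref{DM2GDCg^4}, as $24\equiv 0\pmod 4$).

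For (iii)--(vi) the target patterns are $6$ times $4^{u}6^{1}$ ($u=4,5$), $3^{8}7^{1}$, $5^{4}3^{1}$ and $4^{4}3^{1}$. The natural route is to build a master GDD of each of these types, weight every point by $6$, and fill each block with a $6^{k}$ ingredient from Lemma~\ref{[2,2]-GDC:6^t}. The single constraint that governs everything is that Lemma~\ref{[2,2]-GDC:6^t} supplies $6^{k}$ only for $5\le k\le 11$, and no $[2,2]$-GDC$(6)$ of type $6^{4}$ exists (a $(6,4;1)$-DM fails because $6\equiv 2\pmod 4$, and the frame of type $1^{4}$ violates the parity condition of Lemma~\ref{SRF2GDC(6t)^u}); hence the master GDDs must have all block sizes in $\{5,\dots,11\}$. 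With such masters in hand the weighting outputs types $24^{u}36^{1}$, $18^{8}42^{1}$, $30^{4}18^{1}$ and $24^{4}18^{1}$ of exactly the complete sizes claimed.

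The hard part is producing these master GDDs, since they sit just outside the cited PBD/TD results. The device behind Lemma~\ref{[2,2]-GDCa} --- deleting a point from a $\{5,\dots,9\}$-PBD --- does not reach them: for $4^{4}6^{1}$ and $4^{5}6^{1}$ it would need PBDs of orders $23$ and $27$ with a prescribed block profile through the deleted point, and both orders are excluded in Theorem~\ref{PBD5-9}; worse, the types carrying a group of size $3$ (namely $3^{8}7^{1}$, $5^{4}3^{1}$ and $4^{4}3^{1}$) cannot arise from that device at all, because deleting a point from a $\{5,\dots,9\}$-PBD leaves groups of size at least $4$. Nor can the asymmetry be imported through the weights: a TD$(5,4)$ master with weights $6^{4}9^{1}$ would call for the unavailable ingredient $6^{4}9^{1}$, and a TD$(6,4)$ master does not exist by Theorem~\ref{TD}. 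I would therefore fall back to \emph{direct} difference constructions, exactly as in Lemmas~\ref{[2,2]-GDC:s6^t9^1}--\ref{[2,2]-GDC:sg^um^1}: place the points on $\bbZ_{N}$ with the special group fixed setwise by the developing automorphism, and exhibit base codewords $\langle a,b,c,r\rangle$ whose orbits form the code. The crux of the verification, and the point at which a computer search is genuinely needed, is to check simultaneously that each codeword has composition $[2,2]$, meets every group at most once, and that no two developed codewords lie at Hamming distance below $6$ (equivalently, that the relevant differences are all distinct). Since the largest target, $18^{8}42^{1}$ on $186$ points, is comparable to the $18^{6}33^{1}$ design on $141$ points already handled directly in Lemma~\ref{[2,2]-GDC:sg^um^1}, I expect a direct construction to succeed for all four asymmetric families.
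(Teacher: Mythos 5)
Parts (i) and (ii) of your proposal are sound and essentially match the paper: the paper also takes $18^4$ from Lemma~\ref{[2,2]-GDC:sg^u} and gets $24^u$ from Lemma~\ref{SRF2GDC(6t)^u}; for the remaining $18^u$ it inflates $6^u$ by weight $3$ for all $u\in\{5,6,7,9,11\}$, whereas you use skew Room frames of type $3^u$ for the odd $u$ and inflation only for $u=6$ --- a harmless variant, and your size checks are correct.

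Parts (iii)--(vi), however, are not proved. You correctly identify the target types and sizes, but you end by announcing an intention to run a computer search for base codewords and asserting that you ``expect'' it to succeed; no construction is exhibited, so there is a genuine gap. Moreover, your argument that the recursive machinery cannot reach these types is mistaken on two counts, and the paper closes all four cases recursively. First, you fixed the weight at $6$ and looked for master GDDs of types $4^u6^1$, $3^87^1$, $5^43^1$, $4^43^1$; but the weight need not be $6$. For (iii) the paper applies the Fundamental Construction with weight $4$ to a $4$-GDD of type $6^u9^1$ (which exists by \cite{GeRees:2004}), filling blocks with the $[2,2]$-GDC$(6)$ of type $4^4$ from Lemma~\ref{DM2GDCg^4}, giving $24^u36^1$ directly. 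Second, mixed weights are allowed: for (v) and (vi) the paper takes a TD$(5,5)$ (resp.\ TD$(5,4)$) and assigns weight $6$ to the first four groups while splitting the last group between weights $6$ and $3$; the blocks then require ingredients of types $6^5$ and $6^43^1$, and the latter \emph{is} available from Lemma~\ref{[2,2]-GDC:s6^t3^1} with $t=4$ --- you only considered weight $9$ on the last group and so missed this. For (iv), the master $5$-GDD of type $3^87^1$ does exist: it is obtained by completing the parallel classes of a $4$-RGDD of type $3^8$ (adjoining seven infinite points, one per resolution class), after which weight $6$ with ingredient $6^5$ gives $18^842^1$. So the needed idea is not a new direct construction but a more flexible use of Constructions~\ref{FundCtr} and the existing ingredient list.
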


\begin{proof}
For i), a $[2,2]$-GDC$(6)$ of type $18^4$ is constructed
directly in Lemma~\ref{[2,2]-GDC:sg^u}. For each $t\in \{5,6,7,9,11\}$, take a
$[2,2]$-GDC$(6)$ of type $6^t$ from Lemma~\ref{[2,2]-GDC:6^t}, and
inflate it with weight $3$ to obtain a $[2,2]$-GDC$(6)$ of type
$18^t$.

For ii), the required GDCs are obtained from
Lemma~\ref{SRF2GDC(6t)^u}.

For iii), the required GDCs are obtained by applying the
Fundamental Construction with weight $4$ to a $4$-GDD of
type $6^u 9^1$ (see \cite[Theorem 1.6]{GeRees:2004}). The input
$[2,2]$-GDC$(6)$s of type $4^4$ come from Lemma~\ref{DM2GDCg^4}.

For iv), take a $5$-GDD of type $3^8 7^1$ obtained by
completing the parallel classes of a $4$-RGDD of type $3^8$
(see \cite{Shen:1992}). Apply the Fundamental Construction with weight
$6$ to obtain a $[2,2]$-GDC$(6)$ of type $18^8 42^1$. The input
$[2,2]$-GDC$(6)$s of type $6^5$ come from Lemma~\ref{[2,2]-GDC:6^t}.

For v), take a TD$(5,5)$ and apply the Fundamental Construction, giving weight $6$ to each point in the first four groups and
one point in the last group and weight 3 to each of the remaining
points. Noting that there exist $[2,2]$-GDC$(6)$s of types $6^5$ and $6^4 3^1$ by Lemmas~\ref{[2,2]-GDC:6^t} and \ref{[2,2]-GDC:s6^t3^1},
we get a $[2,2]$-GDC$(6)$ of type $30^4 18^1$.

For vi), take a TD$(5,4)$ and apply the Fundamental Construction, giving weight $6$ to
each point in the first four groups and two points in the last
group and weight $3$ to each of the remaining points. Noting that there exist
$[2,2]$-GDC$(6)$s of types $6^5$ and $6^4 3^1$ by Lemmas~\ref{[2,2]-GDC:6^t} and \ref{[2,2]-GDC:s6^t3^1},
we obtain a $[2,2]$-GDC$(6)$ of type $24^4 18^1$.
\end{proof}

\subsection{Cases of Length $n\equiv 0,1\pmod{6}$}

\begin{lemma}
$A_3(7,6,[2,2])=3$, $A_3(13,6,[2,2])\geq 21$.
\end{lemma}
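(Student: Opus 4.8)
The statement asserts two facts: $A_3(7,6,[2,2])=3$ and $A_3(13,6,[2,2])\geq 21$. The plan is to handle them separately, since the first is a small exact value (matching the upper bound $U(7,6,[2,2])=\lfloor \tfrac{7}{2}\lfloor \tfrac{6}{3}\rfloor\rfloor = 7$... in fact I should recompute: $\lfloor (7-1)/3\rfloor = 2$, so $U(7,6,[2,2])=\lfloor 7\cdot 2/2\rfloor = 7$), while the second is merely a lower bound near $U(13,6,[2,2])=\lfloor \tfrac{13}{2}\lfloor \tfrac{12}{3}\rfloor\rfloor = \lfloor 13\cdot 4/2\rfloor = 26$. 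Wait—the target $21$ is below $26$, so this is a genuine construction rather than an optimality claim; the exact value presumably gets pinned down elsewhere. I would therefore treat $A_3(13,6,[2,2])\geq 21$ purely as the exhibition of a code of size $21$.

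\textbf{The value $A_3(7,6,[2,2])=3$.} The corollary gives the upper bound $U(7,6,[2,2])=7$, but Table~\ref{presult} records the exact value as $3$, so the stated equality must rest on a sharper ad hoc bound together with a matching construction. First I would establish the lower bound by exhibiting three codewords on the point set $[1,7]$ of composition $[2,2]$ whose pairwise Hamming distances are all $\geq 6$; since each codeword has weight $4$ in a space of length $7$, two weight-$4$ words already share at least one support position, and distance six forces their nonzero patterns to disagree wherever they overlap. Concretely I would search among the $\binom{7}{4}\binom{4}{2,2}=35\cdot 6$ candidate words for a triple of mutual distance six; a direct construction such as $\langle 1,2,3,4\rangle$, $\langle 1,3,5,6\rangle$, $\langle 2,4,5,7\rangle$ (with the first two coordinates carrying the symbol $1$ and the last two the symbol $2$) is the kind of small explicit object I would verify by hand. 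For the upper bound $A_3(7,6,[2,2])\leq 3$, the generic Johnson bound is too weak, so I would argue directly: count incidences of nonzero symbols against the seven positions, noting that distance six between weight-four words over seven coordinates is extremely restrictive (any two words overlap in at least one coordinate, and overlapping coordinates must carry distinct symbols), and derive a contradiction from assuming a fourth codeword.

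\textbf{The bound $A_3(13,6,[2,2])\geq 21$.} Here I expect the cleanest route to be a cyclic (difference-family-style) construction mirroring the explicit codeword lists used throughout Section~III (as in Lemmas~\ref{[2,2]-GDC:6^t} and \ref{[2,2]-GDC:sg^u}). I would set the point set to $\bbZ_{13}$ and look for a small set of \emph{base codewords} of composition $[2,2]$ whose development under $+1\pmod{13}$ yields codewords that are pairwise at Hamming distance at least six and that avoid collisions (so the orbit sizes sum to $21$). The natural target is to find base words whose associated ``differences'' are distinct enough to guarantee distance six: two developed words $\langle a,b,c,d\rangle + i$ and $\langle a,b,c,d\rangle + j$ agree in a coordinate only when the shift matches a difference within the word, so controlling the multiset of internal differences controls the minimum distance, exactly as in the Room-frame and difference-matrix arguments already in the paper.

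\textbf{Main obstacle.} The hard part is the $\geq 21$ construction, not the length-$7$ case. Since $21$ is not a multiple of $13$, at least one orbit under $+1\pmod{13}$ must be \emph{short} (fixed by a nontrivial subgroup), so I cannot use full-length orbits alone; I would need either a partial orbit or a construction over $\bbZ_{13}$ with a mixed automorphism group (as in several later lemmas that develop words under a product of cycles). Verifying that a candidate base set simultaneously (i) produces only distance-six pairs across \emph{all} orbit elements and (ii) has total size exactly $21$ without accidental coincidences is the delicate bookkeeping step; I would expect to resolve it by a short computer search over base words in $\bbZ_{13}$, then present the resulting explicit codeword list in the same tabular style as the surrounding lemmas.
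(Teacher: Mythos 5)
There is a genuine gap on both halves. For $n=7$, the paper does not reprove anything: it simply cites Table~\ref{presult} (the known values from Svanstr{\"o}m et al.), so your plan to re-derive both bounds is extra work; worse, the explicit triple you offer is not a valid code. With $\langle 1,2,3,4\rangle$ and $\langle 1,3,5,6\rangle$ the supports meet in positions $1$ and $3$, and both words carry the symbol $1$ at position $1$, so they agree there and the Hamming distance is only $5$. The general constraint (which your own overlap remark almost states) is that two weight-$4$ words at distance $\geq 6$ may share at most two support positions and, if they share exactly two, must disagree at both; your example violates this. The upper-bound argument for $n=7$ is likewise only gestured at, not carried out.

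For $n=13$, you correctly identify the obstacle but do not overcome it: since $13$ is prime, every nontrivial orbit under $+1\pmod{13}$ has length exactly $13$, so no combination of full orbits gives $21$ codewords, and a ``partial orbit'' abandons precisely the difference-counting that makes the cyclic method verifiable. You defer the resolution to an unspecified computer search and never exhibit a code. The paper's proof at this point is nothing more than an explicit list of $21$ codewords on $I_{13}$ with no cyclic structure, verified directly; without such a list (or a completed construction) the bound $A_3(13,6,[2,2])\geq 21$ remains unproven in your write-up.
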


\begin{proof} For $n=7$, see Table~\ref{presult}.

For $n=13$, the required code is constructed on
$I_{13}$ with codewords as below.
$$\begin{array}{llllll}
\langle9,3,5,6\rangle & \langle11,7,1,6\rangle & \langle9,12,11,7\rangle & \langle2,5,0,7\rangle & \langle1,4,7,10\rangle & \langle9,10,0,1\rangle \\ \langle0,7,8,9\rangle & \langle3,12,1,4\rangle & \langle0,12,10,5\rangle & \langle0,1,2,3\rangle & \langle10,5,8,4\rangle & \langle4,11,2,9\rangle \\
\langle1,6,9,12\rangle & \langle0,6,4,11\rangle & \langle10,7,12,2\rangle & \langle8,6,3,7\rangle  & \langle1,8,11,5\rangle & \langle2,12,6,8\rangle \\ \langle8,4,0,12\rangle & \langle2,3,10,11\rangle &  \langle5,11,3,12\rangle\\
\end{array}$$
\end{proof}

\begin{lemma}
\label{s6t+1} $A_3(6t+1,6,[2,2])=U(6t+1,6,[2,2])$ for each
$t\in [3,11]\cup\{13,14,17\}$.
\end{lemma}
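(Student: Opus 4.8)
The plan is to separate the two bounds. The upper bound $A_3(6t+1,6,[2,2])\le U(6t+1,6,[2,2])=6t^2+t$ is already supplied by the Corollary in Section~II, since for $n=6t+1$ one has $\lfloor (n-1)/3\rfloor=2t$ and $\lfloor \tfrac{n}{2}\cdot 2t\rfloor=t(6t+1)$. So the entire content is the lower bound: for each listed $t$, I must exhibit an $(6t+1,6,[2,2])_3$-code of size exactly $t(6t+1)=6t^2+t$. The decisive observation is that this target is divisible by $n=6t+1$, with quotient $t$, which points to a \emph{direct cyclic construction}: develop $t$ base codewords over $\bbZ_{6t+1}$ under the shift $+1\pmod{6t+1}$.

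First I would set $X=\bbZ_{6t+1}$ and write each base codeword in the $4$-tuple form $\langle a_1,a_2,c_1,c_2\rangle$, with $\{a_1,a_2\}$ carrying the symbol $1$ and $\{c_1,c_2\}$ carrying the symbol $2$. Since $6t+1$ is odd, no nonzero shift $s$ can fix a codeword (fixing the two $1$-positions setwise forces $2s=0$, hence $s=0$), so every orbit has full length $6t+1$. Consequently $t$ base codewords automatically develop into exactly $t(6t+1)$ codewords, all of composition $[2,2]$; the only property that is \emph{not} automatic is the minimum distance. Thus the construction reduces to choosing $t$ base codewords whose cyclic development has $d_H\ge 6$. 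By translation invariance it suffices to check that no base codeword, shifted by any nonzero element of $\bbZ_{6t+1}$, comes within distance $5$ of any base codeword (including itself); this in turn is a condition on the position-differences of the base codewords, sorted by symbol type: the $1$--$1$ differences $\pm(a_1-a_2)$ and the $2$--$2$ differences $\pm(c_1-c_2)$ must be spread so that two codewords never share two like-symbol positions, and the mixed $1$--$2$ differences $\pm(a_i-c_j)$ must be controlled so that the only admissible double overlaps are the symbol-swapped ones (which still yield distance exactly $6$). This is precisely the analysis used in deriving the Johnson-type bound: any two weight-$4$ codewords at distance $\ge 6$ may share at most two support coordinates, and any two shared coordinates must carry opposite symbols.

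The main obstacle is locating, for each of the eleven values $t\in[3,11]\cup\{13,14,17\}$, an explicit set of $t$ base codewords satisfying these difference conditions; this is a finite search over $\bbZ_{6t+1}$, after which the distance verification is a purely mechanical finite check of the difference multisets, entirely parallel to the verifications in Lemmas~\ref{[2,2]-GDC:6^t}--\ref{[2,2]-GDC:sg^um^1}. I would therefore present the proof as a table of base codewords, one short block per value of $t$, followed by the remark that each develops under $+1\pmod{6t+1}$ to an optimal code meeting $U(6t+1,6,[2,2])$. For any stubborn value that resists the clean cyclic search, the fallback is the recursive route: build a $[2,2]$-GDC$(6)$ of type $6^t1^1$ from the ingredient GDCs of earlier subsections via the Fundamental and Inflation Constructions, then apply Filling in Groups, inserting a $(6,6,[2,2])_3$-code of size $A_3(6,6,[2,2])=3$ (see Table~\ref{presult}) into each group of size $6$. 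These listed $t$ are exactly the base cases, intended to seed the PBD/TD recursion of the subsequent lemmas, so explicit small constructions here are both expected and sufficient.
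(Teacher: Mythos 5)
Your main line is the same as the paper's: the upper bound is the Johnson-type corollary, and the lower bound is met by developing exactly $t$ base codewords over $\bbZ_{6t+1}$ under $+1\pmod{6t+1}$, with the parity observation guaranteeing full orbits and the ``at most two shared coordinates, necessarily with opposite symbols'' criterion governing the distance check. That is precisely how the paper proves the lemma for every listed $t$ except $t=4$, via the explicit base blocks of Table~\ref{t6t+1}. Since the entire content of the lemma is the existence of those twelve (not eleven) sets of base codewords, deferring them to ``a finite search'' leaves the substance unproved; but the strategy and the verification conditions you state are correct.

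The genuine problem is your fallback, which is not merely unnecessary but impossible. Filling the groups of a $[2,2]$-GDC$(6)$ of type $6^t1^1$ (equivalently, adjoining one point to a type-$6^t$ GDC and inserting $(7,6,[2,2])_3$-codes of size $3$) cannot reach $U(6t+1,6,[2,2])=6t^2+t$: a point lying in a group of size $6$ is excluded from pairing with its $5$ groupmates, so the codewords through it with a fixed symbol have pairwise disjoint residual supports inside only $6t-5$ points, giving at most $2t-2$ per symbol, hence at most $4t-4$ codewords per such point. Summing and dividing by $4$ caps any type-$6^t1^1$ GDC at $6t^2-5t$ codewords, so the filled code has at most $6t^2-2t<6t^2+t$ codewords. (The same defect is why the paper's recursions always fill groups of size $\geq 24$ with genuinely optimal codes, never groups of size $6$ with length-$7$ codes, for this residue class.) This matters because there \emph{is} a stubborn value: for $t=4$ the paper abandons $\bbZ_{25}$ and instead develops four base codewords over $\bbZ_5\times\bbZ_5$ under $(+1,+1)$. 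So the case your fallback was designed to rescue is exactly the case where it would be invoked and would fail; the correct repair is a different point-transitive group, not a group-divisible recursion.
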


\begin{proof} For each $t\in [3,11]\cup\{13,14,17\}$ and $t\neq 4$, let $X_t=\bbZ_{6t+1}$. Then $(X_t,{\cal C}_t)$ is the
desired optimal $(6t+1,6,[2,2])_3$-code, where ${\cal C}_t$ is
obtained by developing the elements of $\bbZ_{6t+1}$ in the
codewords listed in Table~\ref{t6t+1} $+1\pmod{6t+1}$.

Let $X_4=\bbZ_5\times\bbZ_5$. Then $(X_4,{\cal C}_4)$ is the
desired optimal $(25,6,[2,2])_3$-code, where ${\cal C}_4$ is
obtained by developing the elements of $\bbZ_5\times\bbZ_5$ in
the codewords listed in Table~\ref{t6t+1}
$(+1\pmod{5},+1\pmod{5})$.
\end{proof}

\begin{table*}
\scriptsize
\centering
\renewcommand{\arraystretch}{1}
\setlength\arraycolsep{3pt} \caption{Base Codewords of Small
Optimal $(6t+1,6,[2,2])_3$-Codes in Lemma~\ref{s6t+1}}
\label{t6t+1}
\begin{tabular}{c|l}
\hline $t$ & {\hfill Codewords \hfill}
\\
\hline $3$ & $\begin{array}{llllll}
\langle0,1,4,16\rangle & \langle0,7,9,17\rangle & \langle0,8,13,14\rangle &
\end{array}$
\\
\hline $4$ & $\begin{array}{llllll}
\langle0_3, 2_0, 3_0, 4_3\rangle & \langle0_2, 2_3, 2_1, 0_4\rangle & \langle0_2, 1_0, 1_4, 0_3\rangle &
\langle0_0, 1_1, 2_0, 4_1\rangle
\end{array}$
\\
\hline $5$ & $\begin{array}{llllll}
\langle0,3,11,23\rangle & \langle0,7,2,5\rangle & \langle0,6,15,22\rangle & \langle0,14,4,10\rangle & \langle0,12,13,30\rangle &
\end{array}$
\\
\hline $6$ & $\begin{array}{lllllll}
\langle0,23,27,35\rangle & \langle0,8,11,17\rangle & \langle0,5,25,1\rangle & \langle0,6,22,36\rangle & \langle0,18,2,7\rangle & \langle0,13,10,28\rangle &
\end{array}$
\\
\hline $7$ & $\begin{array}{lllllll}
\langle0,35,26,23\rangle & \langle0,16,33,37\rangle & \langle0,29,22,38\rangle & \langle0,3,10,18\rangle & \langle0,30,11,12\rangle & \langle0,1,6,20\rangle & \langle0,4,2,32\rangle \\
\end{array}$
\\
\hline $8$ & $\begin{array}{llllllll}
\langle0,36,40,45\rangle & \langle0,28,14,47\rangle & \langle0,42,10,32\rangle & \langle0,33,25,18\rangle & \langle0,44,15,26\rangle & \langle0,11,48,12\rangle & \langle0,43,23,2\rangle & \langle0,22,3,46\rangle \\
\end{array}$
\\
\hline $9$ & $\begin{array}{lllllllll}
\langle0,8,1,21\rangle & \langle0,43,19,42\rangle & \langle0,32,34,39\rangle & \langle0,20,30,45\rangle & \langle0,28,9,26\rangle & \langle0,17,41,14\rangle & \langle0,15,6,18\rangle & \langle0,5,16,49\rangle & \\ \langle0,22,4,51\rangle \\
\end{array}$
\\
\hline $10$ & $\begin{array}{lllllllll}
\langle0,27,36,41\rangle & \langle0,11,21,39\rangle & \langle0,3,22,26\rangle & \langle0,1,47,53\rangle & \langle0,5,35,37\rangle & \langle0,17,24,25\rangle & \langle0,2,15,42\rangle &
\langle0,4,33,16\rangle &\\ \langle0,6,51,54\rangle & \langle0,18,38,49\rangle &
\end{array}$
\\
\hline $11$ & $\begin{array}{lllllllll}
\langle0,20,33,44\rangle & \langle0,1,32,41\rangle & \langle0,10,49,53\rangle & \langle0,30,48,51\rangle & \langle0,3,28,65\rangle & \langle0,7,26,36\rangle & \langle0,11,16,17\rangle &
\langle0,8,23,35\rangle &\\ \langle0,9,54,61\rangle & \langle0,4,42,50\rangle & \langle0,12,14,34\rangle &
\end{array}$
\\
\hline $13$ & $\begin{array}{lllllllll}
\langle0,11,54,56\rangle & \langle0,5,49,60\rangle & \langle0,18,35,64\rangle & \langle0,63,66,76\rangle & \langle0,29,65,71\rangle & \langle0,10,24,33\rangle & \langle0,20,51,67\rangle &
\langle0,7,15,19\rangle &\\ \langle0,1,22,40\rangle & \langle0,9,57,62\rangle & \langle0,4,34,41\rangle & \langle0,2,27,28\rangle & \langle0,6,38,58\rangle &
\end{array}$
\\
\hline $14$ & $\begin{array}{lllllllll}
\langle0,9,41,53\rangle & \langle0,1,50,83\rangle & \langle0,45,47,63\rangle & \langle0,46,61,70\rangle & \langle0,4,25,59\rangle & \langle0,17,27,28\rangle & \langle0,52,57,65\rangle &
\langle0,8,56,75\rangle &\\ \langle0,16,74,80\rangle & \langle0,19,22,62\rangle & \langle0,6,29,36\rangle & \langle0,7,38,42\rangle & \langle0,12,26,72\rangle & \langle0,34,54,71\rangle \\
\end{array}$
\\
\hline $17$ & $\begin{array}{lllllllll}
\langle0,20,62,63\rangle & \langle0,10,58,88\rangle & \langle0,3,59,95\rangle & \langle0,17,24,50\rangle & \langle0,1,76,97\rangle & \langle0,12,46,66\rangle & \langle0,26,39,41\rangle & \langle0,14,69,79\rangle & \\
\langle0,36,64,81\rangle & \langle0,74,85,99\rangle & \langle0,5,40,49\rangle & \langle0,19,87,90\rangle & \langle0,9,47,70\rangle & \langle0,21,53,72\rangle & \langle0,30,52,57\rangle & \langle0,23,31,60\rangle &  \\
\langle0,2,6,18\rangle \\
\end{array}$
\\
\hline
\end{tabular}
\end{table*}

\begin{lemma}
$A_3(6t+1,6,[2,2])=U(6t+1,6,[2,2])$ for each $t\geq 12$ and
$t\not\in\{13,14,17\}$.
\end{lemma}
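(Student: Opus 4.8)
The plan is to realize the target value $U(6t+1,6,[2,2])=6t^2+t$ in a single recursive step: build a master $[2,2]$-GDC$(6)$ on $6t$ points all of whose groups have size divisible by $6$, adjoin one point through Construction~\ref{AdjoinPoints} with $y=1$, and fill the enlarged groups with the optimal short codes of length $\equiv 1\pmod 6$ furnished by Lemma~\ref{s6t+1}. Since $d=6=2(w-1)$, the hypotheses of Construction~\ref{AdjoinPoints} are satisfied, and for $y=1$ each required ingredient GDC of type $1^{g_i}1^1=1^{g_i+1}$ is literally an optimal $(g_i+1,6,[2,2])_3$-code.

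First I would record the arithmetic that makes the step exact. Suppose the master GDC has type $(6c_1)^{u_1}\cdots(6c_s)^{u_s}$ with $\sum_j u_jc_j=t$ and size $a=6\bigl(t^2-\sum_j u_jc_j^2\bigr)$; this is the value attained by every GDC produced in Section III (for the Fundamental Construction it follows from counting block pairs in the underlying GDD, and one checks it directly against the sizes listed in Lemmas~\ref{[2,2]-GDCb}). Each group of size $6c_j$ is completed, after adjoining the point, by an optimal $(6c_j+1,6,[2,2])_3$-code of size $6c_j^2+c_j=U(6c_j+1,6,[2,2])$. Construction~\ref{AdjoinPoints} then yields a code of size $a+\sum_j u_j(6c_j^2+c_j)=6t^2+t$, which meets the upper bound. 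Hence any such construction is automatically optimal, and the only thing left to verify is the existence of a master GDC with $c_j\in\{3,4,5,6,7,8\}$, so that the fill lengths $6c_j+1\in\{19,25,31,37,43,49\}$ are all among the codes of Lemma~\ref{s6t+1}.

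Next I would supply the master GDCs by a case split on $t$. For $t\geq 12$ with $t\notin P$, Lemma~\ref{[2,2]-GDCa} directly provides a $[2,2]$-GDC$(6)$ of type $24^i30^j36^k42^l48^m$ (here $c_j\in\{4,5,6,7,8\}$), disposing of those cases at once. The remaining cases form the finite set $t\in P$ with $t\geq 12$ and $t\notin\{13,14,17\}$, namely $t\in\{12,15,16,18,19,21,22,23,26,27,28,31,32,33\}$, and for these I would call on the tailored GDCs of Lemma~\ref{[2,2]-GDCb}: types $18^u$ give $t=3u\in\{12,15,18,21,27,33\}$, types $24^u$ give $t=4u\in\{16,28,32\}$, types $24^u36^1$ give $t\in\{22,26\}$, and the singletons $18^8\,42^1$, $30^4\,18^1$, $24^4\,18^1$ give $t=31,23,19$ respectively.

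The main obstacle is precisely this covering bookkeeping: confirming that the attainable $t$-values from Lemmas~\ref{[2,2]-GDCa} and~\ref{[2,2]-GDCb} together exhaust every residual value, while the group sizes stay in $\{18,24,30,36,42,48\}$ so that every fill code is one of the already-constructed optimal short codes. Once this covering is checked, optimality follows uniformly from the size computation above, with no additional distance or composition verification required beyond what the cited constructions already guarantee.
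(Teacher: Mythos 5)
Your proposal is correct and follows essentially the same route as the paper: for $t\geq 12$ with $t\notin P$ take the type $24^i30^j36^k42^l48^m$ GDC of Lemma~\ref{[2,2]-GDCa}, for the residual values $t\in\{12,15,16,18,19,21,22,23,26,27,28,31,32,33\}$ take the GDCs of Lemma~\ref{[2,2]-GDCb}, then adjoin one point and fill the groups with the optimal codes of lengths $19,25,31,37,43,49$ from Lemma~\ref{s6t+1}. Your explicit size computation $a+\sum_j u_j(6c_j^2+c_j)=6t^2+t$ and the case-by-case covering check are details the paper leaves implicit, and both check out.
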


\begin{proof} For each $t\geq 12$ and $t\not\in P$, take a
$[2,2]$-GDC$(6)$ of type $24^i30^j36^k42^l48^m$ with
$4i+5j+6k+7l+8m=t$ from Lemma~\ref{[2,2]-GDCa}. Adjoin one
ideal point, and fill in the groups together with the ideal
point with optimal codes of small lengths from Lemma \ref{s6t+1} to obtain the desired
code.

For each $t\in \{12,15,16,18,19,21,22,23,26,27,28,31,32,33\}$, take
a $[2,2]$-GDC$(6)$ constructed in Lemma~\ref{[2,2]-GDCb}. Adjoin
one ideal point, and fill in the groups together with the ideal
point with optimal codes of small lengths from Lemma \ref{s6t+1} to obtain the desired
code.
\end{proof}

Combining the above lemmas, we obtain the following result for $n\equiv 1\pmod{6}$.

\begin{theorem}
\label{6t+1} $A_3(7,6,[2,2])=3$, $A_3(13,6,[2,2])\geq 21$,
$A_3(6t+1,6,[2,2])=U(6t+1,6,[2,2])$ for each $t\geq 3$;
\end{theorem}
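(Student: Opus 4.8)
The plan is to assemble this statement from constructions and bounds already established in the section, since each of its three assertions reduces to an earlier result. The upper bound $A_3(6t+1,6,[2,2])\le U(6t+1,6,[2,2])$ is exactly the Johnson-type bound recorded in the Corollary, so for the equality claim it remains only to exhibit, for each $t\ge 3$, a $(6t+1,6,[2,2])_3$-code meeting this bound. Those matching constructions are already supplied by the two lemmas preceding the theorem, and the only task left is to confirm that their stated ranges of validity together cover every $t\ge 3$.

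Concretely, Lemma~\ref{s6t+1} furnishes optimal codes for every $t\in[3,11]\cup\{13,14,17\}$, and the lemma immediately following it furnishes them for every $t\ge 12$ with $t\notin\{13,14,17\}$. I would check the coverage directly: for $3\le t\le 11$ the first lemma applies; for the three values $t\in\{13,14,17\}$ the first lemma again applies; and every remaining $t\ge 12$ (namely $t=12$ together with all $t\ge 15$ other than $17$) is covered by the second. Hence the two ranges exhaust $\{t:t\ge 3\}$ with no gap, and since each lemma asserts equality with $U(6t+1,6,[2,2])$, the equality $A_3(6t+1,6,[2,2])=U(6t+1,6,[2,2])$ holds for all $t\ge 3$, which is the third assertion.

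For the two small cases I would simply quote the base results. The value $A_3(7,6,[2,2])=3$ (the case $t=1$) is the entry recorded in Table~\ref{presult}, and the bound $A_3(13,6,[2,2])\ge 21$ (the case $t=2$) comes from the explicit $13$-point construction given in the lemma that precedes Lemma~\ref{s6t+1}. It is worth noting why these two lengths are stated separately rather than as the clean equality $A_3=U$: the Johnson bound is not tight here, since $U(7,6,[2,2])=7$ while the true value is $3$, and $U(13,6,[2,2])=26$ while only a lower bound of $21$ is claimed. There is no real obstacle in the theorem itself—all the genuine work (the skew Room frame, difference matrix, and Fundamental Construction arguments producing the various $[2,2]$-GDC$(6)$s, together with the explicit base codewords in Table~\ref{t6t+1}) was carried out in the earlier lemmas. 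The only point demanding care is the bookkeeping that the two construction ranges leave no gap in $\{t\ge 3\}$ and that the exceptional small lengths are matched to the correct base result.
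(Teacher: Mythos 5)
Your proposal is correct and matches the paper's own (implicit) proof, which simply states that the theorem follows by combining the preceding lemmas: Table~\ref{presult} for $n=7$, the explicit $13$-point code for $n=13$, Lemma~\ref{s6t+1} for $t\in[3,11]\cup\{13,14,17\}$, and the subsequent lemma for all remaining $t\geq 12$. Your coverage check and your observation about why $n=7$ and $n=13$ are stated separately (the Johnson bound $U(7,6,[2,2])=7$ and $U(13,6,[2,2])=26$ not being attained) are both accurate.
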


For $n\equiv 0\pmod{6}$, we have the following result.

\begin{theorem}
\label{6t} $A_3(6t,6,[2,2])=U(6t,6,[2,2])$ for each $t\geq 1$.
\end{theorem}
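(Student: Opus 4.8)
The plan is to realize every optimal $(6t,6,[2,2])_3$-code by filling the groups of a suitable $[2,2]$-GDC$(6)$ whose group sizes are all multiples of $6$. Concretely, I would start from a $[2,2]$-GDC$(6)$ of some type $(6s_1)(6s_2)\cdots(6s_r)$ on $6t$ points (so $\sum_j s_j=t$) and apply Construction~\ref{FillGroups} with $d=6$, $w=4$, filling each group of size $6s_j$ with an optimal $(6s_j,6,[2,2])_3$-code of size $U(6s_j,6,[2,2])=6s_j^2-3s_j$. The whole argument rests on one bookkeeping identity: the optimal size of a $[2,2]$-GDC$(6)$ of type $(6s_1)\cdots(6s_r)$ equals $12\sum_{j<k}s_js_k$. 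Granting this, the resulting code has size
\[
12\sum_{j<k}s_js_k+\sum_{j}\bigl(6s_j^2-3s_j\bigr)=6\Bigl(\sum_j s_j\Bigr)^2-3\sum_j s_j=6t^2-3t=U(6t,6,[2,2]),
\]
so it is automatically optimal. Hence the problem reduces to exhibiting, for each $t$, a GDC of the right type together with optimal fillings of its groups.

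To justify the identity (and to see that the GDCs produced earlier already meet it), I would invoke the Fundamental Construction. Starting from a $\{5,6,7,8,9\}$-GDD of type $4^i5^j6^k7^l8^m$ with $4i+5j+6k+7l+8m=t$ and giving every point weight $6$, the fact that a $[2,2]$-GDC$(6)$ of type $6^u$ of size $6u(u-1)$ is available for each block size $u\in\{5,\dots,9\}$ (Lemma~\ref{[2,2]-GDC:6^t}) lets Construction~\ref{FundCtr} produce a $[2,2]$-GDC$(6)$ of type $24^i30^j36^k42^l48^m$ of size $\sum_{B}6|B|(|B|-1)$. Since in a GDD every cross-group pair lies in exactly one block, $\sum_B\binom{|B|}{2}=\sum_{j<k}g_jg_k$ over the GDD groups, whence this size is exactly $12\sum_{j<k}s_js_k$ with $s_j$ the group sizes. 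The same one-line check confirms that every special GDC of Lemma~\ref{[2,2]-GDCb} (types $18^u,24^u,24^u36^1,18^842^1,30^418^1,24^418^1$) attains the optimal value $12\sum_{j<k}s_js_k$.

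With the identity in hand I would split on $t$. For $5\le t\le 11$ I would fill a $[2,2]$-GDC$(6)$ of type $6^t$ from Lemma~\ref{[2,2]-GDC:6^t} with copies of the optimal $(6,6,[2,2])_3$-code of size $3$ (Table~\ref{presult}). For $t\ge 9$ with $t\notin P$ I would fill the type-$24^i30^j36^k42^l48^m$ GDC of Lemma~\ref{[2,2]-GDCa}, using optimal codes of lengths $24,30,36,42,48$ (the lengths $30,36,42,48$ themselves coming from the $6^t$ construction). For the remaining $t\in P$ I would match each order to a special GDC of Lemma~\ref{[2,2]-GDCb}: $18^u$ covers $t=3u\in\{12,15,18,21,27,33\}$, $24^u$ covers $t\in\{16,28,32\}$, $24^u36^1$ covers $t\in\{22,26\}$, and $18^842^1,30^418^1,24^418^1$ cover $t=31,23,19$, each filled with the corresponding optimal short codes.

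The cases this scheme does not reach are the small or isolated orders $t\in\{2,3,4\}$ and $t\in\{13,14,17\}$ (together with $t=1$, which is in Table~\ref{presult})—precisely the lengths for which no suitable $[2,2]$-GDC$(6)$ with at least four groups, all multiples of $6$, is on hand. I expect these to be the main obstacle, and I would dispatch them by direct construction, developing base codewords under a cyclic (or affine) automorphism group in the style of Lemma~\ref{s6t+1} and Table~\ref{t6t+1} for the companion $6t+1$ case, checking directly that each meets $U(6t,6,[2,2])$. Since the optimal codes of lengths $18$ and $24$ built here are exactly what feed the fillings of the $18$- and $24$-point groups above, settling these small orders first closes every remaining gap and yields $A_3(6t,6,[2,2])=U(6t,6,[2,2])$ for all $t\ge 1$.
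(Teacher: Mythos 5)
Your overall strategy is internally consistent and your bookkeeping identity is correct: every $[2,2]$-GDC$(6)$ of type $(6s_1)\cdots(6s_r)$ used in the paper does have size $12\sum_{j<k}s_js_k$, and filling its groups with optimal codes of sizes $6s_j^2-3s_j$ does produce exactly $6t^2-3t=U(6t,6,[2,2])$ codewords. But the proof is not complete: you explicitly defer $t\in\{2,3,4\}$ and $t\in\{13,14,17\}$ to ``direct constructions in the style of Lemma~\ref{s6t+1}'' without exhibiting them. These are not minor loose ends. An optimal $(78,6,[2,2])_3$-code has $975$ codewords, an optimal $(84,6,[2,2])_3$-code has $1134$, and an optimal $(102,6,[2,2])_3$-code has $1683$; asserting that such objects can be found by developing base codewords under a cyclic group is a conjecture, not a proof, and the lengths $18$ and $24$ are moreover load-bearing for your own recursion (they are the fillings of the $18$- and $24$-point groups). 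So as written the argument has a genuine gap at exactly the orders your GDC machinery cannot reach.

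The paper avoids all of this with a two-line argument you should have in your toolkit: by Theorem~\ref{6t+1}, $A_3(6t+1,6,[2,2])=t(6t+1)=6t^2+t$ for every $t\ge 3$. Such a code has $4(6t^2+t)$ nonzero entries spread over $6t+1$ coordinates, so some coordinate lies in the support of at most $4t$ codewords; deleting that coordinate and the codewords meeting it leaves a $(6t,6,[2,2])_3$-code with at least $6t^2+t-4t=6t^2-3t=U(6t,6,[2,2])$ codewords, which is therefore optimal. This shortening handles every $t\ge 3$ at once --- including your problematic $t=3,4,13,14,17$, because the corresponding lengths $6t+1$ were already settled by the direct constructions in Lemma~\ref{s6t+1} --- leaving only $t=1$ (Table~\ref{presult}) and $t=2$ (one small explicit code on $\bbZ_{12}$). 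If you want to salvage your approach, the cleanest fix is to replace all of your case analysis by this puncturing argument; otherwise you must actually supply the six missing base codes.
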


\begin{proof} For $t=1$, see Table~\ref{presult}.
For $t=2$, let $X=\bbZ_{12}$. Then the code ${\cal C}$ is obtained
by developing the elements of $\bbZ_{12}$ in the following
codewords $+2\pmod{12}$.
$$\begin{array}{lll}
\langle0,2,8,5\rangle &
\langle0,9,7,11\rangle &
\langle1,5,0,10\rangle
\end{array}$$

For each $t\geq 3$, remove one point and related codewords from an optimal $(6t+1,6,[2,2])_3$-code from Theorem~\ref{6t+1} to get the desired code.
\end{proof}

\subsection{Case of Length  $n\equiv 2\pmod{6}$}

\begin{lemma}
$A_3(8,6,[2,2])=5$, $A_3(14,6,[2,2])\geq 27$.
\end{lemma}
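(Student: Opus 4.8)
The plan is to treat the two assertions separately, since the first is an exact value---requiring matching upper and lower bounds---while the second is only a lower bound, for which a single construction suffices.

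For the equality $A_3(8,6,[2,2])=5$, I would appeal directly to Table~\ref{presult}, which records the exact values of $A_3(n,6,[2,2])$ for $n\le 10$ established by Svanstr{\"o}m et al.~\cite{Svanstrom:2002}. It is worth stressing that the generic Johnson-type bound is not tight here: $U(8,6,[2,2])=\lfloor\frac{8}{2}\lfloor\frac{7}{3}\rfloor\rfloor=8$, well above the true value $5$, so for this length one genuinely relies on the sharper analysis behind the table rather than on the corollary of Section~II. If a self-contained witness for the lower bound $A_3(8,6,[2,2])\ge 5$ is preferred, I would exhibit five codewords of composition $[2,2]$ on $I_8$, written in the $4$-tuple form $\langle a,b,c,d\rangle$, and verify pairwise Hamming distance $\ge 6$ by hand.

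For $A_3(14,6,[2,2])\ge 27$, only a construction is needed. Since $U(14,6,[2,2])=28$, the target $27$ lies just below the Johnson-type bound, and at this small length none of the recursive GDC machinery of this section applies; moreover a cyclic development over $\bbZ_{14}$ cannot produce exactly $27$ codewords, because every orbit has length $7$ or $14$ and $27$ is not a multiple of $7$. I would therefore follow the template already used for $n=13$ and present the code as an explicit list of $27$ codewords on the point set $I_{14}$, each given as a $4$-tuple $\langle a,b,c,d\rangle$ in which $a,b$ index the two coordinates equal to $1$ and $c,d$ the two coordinates equal to $2$. Such a list is found by a computer search for a family of $27$ tuples that is pairwise at Hamming distance at least $6$, equivalently any two of the associated ternary vectors agree in at most one coordinate.

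The part that must actually be written out is the verification rather than the discovery: one checks that each listed codeword has exactly two $1$'s and two $2$'s in four distinct positions (composition $[2,2]$ and weight $4$), and that each of the $\binom{27}{2}$ pairs of codewords is at distance at least $6$. The main obstacle is thus purely computational---locating a valid configuration of $27$ tuples and certifying all pairwise distances---so in the final write-up I would display the explicit codeword list and remark that the composition and distance checks are entirely mechanical.
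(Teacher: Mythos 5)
Your approach is exactly the paper's: $A_3(8,6,[2,2])=5$ is quoted from Table~\ref{presult}, and $A_3(14,6,[2,2])\ge 27$ is established by an explicit list of $27$ codewords on $I_{14}$ in the $4$-tuple form (the paper gives no cyclic structure, just the raw list, consistent with your observation that a $\bbZ_{14}$-development cannot yield $27$ orbit elements). The only thing separating your proposal from a complete proof is the actual list of $27$ codewords, which is the substantive content for $n=14$ and must be supplied; once it is, the composition and distance checks are indeed mechanical, as you say.
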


\begin{proof}
For $n=8$, see Table~\ref{presult}.

For $n=14$, the required code is constructed on
$I_{14}$ with codewords as below.
$$\begin{array}{llllll}
\langle0,2,5,9\rangle & \langle5,9,6,10\rangle & \langle6,10,3,4\rangle & \langle0,7,11,12\rangle &
\langle8,13,0,2\rangle & \langle5,7,1,13\rangle \\ \langle7,9,2,4\rangle & \langle4,11,5,7\rangle &
\langle3,5,2,11\rangle & \langle2,4,10,13\rangle & \langle9,11,0,3\rangle & \langle9,12,8,13\rangle \\
\langle1,8,3,5\rangle & \langle1,10,0,7\rangle & \langle6,13,7,9\rangle & \langle1,13,4,11\rangle &
\langle3,4,9,12\rangle & \langle10,13,5,12\rangle \\ \langle0,4,1,8\rangle & \langle2,12,3,7\rangle &
\langle5,12,0,4\rangle & \langle8,10,9,11\rangle & \langle1,6,2,12\rangle & \langle11,12,1,10\rangle \\
\langle3,7,8,10\rangle & \langle0,3,6,13\rangle &  \langle2,11,6,8\rangle\\
\end{array}$$
\end{proof}

It is easy to see that if there exists a $[2,2]$-GDC$(6)$ of
type $2^{3t+1}$ with size $2t(3t+1)$, then there is an optimal
$(6t+2,6,[2,2])_3$-code. So we will construct
$[2,2]$-GDC$(6)$s of type $2^{3t+1}$.

\begin{lemma}
\label{s2^{3t+1}} There exists a $[2,2]$-GDC$(6)$ of type
$2^{3t+1}$ with size $2t(3t+1)$ for each $3\leq t\leq 11$ or
$t\in\{14,17\}$.
\end{lemma}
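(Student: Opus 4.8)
The plan is to realize every required GDC by a single cyclic construction, modeled on Lemma~\ref{[2,2]-GDC:2^10}. For each admissible $t$ I would take the point set $X_t=\bbZ_{6t+2}$ and the group partition $\G_t=\{\{i,\,i+3t+1\}:0\le i\le 3t\}$ into the $3t+1$ antipodal pairs, and I would obtain $\C_t$ by developing a short list of base codewords $\langle a_r,b_r,c_r,d_r\rangle$, $1\le r\le t$, under the map $+1\pmod{6t+2}$, where $a_r,b_r$ carry the symbol $1$ and $c_r,d_r$ the symbol $2$. The case $t=3$ (type $2^{10}$) is already supplied by Lemma~\ref{[2,2]-GDC:2^10}, so only $4\le t\le 11$ and $t\in\{14,17\}$ remain, each to be handled by exhibiting an explicit table of $t$ base codewords.

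Before searching for codewords I would pin down the bookkeeping. Every codeword of a developed orbit automatically has composition $[2,2]$, and the group condition $\|\vu|_{G}\|\le 1$ holds precisely when no two of the four coordinates of any $\langle a_r,b_r,c_r,d_r\rangle$ differ by $3t+1$; this is a finite condition on each base block. The orbit of each base block has full length $6t+2$: a shift $s$ fixes a block only if $\{a_r,b_r\}+s=\{a_r,b_r\}$ and $\{c_r,d_r\}+s=\{c_r,d_r\}$, forcing $s\in\{0,\,3t+1\}$, and $s=3t+1$ would make $a_r,b_r$ antipodal, contradicting the group condition just imposed. Hence the $t$ base blocks yield exactly $t(6t+2)=2t(3t+1)$ codewords, matching the claimed size, and I never have to worry about short orbits.

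The real content is the minimum-distance requirement, and this is where I expect the main difficulty. For two weight-$4$ codewords of composition $[2,2]$ one computes $d_H(\vu,\vv)=8-2a-e$, where $a$ is the number of coordinates at which both are nonzero with equal symbol and $e$ the number at which both are nonzero with unequal symbol; thus $d_H\ge 6$ is equivalent to $2a+e\le 2$ for every pair. Translating this to the cyclic setting, a coincidence between $B_r+x$ and $B_s+y$ is governed by the value of $y-x$ relative to the families of differences $\{a_r,b_r\}-\{a_s,b_s\}$ and $\{c_r,d_r\}-\{c_s,d_s\}$ (same-symbol, contributing to $a$) and $\{a_r,b_r\}-\{c_s,d_s\}$ and $\{c_r,d_r\}-\{a_s,b_s\}$ (mixed, contributing to $e$). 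The distance-$6$ property therefore amounts to a packing condition: across all base blocks these difference lists must never conspire to make $2a+e\ge 3$ for a single value of $y-x$. Verifying this is a finite, purely arithmetic check once the blocks are fixed, so my plan is to locate suitable base blocks by a computer search under the difference constraints and then certify each table by the difference method; the hard part is exhibiting lists that simultaneously satisfy the group condition and the packing condition for every one of the listed $t$.
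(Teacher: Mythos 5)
Your framework is exactly the one the paper uses: point set $\bbZ_{6t+2}$, groups formed by the antipodal pairs $\{i,i+3t+1\}$, base codewords developed cyclically, the case $t=3$ delegated to Lemma~\ref{[2,2]-GDC:2^10}, and correctness certified by the difference method (your distance criterion $d_H=8-2a-e$, hence $2a+e\le 2$, and your full-orbit argument are both sound). However, as written the proposal has a genuine gap: it never exhibits the base codewords. For an existence statement of this kind the explicit tables \emph{are} the proof --- everything else is routine bookkeeping, as you yourself observe when you defer ``the real content'' to a computer search yet to be performed. The paper closes this gap by listing the base codewords for each $t$ in Table~\ref{t2^{3t+1}}; without such data (or some nonconstructive existence argument, which you do not offer), the lemma is not established.

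A secondary point: your insistence that every case be handled by exactly $t$ base blocks developed under $+1\pmod{6t+2}$ is more rigid than what the paper does. For $t=5$ the paper instead develops $2t=10$ base codewords under $+2\pmod{32}$, i.e.\ over an index-two subgroup. This suggests (though does not prove) that a purely $+1$-cyclic solution may not exist, or at least was not found, for $t=5$. Your search plan should therefore allow quasi-cyclic automorphism groups --- or, for the direct constructions elsewhere in this paper, more general products of cycles --- rather than committing in advance to full $\bbZ_{6t+2}$-invariance for every $t$.
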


\begin{proof} For $t=3$, the code is constructed in Lemma~\ref{[2,2]-GDC:2^10}.

For $4\leq t\leq 11$ or
$t\in\{14,17\}$, let $X_t=\bbZ_{6t+2}$, and ${\cal
G}_t=\{\{i,i+3t+1\}:0\leq i\leq 3t\}$. Then $(X_t,{\cal
G}_t,{\cal C}_t)$ is the desired $[2,2]$-GDC$(6)$ of type $2^{3t+1}$, where for $t\neq 5$, ${\cal C}_t$ is obtained
by developing the elements of $\bbZ_{6t+2}$ in the codewords
listed in Table~\ref{t2^{3t+1}} $+1\pmod{6t+2}$, and ${\cal
C}_5$ is obtained by developing the elements of $\bbZ_{32}$ in
the codewords listed in Table~\ref{t2^{3t+1}} $+2\pmod{32}$.
\end{proof}

\begin{table*}
\scriptsize
\centering
\renewcommand{\arraystretch}{1}
\setlength\arraycolsep{3pt} \caption{Base Codewords of Small
$[2,2]$-GDC$(6)$s of Type $2^{3t+1}$ in Lemma~\ref{s2^{3t+1}}}
\label{t2^{3t+1}}
\begin{tabular}{c|l}
\hline $t$ & {\hfill Codewords \hfill}
\\
\hline $4$ & $\begin{array}{llllll}
\langle0,4,3,11\rangle & \langle0,5,6,15\rangle & \langle0,9,2,23\rangle & \langle0,8,20,24\rangle &  &  \\
\end{array}$
\\
\hline $5$ & $\begin{array}{llllllll}
\langle0,28,9,29\rangle & \langle1,7,2,12\rangle & \langle0,15,24,7\rangle & \langle0,10,30,12\rangle & \langle0,18,23,21\rangle & \langle1,14,22,9\rangle & \langle1,21,4,8\rangle & \langle0,26,25,11\rangle \\
\langle1,15,11,5\rangle & \langle1,3,0,26\rangle &
\end{array}$
\\
\hline $6$ & $\begin{array}{llllll}
\langle0,7,20,5\rangle & \langle0,21,11,27\rangle & \langle0,23,14,35\rangle & \langle0,30,24,25\rangle & \langle0,22,18,26\rangle & \langle0,1,3,10\rangle \\
\end{array}$
\\
\hline $7$ & $\begin{array}{lllllll}
\langle0,5,21,33\rangle & \langle0,32,34,42\rangle & \langle0,14,23,29\rangle & \langle0,6,7,25\rangle & \langle0,36,40,35\rangle & \langle0,20,17,3\rangle & \langle0,18,11,31\rangle \\
\end{array}$
\\
\hline $8$ & $\begin{array}{llllllll}
\langle0,40,29,34\rangle & \langle0,30,2,11\rangle & \langle0,37,3,33\rangle & \langle0,45,12,27\rangle & \langle0,35,21,8\rangle & \langle0,49,18,42\rangle & \langle0,9,7,6\rangle & \langle0,24,28,38\rangle \\
\end{array}$
\\
\hline $9$ & $\begin{array}{lllllllll}
\langle0,39,33,45\rangle & \langle0,12,48,53\rangle & \langle0,22,35,9\rangle & \langle0,5,3,37\rangle & \langle0,29,31,47\rangle & \langle0,4,25,42\rangle & \langle0,1,11,15\rangle & \langle0,16,23,24\rangle & \\ \langle0,26,19,46\rangle \\
\end{array}$
\\
\hline $10$ & $\begin{array}{lllllllll}
\langle0,9,38,57\rangle & \langle0,22,30,33\rangle & \langle0,15,39,5\rangle & \langle0,20,12,21\rangle & \langle0,58,23,45\rangle & \langle0,3,2,17\rangle & \langle0,25,35,41\rangle &
\langle0,19,51,55\rangle &\\ \langle0,6,13,50\rangle & \langle0,28,46,26\rangle &
\end{array}$
\\
\hline $11$ & $\begin{array}{lllllllll}
\langle0,55,33,19\rangle & \langle0,37,65,49\rangle & \langle0,52,8,63\rangle & \langle0,67,60,2\rangle & \langle0,10,4,35\rangle & \langle0,59,66,48\rangle & \langle0,17,44,53\rangle &
\langle0,50,22,21\rangle &\\ \langle0,23,38,64\rangle & \langle0,26,5,56\rangle & \langle0,14,20,43\rangle
\end{array}$
\\
\hline $14$ & $\begin{array}{lllllllll}
\langle0,1,20,31\rangle & \langle0,7,47,68\rangle & \langle0,11,15,78\rangle & \langle0,14,71,80\rangle & \langle0,23,82,83\rangle & \langle0,5,29,41\rangle & \langle0,10,38,55\rangle &
\langle0,2,46,56\rangle &\\ \langle0,9,35,42\rangle & \langle0,12,18,70\rangle & \langle0,17,25,39\rangle & \langle0,34,37,50\rangle & \langle0,13,62,64\rangle & \langle0,21,48,53\rangle \\
\end{array}$
\\
\hline $17$ & $\begin{array}{lllllllll}
\langle0,1,81,86\rangle & \langle0,4,37,58\rangle & \langle0,8,65,74\rangle & \langle0,12,59,76\rangle & \langle0,22,71,72\rangle & \langle0,51,78,90\rangle & \langle0,3,19,29\rangle & \langle0,7,75,77\rangle & \\
\langle0,2,40,62\rangle & \langle0,5,11,46\rangle & \langle0,9,32,88\rangle & \langle0,17,30,61\rangle & \langle0,31,45,98\rangle & \langle0,69,89,93\rangle & \langle0,10,25,28\rangle & \langle0,21,55,63\rangle &  \\
\langle0,48,84,91\rangle \\
\end{array}$
\\
\hline
\end{tabular}
\end{table*}

\begin{lemma}
\label{2^{3t+1}} There exists a $[2,2]$-GDC$(6)$ of type
$2^{3t+1}$ with size $2t(3t+1)$ for each $t\geq 12$ and
$t\not\in \{14,17\}$.
\end{lemma}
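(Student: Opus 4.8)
The plan is to prove every case through a single recursive engine: apply the Fundamental Construction (Construction~\ref{FundCtr}) with the constant weight $2$ to a pairwise balanced design. Concretely, regard a $(3t+1,K,1)$-PBD as a GDD of type $1^{3t+1}$, give every point weight $2$, and for each block $B$ feed in an optimal $[2,2]$-GDC$(6)$ of type $2^{|B|}$. Since $d=6=2(w-1)$, Construction~\ref{FundCtr} applies and returns a $[2,2]$-GDC$(6)$ of type $2^{3t+1}$, whose size equals the sum of the ingredient sizes. The point of the construction is that this total is forced to be optimal: if each block size $k\in K$ satisfies $k\equiv 1\pmod 3$ and admits an optimal ingredient of size $\tfrac23 k(k-1)=\tfrac43\binom{k}{2}$, then, because every pair of points lies in exactly one block, the total is $\tfrac43\sum_{B}\binom{|B|}{2}=\tfrac43\binom{3t+1}{2}=2t(3t+1)=U(6t+2,6,[2,2])$. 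Thus the whole lemma reduces to exhibiting, for each admissible $t$, a $(3t+1,K,1)$-PBD all of whose block sizes lie in the set $S=\{10,13,16,19,22,25,28,31,34,43,52\}$ of orders for which Lemma~\ref{s2^{3t+1}} already supplies an optimal $2^k$.

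Next I would produce these PBDs from the tools already in the paper. For composite orders $3t+1=km$ with $k,m\in S$ and a TD$(k,m)$ available (Theorem~\ref{TD}), adjoining the $k$ groups of the transversal design to its blocks yields a $(km,\{k,m\},1)$-PBD; feeding this into the engine is the same as inflating a $2^k$ by Construction~\ref{Inflation} and then refilling each inflated group with a $2^m$ (the distance check being immediate, since codewords living in distinct inflated groups have disjoint supports and hence distance $8$). For the general and, in particular, the prime orders, I would invoke Wilson's PBD-existence theorem: every element of $S$ is $\equiv 1\pmod 3$, so the PBD-closure of $S$ contains all sufficiently large integers $\equiv 1\pmod 3$, which disposes of all large $t$. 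The moderate orders that fall below Wilson's threshold and lack a convenient $k\cdot m$ factorization would be handled by a companion collection of directly constructed GDCs, in the spirit of the ad hoc constructions used for the analogous ``hard range'' in the $n\equiv 1\pmod 6$ case.

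The main obstacle is precisely this moderate range. Because the admissible block sizes are bounded below by $10$, one cannot exploit the size-$4$ and size-$7$ blocks that make $\equiv 1\pmod 3$ PBD spectra clean, so an order such as $3t+1=37$ (i.e.\ $t=12$) is genuinely delicate: it is prime, lies below Wilson's threshold, and admits no factorization with both factors in $S$, so it must be realized by an explicit mixed-block design or a holey variant of the engine rather than by the product construction. The endgame is therefore a finite case analysis showing that every $t\ge 12$ can be treated except exactly $t\in\{14,17\}$ (orders $43$ and $52$, which resist both the product form and the lifting step); this is why those two values are built directly in Lemma~\ref{s2^{3t+1}}, and verifying that the exceptional set is no larger than $\{14,17\}$ is the crux of the whole argument. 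Throughout, the hypothesis $d\le 2(w-1)$ required by Constructions~\ref{FundCtr} and~\ref{Inflation} holds with equality, so no step is obstructed on that account.
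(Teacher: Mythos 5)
Your recursive ``engine'' is arithmetically sound: giving weight $2$ to every point of a $(3t+1,K,1)$-PBD and feeding in optimal $[2,2]$-GDC$(6)$s of type $2^{k}$ for each block does return a $[2,2]$-GDC$(6)$ of type $2^{3t+1}$ of size $\tfrac43\binom{3t+1}{2}=2t(3t+1)$. The fatal problem is that the PBDs you need do not exist in precisely the range where the lemma has content. Your block-size set is $S=\{10,13,16,\dots\}$ with minimum $10$, so any $(v,S,1)$-PBD has at most $\binom{v}{2}/\binom{10}{2}=v(v-1)/90$ blocks; but by the de Bruijn--Erd\H{o}s/Fisher inequality any pairwise balanced design other than a single block has at least $v$ blocks. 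Hence a $(v,S,1)$-PBD can exist only for $v\geq 91$, i.e.\ $t\geq 30$. The entire range $12\leq t\leq 29$ (except the excluded $14,17$) is therefore untouched by your engine, not just the single value $t=12$ you flag as ``delicate.'' Your fallback for these orders --- ``an explicit mixed-block design or a holey variant of the engine'' plus ``a finite case analysis'' --- is exactly the substance of the proof and is nowhere carried out. Moreover, for large $t$ you invoke Wilson's asymptotic PBD-closure theorem, which is ineffective: it gives no bound on the threshold, so even the claim for all sufficiently large $t$ is not reduced to a checkable finite list.

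The paper avoids all of this by working with GDCs having large groups rather than PBDs with large blocks: it takes a $[2,2]$-GDC$(6)$ of type $24^i30^j36^k42^l48^m$ with $4i+5j+6k+7l+8m=t$ (Lemma~\ref{[2,2]-GDCa}, available for all $t\geq 9$ outside an explicit finite set $P$), adjoins two ideal points, and fills each group of size $g\in\{24,30,36,42,48\}$ together with the two ideal points with a $[2,2]$-GDC$(6)$ of type $2^{g/2+1}$ from Lemma~\ref{s2^{3t+1}}; the finitely many $t\in P$ with $t\geq 12$ are then handled via the special GDCs of Lemma~\ref{[2,2]-GDCb} and, for $t=13$, a TD$(4,5)$ inflated with weight $4$ and filled with $2^{10}$'s. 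That ``holey variant'' is the proof; your proposal defers to it without supplying it.
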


\begin{proof} For each $t\geq 12$ and $t\not\in P$, take a
$[2,2]$-GDC$(6)$ of type $24^i30^j36^k42^l48^m$ with
$4i+5j+6k+7l+8m=t$ from Lemma~\ref{[2,2]-GDCa}. Adjoin two
ideal points, and fill in the groups together with the ideal
points with $[2,2]$-GDC$(6)$s of type $2^u$ for $u\in
\{13,16,19,22,25\}$ to obtain the desired GDC.

For each $t\in \{12,15,16,18,19,21,22,23,26,27,28,31,32,33\}$, take
the $[2,2]$-GDC$(6)$ constructed in Lemma~\ref{[2,2]-GDCb}. Adjoin
two ideal points, and fill in the groups to obtain the desired GDC.

For $t=13$, take a TD$(4,5)$ and apply the Fundamental Construction
with weight $4$ to obtain a $[2,2]$-GDC$(6)$ of type $20^4$. Then
fill in the groups with $[2,2]$-GDC$(6)$s of type $2^{10}$ to
obtain the required GDC.
\end{proof}

Summarizing the above results, we have:

\begin{theorem} \label{6t+2}
$A_3(8,6,[2,2])=5$, $A_3(14,6,[2,2])\geq 27$,
$A_3(6t+2,6,[2,2])=U(6t+2,6,[2,2])$ for each $t\geq 3$.
\end{theorem}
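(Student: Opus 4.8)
The plan is to assemble Theorem~\ref{6t+2} directly from the lemmas already established in this subsection, treating it purely as a bookkeeping theorem that collects three claims. The overall strategy mirrors the structure used for the $n\equiv 0,1\pmod 6$ cases in Theorems~\ref{6t+1} and \ref{6t}: small sporadic lengths are handled by explicit codes, and the generic range is handled by the GDC machinery. Concretely, the upper bound $A_3(n,6,[2,2])\leq U(n,6,[2,2])$ holds for every $n$ by the Corollary following Theorem~\ref{bound2}, so in every case it suffices to exhibit a code meeting $U(6t+2,6,[2,2])$ (or, for $t=2$, the weaker claim $A_3(14,6,[2,2])\geq 27$).

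First I would dispose of the two small exceptional lengths. For $n=8$ (i.e.\ $t=1$), the value $A_3(8,6,[2,2])=5$ is read off Table~\ref{presult}, which records the previously known exact values for $n\leq 10$. For $n=14$ (i.e.\ $t=2$), the bound $A_3(14,6,[2,2])\geq 27$ is witnessed by the explicit list of $27$ codewords on $I_{14}$ given in the lemma immediately preceding Lemma~\ref{s2^{3t+1}}; one only needs to invoke that lemma, since verifying the codewords is routine and already asserted there.

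For the generic claim, I would use the equivalence noted in the text: a $[2,2]$-GDC$(6)$ of type $2^{3t+1}$ with size $2t(3t+1)$ yields an optimal $(6t+2,6,[2,2])_3$-code, because filling in each group of size $2$ (via Construction~\ref{FillGroups}) contributes no extra codewords—there is no nontrivial $(2,6,[2,2])_3$-code—so the resulting code has size exactly $2t(3t+1)=U(6t+2,6,[2,2])$. Thus the equality $A_3(6t+2,6,[2,2])=U(6t+2,6,[2,2])$ for $t\geq 3$ follows by combining the upper bound with the existence of the required GDCs. Those GDCs exist for all $t\geq 3$: Lemma~\ref{s2^{3t+1}} covers $3\leq t\leq 11$ together with $t\in\{14,17\}$, and Lemma~\ref{2^{3t+1}} covers every $t\geq 12$ with $t\notin\{14,17\}$. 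Together these two lemmas partition $\{t : t\geq 3\}$ with no gaps, so every length $n\equiv 2\pmod 6$ with $n\geq 20$ is settled.

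The proof itself therefore reduces to checking that the case ranges of the cited lemmas genuinely tile the integers $t\geq 3$ and that the upper bound and lower bound match; the substantive work has already been done in the preceding constructions. The only place demanding care is the coverage argument in Lemma~\ref{2^{3t+1}}, which handles $t\geq 12$ by splitting into the PBD-driven family from Lemma~\ref{[2,2]-GDCa} (valid for $t\notin P$), the finite sporadic set $\{12,15,16,18,19,21,22,23,26,27,28,31,32,33\}$ drawn from Lemma~\ref{[2,2]-GDCb}, and the special $t=13$ via a TD$(4,5)$ inflation; confirming that this sporadic set is exactly $\{t\geq 9 : t\in P\}$ restricted to the relevant range, so that nothing in $P$ is missed, is the one subtle bookkeeping step. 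I expect no genuine mathematical obstacle—the theorem is a summary statement whose content is entirely carried by Lemma~\ref{s2^{3t+1}}, Lemma~\ref{2^{3t+1}}, and the preceding explicit constructions—so the ``hard part'' is merely ensuring the union of index sets is exhaustive.
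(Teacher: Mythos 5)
Your proposal is correct and follows essentially the same route as the paper: the theorem is indeed just a summary of the preceding lemma giving the $n=8$ and $n=14$ cases, the observation that a $[2,2]$-GDC$(6)$ of type $2^{3t+1}$ with size $2t(3t+1)=U(6t+2,6,[2,2])$ yields the optimal code, and Lemmas~\ref{s2^{3t+1}} and \ref{2^{3t+1}}, whose index sets ($3\leq t\leq 11$ or $t\in\{14,17\}$, respectively $t\geq 12$ with $t\notin\{14,17\}$) together cover all $t\geq 3$. Your bookkeeping of the coverage, including the separate handling of $t=13$ and the sporadic set drawn from $P$, matches the paper's.
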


\subsection{Case of Length  $n\equiv 5\pmod{6}$}

\begin{lemma}
\label{s6t+5} $A_3(5,6,[2,2])=1$, $A_3(11,6,[2,2])=15$,
$A_3(17,6,[2,2])\geq40$.
\end{lemma}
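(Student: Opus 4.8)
The plan is to handle the three lengths separately: the first by citation and the last two by explicit construction, with the exact value at $n=11$ demanding an additional upper-bound argument. The value $A_3(5,6,[2,2])=1$ is already recorded in Table~\ref{presult}, so nothing new is needed there.

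For the lower bounds $A_3(11,6,[2,2])\ge 15$ and $A_3(17,6,[2,2])\ge 40$ I would simply exhibit the codewords, in the style of the preceding lemmas: list a small set of base codewords over $\bbZ_{11}$ (resp.\ $\bbZ_{17}$, or a suitable multiplier/product group whose orbits sum to the required length, since $15$ and $40$ are not multiples of the prime lengths) and develop them under a fixed automorphism group. Verification reduces to the observation that two codewords of composition $[2,2]$ are at distance at least $6$ exactly when $s+a\le 2$, where $s$ is the size of the intersection of their supports and $a$ is the number of coordinates carrying the same nonzero value. Hence one need only confirm that no two developed codewords share two or more coordinate--value incidences, which is routine bookkeeping.

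The substantive part is the upper bound $A_3(11,6,[2,2])\le 15$, since the Johnson-type bound only yields $U(11,6,[2,2])=16$. Here I would refine the counting used in the earlier $[3,1]$ lemma. Fix a position $i$ and let $x_i$ (resp.\ $y_i$) be the number of codewords carrying a $1$ (resp.\ a $2$) in position $i$; counting nonzero symbols two ways gives $4M=\sum_i(x_i+y_i)$. If two codewords both carry a $1$ in position $i$ they agree there, so $s\ge 1$ and $a\ge 1$, forcing $s=a=1$ by $s+a\le 2$; thus their three other nonzero coordinates are disjoint, and the $\C_i^1$-codewords occupy $3x_i$ distinct coordinates among the other $n-1=10$. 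Therefore $3x_i\le 10$ and likewise $3y_i\le 10$, so $x_i,y_i\le 3$ and $x_i+y_i\le 6$. This alone only reproduces $M\le 16$.

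To exclude $M=16$ I would exploit the rigidity it forces. Then $\sum_i(x_i+y_i)=64$ over $11$ positions, each at most $6$, leaves a total deficiency of only $2$, so at least nine positions satisfy $x_i=y_i=3$. At such a saturated position the three $\C_i^1$-codewords cover $9$ of the $10$ remaining coordinates, and by $s+a\le 2$ the three $\C_i^2$-codewords must keep their $1$'s off the $\C_i^1$ $1$-coordinates and their $2$'s off the $\C_i^1$ $2$-coordinates, which nearly determines the local incidence pattern. The hard part, which I expect to be the main obstacle, is to show that nine such tightly constrained local patterns cannot be glued into a single code: I anticipate this needs a global double count (for example of pairs of codewords meeting in a coordinate) or a short case analysis on the location of the two deficient positions, and this is the step most likely to require a clever invariant or a computer-assisted check. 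Once $M\le 15$ is established it meets the construction, completing the case $n=11$.
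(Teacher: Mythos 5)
The essential missing piece is the upper bound $A_3(11,6,[2,2])\le 15$: your own counting argument, as you acknowledge, only recovers the Johnson-type value $16$, and your plan for excluding $M=16$ is left as an explicitly unresolved obstacle (``the step most likely to require a clever invariant or a computer-assisted check''). A proof that ends with an anticipated but unexecuted case analysis is not a proof, so as written the claim $A_3(11,6,[2,2])=15$ is not established. The paper sidesteps this entirely with one line: every $(11,6,[2,2])_3$-code is in particular an $(11,6,4)_3$ constant-weight code, so $A_3(11,6,[2,2])\le A_3(11,6,4)$, and the value $A_3(11,6,4)=15$ is already known from Zhang and Ge \cite{ZhangGe:2010}; the same reference supplies an optimal $(11,6,[2,2])_3$-code of size $15$, so no explicit construction is needed for $n=11$ either. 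Your local analysis also underuses the cross-constraints between $\C_i^1$ and $\C_i^2$ (a codeword with a $1$ at position $i$ and one with a $2$ there already share a support coordinate, which restricts their remaining overlaps), but even strengthening that would be reinventing a result that can simply be cited.

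For $n=5$ (table lookup) and $n=17$ (explicit code of size $40$, which the paper builds on $\bbZ_{16}\cup\{\infty\}$ developed $+4\pmod{16}$) your outline matches the paper, and your distance criterion $s+a\le 2$ for two weight-$4$, composition-$[2,2]$ words is correct, since $d_H=8-s-a$. But until the $n=11$ upper bound is either proved or properly cited, the lemma is not fully established by your proposal.
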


\begin{proof} For the first equation, see Table~\ref{presult}. From \cite{ZhangGe:2010}, we have $A_3(11,6,[2,2])\leq
A_3(11,6,4)=15$. An optimal $(11,6,[2,2])_3$-code of size $15$ is also
constructed in \cite{ZhangGe:2010}.

A $(17,6,[2,2])_3$-code with $40$ codewords is constructed on
$\bbZ_{16}\cup\{\infty\}$, and is obtained by developing the following base
codewords $+4\pmod{16}$.
$$\begin{array}{llllll}
\langle1,5,8,14\rangle & \langle0,11,12,10\rangle & \langle0,6,7,4\rangle & \langle3,10,15,6\rangle &
\langle0,3,9,5\rangle & \langle1,6,\infty,12\rangle \\ \langle1,2,15,9\rangle & \langle0,\infty,13,15\rangle &
\langle2,4,5,6\rangle & \langle3,13,7,12\rangle\\
\end{array}$$
\end{proof}

\begin{lemma}
\label{s2^3t5^1a} There exists a $[2,2]$-GDC$(6)$ of type
$2^{3t}5^1$ with size $2t(3t+4)$ for each $3\leq t\leq 8$.
\end{lemma}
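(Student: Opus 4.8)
The plan is to give a direct construction paralleling Lemmas~\ref{s2^{3t+1}} and \ref{[2,2]-GDC:sg^um^1}: for each $t\in\{3,\dots,8\}$ exhibit a short list of base codewords and develop them under a suitable cyclic automorphism group, reading off the six cases from an accompanying table. First I would fix the point set $X_t=I_{6t+5}$, reserving the five points $\{6t,6t+1,\dots,6t+4\}$ as the single group of size $5$ and partitioning the remaining $6t$ points into the $3t$ groups of size $2$ demanded by the exponent $2^{3t}$.

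The size must be accounted for by the orbit lengths. Since $2t(3t+4)=6t^2+8t\equiv 2t\pmod{6t}$ and $2t\neq 0$, the code cannot be a union of full orbits of length $6t$, so a single $6t$-cycle on the ordinary points will not count correctly. Instead I would realize the $6t$ ordinary points as $\bbZ_{2t}\times I_3$, take the size-$2$ groups to be $\{(i,j),(i+t,j)\}$ for $0\le i\le t-1$ and $j\in I_3$, and develop under the order-$2t$ shift $(i,j)\mapsto(i+1,j)$ while fixing the five special points. Each base codeword then has an orbit of full length $2t$: a codeword fixed by the shift-by-$t$ element would place the two coordinates $(i,j)$ and $(i+t,j)$ in a common group, which the group condition forbids, and no smaller stabilizer is possible. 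Choosing exactly $3t+4$ base codewords therefore produces $2t(3t+4)$ codewords, the required size.

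The composition $[2,2]$ is built into the $\langle\cdot,\cdot,\cdot,\cdot\rangle$ representation (two coordinates carry $1$, two carry $2$) and is preserved by the shift; the group condition is guaranteed by selecting base codewords whose four coordinates lie in four distinct groups, a property invariant under development. The decisive requirement is minimum distance $6$. For two composition-$[2,2]$ codewords $\vu,\vv$, writing $s$ for the number of coordinates where they carry the same nonzero value and $r$ for those where they carry different nonzero values, one has $d_H(\vu,\vv)=8-2s-r$, so distance $6$ is equivalent to $2s+r\le 2$ for every pair. Because the whole code is a development of the base codewords, this condition reduces to a finite list of (value-typed) difference conditions among translates of the base codewords, and the hard part will be exhibiting, for each $t$, base codewords that meet all of these simultaneously. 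I expect this to require a computer search whose output is recorded as a table of base codewords (as in Table~\ref{t2^{3t+1}}), after which the remaining verification is the routine ``easy to check'' step used throughout the section.
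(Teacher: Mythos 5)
Your framework is the same as the paper's. The paper takes $X_t=I_{6t+5}$ with groups $\{i,i+3t\}$ for $0\le i\le 3t-1$ plus the block of five special points, and develops $3t+4$ base codewords under the permutation $x\mapsto x+3$ on the $6t$ ordinary points, i.e.\ three parallel cycles of length $2t$ — exactly your $\bbZ_{2t}\times I_3$ model after the relabelling $3a+j\leftrightarrow(a,j)$, under which the paper's groups $\{i,i+3t\}$ become your $\{(a,j),(a+t,j)\}$. Your accounting is sound: the orbit-length argument (a nontrivial stabilizer would force either three or four support points of equal value, impossible for composition $[2,2]$, or an orbit $\{(a,j),(a+t,j)\}$ inside the support, forbidden by the group condition), the count of $3t+4$ base orbits of length $2t$, the invariance of the group condition under development, and the criterion $d_H=8-2s-r\ge 6\iff 2s+r\le 2$ are all correct. (For $t=3$ the paper actually enlarges the automorphism group so that it also permutes the five special points, but with your choice of fixing them the counting still comes out to $13\times 6=78$, so that is only a cosmetic difference.)

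The gap is that the lemma is a pure existence statement whose entire mathematical content is the explicit list of base codewords, and you never produce one. You correctly reduce the problem to finding, for each $t\in[3,8]$, a set of $3t+4$ base codewords meeting a finite system of value-typed difference conditions, but you then only say you \emph{expect} a computer search to succeed. Nothing in your argument guarantees that such a set exists under this particular group action for every $t$ in the stated range — that guarantee is precisely what the paper's Table~\ref{t2^3t5^1} supplies, and it is the one step that cannot be waved through as ``routine.'' Until the codewords are exhibited (or their existence otherwise certified), the lemma is not proved.
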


\begin{proof} Let $X_t=I_{6t+5}$, and ${\cal
G}_t=\{\{i,i+3t\}:0\leq i\leq
3t-1\}\cup\{\{6t,6t+1,6t+2,6t+3,6t+4\}\}$. Then $(X_t,{\cal
G}_t,{\cal C}_t)$ is a $[2,2]$-GDC$(6)$ of type $2^{3t}5^1$,
where ${\cal C}_3$ is obtained by developing the codewords in Table~\ref{t2^3t5^1} under the
automorphism group $G=\langle(0\ \ 3\ \ 6\ \ \cdots\ \ 15)(1\ \
4\ \ 7\ \ \cdots\ \ 16)$ $(2\ \ 5\ \ 8\ \ \cdots\ \ 17)(18\ \ 19\
\ 20)$ $(21\ \ 22)\rangle$, and for $t>3$, ${\cal C}_t$ is
obtained by developing the elements of $\bbZ_{6t}$ in the
codewords in Table~\ref{t2^3t5^1} under the automorphism group
$G=\langle(0\ \ 3\ \ 6\ \ \cdots\ \ 6t-3)(1\ \ 4\ \ 7\ \
\cdots\ \ 6t-2)(2\ \ 5\ \ 8\ \ \cdots\ \
6t-1)(6t)(6t+1)(6t+2)(6t+3)(6t+4)\rangle$.
\end{proof}

\begin{table*}
\scriptsize
\centering
\renewcommand{\arraystretch}{1}
\setlength\arraycolsep{3pt} \caption{Base Codewords of Small
$[2,2]$-GDC$(6)$s of Type $2^{3t}5^1$ in Lemma~\ref{s2^3t5^1a}}
\label{t2^3t5^1}
\begin{tabular}{c|l}
\hline $t$ & {\hfill Codewords \hfill}
\\
\hline $3$ & $\begin{array}{lllllllll}
\langle1,5,3,22\rangle & \langle0,21,7,11\rangle & \langle0,22,1,17\rangle & \langle1,20,2,15\rangle & \langle0,12,15,19\rangle & \langle0,20,4,14\rangle & \langle2,8,5,19\rangle &
\langle1,11,0,12\rangle &\\ \langle2,18,1,6\rangle & \langle2,4,12,22\rangle & \langle1,13,16,19\rangle & \langle0,5,10,16\rangle & \langle0,13,2,8\rangle &  \\
\end{array}$
\\
\hline $4$ & $\begin{array}{llllllll}
\langle2,25,18,22\rangle & \langle1,23,20,2\rangle & \langle0,22,25,5\rangle & \langle0,19,23,26\rangle & \langle2,0,10,28\rangle & \langle2,27,16,21\rangle & \langle0,14,13,24\rangle & \langle1,28,0,14\rangle \\
\langle2,26,15,7\rangle & \langle1,9,17,27\rangle & \langle2,8,19,17\rangle & \langle1,19,10,12\rangle & \langle0,4,1,7\rangle & \langle0,18,11,9\rangle & \langle1,24,11,15\rangle & \langle2,6,9,3\rangle \\
\end{array}$
\\
\hline $5$ & $\begin{array}{lllllllll}
\langle2,18,21,9\rangle & \langle0,19,2,33\rangle & \langle2,26,0,4\rangle & \langle1,6,31,11\rangle & \langle1,15,34,8\rangle & \langle2,32,19,24\rangle & \langle2,31,15,16\rangle &
\langle2,33,25,27\rangle &\\ \langle0,13,11,32\rangle & \langle1,27,24,4\rangle & \langle0,10,26,30\rangle & \langle0,1,9,22\rangle & \langle1,19,23,28\rangle & \langle2,34,22,3\rangle &
\langle0,12,20,6\rangle & \langle2,20,29,23\rangle &\\ \langle1,3,2,20\rangle & \langle2,30,28,12\rangle & \langle1,26,7,25\rangle &  &  \\
\end{array}$
\\
\hline $6$ & $\begin{array}{llllllllll}
\langle2,23,4,28\rangle & \langle0,28,23,37\rangle & \langle2,26,22,0\rangle & \langle2,37,24,13\rangle & \langle1,0,4,30\rangle & \langle0,22,36,20\rangle & \langle0,19,34,21\rangle & \langle1,24,20,38\rangle\\
\langle2,38,15,31\rangle & \langle0,31,27,2\rangle & \langle0,11,15,12\rangle & \langle1,21,39,26\rangle & \langle1,10,2,23\rangle & \langle2,36,33,25\rangle & \langle1,12,11,5\rangle &
\langle2,40,16,21\rangle &\\ \langle2,8,35,11\rangle & \langle2,39,9,10\rangle & \langle0,8,24,7\rangle & \langle1,25,31,22\rangle & \langle0,3,17,9\rangle & \langle0,10,26,40\rangle &  \\
\end{array}$
\\
\hline $7$ & $\begin{array}{lllllllll}
\langle2,15,38,0\rangle & \langle0,24,28,22\rangle & \langle2,29,35,19\rangle & \langle2,36,9,41\rangle & \langle2,33,11,26\rangle & \langle1,16,13,4\rangle & \langle1,10,21,32\rangle & \langle2,4,21,39\rangle \\
\langle2,18,31,43\rangle & \langle1,29,5,30\rangle & \langle2,30,42,22\rangle & \langle1,7,20,8\rangle & \langle2,12,45,1\rangle & \langle0,30,7,25\rangle & \langle1,46,38,24\rangle & \langle2,32,13,40\rangle \\
\langle1,45,0,26\rangle & \langle1,42,6,35\rangle & \langle1,43,11,3\rangle & \langle2,27,46,37\rangle & \langle0,44,16,2\rangle & \langle1,17,39,44\rangle & \langle1,25,9,15\rangle & \langle0,38,1,41\rangle \\
\langle0,36,33,3\rangle \\
\end{array}$
\\
\hline $8$ & $\begin{array}{lllllll}
\langle0,45,18,30\rangle & \langle2,50,42,25\rangle & \langle2,52,40,27\rangle & \langle1,3,45,7\rangle & \langle0,1,15,16\rangle & \langle1,4,34,22\rangle & \langle2,8,19,39\rangle \\
\langle0,19,47,41\rangle & \langle2,32,34,12\rangle & \langle0,43,32,27\rangle & \langle2,29,36,28\rangle & \langle0,22,48,17\rangle & \langle0,7,34,20\rangle & \langle2,5,38,17\rangle \\
\langle0,5,9,6\rangle & \langle1,9,51,32\rangle & \langle2,48,0,43\rangle & \langle0,36,44,14\rangle & \langle2,41,22,15\rangle & \langle1,21,10,12\rangle & \langle2,51,21,46\rangle \\
\langle16,39,26,50\rangle & \langle16,4,23,20\rangle & \langle16,2,10,7\rangle & \langle2,49,18,37\rangle & \langle0,13,38,49\rangle & \langle3,13,5,14\rangle & \langle0,31,29,52\rangle \\
\end{array}$
\\
\hline
\end{tabular}
\end{table*}

\begin{lemma}
\label{2^3t5^1} There exists a $[2,2]$-GDC$(6)$ of type
$2^{3t}5^1$ with size $2t(3t+4)$ for each $t\geq 12$ and
$t\not\in\{13,14,17\}$.
\end{lemma}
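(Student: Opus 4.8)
The plan is to follow the template of Lemma~\ref{2^{3t+1}}, with the block of two adjoined points there replaced by a block of five. For an admissible $t$ I would first secure a master $[2,2]$-GDC$(6)$ whose groups all have order $6s$ with $s\in\{3,4,5,6,7,8\}$, then adjoin a set $Y$ of five ideal points and fill each group $G$ of the master, taken together with $Y$, by an ingredient $[2,2]$-GDC$(6)$ of type $2^{|G|/2}5^1$ in which the group of size five is identified with $Y$. The ingredients required are exactly the $[2,2]$-GDC$(6)$s of type $2^{3s}5^1$ for $3\le s\le 8$ produced in Lemma~\ref{s2^3t5^1a}. Taking the union of the master codewords and all the filling codewords yields a code on $X\cup Y$ whose natural groups are the $3t$ size-two subgroups arising from the fills together with the single group $Y$; this is the desired type $2^{3t}5^1$.

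For the generic range $t\ge 12$ with $t\notin P$ I would take the master of type $24^i30^j36^k42^l48^m$, where $4i+5j+6k+7l+8m=t$, from Lemma~\ref{[2,2]-GDCa}; all of its group sizes lie in $\{24,30,36,42,48\}$, so the filling applies directly. The remaining values, namely $t\in\{12,15,16,18,19,21,22,23,26,27,28,31,32,33\}$, are precisely the integers $t\ge 12$ with $t\neq 13,14,17$ that lie in $P$, and each is supplied by one of the masters of Lemma~\ref{[2,2]-GDCb}: reading off $t=(\text{number of points})/6$, the types $18^u$ ($u\in\{4,5,6,7,9,11\}$), $24^u$ ($u\in\{4,7,8\}$), $24^u 36^1$ ($u\in\{4,5\}$), $18^8 42^1$, $30^4 18^1$ and $24^4 18^1$ realise exactly this list, and again all group sizes fall in $\{18,24,30,36,42,48\}$, so Lemma~\ref{s2^3t5^1a} always provides a matching filler.

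It then remains to make the two standard checks. For the distance, a filling codeword and a master codeword, or two filling codewords attached to distinct groups $G,G'$, have supports meeting in at most one point (distinct fills overlap only inside $Y$, and a master codeword meets any group in at most one point); since both codewords have weight four, their supports then differ in at least six positions, so $d_H\ge 6$, while distance within a single fill or within the master is inherited. For the size, writing $s_G=|G|/6$, each admissible master is a complete $[2,2]$-GDC$(6)$ of size $12\sum_{G<G'}s_Gs_{G'}$ (immediate from the Fundamental Construction producing it, and consistent with the sizes recorded in Lemmas~\ref{[2,2]-GDC:6^t}, \ref{DM2GDCg^4}, \ref{SRF2GDC(6t)^u} and~\ref{[2,2]-GDCb}), and adding the $\sum_G 2s_G(3s_G+4)$ codewords contributed by the fills gives $6(\sum_G s_G)^2+8t=6t^2+8t=2t(3t+4)$, as required. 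I expect the only genuinely delicate point to be the bookkeeping in the second paragraph: confirming that Lemmas~\ref{[2,2]-GDCa} and~\ref{[2,2]-GDCb} together cover every $t\ge 12$ with $t\notin\{13,14,17\}$ while never forcing a group size outside $\{18,24,30,36,42,48\}$, so that a compatible filler from Lemma~\ref{s2^3t5^1a} is always at hand.
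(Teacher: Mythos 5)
Your proposal is correct and follows essentially the same route as the paper: for $t\ge 12$ with $t\notin P$ the master of type $24^i30^j36^k42^l48^m$ from Lemma~\ref{[2,2]-GDCa} is used, for the remaining $t\in\{12,15,16,18,19,21,22,23,26,27,28,31,32,33\}$ the masters of Lemma~\ref{[2,2]-GDCb} are used, and in both cases five ideal points are adjoined and the groups filled with the $[2,2]$-GDC$(6)$s of type $2^{3s}5^1$ from Lemma~\ref{s2^3t5^1a}. Your additional verification of the covering of $P$, the distance, and the size count $6t^2+8t=2t(3t+4)$ is accurate and only makes explicit what the paper leaves implicit.
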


\begin{proof} For each $t\geq 12$ and $t\not\in P$, take a
$[2,2]$-GDC$(6)$ of type $24^i30^j36^k42^l48^m$ with
$4i+5j+6k+7l+8m=t$ from Lemma~\ref{[2,2]-GDCa}. Adjoin five
ideal points, and fill in the groups together with the ideal
points with $[2,2]$-GDC$(6)$s of type $2^{3s}5^1$ for
$s\in\{4,5,6,7,8\}$ to obtain the desired GDC.

For each $t\in \{12,15,16,18,19,21,22,23,26,27,28,31,$ $32,33\}$, take
the $[2,2]$-GDC$(6)$ constructed in Lemma~\ref{[2,2]-GDCb}. Adjoin
five ideal points, and fill in the groups together with the
ideal points with $[2,2]$-GDC$(6)$s of type $2^{3s}5^1$ for
$s\in\{3,4,5,6,7\}$ to obtain the desired GDC.
\end{proof}

\begin{lemma}
$A_3(6t+5,6,[2,2])\geq U(6t+5,6,[2,2])-1$ for each $t\geq 3$ and
$t\not\in\{9,10,11,14\}$; $A_3(6t+5,6,[2,2])\geq U(6t+5,6,[2,2])-2$ for $t=14$.
\end{lemma}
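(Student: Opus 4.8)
The plan is to produce almost all of these codes by a single application of Construction~\ref{FillGroups} (Filling in Groups) to a $[2,2]$-GDC$(6)$ of type $2^{3t}5^1$. Such a GDC has $6t+5$ points and size $2t(3t+4)$, and is furnished by Lemma~\ref{s2^3t5^1a} for $3\leq t\leq 8$ and by Lemma~\ref{2^3t5^1} for $t\geq 12$ with $t\not\in\{13,14,17\}$, i.e.\ for every $t\in[3,8]\cup\{12,15,16\}\cup[18,\infty)$. Taking $d=6$ and $w=4$ (so that $d\le 2(w-1)$ holds, with equality), I would fill each group: a group of size $2$ contributes $A_3(2,6,[2,2])=0$ words, since a weight-$4$ vector cannot be supported on two coordinates, whereas the single group of size $5$ contributes $A_3(5,6,[2,2])=1$ word by Table~\ref{presult}. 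The resulting $(6t+5,6,[2,2])_3$-code then has size
\[
2t(3t+4)+1=6t^2+8t+1=U(6t+5,6,[2,2])-1 .
\]
The only thing to verify is the arithmetic $U(6t+5,6,[2,2])=6t^2+8t+2$ coming from the definition, which shows the Johnson bound sits two above the bare GDC size, so the lone word harvested from the $5$-group brings us to $U-1$ rather than $U-2$.

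The values left over, $t\in\{13,17\}$ and $t=14$, are precisely those for which Lemma~\ref{2^3t5^1} does \emph{not} supply a $2^{3t}5^1$ GDC, and must be handled individually. (The same gap for $t\in\{9,10,11\}$, where neither Lemma~\ref{s2^3t5^1a} nor Lemma~\ref{2^3t5^1} applies and no substitute reaching $U-1$ or $U-2$ is available, is why those three $t$ are omitted from the statement altogether.) For $t=13,17$ I would instead invoke Construction~\ref{AdjoinPoints} with $y=5$: choose a master $[2,2]$-GDC$(6)$ all of whose groups are multiples of $6$ and whose sizes total $6t$ --- for $t=17$ a GDC of type $24^46^1$ from Lemma~\ref{[2,2]-GDC:sg^um^1} does this, since $96+6=102=6\cdot 17$, and for $t=13$ one assembled from the inventory of Lemmas~\ref{[2,2]-GDC:6^t}--\ref{[2,2]-GDCb} --- adjoin five ideal points, and fill each group together with the adjoined points using an optimal short code of length $g_i+5\equiv 5\pmod 6$, drawn from the base cases in Lemma~\ref{s6t+5} and from the main construction above at smaller parameters. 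The bookkeeping of Construction~\ref{AdjoinPoints} should again total $U-1$ for $t=13,17$. For $t=14$ the best assembly available realizes only the $2t(3t+4)=U-2$ codewords: the extra word normally gained from the distinguished $5$-point group cannot be adjoined, which is exactly the reason $t=14$ is downgraded to $U-2$.

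I expect the main construction to be routine once the $2^{3t}5^1$ ingredient is in hand --- it is a one-line Filling-in-Groups count. The real work, and the likely obstacle, is the three exceptional lengths $t\in\{13,14,17\}$: one must locate master GDCs with group sizes summing to $6t$ from the already-built stock, confirm the existence of every ingredient short code and every auxiliary $1^{g_i}5^1$-GDC that Construction~\ref{AdjoinPoints} calls for, and carry out the exact codeword count. The most delicate point is the single $+1$: deciding precisely when the five adjoined points can be arranged to carry their own codeword (yielding $U-1$ for $t=13,17$) and recognizing that for $t=14$ they cannot, so that only $U-2$ is attained.
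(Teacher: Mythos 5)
Your main construction is exactly the paper's: take the $[2,2]$-GDC$(6)$ of type $2^{3t}5^1$ from Lemmas~\ref{s2^3t5^1a} and~\ref{2^3t5^1}, fill the groups of size $2$ trivially and the group of size $5$ with the one-word $(5,6,[2,2])_3$-code, and check $2t(3t+4)+1=U(6t+5,6,[2,2])-1$. That part is correct and complete.

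The gap is in the three leftover values $t\in\{13,14,17\}$, which you correctly isolate but do not actually settle. For $t=17$ you name the right master ($24^4 6^1$ from Lemma~\ref{[2,2]-GDC:sg^um^1}) but leave the bookkeeping unverified; for $t=13$ you only say the master should be ``assembled from the inventory'' (the paper uses $18^4 6^1$, also from Lemma~\ref{[2,2]-GDC:sg^um^1}); and for $t=14$ you give no construction at all, merely asserting that ``the best assembly available realizes only $2t(3t+4)=U-2$ codewords'' because ``the extra word from the $5$-point group cannot be adjoined.'' That rationale is wrong, and the coincidence $2t(3t+4)=U-2$ at $t=14$ is just that --- there is no $2^{42}5^1$ GDC in the paper to realize it. What the paper actually does for all three values is uniform: take the $[2,2]$-GDC$(6)$ of type $g^4m^1$ with $(g,m)\in\{(18,6),(18,12),(24,6)\}$, adjoin $5$ ideal points, fill each group of size $g$ together with the ideal points by a $[2,2]$-GDC$(6)$ of type $2^{g/2}5^1$ (the $t=3$ or $t=4$ ingredient from Lemma~\ref{s2^3t5^1a}), and fill the group of size $m$ together with the ideal points by the code of length $11$ or $17$ from Lemma~\ref{s6t+5}. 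For $t=13$ this gives $792+4\cdot 78+15=1119=U(83)-1$ and for $t=17$ it gives $1344+4\cdot 128+15=1871=U(107)-1$; for $t=14$ it gives $936+4\cdot 78+40=1288=U(89)-2$, and the deficit of $2$ comes entirely from the length-$17$ ingredient, which is only known to reach $40=U(17,6,[2,2])-2$ --- not from any lost word on the five adjoined points. Your alternative plan of invoking Construction~\ref{AdjoinPoints} directly would also require ingredient GDCs of type $1^{g}5^1$ that the paper never constructs, so as written the exceptional cases remain unproved.
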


\begin{proof} For each $t\geq 3$ and
$t\not\in\{9,10,11,13,14,$ $17\}$, take a $[2,2]$-GDC$(6)$ of type $2^{3t}5^1$
from Lemmas~\ref{s2^3t5^1a} and \ref{2^3t5^1}, and fill in the group of
size $5$ with an optimal $(5,6,[2,2])_3$-code with one codeword
to get the desired code.

For each $t\in \{13,14,17\}$, take a $[2,2]$-GDC$(6)$ of type $g^u m^1$ with
$(g,u,m)\in\{(18,4,6),$ $(18,4,12), (24,4,6)\}$ from Lemma~\ref{[2,2]-GDC:sg^um^1}. Adjoin $5$ ideal points, fill
in the groups of size $g$ together with the ideal points with $[2,2]$-GDC$(6)$s of
type $2^{g/2} 5^1$, and fill in the group of size $m$ together with the ideal points
with a code of length $11$ or length $17$ from Lemma~\ref{s6t+5}.
\end{proof}

\begin{lemma}
\label{24t+5}
$A_3(24t+5,6,[2,2])=U(24t+5,6,[2,2])$ for each $t\geq 3$.
\end{lemma}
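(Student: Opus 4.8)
The plan is to establish the upper bound directly from the Johnson-type Corollary, $A_3(24t+5,6,[2,2])\le U(24t+5,6,[2,2]) = 96t^2+32t+2$, and then to meet it with a recursive construction. The generic treatment of the class $n\equiv 5\pmod 6$ (a $[2,2]$-GDC$(6)$ of type $2^{3s}5^1$ filled trivially on the group of size $5$) falls exactly one codeword short of $U$. The reason for isolating the subsequence $n=24t+5$, i.e. $s=4t\equiv 0\pmod 4$, is that here the bulk of the code can be carried by groups of size $24$, which can be filled \emph{exactly} optimally; these exact fills absorb the rounding that produces the off-by-one gap in the generic argument.

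Concretely, I would aim to produce a $[2,2]$-GDC$(6)$ of type $24^t5^1$ of size $96t^2-52t+1$ and then invoke Construction~\ref{FillGroups}. Filling each of the $t$ groups of size $24$ with an optimal $(24,6,[2,2])_3$-code, which has size $A_3(24,6,[2,2])=84$ by Theorem~\ref{6t}, and the single group of size $5$ with the unique $(5,6,[2,2])_3$-codeword contributes $84t+1$. The resulting $(24t+5,6,[2,2])_3$-code then has size $96t^2-52t+1+84t+1 = 96t^2+32t+2 = U(24t+5,6,[2,2])$, as required. For $t=3$ this already checks out: a type $24^35^1$ GDC of size $709$ yields $709+252+1=962=U(77,6,[2,2])$.

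The main step, and the main obstacle, is the construction of the master $[2,2]$-GDC$(6)$ of type $24^t5^1$ for every $t\ge 3$. I would obtain it by the Fundamental Construction~\ref{FundCtr} applied to a suitable $\{5,6,7,8,9\}$-GDD, following the template of Lemma~\ref{[2,2]-GDCa}: start from a $(t+1,\{5,6,7,8,9\},1)$-PBD (Theorem~\ref{PBD5-9}), delete a point to obtain a GDD of type $4^i5^j6^k7^l8^m$ with $4i+5j+6k+7l+8m=t$, give weight $6$ to the points, and input the $[2,2]$-GDC$(6)$s of type $6^u$, $u\in\{5,\dots,9\}$, from Lemma~\ref{[2,2]-GDC:6^t}. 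To create the extra group of size $5$ I would carry a distinguished group at reduced weight (rather than weight $6$), which forces the use of ingredient $[2,2]$-GDC$(6)$s that themselves carry a group of size $5$, such as the types $g^um^1$ of Lemma~\ref{[2,2]-GDC:sg^um^1}. Verifying that these ingredients are available for all block sizes that arise, and that the arithmetic of the Fundamental Construction returns \emph{exactly} the size $96t^2-52t+1$, is the delicate part, since only an exact value propagates through Construction~\ref{FillGroups} to give $U$ on the nose. The small values of $t$ excluded by Theorem~\ref{PBD5-9} would be treated separately, e.g. by inflating a $[2,2]$-GDC$(6)$ of type $6^t$ by $m=4$ via Construction~\ref{Inflation} to reach type $24^t$ and then attaching the five extra points.

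I expect the genuine difficulty to be twofold. First, one must assemble a master GDC whose size is exactly $96t^2-52t+1$ and not merely close to it; this is precisely where the arithmetic of the size-$24$ fills (each contributing the exact optimum $84$) is used to close the gap of one codeword that the $2^{3s}5^1$ construction leaves open. Second, the finitely many exceptional orders left uncovered by the PBD input must be supplied by hand, using the direct GDCs and base codes developed in the preceding lemmas. Once both points are settled, combining the upper and lower bounds yields the exact equality $A_3(24t+5,6,[2,2])=U(24t+5,6,[2,2])$ for all $t\ge 3$.
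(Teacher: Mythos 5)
Your plan has a fatal arithmetic obstruction: the master $[2,2]$-GDC$(6)$ of type $24^t5^1$ with size $96t^2-52t+1$ that you need cannot exist. For a $[2,2]$-GDC$(6)$ on $24t+5$ points, any two codewords sharing a $1$ at a position $i$ must have disjoint supports elsewhere, so the number of codewords with a $1$ at a point $i$ lying in a group of size $24$ is at most $\left\lfloor\frac{24t+5-24}{3}\right\rfloor=8t-7$, while for $i$ in the group of size $5$ it is at most $8t$. Summing over all points and dividing by $2$ gives the bound $M\leq\frac{1}{2}\bigl(24t(8t-7)+5\cdot 8t\bigr)=96t^2-64t$, which falls short of your target by $12t+1$. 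The conceptual error is the belief that large groups "absorb the rounding'': a group of size $24$ lowers the degree bound of each of its $24$ points by $8$ (from $8t+1$ to $8t-7$), costing $96$ codewords per group in the GDC bound, while refilling it recovers only $A_3(24,6,[2,2])=84$ — a net loss of $12$ per group. By contrast the type $2^{12t}5^1$ used in the generic argument loses nothing on the $2$-groups (since $24t+3$ is divisible by $3$), and its deficit of exactly $1$ is intrinsic to that group type; no choice of exact fills on type $24^t5^1$ can beat it.

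The paper closes the gap by an entirely different route: take the $[2,2]$-GDC$(6)$ of type $(6t+1)^4$ of size $2(6t+1)^2$ from Lemma~\ref{DM2GDCg^4} (available since $6t+1$ is odd and $\geq 19$), adjoin a single ideal point, and fill each group together with that point with an optimal $(6t+2,6,[2,2])_3$-code of size $6t^2+2t$ from Theorem~\ref{6t+2}. Then $2(6t+1)^2+4(6t^2+2t)=96t^2+32t+2=U(24t+5,6,[2,2])$. The reason this works where your plan cannot is that only four groups are used, the difference-matrix GDC attains the packing bound $g^2u(u-1)/6$ for its type exactly, and each filled block has length $6t+2\equiv 2\pmod 6$, a residue class where the Johnson bound is met with no rounding loss. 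If you want to salvage a "fill the groups'' strategy for $n\equiv 5\pmod 6$, the group sizes must be chosen so that both the master GDC and the ingredient codes sit exactly on their respective bounds, which forces group sizes $\equiv 1\pmod 6$ with one adjoined point rather than size $24$ with five adjoined points.
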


\begin{proof} Take a $[2,2]$-GDC$(6)$ of type $(6t+1)^4$ with size $2(6t+1)^2$ from Lemma~\ref{DM2GDCg^4}. Adjoin one extra point and fill in the groups with optimal codes of length $6t+2$ from Theorem~\ref{6t+2} to obtain the desired code.
\end{proof}

\begin{lemma}
$A_3(6t+5,6,[2,2])=U(6t+5,6,[2,2])$ for each $t \geq 130$ .
\end{lemma}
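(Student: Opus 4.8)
The plan is to finish the residue class $n\equiv 5\pmod 6$ by refining the generic construction (which already yields $A_3(6t+5,6,[2,2])\ge U(6t+5,6,[2,2])-1$) so as to recover the missing codeword uniformly for all large $n$. Since $A_3(24t+5,6,[2,2])=U(24t+5,6,[2,2])$ is already settled by Lemma~\ref{24t+5}, it suffices to treat $n\equiv 11,17,23\pmod{24}$. For these I would look for a master $[2,2]$-GDC$(6)$ all of whose group sizes are $\equiv 0$ or $\equiv 1\pmod 6$, adjoin a single ideal point, and fill each group together with that point using an optimal code from Theorem~\ref{6t+1} or Theorem~\ref{6t+2}. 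The point of adjoining only one point (rather than five, as in the $U-1$ construction) is that in Construction~\ref{AdjoinPoints} with $y=1$ every ingredient GDC of type $1^{g_i}1^1=1^{g_i+1}$ is simply an optimal $(g_i+1,6,[2,2])_3$-code, so the resulting size is exactly $M+\sum_i t_i A_3(g_i+1)$, where $M$ is the size of the master code.

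Concretely, I would aim for the master type $(6p+1)^4(6q)^1$. Adjoining one point gives length $4(6p+1)+6q+1=24p+6q+5$, which realizes every class $5,11,17,23\pmod{24}$ as $q$ runs through $0,1,2,3\pmod 4$; the four big groups are filled with optimal codes of length $6p+2\equiv 2\pmod 6$ and the last group with an optimal code of length $6q+1\equiv 1\pmod 6$. Writing $t=4p+q$ and using $U(6p+2,6,[2,2])=6p^2+2p$, $U(6q+1,6,[2,2])=6q^2+q$ and $U(6t+5,6,[2,2])=6t^2+8t+2$, the target exactness identity becomes
\[
M+4(6p^2+2p)+(6q^2+q)=6t^2+8t+2,\qquad\text{i.e.}\qquad M=72p^2+48pq+24p+7q+2 .
\]
This is reassuring: at $q=0$ it reduces to $M=2(6p+1)^2$, exactly the size of the difference-matrix code of type $(6p+1)^4$ from Lemma~\ref{DM2GDCg^4}. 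So the whole problem reduces to constructing, for all admissible $p,q$, a $[2,2]$-GDC$(6)$ of type $(6p+1)^4(6q)^1$ attaining this size, i.e., one in which the $72p^2+24p+2$ codewords internal to the four size-$(6p+1)$ groups (supplied by the difference matrix) are augmented by exactly $48pq+7q$ further codewords meeting the fifth group. I would build these by starting from the difference-matrix code on the four groups and adjoining the fifth group through the Fundamental and Inflation Constructions (Constructions~\ref{FundCtr} and~\ref{Inflation}) applied to a suitably truncated transversal design, feeding in the small codes of Lemmas~\ref{[2,2]-GDC:6^t} and~\ref{[2,2]-GDC:sg^um^1}.

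The main obstacle is precisely this exactness: the generic recursion wastes one codeword in its size-$5$ group, and one must instead arrange the group sizes so that every filled code is individually optimal and the counts telescope to the Johnson bound with no slack, as the displayed identity demands. A secondary obstacle is guaranteeing the master GDC for all large $n$: the decomposition $t=4p+q$ can always be chosen with $p,q\ge 3$ (so that the fillings are genuinely optimal via Theorems~\ref{6t+1} and~\ref{6t+2}) once $t$ is moderately large, by trading $(p,q)\mapsto(p-1,q+4)$; the real constraint is that the transversal designs underlying the construction of $(6p+1)^4(6q)^1$ exist, and the finitely many exceptional orders in Theorem~\ref{TD} force $6p+1$ (hence $p$, hence $t$) past a fixed bound. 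Tracking these exceptions across the three residue classes is what pins the threshold at $t\ge 130$, and the finitely many values $t<130$ would be left either to the $U-1$ bound already proved or to separate ad hoc constructions.
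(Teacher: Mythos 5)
Your reduction is arithmetically sound and genuinely different from the paper's route, but it transfers the entire burden of the lemma onto an object you never construct: a $[2,2]$-GDC$(6)$ of type $(6p+1)^4(6q)^1$ of size exactly $72p^2+48pq+24p+7q+2$, for infinitely many pairs $(p,q)$. That is the whole difficulty, not a routine step. Note that your target is $q$ below the pair-counting maximum $2g^2+\tfrac{4}{3}gm=72p^2+48pq+24p+8q+2$ for this type (and indeed no larger GDC can exist, since it would breach the Johnson bound after filling), so you need a near-perfect but not perfect GDC whose exact size must be hit on the nose. More seriously, none of the paper's machinery produces groups of size $6p+1$: the Fundamental Construction with weight $6$ and the Inflation Construction only output group sizes that are multiples of the weights used, and ``adjoining the fifth group'' to the difference-matrix code of type $(6p+1)^4$ is not an operation defined anywhere in the paper. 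The only source of types $g^4$ with $g\equiv 1\pmod 6$ is Lemma~\ref{DM2GDCg^4}, and it has no fifth group. Consequently the claim that the threshold $t\ge 130$ ``falls out of the TD exceptions'' is unsupported; you have replaced the lemma by an unproved existence statement of comparable difficulty.

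For contrast, the paper keeps the five ideal points but arranges that the ``$-1$'' slack never materializes: from a TD$(7,m)$ and the Fundamental Construction with weight $6$ it builds a master $[2,2]$-GDC$(6)$ of type $(6m)^5(6x)^1\,72^1$, fills the first six groups together with the five ideal points by $[2,2]$-GDC$(6)$s of types $2^{3m}5^1$ and $2^{3x}5^1$ (these are GDCs, not codes, so they carry no deficiency), and fills the group of size $72$ plus the five points with an optimal $(77,6,[2,2])_3$-code, whose optimality is exactly Lemma~\ref{24t+5} since $77=24\cdot 3+5$. Every group size there is a multiple of $6$, so all ingredients come from lemmas already proved, and the threshold $t\ge 130$ comes from $t=5m+x+12$ with $m\ge 23$. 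If you want to salvage your one-point approach, you would have to exhibit the $(6p+1)^4(6q)^1$ family directly (e.g.\ by difference-family methods), which is a separate and substantial project.
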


\begin{proof}  Take a TD$(7,m)$ with $m\geq 23$ and $m\not \in \{26,$ $30, 34,38,46,60\}$
from Theorem~\ref{TD} and apply the Fundamental Construction, assigning weight $6$ to
all points in the first five groups and weights $0$ or $6$ to the points in the last two groups.
Note that there exist $[2,2]$-GDC$(6)$s of type $6^s$ for $s\in \{5,6,7\}$ by Lemma~\ref{[2,2]-GDC:6^t}.
The result is a $[2,2]$-GDC$(6)$ of type $(6m)^5(6x)^1 72^1$ with $x\in[3,8]\cup [18,23]$. Adjoin five ideal points. Fill in the first six groups together with the ideal points with $[2,2]$-GDC$(6)$s of types $2^{3m}5^1$ and $2^{3x}5^1$, and fill in the group of size $72$ together with the five ideal points with an optimal $(77,6,[2,2])_3$-code from Lemma~\ref{24t+5}. The result is an optimal code of length $6t+5$ with $t=5m+x+12$, as desired.
\end{proof}

\begin{lemma}
$A_3(6t+5,6,[2,2])=U(6t+5,6,[2,2])$ for each $t \in \{54,55\}$ or $63 \leq  t \leq 129$.
\end{lemma}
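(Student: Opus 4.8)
The plan is to reuse the ``adjoin five ideal points and fill'' template that produced the optimal codes of length $6t+5$ for $t\ge 130$, but to feed it with master $[2,2]$-GDC$(6)$s small enough to reach the present range. For each target $t$ I would construct a $[2,2]$-GDC$(6)$ all of whose groups are multiples of $6$ together with one distinguished group of size $24s$ with $s\ge 3$. Adjoining five ideal points, I would then fill every ordinary group of size $6g$ (together with the five points) by a $[2,2]$-GDC$(6)$ of type $2^{3g}5^1$ from Lemmas~\ref{s2^3t5^1a} and \ref{2^3t5^1}, keeping the five points as a common group of size $5$, and fill the distinguished group of size $24s$ (together with the five points) by an optimal $(24s+5,6,[2,2])_3$-code from Lemma~\ref{24t+5}, which is where the five ideal points finally get resolved exactly once, in the style of Construction~\ref{AdjoinPoints}. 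A routine count of the contributions shows the resulting $(6t+5,6,[2,2])_3$-code has size $U(6t+5,6,[2,2])$, hence is optimal.

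To generate the master GDCs I would apply the Fundamental Construction (Construction~\ref{FundCtr}) with weight $6$ to transversal designs TD$(k,m)$ from Theorem~\ref{TD}: keep several groups fully weighted (each becoming a group of size $6m$), weight exactly $4s$ points of one group to make the distinguished group of size $24s$, and weight a variable number $x$ of points of a last group to make a group of size $6x$. This expresses $t$ as an affine function of the parameters, for instance $t=5m+x+4s$ for TD$(7,m)$, and letting $m$, $x$ (and occasionally $s$) range over their admissible values sweeps out contiguous blocks of $t$. Here every block of the transversal design carries $5$, $6$, or $7$ weight-$6$ points, so the required ingredient GDCs are of type $6^u$ for $u\in\{5,6,7\}$, all available from Lemma~\ref{[2,2]-GDC:6^t}.

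Where this transversal-design family leaves gaps, I would press the prefabricated GDCs of Lemmas~\ref{[2,2]-GDCa} and \ref{[2,2]-GDCb} into service as masters: their group sizes $24,30,36,42,48$ and their special large groups yield further values of $t$ under the same filling step. The two isolated values $t\in\{54,55\}$ I would dispose of by master designs tailored to them, again completing the construction by the identical fill.

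The hard part is purely combinatorial bookkeeping: proving that $\{54,55\}\cup[63,129]$ is covered with no gap. Three constraints pull against one another. First, TD$(k,m)$ fails to exist for the small orders excluded in Theorem~\ref{TD}. Second, the fill ingredient $2^{3s}5^1$ is unavailable for $s\in\{9,10,11,13,14,17\}$, so the fully-weighted groups cannot have arbitrary size. Third, the distinguished group must have size a multiple of $24$ with quotient at least $3$ in order for Lemma~\ref{24t+5} to apply. Reconciling these forces a careful juggling of several master designs and weightings, and the sporadic values $54,55$ almost certainly demand bespoke constructions; verifying that every excluded order and every residue is nonetheless caught is where the genuine effort lies.
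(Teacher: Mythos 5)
Your template is the right one---it is exactly the paper's: apply the Fundamental Construction to transversal designs, adjoin five ideal points, fill the ordinary groups with $[2,2]$-GDC$(6)$s of type $2^{3g}5^1$, and resolve the five points inside one distinguished group via an optimal $(24s+5,6,[2,2])_3$-code from Lemma~\ref{24t+5}. But the statement of this lemma \emph{is} the covering of $\{54,55\}\cup[63,129]$, and you have explicitly deferred that verification (``where the genuine effort lies'') rather than carried it out. No specific masters, weightings, or parameter tables are given for any value of $t$, and for $54$ and $55$ you only conjecture that bespoke constructions exist. As it stands this is a plan, not a proof.

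Moreover, the one concrete family you do propose cannot reach the bottom of the range. If the distinguished group of size $24s$ is produced by giving weight $6$ to $4s$ points of one group of a TD$(k,m)$, then $m\geq 4s\geq 12$ (since Lemma~\ref{24t+5} needs $s\geq 3$), so with TD$(7,m)$ your formula $t=5m+4s+x$ forces $t\geq 5\cdot 12+12=72$, leaving $[63,71]$ uncovered; shrinking $k$ does not help because each fully weighted group of size $6m$ must itself be fillable by a $2^{3m}5^1$ GDC. The paper escapes this by \emph{not} restricting to weight $6$: for $63\leq t\leq 76$ it takes a TD$(9,8)$ and gives weight $9$ to the entire last group, so that the $72$-group arises as $8\times 9$ rather than $12\times 6$, at the price of needing the ingredient GDCs of type $6^s9^1$ from Lemma~\ref{[2,2]-GDC:s6^t9^1} (which your ``every block has type $6^u$'' setup excludes); similarly $t\in\{54,55\}$ uses TD$(5,12)$ with mixed weights $3$ and $6$ and the ingredients $6^43^1$, and the remaining subranges use truncated TD$(9,u)$, TD$(8,11)$ and TD$(11,16)$. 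You would need to introduce these mixed-weight ingredients (or some substitute) and then exhibit an explicit parameter choice for every $t$ in the claimed set before the lemma is proved.
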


\begin{proof}
For each $t\in \{54,55\}$, take a TD$(5,12)$ from Theorem~\ref{TD} and apply the Fundamental Construction, assigning weight $6$ to all points in the first four groups and weights $3$ or $6$ to  points in the last group. The result is a $[2,2]$-GDC$(6)$ of type $72^4 36^1$ or type $72^4 42^1$. Then adjoin five ideal points and fill in the groups with $[2,2]$-GDC$(6)$s of type $2^u 5^1$ for $u\in \{18,21,36\}$ and an optimal $(77,6,[2,2])_3$-code to obtain the desired code.

For each $63 \leq t \leq 76$, take a TD$(9,8)$ from Theorem~\ref{TD} and apply the Fundamental Construction, assigning weight $6$ to all points in the first six groups, weight $9$ to the points in the last group and weights $0$ or $6$ to the remaining points.
Note that there exist $[2,2]$-GDC$(6)$s of type $6^s 9^1$ for $s\in \{6,7,8\}$ by Lemma~\ref{[2,2]-GDC:s6^t9^1}. The result is a $[2,2]$-GDC$(6)$ of type $48^6(6x)^1 (6y)^1 72^1$ with $x,y \in \{0,3,4,5,6,7,8\}$. Adjoin five ideal points and fill in the groups to obtain an optimal code of length $6t+5$ with $t=60+x+y$.

For each $77\leq t \leq 100$ and $t\not \in \{85,86\}$, take a TD$(9,u)$ with $u\in \{9,11\}$ from Theorem~\ref{TD} and remove one point to redefine the groups to obtain a $\{9,u\}$-GDD of type $8^u (u-1)^1$. Then apply the Fundamental Construction, assigning weight $6$ to the points in the first $u-2$ groups of size $8$,  weights $0$ or $9$ to the points in the group of size $u-1$ and weights $0$ or $6$ to the remaining points. Note that there exist $[2,2]$-GDC$(6)$s of types $6^s 9^1$ with $s\in \{6,7,8\}$ and $6^s$ with $5\leq s\leq 11$ by Lemmas~\ref{[2,2]-GDC:6^t} and~\ref{[2,2]-GDC:s6^t9^1}. The result is a $[2,2]$-GDC$(6)$ of type $48^{u-2} (6x)^1 (6y)^1 72^1$ with $x,y \in \{0,3,4,5,6,7,8\}$. Adjoin five ideal points and fill in the groups to obtain an optimal code of length $6t+5$ with $t=8u+x+y-4$.

For each $t\in\{85,86\}$, take a TD$(8,11)$ from Theorem~\ref{TD} and remove one point to redefine the groups to obtain an $\{8,11\}$-GDD of type $7^{11} 10^1$. Then apply the Fundamental Construction, assigning weight $6$ to the points in the first ten groups of size $7$,  weights $0$ or $9$ to the points in the group of size $10$ and weights $0$ or $6$ to the remaining points. The result is a $[2,2]$-GDC$(6)$ of type $42^{10} (6x)^1 72^1$ with $x\in \{3,4\}$. Adjoin five ideal points and fill in the groups to obtain the desired optimal code.

Finally, for $101 \leq t \leq 129$, take a TD$(11,16)$ and apply the Fundamental Construction, assigning weight $6$ to the points in the first five groups and weights $0$ or $6$ to the remaining points. The result is a $[2,2]$-GDC$(6)$ of type $96^{5} (6x_1)^1 (6x_2)^1 \ldots (6x_5)^1 72^1$ with $x_1,x_2,\ldots,x_5 \in \{0,3,4,5,6,7,8\}$. Adjoin five ideal points and fill in the groups to  complete the proof.
\end{proof}

Summarizing the above results, we have:

\begin{theorem}
Let $Q^{(5)}=\{9,10,11\}$, $Q_{1}^{(5)}=\{3,4,5,6,7,$ $8,13,57,58,59,61,62\} \cup \{ t: 15\leq t\leq 53, t\not \equiv 0 \pmod{4}\}$, and $Q_{2}^{(5)}=\{2,14\}$. Then $A_3(5,6,[2,2])=1$, $A_3(11,6,[2,2])=15$, $A_3(6t+5,6,[2,2])= U(6t+5,6,[2,2])$ for each $t\geq 2$ and $t\not\in Q^{(5)} \cup Q_{1}^{(5)} \cup Q_2^{(5)}$. Furthermore, we have
\begin{enumerate}
\item $U(6t+5,6,[2,2])-1 \leq A_3(6t+5,6,[2,2])\leq U(6t+5,6,[2,2])$ for each $t\in Q_{1}^{(5)}$;
\item $U(6t+5,6,[2,2])-2 \leq A_3(6t+5,6,[2,2])\leq U(6t+5,6,[2,2])$ for each $t\in Q_{2}^{(5)}$.
\end{enumerate}
\end{theorem}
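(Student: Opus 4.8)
The plan is to treat this statement as a collation of the constructions established above: the upper bound is uniform, and every $t$ falls into exactly one category according to which construction is available. First I would invoke the Johnson-type Corollary of Section~II, which gives $A_3(6t+5,6,[2,2])\le U(6t+5,6,[2,2])$ for all $t$, so that only lower bounds remain to be supplied. The two base values $A_3(5,6,[2,2])=1$ and $A_3(11,6,[2,2])=15$, that is $t=0$ and $t=1$, are recorded in Lemma~\ref{s6t+5}, so I would dispose of them at once and restrict attention to $t\ge 2$.

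Next I would collect the values of $t$ at which the bound is met with equality. Writing $6t+5=24\tau+5$ gives $t=4\tau$, so Lemma~\ref{24t+5} yields $A_3=U$ whenever $t\equiv 0\pmod 4$ and $t\ge 12$; in particular this settles every multiple of $4$ in $[15,53]$, which is precisely why $Q_1^{(5)}$ omits them. The lemma for $t\ge 130$ and the lemma covering $t\in\{54,55\}$ together with $63\le t\le 129$ then close out all large $t$. Taking the union, I would check that exactness holds for every $t\ge 2$ outside $Q^{(5)}\cup Q_1^{(5)}\cup Q_2^{(5)}$; the one delicate band is $54\le t\le 62$, where exactness comes only from the multiples of $4$ (namely $56,60$ via Lemma~\ref{24t+5}) and from $\{54,55\}$, leaving exactly $\{57,58,59,61,62\}$ to be absorbed into $Q_1^{(5)}$.

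For the values that are not settled exactly I would pass to the near-optimal estimates. The $U-1$ lower bound, obtained by filling the size-$5$ group of a $[2,2]$-GDC$(6)$ of type $2^{3t}5^1$ (Lemmas~\ref{s2^3t5^1a} and~\ref{2^3t5^1}) with the unique codeword of length $5$, is available for all $t\ge 3$ with $t\notin\{9,10,11,14\}$; intersecting this domain with the not-yet-exact values reproduces exactly $Q_1^{(5)}$ and hence claim~(1). For $Q_2^{(5)}=\{2,14\}$ I would use $A_3(17,6,[2,2])\ge 40=U(17,6,[2,2])-2$ from Lemma~\ref{s6t+5} when $t=2$, and the dedicated $U-2$ estimate of the GDC-filling lemma when $t=14$, which gives claim~(2). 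Finally, the three values $t\in\{9,10,11\}$ admit neither a type-$2^{3t}5^1$ GDC nor any of the transversal-design recursions, so they survive only with the upper bound and form the residual set $Q^{(5)}$.

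I expect the principal difficulty to be organizational rather than mathematical: one must verify that the exactness ranges, the $U-1$ range, and the two $U-2$ values together partition $\{t\ge 2\}$ with the stated sets as their exact complement. The error-prone points are the residue-$0\pmod 4$ bookkeeping inside $[15,53]$ and the narrow window $54\le t\le 62$, where several constructions overlap; I would tabulate these ranges explicitly so that no $t$ is either double-counted with conflicting bounds or left unaccounted for.
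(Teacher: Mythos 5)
Your proposal is correct and matches the paper's own treatment, which simply collates the preceding lemmas of this subsection under the uniform Johnson-type upper bound; your bookkeeping of the exactness ranges ($t\equiv 0\pmod 4$ with $t\ge 12$, $t\in\{54,55\}$, $63\le t\le 129$, $t\ge 130$), the $U-1$ range, and the two $U-2$ values reproduces the stated sets $Q^{(5)}$, $Q_1^{(5)}$, $Q_2^{(5)}$ exactly. The only small imprecision is attributing the $U-1$ bound for $t\in\{13,17\}$ to type $2^{3t}5^1$ GDCs, which are not provided for those $t$; the paper instead adjoins five points to $[2,2]$-GDC$(6)$s of types $18^4 6^1$, $18^4 12^1$ and $24^4 6^1$ there, but the resulting domain $t\ge 3$, $t\notin\{9,10,11,14\}$ is exactly as you state, so the collation goes through unchanged.
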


\subsection{Case of Length $n\equiv 4\pmod{6}$}

\begin{lemma}
\label{s6t+4} $A_3(6t+4,6,[2,2])=U(6t+4,6,[2,2])$ for $t=1$ or
$4\leq t\leq 9$.
\end{lemma}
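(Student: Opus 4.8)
The plan is a direct cyclic construction, in the same spirit as Lemmas~\ref{s6t+1} and~\ref{s2^{3t+1}}. First I would record the target size. Since $\lfloor (6t+3)/3\rfloor = 2t+1$ and $(6t+4)/2 = 3t+2$, the bound reads $U(6t+4,6,[2,2]) = (3t+2)(2t+1) = 6t^2+7t+2$. For $t=1$ this equals $15$, which is exactly the value $A_3(10,6,[2,2])=15$ listed in Table~\ref{presult}, so that case is immediate. It remains to realize the bound for $4\le t\le 9$.

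For these six values I would take the point set $X_t=\bbZ_{6t+4}$ and develop a list of base codewords under the map $+2\pmod{6t+4}$. The reason for stepping by $2$ rather than $1$ is arithmetic: the subgroup $\langle 2\rangle\le\bbZ_{6t+4}$ has order $3t+2$, and $U/(3t+2)=2t+1$ is an integer while $U/(6t+4)$ is not. So if I exhibit $2t+1$ base codewords, each chosen to have trivial stabilizer under $\langle 2\rangle$ (hence a full orbit of length $3t+2$) and with pairwise disjoint orbits, the developed code has exactly $(2t+1)(3t+2)=6t^2+7t+2=U$ codewords and thus meets the upper bound. Each base codeword $\langle a,b,c,d\rangle$ must have its four entries distinct modulo $6t+4$, which guarantees constant composition $[2,2]$; the only real content is verifying the minimum distance.

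For the distance check I would first record the local criterion. If $u,v$ are two weight-$4$ codewords of composition $[2,2]$, write $j=|\supp(u)\cap\supp(v)|$ and let $e$ be the number of common support positions carrying the same symbol in both. Counting positions gives $d_H(u,v)=8-j-e$, so $d_H\ge 6$ if and only if $j+e\le 2$. Thus the forbidden configurations are exactly: two codewords sharing at least three support positions, and two codewords sharing exactly two support positions at one of which the symbols agree. In the cyclic setting any two developed codewords are translates (by an even shift $2m$) of two, possibly equal, base codewords, so these coincidences translate into a finite set of difference equations over $\bbZ_{6t+4}$ involving the within-symbol differences $a-b,\ c-d$ and the cross differences $a-c,\ a-d,\ b-c,\ b-d$ of the base codewords. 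Checking that none of the forbidden equations has a solution is a routine finite verification for each of the six tables.

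The main obstacle is not the verification but producing the tables: one must find, for each $t\in\{4,\dots,9\}$, a set of $2t+1$ base codewords whose orbits are full and pairwise disjoint and whose difference pattern avoids every forbidden coincidence above. This is a constrained search over $\bbZ_{6t+4}$, which I would carry out by computer, exactly as in the preceding lemmas; once a valid list is displayed in a table, the composition and distance claims follow by the difference analysis just described.
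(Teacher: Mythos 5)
Your proposal follows essentially the same route as the paper: for $t=1$ the value is read off Table~\ref{presult}, and for $4\le t\le 9$ the paper likewise constructs the code on $\bbZ_{6t+4}$ by developing $2t+1$ base codewords under $+2\pmod{6t+4}$, giving $(2t+1)(3t+2)=6t^2+7t+2=U(6t+4,6,[2,2])$ codewords; your orbit count, composition argument, and distance criterion $d_H(u,v)=8-j-e$ are all correct. The only thing missing is the explicit lists of base codewords, which the paper supplies in Table~\ref{t6t+4} and which your argument correctly reduces to a finite search plus a routine difference check.
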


\begin{proof} For $t=1$, see Table~\ref{presult}. For $4\leq t\leq 9$, let $X_t=\bbZ_{6t+4}$. Then $(X_t,{\cal
C}_t)$ is the desired optimal $(6t+4,6,[2,2])_3$-code, where
${\cal C}_t$ is obtained by developing the elements of
$\bbZ_{6t+4}$ in the codewords listed in Table~\ref{t6t+4}
$+2\pmod{6t+4}$.
\end{proof}

\begin{table*}
\scriptsize
\centering
\renewcommand{\arraystretch}{1}
\setlength\arraycolsep{3pt} \caption{Base Codewords of Small
Optimal $(6t+4,6,[2,2])_3$-Codes in Lemma~\ref{s6t+4}}
\label{t6t+4}
\begin{tabular}{c|l}
\hline $t$ & {\hfill Codewords \hfill}
\\
\hline $4$ & $\begin{array}{lllllllll}
\langle0,15,21,24\rangle & \langle0,6,9,11\rangle & \langle1,22,26,21\rangle & \langle1,3,11,15\rangle & \langle0,1,18,19\rangle & \langle1,6,23,16\rangle & \langle1,5,24,12\rangle & \langle1,4,17,2\rangle & \langle0,12,14,20\rangle \\
\end{array}$
\\
\hline $5$ & $\begin{array}{lllllllll}
\langle0,11,17,21\rangle & \langle1,19,10,18\rangle & \langle1,9,12,23\rangle & \langle1,33,30,20\rangle & \langle0,7,14,20\rangle & \langle0,24,23,15\rangle &
\langle0,8,1,3\rangle & \langle0,19,31,13\rangle & \\ \langle1,31,2,21\rangle & \langle0,28,12,16\rangle & \langle0,4,2,9\rangle &
\end{array}$
\\
\hline $6$ & $\begin{array}{lllllllll}
\langle0,19,10,12\rangle & \langle1,2,13,31\rangle & \langle1,26,0,3\rangle & \langle0,25,5,8\rangle & \langle0,4,32,13\rangle & \langle0,6,22,7\rangle & \langle1,19,11,17\rangle &
\langle0,3,27,31\rangle &\\ \langle1,5,18,12\rangle & \langle1,20,6,10\rangle & \langle0,2,20,35\rangle & \langle1,4,27,28\rangle & \langle1,7,15,36\rangle & \\
\end{array}$
\\
\hline $7$ & $\begin{array}{llllllll}
\langle1,35,40,24\rangle & \langle1,31,0,29\rangle & \langle1,9,10,12\rangle & \langle0,16,43,21\rangle & \langle0,22,17,7\rangle & \langle0,32,9,25\rangle & \langle1,33,14,20\rangle & \langle0,2,13,1\rangle \\
\langle1,18,7,21\rangle & \langle1,43,26,38\rangle & \langle1,23,3,41\rangle & \langle0,34,26,8\rangle & \langle0,18,37,33\rangle & \langle0,40,4,36\rangle & \langle1,11,8,32\rangle &  \\
\end{array}$
\\
\hline $8$ & $\begin{array}{lllllllll}
\langle0,50,3,11\rangle & \langle0,49,38,21\rangle & \langle1,33,37,43\rangle & \langle1,18,40,38\rangle & \langle1,9,30,2\rangle & \langle1,12,13,27\rangle & \langle1,24,32,49\rangle & \langle0,4,18,47\rangle \\
\langle0,17,12,39\rangle & \langle1,39,3,14\rangle & \langle1,20,51,19\rangle & \langle1,34,41,44\rangle & \langle1,7,35,16\rangle & \langle0,45,42,23\rangle & \langle0,24,16,9\rangle & \langle0,27,26,32\rangle & \\
\langle0,6,36,40\rangle \\
\end{array}$
\\
\hline $9$ & $\begin{array}{lllllllll}
\langle1,31,49,55\rangle & \langle0,1,17,39\rangle & \langle1,14,41,46\rangle & \langle1,18,36,11\rangle & \langle0,31,52,15\rangle & \langle1,6,35,4\rangle & \langle0,21,4,28\rangle &
\langle0,5,37,9\rangle &\\ \langle0,10,13,22\rangle & \langle1,26,51,52\rangle & \langle1,2,40,45\rangle & \langle1,10,24,16\rangle & \langle1,23,21,34\rangle & \langle0,16,23,35\rangle &
\langle1,48,20,30\rangle & \langle0,8,54,55\rangle &\\ \langle1,47,15,32\rangle & \langle1,7,56,9\rangle & \langle0,34,20,36\rangle &  &  \\
\end{array}$
\\
\hline
\end{tabular}
\end{table*}

\begin{lemma}
\label{6t+4a} $A(6t+4,6,[2,2])=U(6t+4,6,[2,2])$ for $t\geq 142$.
\end{lemma}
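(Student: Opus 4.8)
The plan is to mirror the large-length argument used for $n\equiv 5\pmod 6$, adjoining four points rather than five and replacing the auxiliary codes of type $2^{3s}5^1$ by their analogues of type $2^{3s}4^1$. Concretely, I would begin from a TD$(7,m)$ (Theorem~\ref{TD}) for a suitably large $m$ avoiding the listed exceptions, and apply the Fundamental Construction (Construction~\ref{FundCtr}) with weight $6$ on every point of the first five groups and weights $0$ or $6$ on the points of the last two groups. Every block then meets $5$, $6$ or $7$ points of weight $6$, so the ingredient GDCs are $[2,2]$-GDC$(6)$s of type $6^5,6^6,6^7$, all of which exist by Lemma~\ref{[2,2]-GDC:6^t}. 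Choosing the weights so that one of the last two groups has weighted size $72$ and the other has weighted size $6x$ yields a $[2,2]$-GDC$(6)$ of type $(6m)^5(6x)^172^1$.

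Next I would adjoin four ideal points via Construction~\ref{AdjoinPoints} with $y=4$. I would complete the distinguished group of size $72$ together with the four ideal points into an optimal $(76,6,[2,2])_3$-code: since $76=24\cdot 3+4$, a $[2,2]$-GDC$(6)$ of type $19^4$ (Lemma~\ref{DM2GDCg^4}, valid as $19\ge 4$ and $19\not\equiv 2\pmod 4$) filled through Construction~\ref{FillGroups} with optimal $(19,6,[2,2])_3$-codes (Theorem~\ref{6t+1}, $19=6\cdot 3+1$) already attains $U(76)$. Each remaining group, of size $6m$ or $6x$, is then filled together with the four ideal points by a $[2,2]$-GDC$(6)$ of type $2^{3m}4^1$, respectively $2^{3x}4^1$; refining every group of size $2$ into two singletons turns these into the GDCs of type $1^{6m}4^1$, $1^{6x}4^1$ required by Construction~\ref{AdjoinPoints}. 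A size count of the same shape as in the case $n\equiv 5\pmod 6$ shows that the contributions of the master GDC, the completed distinguished group, and the filled groups sum to exactly $U(6t+4)$; since the length is $6t+4$ with $t=5m+x+12$, the resulting code meets the upper bound of the Corollary with equality.

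The two main obstacles are ingredient supply and covering, and I expect the former to be the bulk of the work. The whole scheme rests on having $[2,2]$-GDC$(6)$s of type $2^{3s}4^1$ for the full range of $s$ that occurs — small $s$ by direct difference-based constructions and large $s$ by a recursive PBD/TD argument in the spirit of Lemmas~\ref{[2,2]-GDCa} and~\ref{2^3t5^1} — together with the exact admissible range of $x$; this family (and its finite exception set) must be built separately, exactly as the type $2^{3s}5^1$ codes were. Second, once $m$ ranges over the admissible TD orders and $x$ over its allowed interval, one must verify that $t=5m+x+12$ leaves no gaps: because $x$ sweeps an interval of width at least $5$ while a unit increase of $m$ advances $t$ by $5$, consecutive values of $m$ give overlapping intervals of attainable $t$, and the threshold $t\ge 142$ emerges as the least value beyond which this overlap is complete after the finitely many forbidden $m$ and the exceptions of the $2^{3s}4^1$ family are excluded.
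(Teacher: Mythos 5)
Your overall architecture is genuinely different from the paper's, and the difference is exactly where the gap lies. The paper takes a TD$(8,m)$, gives weight $6$ to the first six groups and to $x$ points of the seventh, and weight $3$ to three points of the eighth, producing a $[2,2]$-GDC$(6)$ of type $(6m)^6(6x)^19^1$; it then adjoins a \emph{single} ideal point, so that every group is completed by an optimal code of length $6m+1$, $6x+1$ or $10$ --- all of which are already established (Theorem~\ref{6t+1} and Table~\ref{presult}). The whole point of engineering the leave to be $9+1=10$ is that no new ingredient family is needed. Your scheme, by contrast, adjoins four ideal points and therefore stands or falls on the existence of $[2,2]$-GDC$(6)$s of type $2^{3s}4^1$ (equivalently $1^{6s}4^1$) of size exactly $6s^2+7s=U(6s+4,6,[2,2])-2$, for \emph{every} $s$ that occurs as an admissible TD order $m$ and as an $x$-value --- an infinite family. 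No such object appears anywhere in the paper, and you do not construct even one instance. This is not a routine detail you may defer: it is the analogue of the entire $2^{3t}5^1$ development (Lemmas~\ref{s2^3t5^1a} and~\ref{2^3t5^1}, which required eight direct difference constructions plus a separate recursion), and it is not a priori clear that the size $6s^2+7s$, which sits only one below the trivial ceiling $U(6s+4)-1$, is attainable at all. As written, the proof is therefore incomplete.

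Two smaller remarks. First, your size bookkeeping is consistent: with master size $120(m-x)(m-12)+180[x(m-12)+12(m-x)]+3024x=120m^2+60xm+720m+144x$ and the $(76,6,[2,2])_3$-code contributing $950$, the deficit $U(6t+4)-a-950$ factors as $5(6m^2+7m)+(6x^2+7x)$, confirming that $6s^2+7s$ is indeed the size your missing ingredients would have to achieve --- so the plan is coherent, just unsupported. Second, your construction of the optimal $(76,6,[2,2])_3$-code (a type-$19^4$ GDC of size $722$ from Lemma~\ref{DM2GDCg^4} filled with optimal length-$19$ codes, giving $722+4\cdot57=950=U(76)$) is correct and self-contained; curiously, $76$ is one of the lengths the paper leaves unresolved, so that fragment is sound even though it cannot rescue the main argument.
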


\begin{proof}
Take a TD$(8,m)$ with $m\geq 23$ and $m\not \in \{26,28,30,33,34,35,38,39,42,44,46,51,52,$ $54,58,60,62,66,68,74\}$ from Theorem~\ref{TD}. Apply the Fundamental Construction with weight $6$ to all points in the first $6$ groups, $x$ points in the seventh group, and weight $3$ to $3$ points in the last group. The other points are given weight $0$. Noting that there exist $[2,2]$-GDC$(6)$s of type $6^s$ for $s\in \{6,7\}$ by Lemma~\ref{[2,2]-GDC:6^t}, and $[2,2]$-GDC$(6)$s of type $6^s 3^1$ for $s\in \{6,7\}$ by Lemma~\ref{[2,2]-GDC:s6^t3^1}. The result is a $[2,2]$-GDC$(6)$ of type $(6m)^6(6x)^19^1$ for $x=0$ or $3\leq x\leq m$. Adjoin one ideal point. Fill in the first seven groups together with the ideal point with  optimal codes of lengths $6m+1$ and $6x+1$ from Theorem~\ref{6t+1}, and fill in the group of size $9$ together with the ideal point with an optimal $(10,6,[2,2])_3$-code. The result is an optimal code of length $6t+4$ with $t=6m+x+1$. That includes all integers $t\geq 142$.
\end{proof}

\begin{lemma}
\label{6t+4b} $A(6t+4,6,[2,2])=U(6t+4,6,[2,2])$ for $t=43$ or $46\leq t\leq 141$ and $t\neq 51$.
\end{lemma}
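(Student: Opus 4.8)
The plan is to reproduce the recipe of Lemma~\ref{6t+4a}. For each admissible $t$ I first construct a $[2,2]$-GDC$(6)$ whose group sizes are all divisible by $6$ except for a single group of size $9$, adjoin one ideal point, and then fill each group together with that ideal point: a group of size $6s$ is completed by an optimal $(6s+1,6,[2,2])_3$-code from Theorem~\ref{6t+1} (which exists for every $s\ge 3$), and the group of size $9$ is completed by the optimal $(10,6,[2,2])_3$-code of Lemma~\ref{s6t+4}. The GDC has $\sum_i g_i\equiv 3\pmod 6$ points, so the assembled code has length $1+\sum_i g_i\equiv 4\pmod 6$; by Construction~\ref{AdjoinPoints} its size is the sum of the sizes of the ingredients, which I will check equals $U(6t+4,6,[2,2])$ whenever every ingredient is optimal.

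First I would harvest as much of the interval as possible exactly as in Lemma~\ref{6t+4a}, but now using the smaller admissible moduli $m\in\{7,8,9,11,13,16,17,19,23\}$ in TD$(8,m)$, each avoiding the exception list of Theorem~\ref{TD}. Giving weight $6$ to all points of the first six groups, weight $6$ to $x$ points of the seventh, and weight $3$ to three points of the last yields a $[2,2]$-GDC$(6)$ of type $(6m)^6(6x)^19^1$, the ingredient GDCs of types $6^6,6^7,6^63^1,6^73^1$ coming from Lemmas~\ref{[2,2]-GDC:6^t} and~\ref{[2,2]-GDC:s6^t3^1}. This realizes every $t=6m+x+1$ with $x\in\{0\}\cup[3,m]$, sweeping out long subintervals of $[46,141]$ but leaving the short gaps $\{6m+2,6m+3\}$ immediately above each base value $6m+1$, as well as the block $\{135,\dots,138,140,141\}$ caused by the unavailability of TD$(8,m)$ for $m\in\{20,21,22\}$.

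To close these residual lengths I would enlarge the supply of master designs, following the device used for the case $n\equiv 5\pmod 6$: deleting one point from a TD$(9,m)$ gives a $\{8,9\}$-GDD of type $m^8(m-1)^1$, whose truncated group furnishes a \emph{second} variable-size block and hence a finer set of attainable values of $t$; the size-$8$ and size-$9$ blocks require the ingredient GDCs $6^7,6^8,6^9$ and $6^73^1,6^83^1,6^93^1$, all of which are available from Lemmas~\ref{[2,2]-GDC:6^t} and~\ref{[2,2]-GDC:s6^t3^1}. Overlapping the intervals produced by consecutive moduli then accounts for every value in $\{43\}\cup[46,141]\setminus\{51\}$. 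The main obstacle is not any single construction but the combinatorial bookkeeping: since every group-offset is drawn from $\{0\}\cup[3,\infty)$, the offsets $1$ and $2$ are never expressible, so each master design leaves a width-two gap just above its base; the task is to arrange the chosen designs so that these gaps always fall on the three values $44,45,51$ excluded from the statement, while the genuinely required lengths---including the awkward top block just below $142$---are each covered by at least one design.
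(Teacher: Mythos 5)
Your proposal is correct and runs on the same engine as the paper's proof: apply the Fundamental Construction to a (truncated) transversal design with weight $6$ on most points and weight $3$ on three points to manufacture a $[2,2]$-GDC$(6)$ of type $(6m)^6(6x_1)^1\cdots 9^1$, adjoin one ideal point, and fill in with optimal codes of lengths $6s+1$ ($s\ge 3$) and $10$. The difference is only in the choice of masters. The paper keeps the modulus set tiny --- TD$(k,m)$ for $(k,m)\in\{(8,7),(9,8),(10,9),(12,13)\}$ --- and buys flexibility by raising $k$, so that $k-7$ groups each contribute an offset in $\{0\}\cup[3,m]$ and a single design already sweeps a long interval (e.g.\ $k=12$, $m=13$ covers $[83,141]$). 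You instead cap $k$ at $9$ and compensate with more moduli; this does work, because deleting a point from TD$(9,m)$ gives two flexible groups and hence the interval $\{6m+1\}\cup[6m+4,8m+1]$, and for the prime powers $m\in\{8,9,11,13,16,17,19\}$ (all available via TD$(m+1,m)$) these intervals chain together to cover $[52,153]$, which with TD$(8,7)$ yields exactly $\{43\}\cup[46,50]\cup[52,141]$. Two small corrections to your bookkeeping: the gaps left by the one-flexible-group stage are wider than $\{6m+2,6m+3\}$ around $m=14,15$ and $m=20,21,22$ where TD$(8,m)$ is unavailable (e.g.\ $93$--$96$, $98$, $99$), though your second stage absorbs them anyway; and the ingredient types that actually occur are $6^a$ and $6^a3^1$ for $6\le a\le 8$ (never $6^9$ or $6^93^1$, since at most eight groups carry weight-$6$ points), all of which Lemmas~\ref{[2,2]-GDC:6^t} and~\ref{[2,2]-GDC:s6^t3^1} supply.
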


\begin{proof}
Take a TD$(k,m)$ from Theorem~\ref{TD} with $m\in \{7,8,9,13\}$ and $8\leq k\leq 12$, $k\leq m+1$. Apply the Fundamental Construction with weight $6$ to all the points in the first $6$ groups, weight $3$ to $3$ points in the last group, and weight $6$ to $x_i$ points in the remaining $k-7$ groups for $1\leq i \leq k-7$. The remaining points are given weight $0$. Take $x_i=0$ or $3\le x_i\leq m$. Noting that there exist $[2,2]$-GDC$(6)$s of type $6^s$ for $6\leq s\leq 11$ by Lemma~\ref{[2,2]-GDC:6^t}, and $[2,2]$-GDC$(6)$s of type $6^s 3^1$ for $6\leq s\leq 11$ by Lemma~\ref{[2,2]-GDC:s6^t3^1}. The result is a $[2,2]$-GDC$(6)$ of type
$(6m)^6 (6x_1)^1 \ldots (6x_{k-7})^1 9^1.$
Adjoin one ideal point. Fill in the first $k-1$ groups together with the ideal point with optimal codes of lengths $6m+1$ and $6x_i+1$ from Theorem~\ref{6t+1}, and fill in the group of size $9$ together with the ideal point with an optimal $(10,6,[2,2])_3$-code. The result is an optimal code of length $6t+4$ with $t = 6m+\sum_{i=1}^{k-7}x_i+1$.

Taking  $k=8$ and $m=7$, we obtain $t\in \{43\}\cup [46,50]$. Taking  $k=9$ and $m=8$, we obtain $t\in [52,65]$. Taking $k=10$ and $m=9$, we obtain an optimal code of length $t\in [66,82]$. Taking $k=12$ and $m=13$, we obtain $t\in [83,141]$.
\end{proof}

\begin{lemma}
\label{6t+4c} $A(6t+4,6,[2,2])=U(6t+4,6,[2,2])$ for $t\in \{24\}\cup
[32,34]\cup [36,42] \cup \{44\}$.
\end{lemma}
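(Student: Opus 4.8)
The plan is to realize each optimal $(6t+4,6,[2,2])_3$-code by the same filling scheme used in Lemmas~\ref{6t+4a} and~\ref{6t+4b}: construct a $[2,2]$-GDC$(6)$ on $6t+3$ points all of whose groups have size a multiple of $6$ and at least $18$, except for a single distinguished group of size $9$; then adjoin one ideal point, fill each group of size $6s$ together with the ideal point by an optimal $(6s+1,6,[2,2])_3$-code (which exists and meets $U(6s+1,6,[2,2])$ for every $s\ge 3$ by Theorem~\ref{6t+1}), and fill the group of size $9$ together with the ideal point by an optimal $(10,6,[2,2])_3$-code of size $15$. Since $A_3(6s+1,6,[2,2])=U(6s+1,6,[2,2])$ for $s\ge3$ and $A_3(10,6,[2,2])=15=U(10,6,[2,2])$, a direct count of the form $|\mathcal C_{\text{master}}|+\sum_G A_3(|G|+1,6,[2,2])$ shows the assembled code attains $U(6t+4,6,[2,2])$, so it is optimal. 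The requirement that every non-special group have size at least $18$ is essential: a group of size $6$ or $12$ would have to be completed by a length-$7$ or length-$13$ code, and since $A_3(7,6,[2,2])<U(7,6,[2,2])$ and $A_3(13,6,[2,2])<U(13,6,[2,2])$, this would lose the exact count.

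First I would record why the transversal-design construction of Lemma~\ref{6t+4b} does not reach these orders. There the six full weight-$6$ groups come from a TD$(k,m)$ with $k\ge 7$, which by Theorem~\ref{TD} forces $m\ge 7$ and hence $t=6m+\sum x_i+1\ge 43$; the values treated here all lie below that threshold, so the master design must instead have many small groups. Accordingly I would use master group divisible designs whose groups have size $3$, $4$ or $5$ — weight $6$ turns these into groups of size $18$, $24$, $30$, all with optimal completion lengths $19$, $25$, $31$ — together with one extra group of size $3$ carrying weight $3$, which produces the required size-$9$ group. The available ingredient codes are $6^4 3^1$ and $6^u$, $6^u 3^1$ for $u\in\{6,7,8\}$ (all constructed earlier), whereas $6^4$, $6^3 3^1$ and $6^5 3^1$ are unavailable; consequently the block sizes of the master design must be controlled so that every block meeting the weight-$3$ group has size in $\{5,7,8,9\}$ and every other block has size in $\{5,6,7,8,9\}$. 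A $\{5\}$-GDD with a distinguished group of size $3$ is the cleanest engine, since then every block through that group has size $5$ and needs only the ingredient $6^4 3^1$, while every remaining block needs only $6^5$; the large GDCs of Lemma~\ref{[2,2]-GDCb} (types $18^u$, $24^u$, $30^4 18^1$, $24^4 18^1$, and so on) and the GDCs of type $g^u m^1$ built earlier (notably $24^4 9^1$, which already carries a size-$9$ group) serve as further ready-made building blocks, to be combined when necessary through the Fundamental and Inflation Constructions.

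The hard part is purely design-theoretic: assembling, for each of the scattered orders $t\in\{24,44\}$ and for the short runs $t\in[32,34]$ and $t\in[36,42]$, a master GDD that simultaneously has a size-$3$ group, has all remaining groups of size at least $3$, and has block sizes restricted to the admissible set so as to dodge the missing ingredients $6^4$, $6^3 3^1$ and $6^5 3^1$. I expect the isolated values $t=24$ and $t=44$, and the two runs $[32,34]$ and $[36,42]$, to require bespoke or small-family designs — for instance truncated transversal designs or PBD-derived GDDs with a size-$3$ group adjoined, in the spirit of Lemma~\ref{[2,2]-GDCa} — possibly supplemented by one or two entirely direct constructions. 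Once a suitable master GDD is exhibited for a given $t$, the remaining work — choosing the weights, invoking the Fundamental Construction with the ingredient GDCs, adjoining the ideal point, filling, and verifying that the size equals $U(6t+4,6,[2,2])$ — is routine, because every ingredient GDC and every completing code is taken at its optimal size, so the counts telescope exactly as they do in the verification for $t=43$.
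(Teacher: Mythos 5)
There is a genuine gap: you never actually produce the master designs, and the specific scheme you commit to cannot be realized for most of the listed values of $t$. You insist that the master $[2,2]$-GDC$(6)$ on $6t+3$ points have exactly one group of size $9$ and all other groups of size a multiple of $6$ that is at least $18$, to be generated from a $\{5\}$-GDD with a distinguished group of size $3$ carrying weight $3$. But in a $5$-GDD every point has replication $(v-g)/4$, where $g$ is the size of its group, so all group sizes must be congruent modulo $4$; a compulsory group of size $3$ then forces every group to have size $\equiv 3\pmod 4$, which excludes the sizes $4$ and $5$ you want to use and essentially confines you to type $3^u$ with $u\equiv 1\pmod 4$, i.e.\ $t=3u-2\in\{13,25,37,49,\dots\}$ --- at most one value of the required set $\{24\}\cup[32,34]\cup[36,42]\cup\{44\}$. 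Allowing other block sizes reintroduces exactly the missing ingredients $6^4$, $6^3 3^1$, $6^5 3^1$ that you identify. You acknowledge all of this and defer the ``hard part'' to unspecified ``bespoke or small-family designs''; since exhibiting those designs is the entire content of the lemma, the proof is not complete.

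The paper's device is simpler and differs in one essential respect. It takes a TD$(5,m)$ with $m\in\{5,7,8,9\}$ (Theorem~\ref{TD}), gives weight $6$ to all points of the first four groups, and on the last group gives weight $6$ to $x$ points and weight $3$ to the remaining $y=m-x$ points; every block then needs only an ingredient of type $6^5$ or $6^4 3^1$, both available (Lemmas~\ref{[2,2]-GDC:6^t} and~\ref{[2,2]-GDC:s6^t3^1}). The resulting type is $(6m)^4(6x+3y)^1$ with $y$ odd, and after adjoining one ideal point the last group is completed not by a length-$10$ code but by one of the directly constructed optimal codes of length $6x+3y+1\in\{28,34,40,46,52\}$ from Lemma~\ref{s6t+4}. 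This is the idea your proposal is missing: the small direct constructions of Lemma~\ref{s6t+4} serve as fillers, so the special group may have any size $\equiv 3\pmod 6$ up to $51$ rather than being pinned to $9$, and the parameter count $t=4m+x+\frac{y-1}{2}$ over the admissible triples $(m,x,y)$ covers all the listed values.
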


\begin{proof}
Take a TD$(5,m)$ with $m\in \{5,7,8,9\}$ from Theorem~\ref{TD}. Apply the Fundamental Construction with weight $6$ to all the points in the first $4$ groups, $x$ points in the last group, and weight $3$ to $y$ points in the last group, such that $x+y=m$. Noting that there exists a $[2,2]$-GDC$(6)$ of type $6^5$ by Lemma~\ref{[2,2]-GDC:6^t}, and a $[2,2]$-GDC$(6)$ of type $6^4 3^1$ by Lemma~\ref{[2,2]-GDC:s6^t3^1}. The result is a $[2,2]$-GDC$(6)$ of type $(6m)^4(6x+3y)^1$. Adjoin one ideal point. Fill in the first $4$ groups together with the ideal point with optimal codes of length $6m+1$ from Theorem~\ref{6t+1}, and fill in the group of size $6x+3y$ together with the ideal point with an optimal $(6x+3y+1,6,[2,2])_3$-code from Lemma~\ref{s6t+4}. The result is an optimal code of length $6t+4$ with $t=4m+x+\frac{y-1}{2}$.
For each desired $t$, the parameters $(m,x,y)$ and the code of length $s=6x+3y+1$ to be filled in are given in Table~\ref{t6t+4c}.
\end{proof}

\begin{table}
\caption{The Parameters for Lemma~\ref{6t+4c}} \label{t6t+4c}
\centering
\begin{tabular}{c|c|c||c|c|c||c|c|c}
\hline
$t$ & $(m,x,y)$ & $s$ & $t$ & $(m,x,y)$ & $s$ & $t$ & $(m,x,y)$ & $s$ \\
\hline
$24$ & $(5,4,1)$ & $28$ & $32$ & $(7,2,5)$ & $28$ & $33$ & $(7,4,3)$ & $34$ \\
\hline
$34$ & $(7,6,1)$ & $40$ & $36$ & $(8,1,7)$ & $28$ & $37$ & $(8,3,5)$ & $34$ \\
\hline
$38$ & $(8,5,3)$ & $40$ & $39$ & $(8,7,1)$ & $46$ & $40$ & $(9,0,9)$ & $28$ \\
\hline
$41$ & $(9,2,7)$ & $34$ & $42$ & $(9,4,5)$ & $40$ & $44$ & $(9,8,1)$ & $52$\\
\hline
\end{tabular}
\end{table}

\begin{lemma}
\label{6t+4d} $A(6t+4,6,[2,2])=U(6t+4,6,[2,2])$ for each $t\in \{10,
11, 13, 16, 17, 18, 19, 21,$ $22, 23, 25, 26, 28, 31, 45, 51\}.$
\end{lemma}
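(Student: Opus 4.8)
The plan is to treat these sixteen residual values by the same \emph{adjoin-one-ideal-point-and-fill} scheme used throughout Lemmas~\ref{6t+4a}--\ref{6t+4c}, choosing the master design individually for each $t$. Since $\lfloor(6t+3)/3\rfloor=2t+1$, the target is $U(6t+4,6,[2,2])=(3t+2)(2t+1)$. The idea is to build a $[2,2]$-GDC$(6)$ on $6t+3$ points whose group sizes are compatible with filling, adjoin one ideal point $\infty$, and fill each group together with $\infty$ by an optimal shorter code; Construction~\ref{AdjoinPoints} then delivers a code of length $6t+4$.

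Concretely, I would aim for a master $[2,2]$-GDC$(6)$ of type $(6a_1)^1\cdots(6a_r)^1 9^1$ with $\sum_i a_i=t-1$. Adjoining $\infty$, one fills each group of size $6a_i$ together with $\infty$ by an optimal $(6a_i+1,6,[2,2])_3$-code (Theorem~\ref{6t+1}, valid for $a_i\ge 3$) and the group of size $9$ together with $\infty$ by the optimal $(10,6,[2,2])_3$-code of size $15$ from Table~\ref{presult}. A direct count shows that the output size is $G+\sum_i(6a_i^2+a_i)+15$, where $G$ is the size of the master GDC; the constructions below are arranged precisely so that this equals $(3t+2)(2t+1)$. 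When a $9^1$ group is inconvenient I would instead use a terminal group of size $6x+3y$ as in Lemma~\ref{6t+4c}, filling it with an optimal $(6x+3y+1,6,[2,2])_3$-code from Lemma~\ref{s6t+4}.

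The master GDCs themselves come from the Fundamental Construction~\ref{FundCtr} applied to a transversal design TD$(k,m)$ (or a GDD derived from a PBD via Theorem~\ref{PBD5-9}), giving each point a weight in $\{0,3,6,9\}$ and inputting the small $[2,2]$-GDC$(6)$s of types $6^s$, $6^s3^1$ and $6^s9^1$ from Lemmas~\ref{[2,2]-GDC:6^t}, \ref{[2,2]-GDC:s6^t3^1} and \ref{[2,2]-GDC:s6^t9^1}. For the larger residual values $t\in\{45,51\}$ I expect a TD with few groups and suitably truncated last groups (in the style of Lemma~\ref{6t+4c}, with $m$ a prime power so that existence is immediate from Theorem~\ref{TD}) to yield the required type. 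For the smaller values $t\in\{10,11,13,16,17,\dots\}$ the transversal designs become too small to reach the bound, and there I would instead invoke or inflate the sporadic GDCs of Lemma~\ref{[2,2]-GDCb} (types $18^u$, $24^u$, $24^u36^1$, $30^418^1$, $24^418^1$), which were built expressly to plug gaps of this form.

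The hard part will be the tailoring rather than any conceptual step: these are exactly the values missed by every uniform TD family of Lemmas~\ref{6t+4a}--\ref{6t+4c}, so no single master design covers all of them. For each $t$ one must simultaneously arrange that the chosen TD or GDD exists, that the weighted group sizes sum to $6t+3$, and that every resulting group size $6a_i$ lies in a range where both the ingredient GDC of type $6^{a_i}$ (resp.\ $6^{a_i}3^1$, $6^{a_i}9^1$) and the filling code of length $6a_i+1$ have already been established. Confirming this threefold compatibility across all sixteen cases is the real labor; once a valid master design is pinned down, optimality of the resulting code is automatic from the optimality of its ingredients.
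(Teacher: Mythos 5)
There is a genuine gap: the toolbox you restrict yourself to cannot produce these sixteen values, and the mechanism the paper actually uses is absent from your plan. Your proposed masters --- a $[2,2]$-GDC$(6)$ of type $(6a_1)^1\cdots(6a_r)^19^1$ (or with a terminal $(6x+3y)^1$ group) obtained from a weighted TD --- are precisely the families already exhausted in Lemmas~\ref{6t+4a}--\ref{6t+4c}, and the set $\{10,11,13,\dots,45,51\}$ is exactly what those families miss. For instance, for $t=45$ the scheme of Lemma~\ref{6t+4c} needs $4m+x+\frac{y-1}{2}=45$ with $x+y=m$ and a TD$(5,m)$; $m=9$ reaches only $t\le 44$ and $m=11$ starts at $t=49$ (and $m=10$ has no TD$(5,10)$), so no truncation closes this gap. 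Your fallback for the small values --- the GDCs of Lemma~\ref{[2,2]-GDCb} --- fails even more directly: the smallest of them has $72$ points, already exceeding $n=64$ ($t=10$) and $n=70$ ($t=11$), and for the remaining small $n$ one would have to adjoin more than one ideal point, which requires filling GDCs of type $1^{18}y^1$, $1^{24}y^1$, etc.\ with $y\ge 2$ that are nowhere constructed in the paper.

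The missing idea is the Inflation Construction~\ref{Inflation} applied to the small sporadic GDCs of Lemmas~\ref{[2,2]-GDC:sg^u}, \ref{[2,2]-GDC:s6^t3^1}, \ref{[2,2]-GDC:s6^t9^1}, \ref{DM2GDCg^4} and \ref{s2^{3t+1}}, with weights as large as $w=13$. The paper inflates $3^7$ by weight $3$ to get $9^7$ for $t=10$, by weight $13$ to get $39^7$ for $t=45$; inflates $1^{31}$ by weight $10$ to get $10^{31}$ for $t=51$; uses $10^7$ directly for $t=11$; inflates $6^u3^1$ and $6^u9^1$ by weight $3$ to get $18^u9^1$ and $18^u27^1$; and so on. The resulting group sizes ($9,18,27,39,10,20,28,40,\dots$, i.e.\ congruent to $3$ or $4$ modulo $6$) are then filled, with at most one ideal point, by optimal codes of lengths $10,19,25,28,34,40$ coming from Table~\ref{presult}, Theorem~\ref{6t+1} and Lemma~\ref{s6t+4}. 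Without this inflation step --- which your proposal never invokes for the master designs, only implicitly for ingredients --- none of the sixteen cases can be completed, so the "tailoring" you defer is not merely laborious but impossible within the framework you set up.
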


\begin{proof}
For $t=11$, take a $[2,2]$-GDC$(6)$ of type $10^7$ from Lemma~\ref{[2,2]-GDC:sg^u}. Fill in the groups with optimal $(10,6,[2,2])_3$-codes to obtain the desired code.

For $t=17$, take a $[2,2]$-GDC$(6)$ of type $24^4 9^1$ from
Lemma~\ref{[2,2]-GDC:sg^um^1}. Adjoin an ideal point, and fill in
each group together with the ideal point with an optimal
$(25,6,[2,2])_3$-code or an optimal $(10,6,[2,2])_3$-code to obtain
the desired code. For $t=23$, we proceed similarly, starting instead
with a  $[2,2]$-GDC$(6)$ of type $18^6 33^1$ from
Lemma~\ref{[2,2]-GDC:sg^um^1}.

For each desired $t\in \{10,13,16,18,19,21,22,25,26,28,$ $31,45,51\}$, we list the parameters to obtain the optimal code of length $6t+4$ in Table~\ref{t6t+4d}. We inflate a $[2,2]$-GDC$(6)$ of type $g^um^1$ (from ``Source'') with weight $w$, adjoin $a$ ideal point with $a=0$ or $1$, and fill in the groups together with the ideal point with optimal codes of length $s\in S$ to obtain the desired code.
\end{proof}

\begin{table}
\caption{The Parameters for Lemma~\ref{6t+4d}} \label{t6t+4d}
\centering
\begin{tabular}{c|c|c|c|c|c}
\hline
$t$ & $n$ & $g^um^1\times w$ & Source & $a$ & $S$ \\
\hline
$10$ & $64$ & $3^7\times 3$ & Lemma~\ref{[2,2]-GDC:sg^u} & $1$ & $10$ \\
$13$ & $82$ & $6^43^1\times 3$ & Lemma~\ref{[2,2]-GDC:s6^t3^1} & $1$ & $19,10$ \\
$16$ & $100$ & $3^{11}\times 3$ & Lemma~\ref{[2,2]-GDC:sg^u} & $1$ & $10$ \\
$18$ & $112$ & $4^4\times 7$ & Lemma~\ref{DM2GDCg^4} & $0$ & $28$ \\
$19$ & $118$ & $3^{13}\times 3$ & Lemma~\ref{[2,2]-GDC:sg^u} & $1$ & $10$ \\
$21$ & $130$ & $2^{13}\times 5$ & Lemma~\ref{s2^{3t+1}} & $0$ & $10$ \\
$22$ & $136$ & $6^73^1\times 3$ & Lemma~\ref{[2,2]-GDC:s6^t3^1} & $1$ & $19,10$ \\
$25$ & $154$ & $6^79^1\times 3$ & Lemma~\ref{[2,2]-GDC:s6^t9^1} & $1$ & $19,28$ \\
$26$ & $160$ & $4^4\times 10$ & Lemma~\ref{DM2GDCg^4} & $0$ & $40$ \\
$28$ & $172$ & $6^89^1\times 3$ & Lemma~\ref{[2,2]-GDC:s6^t9^1} & $1$ & $19,28$ \\
$31$ & $190$ & $3^7\times 9$ & Lemma~\ref{[2,2]-GDC:sg^u} & $1$ & $28$ \\
$45$ & $274$ & $3^7\times 13$ & Lemma~\ref{[2,2]-GDC:sg^u} & $1$ & $40$ \\
$51$ & $310$ & $1^{31}\times 10$ & Lemma~\ref{s6t+1} & $0$ & $10$ \\
\hline
\end{tabular}
\end{table}

\begin{lemma}
\label{6t+4e} $A(6t+4,6,[2,2])=U(6t+4,6,[2,2])$ for $t\in \{29,35\}$.
\end{lemma}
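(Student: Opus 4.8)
The plan is to use the section's recursive machinery: produce a $[2,2]$-GDC$(6)$ whose type consists of five ``large'' groups together with one ``special'' group, adjoin a single ideal point (Construction~\ref{AdjoinPoints}), and fill every group (with the ideal point) by an optimal short code. The large groups will carry weight $6$ and the special group weight $3$ under the Fundamental Construction~\ref{FundCtr}, so that the special group has size $\equiv 3\pmod 6$ and fills to a code of length $\equiv 4\pmod 6$; this is what supplies the ``$+4$'' in $n=6t+4$. The key is to arrange the group sizes so that every fill attains its Johnson bound, whence the resulting size equals $U(n,6,[2,2])$ exactly.

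For $t=29$ ($n=178$) I would take a full TD$(6,5)$ (Theorem~\ref{TD}), give weight $6$ to all points of the first five groups and, in the sixth group, weight $6$ to four points and weight $3$ to the last point. Every block has size six and meets the sixth group once, so its weight pattern is either $6^6$ or $6^5 3^1$; the output is a $[2,2]$-GDC$(6)$ of type $30^5 27^1$. Adjoining one point and filling the size-$30$ groups by optimal $(31,6,[2,2])_3$-codes (Theorem~\ref{6t+1}) and the size-$27$ group by an optimal $(28,6,[2,2])_3$-code (Lemma~\ref{s6t+4}) yields a code of length $5\cdot30+27+1=178$ whose sizes sum to $U(178,6,[2,2])$. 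For $t=35$ ($n=214$) the analogous target is a $[2,2]$-GDC$(6)$ of type $36^5 33^1$, obtained by giving weights $6$ and $3$ to the five size-$6$ groups and the size-$11$ group of a $\{5,6\}$-GDD of type $6^5 11^1$; filling with optimal $(37,6,[2,2])_3$- and $(34,6,[2,2])_3$-codes (Theorem~\ref{6t+1}, Lemma~\ref{s6t+4}) gives length $5\cdot36+33+1=214$, again matching $U(214,6,[2,2])$.

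The main obstacle is the ingredient demanded by the size-$6$ blocks that pass through a weight-$3$ point, namely a $[2,2]$-GDC$(6)$ of type $6^5 3^1$. This is precisely the case $t=5$ that Lemma~\ref{[2,2]-GDC:s6^t3^1} omits, so it is not yet on hand and must be built from scratch; I would look for it with the expected size $6\cdot5^2=150$ by developing five base codewords over $\bbZ_{30}$ under the automorphism $\langle(0\ \ 1\ \ \cdots\ \ 29)(30\ \ 31\ \ 32)\rangle$, in exact analogy with the $t=4$ and $6\le t\le 11$ constructions there (the remaining inputs $6^5$ and $6^6$ already exist by Lemma~\ref{[2,2]-GDC:6^t}).

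A second, more subtle point is why the special group must be chosen large rather than minimal. The tempting shortcut for $t=35$ --- truncating a TD$(6,7)$ to a single weight-$3$ point to get type $42^5 3^1$ --- fails to be optimal: its size-$3$ group fills (with the ideal point) as a length-$4$ code, and since $A_3(4,6,[2,2])=1<U(4,6,[2,2])=2$ the final count falls one short of $U(214,6,[2,2])$. Hence $t=35$ forces a genuine type-$6^5 11^1$ master rather than a truncated transversal design, and I expect its construction --- a $\{5,6\}$-GDD with no block of size below $5$, so that all Fundamental-Construction inputs stay within $\{6^5 3^1,\,6^5,\,6^4 3^1\}$ --- to be the hardest step, most plausibly realized through a double group divisible design rather than a direct TD truncation.
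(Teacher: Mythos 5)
Your arithmetic is sound --- for $t=29$ the type $30^5 27^1$ master would give $4350+5\cdot 155+126=5251=U(178,6,[2,2])$, and for $t=35$ the type $36^5 33^1$ master would give $6300+5\cdot 222+187=7597=U(214,6,[2,2])$ --- but the argument rests on two objects you never actually produce, and you have correctly located the first yourself. Everything hinges on a $[2,2]$-GDC$(6)$ of type $6^5 3^1$ with $150$ codewords: in the TD$(6,5)$ for $t=29$ all five blocks through the weight-$3$ point demand it, and in your proposed $\{5,6\}$-GDD for $t=35$ every block of size $6$ does. This is precisely the single value $t=5$ that Lemma~\ref{[2,2]-GDC:s6^t3^1} omits, and it appears nowhere else in the paper; ``I would look for five base codewords over $\bbZ_{30}$'' is a search plan, not a construction, and the conspicuous hole in that lemma is evidence that this object is at best not readily available. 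The second unresolved object is the master $\{5,6\}$-GDD of type $6^5 11^1$ for $t=35$, which is not delivered by any of Theorems~\ref{PBD4-9}--\ref{TD} and which you explicitly defer as ``the hardest step.'' With both ingredients missing, the proof establishes nothing.

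The paper's own proof takes a genuinely different route that avoids the type $6^5 3^1$ ingredient entirely: the ``$+3$'' adjustment is carried by type $6^s 9^1$ ingredients, which do exist (Lemma~\ref{[2,2]-GDC:s6^t9^1}). For $t=29$ it inflates a $[2,2]$-GDC$(6)$ of type $6^7$ by weight $4$ using $4$-MGDDs of type $4^4$ to get a $4$-DGDD of type $(24,6^4)^7$ with the CCC property, adjoins $9$ ideal points and fills the holes with $[2,2]$-GDC$(6)$s of type $6^7 9^1$ to reach type $24^7 9^1$, then adjoins one more point and fills with optimal codes of lengths $25$ and $10$. For $t=35$ it builds a $\{3,4\}$-DGDD of type $(30,6^5)^6$ by weight-$5$ inflation (using $4$-MGDDs of type $5^4$ and resolvable $3$-MGDDs of type $5^3$), completes the $48$ parallel classes of triples with $24$ new points, adjoins $9$ more, fills with a type $6^6 9^1$ GDC to obtain type $30^6 33^1$, and finishes with optimal codes of lengths $31$ and $34$. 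To salvage your approach you would have to either exhibit the type $6^5 3^1$ GDC explicitly or re-route through ingredients already established in Lemmas~\ref{[2,2]-GDC:6^t}--\ref{[2,2]-GDC:s6^t9^1}.
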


\begin{proof}
For $t=29$, take a $[2,2]$-GDC$(6)$ of type $6^7$ from Lemma~\ref{[2,2]-GDC:6^t}; apply the Inflation Construction with weight $4$, using $4$-MGDDs of type $4^4$ (see \cite{DS:1983}) as input designs, to obtain a $4$-DGDD of type $(24,6^4)^7$ with the CCC property. Adjoin $9$ ideal points, and fill in $[2,2]$-GDC$(6)$s of type $6^7 9^1$ to obtain a $[2,2]$-GDC$(6)$ of type $24^7 9^1$. Adjoin one more ideal point, and fill in the groups together with the ideal point with optimal codes of lengths $25$ and $10$ to obtain the desired code.

For $t=35$, take a $[2,2]$-GDC$(6)$ of type $6^7$ and remove all the points in the last group; apply the Inflation Construction with weight $5$, using $4$-MGDDs of type $5^4$ (see \cite{DS:1983}) and resolvable $3$-MGDDs of type $5^3$ (see \cite{WTD:2007}) as input designs, to obtain a $\{3,4\}$-DGDD of type $(30,6^5)^6$ with the CCC property, whose triples fall into $48$ parallel classes. Adjoin $24$ infinite points to complete the parallel classes, and then adjoin further $9$ ideal points, fill in a $[2,2]$-GDC$(6)$ of type $6^69^1$ to obtain a $[2,2]$-GDC$(6)$ of type $30^633^1$. Adjoin one more point, and fill in the groups together with the ideal point with optimal codes of lengths $31$ and $34$ to obtain the desired code.
\end{proof}

\begin{lemma}
\label{[2,2]-GDC-1^27 4^1}
There exists a $[2,2]$-GDC$(6)$ of type $1^{27} 4^1$ with size $153$.
\end{lemma}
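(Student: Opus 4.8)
The plan is to build the code directly, in the same spirit as Lemmas~\ref{[2,2]-GDC:6^t}--\ref{[2,2]-GDC:sg^um^1}, by developing a short list of base codewords under a cyclic automorphism group. Since $153=9\cdot 17$ (and $153$ is not a multiple of $27$, which rules out a full development over $\bbZ_{27}$), the natural setup is to place the $27$ points of the singleton groups on $\bbZ_9\times I_3$ and the four points of the size-$4$ group on a disjoint set $\{\infty_0,\infty_1,\infty_2,\infty_3\}$. Thus I would take $X=(\bbZ_9\times I_3)\cup\{\infty_0,\infty_1,\infty_2,\infty_3\}$ with $\G=\{\{(i,j)\}:(i,j)\in\bbZ_9\times I_3\}\cup\{\{\infty_0,\infty_1,\infty_2,\infty_3\}\}$, fix the automorphism $\sigma:(i,j)\mapsto(i+1,j)$, $\infty_k\mapsto\infty_k$, generating a group of order $9$, and search for $17$ base codewords whose $\sigma$-orbits all have full length $9$. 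Developing these produces exactly $9\cdot 17=153$ codewords, exactly as the $6^6$ construction in Lemma~\ref{[2,2]-GDC:6^t} develops $15$ base codewords over $\bbZ_{12}\times I_3$ to get size $180$.

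Next I would reduce the three required properties to a finite check. The composition $[2,2]$ is automatic once each base codeword is written in the standard form $\langle a_1,a_2,a_3,a_4\rangle$ with $a_1,a_2$ the $1$-positions and $a_3,a_4$ the $2$-positions. The group condition $\|u|_{G}\|\le 1$ holds provided no base codeword uses two of the $\infty_k$; since $\sigma$ fixes the group $G$ setwise, this then holds for every translate. The substantive condition is the minimum distance, and a short computation shows that for distinct weight-$4$ words $u,v$ of composition $[2,2]$,
\begin{equation*}
d_H(u,v)=8-c-e,
\end{equation*}
where $c=|\supp(u)\cap\supp(v)|$ and $e$ is the number of coordinates at which $u,v$ carry the \emph{same} nonzero symbol; hence $d_H(u,v)\ge 6$ is equivalent to $c+e\le 2$. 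So I must check that any two developed codewords share at most two support coordinates, share at most one coordinate-symbol pair, and, when they share exactly two coordinates, disagree on the symbol at both. Because $\sigma$ is an isometry preserving composition, $d_H(\sigma^a(w),\sigma^b(w'))=d_H(w,\sigma^{b-a}(w'))$, so it suffices to test each base codeword $w$ against $\sigma^r(w')$ for all base codewords $w'$ and $0\le r\le 8$ (excluding only $w=w'$, $r=0$).

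The main obstacle is precisely \emph{finding} a list of $17$ base codewords that clears all these overlaps simultaneously. Unlike an ordinary difference computation, one must keep track of which of the two symbols each coordinate carries, so the relevant difference sets split by symbol: the $1$-pairs and the $2$-pairs must each behave like packings while their cross-interaction stays bounded by the $c+e\le 2$ condition. In practice I would locate such a list by a computer search over $\bbZ_9\times I_3$ with the four fixed points, subject to the no-two-$\infty_k$ constraint, and then exhibit it together with the routine finite verification of $c+e\le 2$ for every pair of developed words. Once a valid list is in hand, the orbit count $9\cdot 17=153$ and the three structural checks complete the proof.
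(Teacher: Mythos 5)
Your framework is sound: the reduction of the distance condition to $c+e\le 2$ via $d_H(u,v)=8-c-e$ is correct, the orbit count $9\cdot 17=153$ is right, and the overall strategy --- exhibit base codewords and develop them under an automorphism of order $9$ --- is exactly how the paper proceeds. However, the proposal stops precisely where the actual content of the lemma begins: you never produce the $17$ base codewords, deferring instead to a computer search that has not been carried out. For an existence statement whose proof \emph{is} the explicit construction, this is a genuine gap; nothing in your argument certifies that a valid list exists for the setup you describe.

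The gap is compounded by the fact that your setup is more rigid than the paper's, so its feasibility is not something you can take for granted. The paper places the size-$4$ group as $(\bbZ_3\times\{3\})\cup\{\infty\}$ and lets $\alpha$ rotate the three $\bbZ_3$-points, fixing only $\infty$; you instead fix all four points $\infty_0,\dots,\infty_3$. With every orbit of length $9$ and all $\infty_k$ fixed, a simple count ($4\cdot 153=612$ nonzero entries, at most $x_{\infty_k},y_{\infty_k}\le 9$ since codewords sharing $\infty_k$ with the same symbol must have disjoint remaining supports inside the $27$ points) forces exactly $8$ of your $17$ base codewords to pass through an $\infty_k$ (one per symbol per $k$), each with its three remaining points in three distinct levels of $I_3$ forming an exact parallel class under translation, and forces $x_i=y_i=10$ at every singleton point. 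These constraints are consistent numerically but very tight, and you have not shown they can be met simultaneously with the cross-orbit conditions. So either exhibit the base codewords for your action and verify $c+e\le 2$, or adopt the paper's group action, where the explicit list is already given.
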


\begin{proof}
Let $X=(\bbZ_{9} \times \{0, 1, 2\})\cup (\bbZ_{3} \times \{3\}) \cup \{\infty\}$. The point set $(\bbZ_{3} \times \{3\}) \cup \{\infty\}$ forms the group of size 4. Define $\alpha: X \rightarrow X$ as $x_y \rightarrow (x+1)_y$ where the addition is modulo 9 if $y \in \{0, 1, 2\}$, and modulo 3 if $y \in \{3\}$. The point $\infty$ is fixed by $\alpha$.
Develop the following 17 base codewords with $\alpha$:
$$\begin{array}{llllll}
\langle7_2, 0_3, 3_2, 5_0\rangle &
\langle5_1, 0_3, 1_0, 3_0\rangle &
\langle8_2, 0_3, 6_1, 4_1\rangle &
\langle5_0, 0_2, 0_3, 8_2\rangle &
\langle1_1, 0_1, 0_3, 5_1\rangle &
\langle3_0, 7_0, 0_3, 4_2\rangle \\
\langle5_1, \infty, 0_0, 4_2\rangle &
\langle0_0, 7_2, \infty, 0_1\rangle &
\langle5_1, 7_1, 7_0, 6_0\rangle &
\langle8_1, 0_2, 1_2, 4_2\rangle &
\langle7_1, 4_1, 2_2, 4_2\rangle &
\langle1_0, 7_0, 6_2, 3_1\rangle \\
\langle0_2, 6_1, 3_0, 0_0\rangle &
\langle1_2, 8_2, 5_0, 0_0\rangle &
\langle5_2, 8_2, 0_1, 8_1\rangle &
\langle4_0, 3_0, 1_1, 7_1\rangle &
\langle0_0, 2_0, 1_1, 2_2\rangle &
\end{array}$$
\end{proof}

\begin{lemma}
\label{6t+4e} $A(6t+4,6,[2,2])=U(6t+4,6,[2,2])$ for $t\in \{27,30\}$.
\end{lemma}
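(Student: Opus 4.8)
The plan is to handle the two remaining lengths $n = 6t+4$ for $t\in\{27,30\}$, i.e. $n=166$ and $n=184$, by the same combinatorial recipe used throughout this section: build a suitable $[2,2]$-GDC$(6)$ whose group sizes are all of the form where optimal short codes or sub-GDCs are already available, adjoin one ideal point, and fill in the groups together with the ideal point using optimal codes from Theorem~\ref{6t+1} and Lemma~\ref{s6t+4}. The reason these two values were deferred is presumably that they are not reachable by the TD-based Fundamental Constructions of Lemmas~\ref{6t+4a}--\ref{6t+4c}, so I would instead exploit the newly constructed $[2,2]$-GDC$(6)$ of type $1^{27}4^1$ from Lemma~\ref{[2,2]-GDC-1^27 4^1} as a master design.

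First I would inflate the $[2,2]$-GDC$(6)$ of type $1^{27}4^1$ via the Inflation Construction (Construction~\ref{Inflation}) with an appropriate weight $w$, using a TD$(4,w)$ whose existence is guaranteed by Theorem~\ref{TD}(i) for $w\notin\{2,6\}$. Inflating by $w$ turns the type into $w^{27}(4w)^1$ and multiplies the size by $w^2$. For $t=27$ I would aim for $w=6$, giving type $6^{27}24^1$, and for $t=30$ I would try $w=6$ on a slightly different master or adjust the weight; the key is that after inflation every group size is a multiple of $6$ of the form $6s$ with $s\in\{5,\dots,11\}$ or a size where an optimal short code exists. Then I adjoin one ideal point and fill: each group of size $6s$ together with the ideal point receives an optimal $(6s+1,6,[2,2])_3$-code from Theorem~\ref{6t+1}, and the one exceptional group of size $4w$ together with the ideal point receives an optimal code of the corresponding length from Lemma~\ref{s6t+4} or Theorem~\ref{6t+1}. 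The arithmetic $6t+4 = 1 + \sum(\text{group sizes})$ must check out, and the additive count of codewords must meet $U(6t+4,6,[2,2])$ exactly, which is automatic once all the filled codes are optimal and the GDC itself attains the expected size.

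The main obstacle I expect is \emph{hitting the target length and optimality simultaneously}: the inflation weight and the group sizes are rigidly linked, so I must choose $w$ (and possibly vary which master GDC I inflate, perhaps $1^{27}4^1$ versus one of the types from Lemmas~\ref{[2,2]-GDC:sg^u}--\ref{[2,2]-GDC:sg^um^1}) so that $6t+4$ comes out to exactly $166$ or $184$ while every resulting group size lands in the ranges for which optimal ingredient codes have already been established. In particular, I must verify that the exceptional group of size $4w$ admits an optimal code with one adjoined point, i.e. that a $(4w+1,6,[2,2])_3$-code of size $U(4w+1,6,[2,2])$ exists — this may force a specific small $w$, and if the natural choice falls into a gap (such as the sets $P$ or $Q^{(5)}$-type exceptions, or an unavailable short length), I would need a secondary construction or a different master design. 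Once a compatible $(w,\text{master})$ pair is found, the distance and composition verification is routine, since both the inflation and fill-in constructions preserve the $[2,2]$ composition and distance $6$ by Constructions~\ref{FillGroups}, \ref{AdjoinPoints}, and~\ref{Inflation}, and optimality follows from matching the upper bound $U(6t+4,6,[2,2])$.
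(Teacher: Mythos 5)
Your proposal does not go through as written, and the central difficulty you flag (``hitting the target length'') is in fact fatal to your chosen construction. You propose to use the $[2,2]$-GDC$(6)$ of type $1^{27}4^1$ as a \emph{master} design and inflate it by weight $w$, obtaining type $w^{27}(4w)^1$ on $31w$ points, then adjoin one ideal point to reach length $31w+1$. But $166-1=165$ and $184-1=183$ are not divisible by $31$ (nor are $166$ and $184$ themselves), so no choice of $w$ can reach $n=6\cdot 27+4$ or $n=6\cdot 30+4$ this way; your concrete suggestion $w=6$ yields length $187$. You acknowledge you might need ``a different master design,'' but you never identify one, so the proof is incomplete precisely at the point where all the work lies.

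The paper's actual mechanism inverts your roles: the type $1^{27}4^1$ GDC is used as a \emph{filling ingredient}, not a master. One starts from the $[2,2]$-GDC$(6)$s of types $9^5 9^1$ and $9^5 15^1$ (Lemma~\ref{[2,2]-GDC:sg^um^1}), inflates by weight $3$ to get types $27^6$ and $27^5 45^1$, adjoins \emph{four} ideal points (not one), fills each group of size $27$ together with those four points by a copy of the $1^{27}4^1$ GDC (its group of size $4$ aligned with the ideal points, per Construction~\ref{AdjoinPoints}), and fills the last group plus the four points with an optimal code of length $31$ or $49$. The arithmetic then closes: $27\cdot 6+4=166$ and $27\cdot 5+45+4=184$, and the sizes sum to $U(166,6,[2,2])=4565$ and $U(184,6,[2,2])=5612$ respectively. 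The point you missed is that $27\equiv 3\pmod 6$, so groups of size $27$ cannot be completed by adjoining one point; the whole purpose of the $1^{27}4^1$ ingredient is to handle them with four adjoined points, since $27+4=31\equiv 1\pmod 6$ is a length where optimal codes exist.
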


\begin{proof}
For each $t\in \{27,30\}$, take a $[2,2]$-GDC$(6)$ of type $9^{6}$ or type $9^{5} 15^1$ from Lemma~\ref{[2,2]-GDC:sg^um^1}. Apply the Inflation Construction with weight $3$ to obtain a $[2,2]$-GDC$(6)$ of type $27^{6}$ or type $27^{5} 45^1$. Then adjoin four ideal points and fill in the groups together with these ideal points with $[2,2]$-GDC$(6)$s of type $1^{27} 4^1$ and an optimal code of length $31$ or length $49$ to obtain the desired code.
\end{proof}

Combing the above lemmas, we have the following result:

\begin{theorem}
\label{6t+4}  $A_3(6t+4,6,[2,2])=U(6t+4,6,[2,2])$ for each $t\geq
1$, $t\not\in\{2,3,12,14,15,20\}$.
\end{theorem}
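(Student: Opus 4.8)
The plan is to prove this as a synthesis theorem: every ingredient needed has already been established piecewise in the preceding lemmas, and the only work remaining is to assemble the matching upper and lower bounds and to verify that the ranges covered by the construction lemmas tile $\{t\geq 1\}\setminus\{2,3,12,14,15,20\}$. The upper bound $A_3(6t+4,6,[2,2])\leq U(6t+4,6,[2,2])$ holds for all $t$ by the Corollary following Theorem~\ref{bound2}, recalling that $U(n,6,[2,2])=\lfloor\frac{n}{2}\lfloor\frac{n-1}{3}\rfloor\rfloor$. Hence it suffices to exhibit, for each admissible $t$, an $(6t+4,6,[2,2])_3$-code meeting this bound; all such codes are supplied by Lemmas~\ref{s6t+4}, \ref{6t+4a}, \ref{6t+4b}, \ref{6t+4c}, \ref{6t+4d} together with the two lemmas handling $t\in\{29,35\}$ and $t\in\{27,30\}$. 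Collecting these yields equality wherever a construction exists.

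The key step is the bookkeeping that shows these ranges cover the positive integers outside the six exceptions. First I would record the coverage: Lemma~\ref{s6t+4} gives $t\in\{1\}\cup[4,9]$; Lemma~\ref{6t+4d} gives the scattered values $t\in\{10,11,13,16,17,18,19,21,22,23,25,26,28,31,45,51\}$; Lemma~\ref{6t+4c} gives $t\in\{24\}\cup[32,34]\cup[36,42]\cup\{44\}$; the lemmas for $\{27,29,30,35\}$ fill those four isolated gaps; Lemma~\ref{6t+4b} gives $t=43$ and $[46,141]\setminus\{51\}$; and Lemma~\ref{6t+4a} gives all $t\geq 142$. Concatenating these, the interval $[1,9]$ is covered except $\{2,3\}$, the interval $[10,51]$ is covered except $\{12,14,15,20\}$, and $[43,\infty)$ is covered completely (with $t=51$ supplied by Lemma~\ref{6t+4d} and the seam $t=43,44,45$ picked up by three different lemmas). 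Thus the union is precisely $\{t\geq 1\}\setminus\{2,3,12,14,15,20\}$, matching the statement.

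The main obstacle is not any single construction but guaranteeing that no gap is overlooked at the seams between lemmas, especially in the congested range $t\in[24,51]$, where values are distributed among five different lemmas according to which transversal design or group-divisible ingredient happens to be available. I would verify each boundary value individually, for instance that $t=43$ lands in Lemma~\ref{6t+4b} via the choice $k=8$, $m=7$, that $t=44$ lands in Lemma~\ref{6t+4c}, and that $t=45$ lands in Lemma~\ref{6t+4d}, and then confirm that the six residual values $2,3,12,14,15,20$ are genuinely absent from every covered range, which is why they remain stated as exceptions rather than resolved. Once this assembly is checked, the upper and lower bounds coincide for all admissible $t$ and the proof is complete.
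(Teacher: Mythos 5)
Your proposal is correct and matches the paper's approach exactly: the theorem is a pure synthesis statement, proved by combining the upper bound from the Johnson-type corollary with the constructions of Lemmas~\ref{s6t+4}, \ref{6t+4a}, \ref{6t+4b}, \ref{6t+4c}, \ref{6t+4d} and the two lemmas covering $t\in\{29,35\}$ and $t\in\{27,30\}$. Your coverage bookkeeping is accurate, and in fact spells out the tiling of $\{t\geq 1\}\setminus\{2,3,12,14,15,20\}$ more explicitly than the paper, which simply says the result follows by combining the preceding lemmas.
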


\subsection{Case of Length  $n\equiv 3\pmod{6}$}

\begin{theorem}
\label{s6t+3} $A_3(6t+3,6,[2,2])=U(6t+3,3)$ for each $t\geq 1$.
\end{theorem}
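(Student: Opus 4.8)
The plan is to establish the matching lower bound, since the upper bound $A_3(6t+3,6,[2,2])\le U(6t+3,6,[2,2])$ is the Johnson-type bound of the Corollary; a direct computation gives $U(6t+3,6,[2,2])=6t^2+3t=\tfrac{3(2t+1)(2t)}{2}$. The organising principle I would use is that a single group divisible code already encodes an optimal code. Writing $u=2t+1$, a $[2,2]$-GDC$(6)$ of type $3^{u}$ of size $\tfrac{3u(u-1)}{2}=6t^2+3t$ is, read simply as a code, an optimal $(6t+3,6,[2,2])_3$-code (the group partition only imposes extra constraints, and the size meets the bound). Equivalently, a $[2,2]$-GDC$(6)$ of type $6^{t}3^{1}$ of size $6t^2$ becomes optimal after Construction~\ref{FillGroups}: fill each of the $t$ groups of size $6$ with an optimal $(6,6,[2,2])_3$-code of size $A_3(6,6,[2,2])=3$ from Table~\ref{presult}, and leave the group of size $3$ empty since $A_3(3,6,[2,2])=0$, producing size $6t^2+3t$. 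Thus the whole problem reduces to the existence, for each $t$, of a suitable GDC of one of these two types.

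For the bulk of the values I would run the same adjoin-and-fill recursion used for the other residue classes. Take a master $[2,2]$-GDC$(6)$ whose groups all have size divisible by $6$, say with group sizes $6s_1,6s_2,\dots$ where $\sum_r s_r=t$; for the generic range $t\ge 9$, $t\notin P$ this is Lemma~\ref{[2,2]-GDCa} (sizes $6s$ with $s\in\{4,5,6,7,8\}$), and for the exceptional $t\in P$ the tailor-made codes of Lemma~\ref{[2,2]-GDCb} (sizes $6s$ with $s\in\{3,4,5,6,7\}$). Adjoin three ideal points as a single triple $T$, and on each set (group $\cup\,T$) of size $6s_r+3$ superimpose a $[2,2]$-GDC$(6)$ of type $3^{2s_r+1}$ or of type $6^{s_r}3^{1}$, always taking $T$ as the common size-$3$ group. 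A short distance check shows the union is again a GDC (codewords from distinct blocks, or a master codeword and a block codeword, share at most one point, so lie at distance at least $6$); since a master of this kind has size $6\bigl(t^2-\sum_r s_r^2\bigr)$ and each block eventually contributes $3s_r(2s_r+1)$ codewords (for a $6^{s_r}3^{1}$ block after its size-$6$ groups are themselves filled), the grand total is $6t^2+3t$, an optimal code. The point of allowing either filling type per block is that the ingredients $3^{7},3^{11},3^{13}$ of Lemma~\ref{[2,2]-GDC:sg^u} together with $6^{4}3^{1},6^{6}3^{1},6^{7}3^{1},6^{8}3^{1}$ of Lemma~\ref{[2,2]-GDC:s6^t3^1} already cover every value $s\in\{3,4,5,6,7,8\}$ that can occur, so no new small ingredient is needed for the recursion.

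The small orders I would dispose of directly. The case $t=1$ ($n=9$) comes from Table~\ref{presult}, and $t=2$ ($n=15$) from a short difference-based construction in the style of the earlier small lemmas. For $3\le t\le 11$ the GDCs are already on hand: types $3^{7},3^{11},3^{13}$ settle $t=3,5,6$, and types $6^{t}3^{1}$ settle $t\in\{4,7,8,9,10,11\}$, each yielding an optimal code by the reduction above.

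The hard part will be the exceptional bookkeeping rather than any single construction. Once Lemma~\ref{[2,2]-GDCa} (for $t\notin P$) and the small direct cases are in place, one must verify that Lemma~\ref{[2,2]-GDCb} really covers all of $P$; tracking group totals shows it reaches $t\in\{12,15,16,18,19,21,22,23,26,27,28,31,32,33\}$, so that together with $6^{t}3^{1}$ for $t\in\{9,10,11\}$ the only residual values are $t\in\{13,14,17\}$. These three (exactly the exceptions recurring in the other residue classes) I expect to require individual treatment, via a transversal-design master from Theorem~\ref{TD} and Construction~\ref{FundCtr} or Construction~\ref{Inflation}, followed by adjoining three points and filling. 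Checking that this collection of generic, exceptional, and hand-built cases is genuinely exhaustive for every $t\ge 1$, with all ingredient GDCs accounted for, is the delicate step of the proof.
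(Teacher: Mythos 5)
Your route is genuinely different from the paper's, and the paper's route is worth knowing because it makes this theorem almost free. The paper does not rebuild the codes from GDCs at all: for every $t\geq 1$ outside $\{2,3,12,14,15,20\}$ it simply \emph{shortens} an optimal $(6t+4,6,[2,2])_3$-code from Theorem~\ref{6t+4}. Such a code has $(3t+2)(2t+1)=6t^2+7t+2$ codewords of weight $4$, so some coordinate lies in at most $\lfloor 4(6t^2+7t+2)/(6t+4)\rfloor=4t+2$ of them; deleting that coordinate and the codewords through it leaves at least $6t^2+3t=U(6t+3,6,[2,2])$ codewords of length $6t+3$. That single averaging observation replaces your entire adjoin-three-points recursion, and the only residual work is the exception set of Theorem~\ref{6t+4}: $t\in\{2,14,20\}$ by direct difference constructions, $t=3$ by reading the $3^7$ GDC of Lemma~\ref{[2,2]-GDC:sg^u} as a code (exactly your observation), and $t\in\{12,15\}$ by adjoining three points to $18^4$ or $18^5$ and filling with $3^7$ --- which is precisely your recursion, used only as a fallback.

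Judged on its own terms your construction is sound where it is explicit --- the size count $6(t^2-\sum_r s_r^2)+\sum_r 3s_r(2s_r+1)=6t^2+3t$, the distance check through the common triple $T$, and the ingredient inventory for $s\in\{3,\dots,8\}$ all hold up --- but it is not a complete proof. You leave $t=2$ to an unspecified ``short difference-based construction,'' and you leave $t\in\{13,14,17\}$ to constructions you only ``expect'' to exist; these must actually be produced. They are not automatic: the natural master for $t=14$ (type $18^4 12^1$ plus three ideal points) requires an optimal code of length $15$, i.e.\ the still-unproved case $t=2$, so without care the argument is circular there. Concretely, $t=13$ and $t=17$ do go through with masters $18^4 6^1$ and $24^4 6^1$ from Lemma~\ref{[2,2]-GDC:sg^um^1} (adjoin three points, fill the large groups with $3^7$ or a filled $6^4 3^1$, and fill the size-$6$ group plus $T$ with an optimal code of length $9$), but $t=2$ and $t=14$ still need explicit base codewords, which the paper supplies and you do not.
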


\begin{proof} For each $t\geq 1$, $t\not\in\{2,3,12,14,15,20\}$, remove one point and the related codewords from an optimal $(6t+4,6,[2,2])_3$-code from Theorem~\ref{6t+4} to obtain the desired code.

For $t\in \{2,14,20\}$, let $X_t=\bbZ_{6t+3}$. Then $(X_t,{\cal C}_t)$ is the desired optimal $(6t+3,6,[2,2])_3$-code, where ${\cal C}_t$ is
obtained by developing the elements of $\bbZ_{6t+3}$ in the following codewords $+1\pmod{6t+3}$.

\noindent $t=2$: $\langle0,3,4,14\rangle$ $\langle0,5,7,13\rangle$

\noindent $t=14$:
$$\begin{array}{llllll}
\langle0,24,47,61\rangle & \langle0,70,74,9\rangle & \langle0,41,16,36\rangle & \langle0,22,25,53\rangle  &
\langle0,28,83,66\rangle & \langle0,39,52,50\rangle \\ \langle0,57,51,75\rangle & \langle0,14,49,5\rangle  &
\langle0,58,68,27\rangle & \langle0,72,6,45\rangle & \langle0,20,54,84\rangle & \langle0,43,33,32\rangle \\
\langle0,85,69,40\rangle& \langle0,1,80,8\rangle \\
\end{array}$$

\noindent $t=20$:
$$\begin{array}{llllll}
\langle0,16,46,63\rangle & \langle0,34,122,9\rangle & \langle0,10,43,74\rangle & \langle0,92,85,7\rangle &
\langle0,27,71,87\rangle & \langle0,17,2,8\rangle \\ \langle0,5,109,82\rangle & \langle0,23,37,55\rangle &
\langle0,24,22,94\rangle & \langle0,6,21,79\rangle & \langle0,72,35,69\rangle & \langle0,58,1,84\rangle  \\
\langle0,18,54,59\rangle & \langle0,11,102,3\rangle & \langle0,45,93,97\rangle & \langle0,4,57,80\rangle  &
\langle0,13,42,62\rangle & \langle0,56,81,68\rangle \\ \langle0,20,95,39\rangle & \langle0,83,50,61\rangle  &
\end{array}$$

For $t=3$, the $[2,2]$-GDC$(6)$ of type $3^7$ constructed in Lemma~\ref{[2,2]-GDC:sg^u} is the desired code.

For $t\in \{12,15\}$, take a $[2,2]$-GDC$(6)$ of type $18^4$ (see
Lemma~\ref{[2,2]-GDC:sg^u}) or type $18^5$. Adjoin $3$ ideal points,
and fill in the groups together with these ideal points with
$[2,2]$-GDC$(6)$s of type $3^7$ to obtain the desired code.
\end{proof}

\section{Determining the Value of $A_3(n,6,[3,1])$}

In this section,
 we focus on the determination for the exact values of  $A_3(n,6,[3,1])$ for all positive integers $n$.

\subsection{Some $[3,1]$-GDC(6)s}

\begin{lemma}
\label{s3^3t+1} There is a $[3,1]$-GDC$(6)$ of type $3^{3t+1}$
with size $3t(3t+1)$ for each $t\in \{2,6\}$.
\end{lemma}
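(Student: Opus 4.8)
The plan is to build both codes by a single cyclic (direct) construction, exactly in the style of Lemma~\ref{[2,2]-GDC:sg^u}. For each $t\in\{2,6\}$ set $X=\bbZ_{9t+3}$ and take as groups the $3t+1$ arithmetic progressions
\begin{equation*}
\mathcal{G}=\bigl\{\{i,\,i+(3t+1),\,i+2(3t+1)\}:0\le i\le 3t\bigr\},
\end{equation*}
so that two points lie in a common group precisely when they are congruent modulo $3t+1$. The code $\mathcal{C}$ is then obtained by developing a list of $t$ base codewords $\langle a,b,c,d\rangle$ (positions $a,b,c$ carrying the symbol $1$ and position $d$ the symbol $2$) under the cyclic shift $+1\pmod{9t+3}$. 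Since a nonzero shift moves every coordinate of a finite-support vector, no base codeword can be fixed by a nontrivial shift, so each base codeword generates a full orbit of length $9t+3$; with $t$ base codewords this yields $t(9t+3)=3t(3t+1)$ codewords, matching the required size. Thus the counting is automatic and the construction reduces to exhibiting $t$ suitable base codewords for $t=2$ (point set $\bbZ_{21}$, groups with difference $7$) and for $t=6$ (point set $\bbZ_{57}$, groups with difference $19$).

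The two conditions to enforce on the base codewords are the group condition and the distance condition. The \emph{group} condition says every codeword meets each group at most once; because groups are the congruence classes mod $3t+1$, this is simply the requirement that the four entries $a,b,c,d$ of each base codeword be pairwise incongruent modulo $3t+1$ (cyclic shifting preserves congruence classes, so checking the base codewords suffices). For the \emph{distance} condition, I would first record the overlap analysis specific to weight $4$, composition $[3,1]$, distance $6$: if two distinct codewords $u,v$ share the coordinate set $A$ of common support, then $d_H(u,v)=8-2|A|+|A_{\neq}|$, where $A_{\neq}\subseteq A$ is the set of shared coordinates carrying different symbols. Hence $d_H(u,v)\ge 6$ forces $|A|\le 1$, or else $|A|=2$ with \emph{both} shared coordinates carrying different symbols; since each codeword has a unique $2$, the latter means the two codewords swap roles at those two coordinates (the $2$ of one sits over a $1$ of the other, and vice versa). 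Cases $|A|\ge 3$ are impossible.

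Using the $+1$-invariance of the whole code, verifying this for all pairs of codewords reduces to a finite check among the base codewords: for each ordered pair of base words $(u,v)$, and for each shift $k$ (with $k\not\equiv 0$ when $u=v$), one tests that $u$ and $v+k$ do not share three or more coordinates and do not share two coordinates with a matching symbol. Equivalently, one verifies that the associated ``value-labelled difference'' equations among the base codewords have no bad solutions. The main obstacle is precisely locating base codewords that meet all these constraints simultaneously—this is a constrained combinatorial search (for $t=6$ over $\bbZ_{57}$ it is naturally done by computer), and its correctness rests on running through the finitely many translate-overlap conditions above. Once a valid list of $t$ base codewords is displayed, the group condition and the distance-$6$ condition are confirmed by the direct finite verification just described, completing the proof for $t\in\{2,6\}$.
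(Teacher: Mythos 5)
Your verification framework is sound: the group condition does reduce to the four entries of each base codeword being pairwise incongruent modulo $3t+1$, and your distance analysis $d_H(u,v)=8-2|A|+|A_{\neq}|$ correctly identifies the only admissible overlaps (at most one common support position, or exactly two common positions at which the roles of the symbols $1$ and $2$ are swapped). The observation that the unique position of the symbol $2$ forces every nontrivial translate of a codeword to differ from it, so that all orbits are full, is also correct. This is the same direct-construction-plus-difference-method reasoning that underlies the paper's proof.

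However, there is a genuine gap: you never exhibit the base codewords, and for an existence lemma of this kind the explicit codewords \emph{are} the proof. Everything you write reduces the problem to ``find $t$ base codewords satisfying finitely many translate-overlap conditions,'' but asserting that such a search will succeed is not a proof that it does; the paper discharges this by listing six base codewords over $\bbZ_{21}$ and six over $\bbZ_{57}$. Moreover, your specific parameterization for $t=2$ is suspect: you commit to $2$ base codewords developed by $+1\pmod{21}$ (full orbits of length $21$), whereas the paper instead develops $6$ base codewords by $+3\pmod{21}$, i.e.\ under an automorphism group of order $7$. Both yield $42$ codewords, but it is not established (and may well be false) that a fully cyclic solution with only $2$ base orbits exists over $\bbZ_{21}$; the authors' choice of the coarser automorphism suggests the stricter search space you propose may be empty. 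So even granting the search, your plan for $t=2$ might fail as stated, while for $t=6$ your setup coincides with the paper's but still lacks the exhibited codewords.
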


\begin{proof}
Let $X_t=\bbZ_{9t+3}$, and ${\cal G}_t=\{\{0,3t+1,6t+2\}+i:0\leq
i\leq 3t\}$. Then $(X_t,{\cal G}_t,{\cal C}_t)$ is a
$[3,1]$-GDC$(6)$ of type $3^{3t+1}$ with size $3t(3t+1)$, where
 ${\cal C}_2$ is obtained by developing the elements of
$\bbZ_{21}$ in the following codewords $+3\pmod{21}$, and ${\cal
C}_6$ is obtained by developing the elements of $\bbZ_{57}$ in
the following codewords $+1\pmod{57}$.

\noindent $t=2$: $\langle0,1,13,18\rangle$
$\langle1,4,14,5\rangle$ $\langle0,10,12,16\rangle$
$\langle2,5,7,1\rangle$ $\langle0,2,8,3\rangle$
$\langle0,6,17,5\rangle$

\noindent $t=6$: $\langle0,9,7,36\rangle$
$\langle0,1,6,21\rangle$ $\langle0,26,8,30\rangle$
$\langle0,3,44,40\rangle$ $\langle0,11,43,28\rangle$
$\langle0,10,33,45\rangle$
\end{proof}

\begin{lemma}
\label{s6^t} There exists a $[3,1]$-GDC$(6)$ of type $6^{3t+1}$ with size $12t(3t+1)$
for each $2\leq t\leq 5$.
\end{lemma}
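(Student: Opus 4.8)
The plan is to give a direct cyclic construction, entirely parallel to the one used for type $3^{3t+1}$ in Lemma~\ref{s3^3t+1}. For each $t\in\{2,3,4,5\}$ I would take the point set $X_t=\bbZ_{18t+6}$ and let the groups be the $3t+1$ cosets of the unique subgroup of order $6$, namely
\begin{equation*}
\G_t=\{\{i,\,i+(3t+1),\,i+2(3t+1),\,\ldots,\,i+5(3t+1)\}:0\le i\le 3t\}.
\end{equation*}
A $[3,1]$-GDC$(6)$ of type $6^{3t+1}$ on this group set is then produced by listing $2t$ base codewords (each a $4$-tuple whose first three entries record the positions carrying the symbol $1$ and whose last entry carries the symbol $2$) and developing them under $+1\pmod{18t+6}$. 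Since $2t\cdot(18t+6)=12t(3t+1)$, the size is correct provided every base codeword has a full orbit of length $18t+6$ and the $2t$ orbits are pairwise disjoint, which I would read off from the listed codewords.

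The verification splits into three checks. The composition $[3,1]$ is immediate from the shape of the base codewords and is preserved by the shift. The group condition $\|u|_{G}\|\le 1$ amounts to requiring that, within each base codeword, the six pairwise differences of its four positions avoid the order-$6$ subgroup $H=\langle 3t+1\rangle$; since $+1$ permutes the cosets of $H$, this is shift-invariant and need only be tested on the base blocks. The last and essential check is the minimum distance.

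For the distance I would use the observation that two weight-$4$ codewords $u,v$ satisfy $d_H(u,v)=8-s-m$, where $s$ is the number of shared support positions and $m$ the number of those positions at which $u$ and $v$ carry the same symbol. Thus $d_H\ge 6$ is equivalent to $s+m\le 2$, which forbids two codewords from sharing two symbol-$1$ positions, and forbids a shared position from occurring together with any second coincidence. In the cyclic setting each such coincidence between a shift of base block $u^{(i)}$ and a shift of base block $u^{(j)}$ corresponds to a difference $u^{(i)}_p-u^{(j)}_q$ being equal to the relative shift, so the distance requirement reduces to a finite list of difference inequalities among the symbol-$1$ positions, the symbol-$2$ positions, and the mixed pairs of the $2t$ base blocks. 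The main obstacle is therefore not this verification, which is a routine and easily machine-checked inspection of differences, but the discovery of base codewords meeting all the constraints at once; these I would find by a short computer search, as with the other small GDC tables in the paper, and then exhibit the resulting $2t$ base codewords for each of $t=2,3,4,5$.
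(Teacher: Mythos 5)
Your proposal matches the paper's proof essentially exactly: the paper also takes $X_t=\bbZ_{18t+6}$ with groups $\{0,3t+1,6t+2,\dots,15t+5\}+i$ for $0\le i\le 3t$ (the cosets of the order-$6$ subgroup) and develops $2t$ base codewords under $+1\pmod{18t+6}$, listing the explicit base codewords for $t=2,3,4,5$ in a table. The only thing missing from your write-up is the actual exhibition of those base codewords, which is precisely the computer-search output you defer to.
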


\begin{proof}
Let $X_t=\bbZ_{18t+6}$, and ${\cal
G}_t=\{\{0,3t+1,6t+2,\dots,15t+5\}+i:0\leq i\leq 3t\}$. Then
$(X_t,{\cal G}_t,{\cal C}_t)$ is a $[3,1]$-GDC$(6)$ of type
$6^{3t+1}$, where ${\cal C}_t$ is obtained by developing the
elements of $\bbZ_{18t+6}$ in the codewords in Table~\ref{t6^t}
$+1\pmod{18t+6}$.
\end{proof}

\begin{table*}
\centering
\renewcommand{\arraystretch}{1}
\setlength\arraycolsep{3pt} \caption{Base Codewords of Small
$[3,1]$-GDC$(6)$s of type $6^{3t+1}$ in Lemma~\ref{s6^t}}
\label{t6^t}
\begin{tabular}{c|l}
\hline $t$ & {\hfill Codewords \hfill} \\
\hline $2$ & $\begin{array}{lllllll}
\langle0,38,32,1\rangle & \langle0,12,34,29\rangle & \langle0,23,26,25\rangle & \langle0,9,27,40\rangle &
\end{array}$ \\
\hline $3$ & $\begin{array}{lllllll}
\langle0,47,43,15\rangle & \langle0,21,23,35\rangle & \langle0,22,55,41\rangle & \langle0,34,31,25\rangle & \langle0,36,44,45\rangle & \langle0,7,18,6\rangle &
\end{array}$ \\
\hline $4$ & $\begin{array}{llllll}
\langle0,19,47,33\rangle & \langle0,49,69,7\rangle & \langle0,34,40,10\rangle & \langle0,46,11,23\rangle & \langle0,1,74,62\rangle & \langle0,2,53,70\rangle \\ \langle0,37,22,30\rangle & \langle0,3,21,45\rangle \\
\end{array}$ \\
\hline $5$ & $\begin{array}{llllll}
\langle0,54,1,87\rangle & \langle0,71,12,63\rangle & \langle0,7,34,65\rangle & \langle0,76,24,29\rangle & \langle0,92,19,41\rangle & \langle0,11,14,81\rangle \\ \langle0,57,36,83\rangle &
\langle0,66,94,8\rangle & \langle0,35,18,13\rangle & \langle0,6,56,15\rangle &
\end{array}$ \\
\hline
\end{tabular}
\end{table*}

\begin{lemma}
\label{s9^t} There is a $[3,1]$-GDC$(6)$ of type $9^t$ with size
$9t(t-1)$ for each $t\in \{4,5,6,8,9,11,$ $12,14,15,18,23\}$.
\end{lemma}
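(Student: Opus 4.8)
The plan is to settle all eleven values of $t$ by a single direct construction, developing a short list of base codewords cyclically, exactly as in Lemmas~\ref{s3^3t+1} and~\ref{s6^t}. For each $t$ I would take $X_t=\bbZ_{9t}$ and let the groups be the cosets of the unique subgroup of order $9$, namely $\G_t=\{\{0,t,2t,\ldots,8t\}+i:0\le i\le t-1\}$, so the type is $9^t$. I record a codeword as $\langle a,b,c;z\rangle$, meaning the symbol $1$ sits at $a,b,c$ and the symbol $2$ at $z$; developing each base codeword under $x\mapsto x+1\pmod{9t}$ produces the code $\C_t$. For each $t$ I would exhibit exactly $t-1$ base codewords in a table.

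The size is then automatic. Because a codeword has a \emph{unique} coordinate carrying the symbol $2$, any nonzero translate moves that coordinate, so the stabiliser of every codeword in $\bbZ_{9t}$ is trivial and each orbit has full length $9t$; hence $t-1$ base codewords yield exactly $9t(t-1)$ codewords, as claimed (the distance check below also rules out coincidences between orbits). The composition $[3,1]$ is built in, and the group condition $\|u\mid_{G}\|\le 1$ amounts to the four coordinates of each base codeword being pairwise distinct modulo $t$; since $x\mapsto x+1$ permutes the $t$ groups cyclically, it suffices to verify this for the base codewords.

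The real content is the distance requirement. For two weight-$4$ words of composition $[3,1]$ one has $d_H(u,v)=2(4-m)+\delta$, where $m=|\supp(u)\cap\supp(v)|$ and $\delta$ counts the shared coordinates where the two symbols differ; thus $d_H\ge 6$ fails precisely when $u,v$ share three or more points, or share exactly two points without \emph{both} being ``crossing'' points (one word carrying $1$ and the other $2$ at each). A pigeonhole count shows that the $12(t-1)$ ordered pairwise differences of the four points, taken over all base codewords, cannot be distinct once $t\ge 4$ (since $12(t-1)>9t-1$), so two-point coincidences are in fact forced; the distance condition therefore cannot be reduced to ``any two codewords meet in at most one point'' and must track the symbols, not merely the supports. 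Under cyclic development this reduces to a finite, value-sensitive difference condition: comparing each ordered pair of base codewords, I would require that no relative shift $\tau\in\bbZ_{9t}$ produce a forbidden overlap, where overlaps are classified by their $1$-$1$, $1$-$2$ and $2$-$1$ differences and only crossings at the distinguished $2$-coordinates are admissible.

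The main obstacle is essentially computational and twofold: first, locating base-codeword lists meeting this condition---made delicate precisely because coincidences are unavoidable and every one of them must be of the benign crossing type---and second, verifying it, which I would do by checking the internal differences of each base codeword together with the roughly $16\binom{t-1}{2}$ cross-differences. For $t=4,5,6$ this is feasible by hand, but for $t=14,15,18,23$ I expect to rely on a computer search both to produce the lists and to confirm distance $6$. The intermediate values of $t$ absent here (such as $7,10,13,16,17$) are presumably obtained recursively through the Fundamental and Inflation Constructions together with suitable PBDs and GDDs, rather than by this direct method.
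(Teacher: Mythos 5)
Your strategy is the same as the paper's: Lemma~\ref{s9^t} is proved there by exactly the construction you describe --- point set $\bbZ_{9t}$, groups $\{0,t,2t,\dots,8t\}+i$, and a table of base codewords (Table~\ref{t9^t}) developed cyclically --- and your specification of what must be checked is correct: the stabiliser argument for full orbit length via the unique $2$-coordinate, the identity $d_H(u,v)=2(4-m)+\delta$, the resulting rule that two codewords may share at most two support positions and a two-point overlap must be a $1$/$2$ crossing at both shared positions, and the pigeonhole count $12(t-1)>9t-1$ showing such crossings are unavoidable. (In fact the count is exactly tight: the $12(t-1)$ ordered differences must cover the $9(t-1)$ residues not divisible by $t$, with the excess $3(t-1)$ absorbed precisely by pairing each $1\rightarrow 2$ difference with a $2\rightarrow 1$ difference of the same value.) The genuine gap is that for an existence lemma of this type the base codewords \emph{are} the proof: what you have written is a correct formulation of the search problem, not a solution of it, and no part of your text certifies that suitable codewords exist for any of the eleven values of $t$.

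There is also one concrete point where your plan deviates from what the paper actually does and may fail. For $t=4$ and $t=5$ you commit to $t-1$ base codewords under full development $+1\pmod{9t}$, whereas the paper develops $18$ base codewords by $+6\pmod{36}$ and $12$ base codewords by $+3\pmod{45}$, i.e.\ it uses short orbits under a proper subgroup for exactly these two cases. With only $3$ (resp.\ $4$) full-length base codewords the tight difference packing above leaves almost no freedom, and the authors evidently could not (or did not) realise it; if full development is in fact infeasible there, you would need the coarser development, for which your orbit-length and size accounting must be redone (orbit length $6$ rather than $36$, and $18\times 6$ rather than $3\times 36$ codewords for $t=4$). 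For $t\in\{6,8,9,11,12,14,15,18,23\}$ your plan matches the paper's proof exactly, and the intermediate values $7,10,13,\dots$ you mention are indeed handled recursively in Theorem~\ref{9^t}, outside this lemma.
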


\begin{proof}
Let $X_t=\bbZ_{9t}$, and ${\cal
G}_t=\{\{0,t,2t,\dots,8t\}+i:0\leq i\leq t-1\}$. Then
$(X_t,{\cal G}_t,{\cal C}_t)$ is a $[3,1]$-GDC$(6)$ of type
$9^t$ with size $9t(t-1)$, where ${\cal C}_4$ is obtained by
developing the elements of $\bbZ_{36}$ in the codewords in
Table~\ref{t9^t} $+6\pmod{36}$, ${\cal C}_5$ is obtained by
developing the elements of $\bbZ_{45}$ in the codewords in
Table~\ref{t9^t} $+3\pmod{45}$, and  ${\cal C}_t$ with $t\geq 6$ is
obtained by developing the elements of $\bbZ_{9t}$ in the
codewords in Table~\ref{t9^t} $+1\pmod{9t}$.
\end{proof}

\begin{table*}
\centering
\renewcommand{\arraystretch}{1}
\setlength\arraycolsep{3pt} \caption{Base Codewords of Small
$[3,1]$-GDC$(6)$s of Type $9^t$ in Lemma~\ref{s9^t}}
\label{t9^t}
\begin{tabular}{c|l}
\hline $t$ & {\hfill Base Codewords \hfill}
\\
\hline $4$ & $\begin{array}{llllll}
\langle4,2,35,29\rangle & \langle5,19,34,16\rangle & \langle3,26,0,21\rangle & \langle0,15,5,18\rangle & \langle0,25,35,34\rangle & \langle4,15,17,6\rangle \\ \langle5,2,24,11\rangle &
\langle3,22,16,25\rangle & \langle2,23,25,32\rangle & \langle5,27,26,32\rangle & \langle4,23,30,5\rangle & \langle1,24,18,15\rangle \\ \langle2,13,16,31\rangle & \langle4,3,9,18\rangle &
\langle2,28,21,19\rangle & \langle2,7,0,9\rangle & \langle0,1,31,22\rangle & \langle1,2,27,20\rangle
\end{array}$
\\
\hline $5$ & $\begin{array}{llllll}
\langle1,17,35,9\rangle & \langle0,16,29,37\rangle & \langle2,8,44,1\rangle & \langle1,34,37,38\rangle & \langle0,23,36,19\rangle & \langle1,32,39,8\rangle \\ \langle0,4,31,3\rangle & \langle0,11,44,42\rangle &
\langle1,7,29,33\rangle & \langle1,3,20,44\rangle & \langle0,6,27,8\rangle & \langle0,12,34,13\rangle
\end{array}$
 \\
\hline $6$ & $\begin{array}{llllll}
\langle0,1,35,38\rangle & \langle0,14,29,46\rangle & \langle0,11,52,8\rangle & \langle0,26,47,16\rangle & \langle0,4,9,31\rangle \\
\end{array}$
 \\
\hline $8$ & $\begin{array}{llllll}
\langle0,58,71,19\rangle & \langle0,55,66,36\rangle & \langle0,26,28,67\rangle & \langle0,43,65,31\rangle & \langle0,37,47,52\rangle & \langle0,3,21,12\rangle \\ \langle0,4,49,34\rangle
\end{array}$
 \\
\hline $9$ & $\begin{array}{llllll}
\langle0,14,55,61\rangle & \langle0,22,71,20\rangle & \langle0,8,58,60\rangle & \langle0,11,48,77\rangle & \langle0,53,56,68\rangle & \langle0,35,74,69\rangle \\ \langle0,64,80,4\rangle & \langle0,19,43,13\rangle
\end{array}$
 \\
\hline $11$ & $\begin{array}{llllll}
\langle0,40,56,93\rangle & \langle0,41,51,98\rangle & \langle0,25,90,8\rangle & \langle0,2,29,71\rangle & \langle0,45,84,91\rangle & \langle0,63,95,26\rangle\\ \langle0,13,31,92\rangle & \langle0,23,35,73\rangle &
\langle0,14,19,20\rangle & \langle0,3,24,52\rangle
\end{array}$
 \\
\hline $12$ & $\begin{array}{llllll}
\langle0,80,83,46\rangle & \langle0,73,93,107\rangle & \langle0,44,89,22\rangle & \langle0,91,95,52\rangle & \langle0,10,100,59\rangle & \langle0,9,79,85\rangle \\ \langle0,75,77,23\rangle &
\langle0,82,103,32\rangle & \langle0,42,53,92\rangle & \langle0,27,57,43\rangle & \langle0,40,101,102\rangle
\end{array}$
\\
\hline $14$ & $\begin{array}{llllll}
\langle0,8,47,77\rangle & \langle0,45,46,23\rangle & \langle0,10,29,63\rangle & \langle0,91,111,73\rangle & \langle0,89,110,41\rangle & \langle0,3,62,71\rangle \\ \langle0,52,83,22\rangle &
\langle0,27,93,76\rangle & \langle0,75,101,92\rangle & \langle0,119,124,48\rangle & \langle0,86,90,18\rangle & \langle0,12,94,6\rangle \\ \langle0,13,24,85\rangle
\end{array}$
\\
\hline $15$ & $\begin{array}{lllll}
\langle0,55,116,88\rangle & \langle0,66,76,104\rangle & \langle0,23,123,44\rangle & \langle0,130,134,82\rangle & \langle0,77,85,51\rangle \\ \langle0,16,78,84\rangle & \langle0,108,111,67\rangle &
\langle0,22,42,114\rangle & \langle0,7,36,70\rangle & \langle0,49,89,102\rangle \\ \langle0,124,133,41\rangle & \langle0,14,32,79\rangle & \langle0,25,96,122\rangle & \langle0,37,118,31\rangle
\end{array}$
\\
\hline $18$ & $\begin{array}{lllll}
\langle0,28,91,152\rangle & \langle0,45,49,52\rangle & \langle0,70,79,123\rangle & \langle0,80,140,115\rangle & \langle0,46,65,39\rangle \\ \langle0,37,112,159\rangle & \langle0,156,157,20\rangle &
\langle0,73,132,38\rangle & \langle0,14,149,81\rangle & \langle0,42,111,142\rangle \\ \langle0,104,138,40\rangle & \langle0,11,66,95\rangle & \langle0,32,48,110\rangle & \langle0,119,160,129\rangle &
\langle0,56,141,12\rangle \\ \langle0,139,147,86\rangle & \langle0,17,74,150\rangle
\end{array}$
\\
\hline $23$ & $\begin{array}{lllll}
\langle0,72,189,190\rangle & \langle0,14,116,17\rangle & \langle0,76,144,175\rangle & \langle0,19,169,176\rangle & \langle0,24,59,121\rangle \\ \langle0,103,147,168\rangle & \langle0,48,101,98\rangle &
\langle0,28,141,70\rangle & \langle0,85,96,185\rangle & \langle0,120,153,32\rangle \\ \langle0,64,198,206\rangle & \langle0,52,177,58\rangle & \langle0,15,56,201\rangle & \langle0,43,123,83\rangle &
\langle0,4,55,162\rangle \\ \langle0,61,191,110\rangle & \langle0,5,133,170\rangle & \langle0,34,129,109\rangle & \langle0,10,36,81\rangle & \langle0,178,180,200\rangle \\ \langle0,47,114,39\rangle &
\langle0,182,194,124\rangle
\end{array}$
\\
\hline
\end{tabular}
\end{table*}

\begin{theorem}
\label{9^t} There is a $[3,1]$-GDC$(6)$ of type $9^t$ with size
$9t(t-1)$ for each
$t\geq 4$.
\end{theorem}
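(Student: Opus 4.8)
The plan is to run the standard PBD-closure machine with the Fundamental Construction (Construction~\ref{FundCtr}), weighting every point by $9$. If $(X,\B)$ is a $(t,\{4,5,6,7,8,9\},1)$-PBD and we give weight $9$ to each point, then each block $B$ requires an ingredient $[3,1]$-GDC$(6)$ of type $9^{|B|}$, and the resulting GDC has type $9^t$. The size works out automatically by a pair-counting argument: a block of size $k$ contributes an ingredient of size $9k(k-1)=18\binom{k}{2}$, so the total is $18\sum_{B}\binom{|B|}{2}=18\binom{t}{2}=9t(t-1)$, which is exactly the claimed size. Thus the only real content is to assemble the ingredient set $\{9^4,9^5,9^6,9^7,9^8,9^9\}$ and to cover the values of $t$ that the PBD misses. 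Lemma~\ref{s9^t} already supplies $9^4,9^5,9^6,9^8,9^9$ directly; the missing case $9^7$ I would obtain from the $[3,1]$-GDC$(6)$ of type $3^7$ in Lemma~\ref{s3^3t+1} (the case $t=2$, of size $42$) by applying the Inflation Construction (Construction~\ref{Inflation}) with a TD$(4,3)$ (which exists by Theorem~\ref{TD}), yielding type $9^7$ of size $42\cdot 9=378=9\cdot 7\cdot 6$.

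With the ingredients in hand, I would invoke Theorem~\ref{PBD4-9}: a $(t,\{4,5,6,7,8,9\},1)$-PBD exists for every $t\ge 10$ except $t\in\{10,11,12,14,15,18,19,23\}$. Since this exception set lies entirely in $[10,23]$, the Fundamental Construction settles \emph{all} $t\ge 24$ at once, together with the non-exceptional values $t\in\{13,16,17,20,21,22\}$ in the range $[10,23]$. The residual cases are therefore the small orders $t\in\{4,5,6,7,8,9\}$ and the PBD exceptions $\{10,11,12,14,15,18,19,23\}$. Among these, $t\in\{4,5,6,8,9\}$ and $t\in\{11,12,14,15,18,23\}$ are exactly the orders provided directly by Lemma~\ref{s9^t}, and $t=7$ is the inflated design just described.

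This leaves precisely $t=10$ and $t=19$. For $t=19$ I would again inflate: Lemma~\ref{s3^3t+1} with $t=6$ gives a $[3,1]$-GDC$(6)$ of type $3^{19}$ of size $342$, and inflating it with a TD$(4,3)$ produces type $9^{19}$ of size $342\cdot 9=3078=9\cdot 19\cdot 18$. For $t=10$, which is a PBD exception with no convenient $3^{10}$ available, I would use the equivalence (noted in the Group Divisible Codes subsection) that an optimal $(10,6,[3,1])_3$-code is a $[3,1]$-GDC$(6)$ of type $1^{10}$; by Table~\ref{presult} such a code of size $A_3(10,6,[3,1])=10$ exists. Inflating it with a TD$(4,9)$ (which exists by Theorem~\ref{TD}) gives type $9^{10}$ of size $10\cdot 9^2=810=9\cdot 10\cdot 9$, as required. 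In every case composition $[3,1]$ and distance $6$ are preserved, since both the Fundamental and Inflation Constructions maintain constant composition and all invocations satisfy the hypothesis $d\le 2(w-1)=6$.

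I expect the main obstacle to be the two isolated orders that the uniform PBD engine cannot reach, namely $9^7$ and $9^{10}$ (and to a lesser extent $9^{19}$): the PBD of order $10$ simply does not exist and there is no directly constructed $3^{10}$ or $9^7$ to feed in. The key observation that unlocks them is that inflating a suitable smaller object by the right factor ($3^7$ and $3^{19}$ by weight $3$, and the optimal length-$10$ code by weight $9$) reproduces \emph{exactly} the optimal size $9t(t-1)$; verifying these size identities, and checking that the PBD exception set is confined to $[10,23]$ so that everything from $t=24$ onward is automatic, are the two points that require care, while the remainder of the argument is the routine PBD-closure bookkeeping.
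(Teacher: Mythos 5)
Your proposal is correct and follows essentially the same route as the paper: direct constructions from Lemma~\ref{s9^t} for the small and exceptional orders, inflation of the type $3^7$ and $3^{19}$ GDCs by weight $3$ and of the optimal $(10,6,[3,1])_3$-code by weight $9$ for $t\in\{7,10,19\}$, and the Fundamental Construction with weight $9$ applied to a $(t,\{4,5,6,7,8,9\},1)$-PBD for all remaining $t\geq 10$. The only difference is that you spell out the size bookkeeping and the TD existence checks, which the paper leaves implicit.
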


\begin{proof}
For $t\in \{4,5,6,8,9,11,12,14,15,18,23\}$, the desired codes
are constructed in Lemma~\ref{s9^t}.
For each $t\in \{7,19\}$, the desired code can be obtained by
inflating a $[3,1]$-GDC$(6)$ of type $3^t$ from
Lemma~\ref{s3^3t+1} with weight $3$.
For $t=10$, inflate a $[3,1]$-GDC$(6)$ of type $1^{10}$ with weight
$9$ to obtain the desired code.
For each $t\geq 10$ and $t\not\in \{10,11,12,14,15,18,19,$
$23\}$, take a $(t,\{4,5,6,7,8,9\},1)$-PBD from
Theorem~\ref{PBD4-9}, and apply the Fundamental Construction with weight
$9$ to obtain a $[3,1]$-GDC$(6)$ of type $9^t$.
\end{proof}

\begin{lemma}
\label{s15^t} There is a $[3,1]$-GDC$(6)$ of type $15^4$ with size $300$.
\end{lemma}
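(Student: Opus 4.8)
The plan is to construct the code directly by developing a small set of base codewords under a cyclic automorphism, exactly as in Lemmas~\ref{s3^3t+1}--\ref{s9^t}. The target size fits the pattern of the earlier types: writing the optimal size of a $[3,1]$-GDC$(6)$ of type $g^u$ as $g^2u(u-1)/9$ (which reproduces $9t(t-1)$ for $9^t$, $12t(3t+1)$ for $6^{3t+1}$, and $3t(3t+1)$ for $3^{3t+1}$), one gets $225\cdot 4\cdot 3/9=300$ for $15^4$. Since $300=5\cdot 60$, this strongly suggests five full orbits under a cyclic group of order $60$.

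Concretely, I would set $X=\bbZ_{60}$ and take the four groups to be the residue classes modulo $4$, that is $\G=\{\{i,i+4,i+8,\dots,i+56\}:0\le i\le 3\}$, each of size $15$; this mirrors the generator $\{0,t,\dots,8t\}+i$ used for type $9^t$. Translation by $1$ permutes the four classes cyclically, so it is an automorphism of the partition $\G$. Moreover, a codeword $\langle a,b,c,d\rangle$ whose four entries form a complete residue system modulo $4$ has exactly one point in each group, and this property is preserved under $+1\pmod{60}$. Because the composition is $[3,1]$, any shift $s$ stabilizing $\langle a,b,c,d\rangle$ must fix its unique ``$2$''-coordinate $d$, forcing $s=0$; hence every orbit under $+1\pmod{60}$ has full length $60$. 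It therefore suffices to exhibit five base codewords $B_1,\dots,B_5$, each a transversal of the four classes, and to develop each of them $+1\pmod{60}$, which automatically yields $300$ distinct codewords of composition $[3,1]$ satisfying the group condition.

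The remaining task is to guarantee the minimum distance. By translation invariance, $d_H(B_i+r,\,B_j+r')=d_H(B_i,\,B_j+s)$ with $s=r'-r$, so the requirement reduces to the finite family of conditions $d_H(B_i,B_j+s)\ge 6$ for all $1\le i,j\le 5$ and all $s\in\bbZ_{60}$ (with $s\neq 0$ when $i=j$). If two supports meet in $t$ points, then $d_H=2(4-t)+m$, where $m$ is the number of value mismatches on the intersection; thus distance $6$ forces $t\le 2$, and when $t=2$ it forces $m=2$, i.e.\ both shared coordinates must be mismatches. With composition $[3,1]$ this means the two ``$2$''s sit at the two shared points, one in each codeword. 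I would translate these requirements into difference conditions on the entries of $B_1,\dots,B_5$, run a targeted search for a solution, and then verify the chosen base codewords directly.

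The main obstacle is the search, not the verification. Since $300$ already exceeds $15^2=225$, the number of blocks of a TD$(4,15)$, the supports \emph{cannot} pairwise meet in at most one point; the construction is forced to use the boundary configurations with $t=2$, in which two codewords share exactly two support points and the two ``$2$''s are placed at those shared points in opposite codewords. Controlling all such near-collisions simultaneously across the five orbits is the delicate part, and in practice the five base codewords are located by computer search. Once they are found, checking $d_H(B_i,B_j+s)\ge 6$ over the finitely many triples $(i,j,s)$ is a routine finite computation.
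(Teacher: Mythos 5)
Your setup is exactly the paper's: the code is built on $X=\bbZ_{60}$ with the four groups being the residue classes modulo $4$, the base codewords are transversals of those classes, and the code is obtained by cyclic development, with the minimum distance reduced to a finite check of $d_H(B_i,B_j+s)$. Your counting, the orbit-length argument via the unique ``$2$''-coordinate, the formula $d_H=2(4-t)+m$, and the observation that intersections of size $2$ with both shared coordinates mismatched are unavoidable are all correct. The one structural difference is that you insist on full $\bbZ_{60}$-invariance (five base codewords developed $+1\pmod{60}$), whereas the paper develops ten base codewords $+2\pmod{60}$, i.e.\ only under the index-$2$ subgroup; your more symmetric ansatz is a strictly smaller search space and is not guaranteed in advance to contain a solution, so committing to it adds risk without benefit.

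The genuine gap is that the proposal never produces the base codewords. For a lemma of this kind the explicit construction \emph{is} the proof: the paper exhibits the ten codewords $\langle1,15,40,42\rangle$, $\langle0,39,13,58\rangle$, \dots, $\langle0,55,53,30\rangle$ and the verification is then routine. A correctly formulated plan to ``run a targeted search'' establishes nothing until the search succeeds, and you yourself flag that controlling the $t=2$ near-collisions across all orbits is the delicate part. As written, the argument reduces the lemma to a finite problem but does not solve it, so the existence claim remains unproved.
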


\begin{proof}
Let $X=\bbZ_{60}$, and ${\cal
G}=\{\{i,i+4,i+8,\dots,i+56\}:0\leq i\leq 3\}$. Then $(X,{\cal
G},{\cal C})$ is a $[3,1]$-GDC$(6)$ of type $15^4$, where
${\cal C}$ is obtained by developing the elements of $\bbZ_{60}$
in the following codewords $+2\pmod{60}$.
$$\begin{array}{llllll}
\langle1,15,40,42\rangle & \langle0,39,13,58\rangle & \langle0,3,34,25\rangle &
\langle0,54,37,27\rangle & \langle1,0,7,10\rangle & \langle1,12,50,31\rangle \\
\langle0,14,45,23\rangle & \langle1,56,14,11\rangle & \langle1,2,19,52\rangle &
\langle0,55,53,30\rangle &
\end{array}$$
\end{proof}

\begin{lemma}
\label{GDC-18^4} There is a $[3,1]$-GDC$(6)$ of type $18^4$ with size $432$.
\end{lemma}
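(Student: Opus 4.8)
The plan is to construct the required $[3,1]$-GDC$(6)$ of type $18^4$ directly by the difference method, exactly as the type $15^4$ code was built in Lemma~\ref{s15^t}. A recursive route via the Inflation Construction (Construction~\ref{Inflation}) is blocked: inflating the $[3,1]$-GDC$(6)$ of type $9^4$ from Theorem~\ref{9^t} by weight $2$ needs a TD$(4,2)$, which does not exist by Theorem~\ref{TD}; the factorization $18=3\cdot 6$ needs a $[3,1]$-GDC$(6)$ of type $6^4$, which is not supplied by Lemma~\ref{s6^t} (whose series $6^{3t+1}$ begins at $t=2$); and $18=6\cdot 3$ needs a TD$(4,6)$, again nonexistent. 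The fundamental construction (Construction~\ref{FundCtr}) runs into the same missing ingredients, so a direct search is the natural choice. I would therefore set $X=\bbZ_{72}$ and take the four groups to be the residue classes modulo $4$, namely $\mathcal{G}=\{\{i,i+4,\dots,i+68\}:0\le i\le 3\}$, each of size $18$.

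Next I would exhibit a list of base codewords $\langle a,b,c,d\rangle$, with $a,b,c$ the $1$-positions and $d$ the $2$-position, and develop them cyclically over $\bbZ_{72}$. Translation by $1$ sends the class $i$ to the class $i+1\pmod 4$, and translation by $2$ pairs the classes $0\leftrightarrow 2$, $1\leftrightarrow 3$; in either case the partition $\mathcal{G}$ is permuted, so the composition $[3,1]$ and the group constraint $\|u|_{G_i}\|\le 1$ are preserved as long as each base codeword meets one group per coordinate. Since $432=6\times 72=12\times 36$, one needs $6$ base codewords of full orbit length under $x\mapsto x+1$, or $12$ base codewords of orbit length $36$ under $x\mapsto x+2$ (the latter matching the format of Lemma~\ref{s15^t}). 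The distance-six requirement reduces to a finite difference condition: writing $d_H(u,v)=8-2|A|-|B|$, where $A$ counts positions carrying a common nonzero value and $B$ positions carrying two distinct nonzero values, one must verify $2|A|+|B|\le 2$ for every pair of (distinct or translated) codewords. Concretely this means that each difference between two $1$-positions, and each difference from a $1$-position to a $2$-position, may be realized only the permitted number of times across the base codewords.

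The main obstacle is simply producing one valid set of base codewords: this is a finite search that I would carry out by computer, after which the verification above is a routine, if tedious, tally of differences. The target size is the packing-type value $18^2\cdot 4\cdot 3/9=432$, which is precisely the optimal size met with equality by the analogous directly constructed families $9^t$ in Theorem~\ref{9^t} and $15^4$ in Lemma~\ref{s15^t}; the construction is designed to attain it, and matching this count is the final consistency check.
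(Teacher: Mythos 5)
Your framework coincides exactly with the paper's proof: it also takes $X=\bbZ_{72}$ with the four groups being the residue classes modulo $4$, and develops six base codewords under $+1\pmod{72}$ to reach the size $6\times 72=432$. Your accounting of why development preserves the group constraint, and your reduction of the distance condition to $2|A|+|B|\le 2$ via $d_H(u,v)=8-2|A|-|B|$, are both correct.

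The genuine gap is that you never exhibit the six base codewords, and for a lemma of this kind the explicit data \emph{is} the proof: everything else you write is routine scaffolding that applies equally to any candidate list, valid or not. Deferring the search to a computer leaves the existence claim unestablished; a referee cannot check a construction that is not written down. The paper supplies the required list, namely $\langle43,69,22,20\rangle$, $\langle62,0,53,59\rangle$, $\langle18,1,55,68\rangle$, $\langle31,65,4,70\rangle$, $\langle66,24,25,27\rangle$, $\langle0,14,57,7\rangle$, and with these your verification procedure does go through. One further small caution: your claim that a TD$(4,6)$ does not exist is wrong (Theorem~\ref{TD} excludes only $m\in\{2,6\}$ for TD$(4,m)$... in fact it excludes $m=6$, so you are right for $m=6$ but your parallel claim about TD$(4,2)$ versus weight-$2$ inflation conflates the inflation parameter $m$ with the block size; the Inflation Construction needs a TD$(w,m)=\,$TD$(4,2)$, which indeed fails). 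These side remarks do not affect the main point: the direct construction is the right route, but the proof is incomplete without the base codewords.
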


\begin{proof}
Let $X=\bbZ_{72}$, and ${\cal
G}=\{\{i,i+4,i+8,\dots,i+68\}:0\leq i\leq 3\}$. Then $(X,{\cal
G},{\cal C})$ is a $[3,1]$-GDC$(6)$ of type $18^4$, where
${\cal C}$ is obtained by developing the elements of $\bbZ_{72}$
in the following codewords $+1\pmod{72}$.
$$\begin{array}{llllll}
\langle43, 69, 22, 20\rangle &
\langle62, 0, 53, 59\rangle &
\langle18, 1, 55, 68\rangle &
\langle31, 65, 4, 70\rangle &
\langle66, 24, 25, 27\rangle &
\langle0, 14, 57, 7\rangle \\
\end{array}$$
\end{proof}

\begin{lemma}
\label{GDC-39^4} There is a $[3,1]$-GDC$(6)$ of type $39^4$ with size $2028$.
\end{lemma}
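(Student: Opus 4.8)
The plan is to build the code by a direct cyclic (difference-family) construction, in exact analogy with the $15^4$ and $18^4$ cases of Lemmas~\ref{s15^t} and~\ref{GDC-18^4}. The natural alternative would be to apply the Inflation Construction (Construction~\ref{Inflation}) with a $\mathrm{TD}(4,13)$ (which exists by Theorem~\ref{TD}, since $13$ is prime and $13\notin\{2,6\}$) to a $[3,1]$-GDC$(6)$ of type $3^4$, or equivalently to run the Fundamental Construction over a $\mathrm{TD}(4,13)$ with weight $3$. However, both routes demand a full-size $[3,1]$-GDC$(6)$ of type $3^4$ of size $12$, and no such ingredient is among the GDCs constructed so far (note that $3^4=3^{3t+1}$ with $t=1$ is absent from Lemma~\ref{s3^3t+1}); moreover a $\mathrm{TD}(4,39)$ with the trivial type-$1^4$ ingredient only yields size $1521<2028$. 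So I would construct the code directly.

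Concretely, I would take $X=\bbZ_{156}$ and let $\G=\{\{i,i+4,i+8,\dots,i+152\}:0\le i\le 3\}$, i.e.\ the four cosets of $4\bbZ_{156}$; each has $39$ elements, so the type is $39^4$. I then seek a family of $13$ base codewords $\langle a,b,c,d\rangle$ (with $a,b,c$ carrying the symbol $1$ and $d$ the symbol $2$), each a transversal of $\G$, and develop every base codeword under the automorphism $x\mapsto x+1\pmod{156}$. Since $+1$ permutes the four groups cyclically and has order $156$, each orbit has length $156$, so the $13$ orbits produce $13\cdot156=2028$ codewords, all of composition $[3,1]$, matching the claimed size.

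Two things must then be verified. First, the GDC/transversal condition: each base codeword (hence every developed codeword) must meet each group at most once, which is exactly the requirement that the four entries of each base codeword be pairwise incongruent $\pmod 4$; this is immediate from the list. Second, and this is the real work, the minimum distance must be $6$. Writing $d_H=8-s-t$ for two weight-four, composition-$[3,1]$ codewords that agree in $s$ positions of their common support and have the same symbol in $t$ of those positions, distance $6$ is equivalent to $s+t\le 2$ for every pair. By the cyclic symmetry this reduces to a finite check on the internal differences within each base codeword and the external differences between distinct base codewords, together with the placement of the single ``$2$'': one must confirm that no nonzero shift aligns two base codewords so as to force $s+t\ge 3$.

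The main obstacle is precisely the existence and exhibition of such a $13$-element difference family. The search space over $\bbZ_{156}$ is large, and the need to control simultaneously both the support overlaps (governing $s$) and the symbol agreements (governing $t$, which is sensitive to where each codeword carries its ``$2$'') makes hand construction impractical; I therefore expect the base codewords to be produced by computer search. Once a valid family is in hand, checking the condition $s+t\le 2$ across all pairwise shifts is entirely routine, and the counting above then gives size exactly $2028$.
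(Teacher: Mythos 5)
Your construction framework is exactly the paper's: the proof of Lemma~\ref{GDC-39^4} takes $X=\bbZ_{156}$ with groups $\{\{i,i+4,\dots,i+152\}:0\le i\le 3\}$ and develops $13$ base codewords under $+1\pmod{156}$, giving $13\cdot 156=2028$ codewords, and your distance criterion $s+t\le 2$ (equivalently $d_H=8-s-t\ge 6$) is the correct thing to verify. The one thing your proposal does not supply is the actual list of $13$ base codewords, which is the entire content of the paper's proof; as you anticipate, these are exhibited explicitly (evidently found by computer search), so your argument is a correct blueprint but is not self-contained without that data.
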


\begin{proof}
Let $X=\bbZ_{156}$, and ${\cal G}=\{\{i,i+4,i+8,\dots,i+152\}:0\leq i\leq 3\}$. Then $(X,{\cal
G},{\cal C})$ is a $[3,1]$-GDC$(6)$ of type $39^4$, where
${\cal C}$ is obtained by developing the elements of $\bbZ_{156}$
in the following codewords $+1\pmod{156}$.
$$\begin{array}{lllll}
\langle95, 37, 0, 30\rangle &
\langle112, 97, 151, 22\rangle &
\langle71, 33, 122, 44\rangle &
\langle8, 131, 81, 82\rangle &
\langle56, 13, 98, 103\rangle \\
\langle44, 57, 66, 43\rangle &
\langle58, 99, 64, 113\rangle &
\langle128, 49, 18, 47\rangle &
\langle46, 28, 25, 35\rangle &
\langle17, 76, 110, 19\rangle \\
\langle23, 4, 49, 150\rangle &
\langle79, 17, 148, 74\rangle &
\langle0, 17, 103, 126\rangle &
\end{array}$$
\end{proof}

\begin{lemma}
\label{GDC-9^10(18&27)^1}
There exists a $[3,1]$-GDC$(6)$ of type $9^{10} m^1$ with size $10(81+2m)$ for each $m \in \{18,27\}$.
\end{lemma}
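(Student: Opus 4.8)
The plan is to give a direct construction in the same spirit as Lemmas~\ref{s9^t}, \ref{GDC-18^4} and \ref{[2,2]-GDC:sg^um^1}: realize the desired GDC as the orbit of a short list of base codewords under a cyclic-type automorphism group. Concretely, I would take the point set $X=I_{90+m}$, let the ten groups of size $9$ be the cosets $\{i,i+10,i+20,\dots,i+80\}$ for $0\le i\le 9$, and let $\{90,91,\dots,89+m\}$ be the single group of size $m$. As the automorphism group I would use $G=\langle\alpha\rangle$, where $\alpha$ acts as $x\mapsto x+1\pmod{90}$ on the first $90$ points and permutes the last $m$ points in blocks of nine, for instance as $(90\ 91\ \cdots\ 98)(99\ \cdots\ 107)$ when $m=18$ and as three disjoint $9$-cycles when $m=27$. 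Then $\alpha$ has order $90$, and since every codeword of a GDC meets the single group $\{90,\dots,89+m\}$ in at most one point it contains at least three points of $\bbZ_{90}$; so, for base codewords chosen in general position, each has a full orbit of length $90$ under $G$.

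The size bookkeeping then fixes the number of base codewords. Since $10(81+2m)=810+20m$ and each orbit has length $90$, I would need exactly $(810+20m)/90$ base codewords, that is, $13$ codewords when $m=18$ and $15$ when $m=27$. Each base codeword is written in the form $\langle a_1,a_2,a_3,a_4\rangle$, with $a_1,a_2,a_3$ the positions carrying the symbol $1$ and $a_4$ the position carrying $2$; I would require its four entries to lie in four distinct groups, so that the developed code has constant composition $[3,1]$ and satisfies the group condition $\|u|_{G_j}\|\le 1$. The minimum-distance-six requirement reduces, as in the other difference constructions, to a finite condition on the difference table: for two distinct weight-four codewords of composition $[3,1]$ one has $d_H\ge 6$ exactly when $2a+b\le 2$, where $a$ and $b$ count the common support positions carrying equal, respectively unequal, symbols. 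I would therefore verify that $d_H(c,\alpha^k c')\ge 6$ for all base codewords $c,c'$ (including $c=c'$) and all admissible nonidentity shifts, which is a finite check.

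The main obstacle is the search itself: one must exhibit $13$ (respectively $15$) base codewords that simultaneously respect the group partition, realize composition $[3,1]$, and leave no forbidden difference coincidence \emph{across the entire $G$-orbit}, while covering exactly enough pairs that the total size is $810+20m$ and not less. This is a constrained combinatorial search best delegated to a computer; the delicate part is the global interaction of the ten size-$9$ groups with the extra group of size $m$, since the $m/9$ nine-point cycles must interleave consistently with the order-$90$ shift so that no two developed codewords collide. As an alternative that sidesteps the search, one could invoke Construction~\ref{FundCtr} with weight $9$ applied to a $\{4,5,6,7,8,9\}$-GDD of type $1^{10}2^1$ (respectively $1^{10}3^1$), filling each block $A$ with a $[3,1]$-GDC$(6)$ of type $9^{|A|}$ from Theorem~\ref{9^t}. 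Here the ingredient on $A$ has size $9|A|(|A|-1)=18\binom{|A|}{2}$, so the total size equals $18$ times the number of cross-group pairs of the master, namely $18\big(\binom{10+m/9}{2}-\binom{m/9}{2}\big)=810+20m$, matching the claim exactly. With this route the only residual difficulty is to produce the master GDD on $12$ (respectively $13$) points with all blocks of size at least four, equivalently a clique decomposition of $K_{12}\setminus e$ (respectively the corresponding graph on $13$ points) into cliques of size at least four.
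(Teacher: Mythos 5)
Your framework is the right one and in fact coincides with the paper's: the paper proves this lemma by a direct construction on $I_{108}$ (resp.\ $I_{117}$) with exactly the groups you specify and the automorphism $(0\ 1\ \cdots\ 89)(90\ \cdots\ 98)(99\ \cdots\ 107)(\cdots)$, developing precisely $13$ (resp.\ $15$) base codewords; your orbit-length and size bookkeeping ($13\cdot 90=1170=10(81+36)$, $15\cdot 90=1350=10(81+54)$) and your reduction of the distance-six condition to $2a+b\le 2$ are all correct. The problem is that the entire content of the lemma is the exhibition of those base codewords, and you do not produce them: ``best delegated to a computer'' leaves the existence claim unproved. As written, the first route is a correct description of what a proof would look like, not a proof.

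The fallback route does not close this gap, because it trades one unresolved existence question for another. The application of Construction~\ref{FundCtr} with weight $9$ is sound, the ingredients of type $9^{|A|}$ exist for all $|A|\ge 4$ by Theorem~\ref{9^t}, and your size count $18\bigl(\binom{10+m/9}{2}-\binom{m/9}{2}\bigr)=810+20m$ is correct. But everything then hinges on a $\{4,5,6,7,8,9\}$-GDD of type $1^{10}2^1$ (resp.\ $1^{10}3^1$), and you cite nothing for these. They are not supplied by the paper: Theorem~\ref{PBD4-9} explicitly \emph{excludes} $v=12$, so a $(12,\{4,\ldots,9\},1)$-PBD does not exist, and the GDD you need is an edge-deleted variant whose existence does not follow from anything quoted. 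Moreover the question is genuinely delicate at these sizes: in a $\{4,\ldots,9\}$-GDD of type $1^{10}2^1$ each singleton point lies in blocks whose sizes minus one partition $11$ into parts from $[3,8]$, each doubleton point needs such a partition of $10$, and the natural first attempts (e.g.\ splitting the leave into two near-resolutions and finishing with a $K_6$ plus a $K_4$) fail on degree grounds. Until you either exhibit the base codewords or exhibit these master GDDs, the lemma is not established.
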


\begin{proof}
For the type $9^{10} 18^1$, let $X=I_{108}$, and ${\cal
G}=\{\{i,i+10,i+20,\dots,i+80\}:0\leq i <10\}\cup\{\{90,91,92,\dots,107\}\}$. Then $(X,{\cal
G},{\cal C})$ is a $[3,1]$-GDC$(6)$ of type $9^{10} 18^1$, where
${\cal C}$ is obtained by developing the following codewords under the
automorphism group $G=\langle(0\ \ 1\ \ 2\ \ \cdots\ \ 89)(90\ \ 91\ \ \cdots\ \ 98)$ $(99\ \
100\ \ \cdots\ \ 107)\rangle$.
$$\begin{array}{llllll}
\langle90, 31, 33, 0\rangle &
\langle90, 48, 26, 37\rangle &
\langle90, 43, 86, 74\rangle &
\langle9, 47, 82, 90\rangle &
\langle99, 10, 16, 74\rangle &
\langle99, 54, 8, 3\rangle \\
\langle99, 6, 40, 68\rangle &
\langle39, 23, 20, 99\rangle &
\langle86, 14, 21, 47\rangle &
\langle13, 12, 54, 66\rangle &
\langle58, 37, 45, 82\rangle &
\langle0, 4, 27, 36\rangle \\
\langle16, 2, 31, 7\rangle &
\end{array}$$

For the type $9^{10} 27^1$, let $X=I_{117}$, and ${\cal
G}=\{\{i,i+10,i+20,\dots,i+80\}:0\leq i <10\}\cup\{\{90,91,92,\dots,116\}\}$. Then $(X,{\cal
G},{\cal C})$ is a $[3,1]$-GDC$(6)$ of type $9^{10} 27^1$, where
${\cal C}$ is obtained by developing the following codewords under the
automorphism group $G=\langle(0\ \ 1\ \ 2\ \ \cdots\ \ 89)(90\ \ 91\ \ \cdots\ \ 98)$ $(99\ \
100\ \ \cdots\ \ 107)\rangle$ $(108\ \ 109\ \ \cdots\ \ 116)\rangle$.
$$\begin{array}{lllll}
\langle90, 5, 54, 26\rangle &
\langle90, 13, 12, 20\rangle &
\langle90, 33, 46, 25\rangle &
\langle99, 11, 62, 66\rangle &
\langle2, 26, 7, 90\rangle \\
\langle1, 39, 37, 65\rangle &
\langle99, 87, 58, 54\rangle &
\langle21, 36, 79, 99\rangle &
\langle108, 66, 83, 9\rangle &
\langle108, 34, 78, 23\rangle \\
\langle0, 23, 65, 68\rangle  &
\langle89, 5, 36, 108\rangle &
\langle14, 77, 86, 70\rangle &
\langle99, 32, 46, 43\rangle &
\langle108, 28, 40, 62\rangle
\end{array}$$

\end{proof}

\begin{lemma}
\label{GDC-27^t 9^1}
There exists a $[3,1]$-GDC$(6)$ of type $27^t 9^1$ with size $27t(3t-1)$ for each $t\in\{4,5,6\}$.
\end{lemma}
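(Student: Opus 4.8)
The plan is to give a direct construction for each $t\in\{4,5,6\}$, developing a short list of base codewords under a cyclic automorphism, exactly in the style of Lemma~\ref{s9^t} (for the homogeneous part) and of the mixed-type construction used for type $9^{10}27^1$ (for the extra group). Concretely, I would take the point set $I_{27t+9}$, let the $27t$ points $\{0,1,\dots,27t-1\}$ carry the $t$ groups of size $27$ as the cosets of the subgroup of order $27$, that is $G_i=\{i,i+t,\dots,i+26t\}$ for $0\le i\le t-1$, and let the remaining $9$ points $\{27t,\dots,27t+8\}$ form the group of size $9$. The automorphism is $\alpha=(0\ 1\ \cdots\ 27t-1)(27t\ \cdots\ 27t+8)$, a $27t$-cycle on the large part together with a $9$-cycle on the small group; since $9\mid 27t$, $\alpha$ has order $27t$ and permutes the $t$ large groups cyclically while fixing the small group setwise.

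Next I would list, for each $t$, exactly $3t-1$ base codewords of composition $[3,1]$ (three $1$'s and a single $2$), and develop each one under $\langle\alpha\rangle$. Because the large coordinates move with full period $27t$, each base codeword has an orbit of length $27t$, so the $3t-1$ orbits produce $27t(3t-1)$ codewords, matching the claimed size. For the GDC property I must ensure each base codeword places its four nonzero coordinates in four distinct groups (at most one nonzero per group), which for the large part means the differences of the four support positions avoid the subgroup $\langle t\rangle=\{0,t,2t,\dots\}$, while the small group contributes at most one coordinate.

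The heart of the verification is the distance-$6$ condition, which by the cyclic structure reduces to a finite difference computation on the base codewords. A short calculation shows that for two weight-$4$ codewords $u,v$ one has $d_H(u,v)=8-c-e$, where $c=|\supp(u)\cap\supp(v)|$ and $e$ is the number of common-support positions carrying equal symbols; hence $d_H\ge6$ is equivalent to $c+e\le2$. Analysing the cases, this amounts to three conditions on the developed pairs: no two codewords repeat a $1$--$1$ pair, no two repeat a $1$--$2$ pair, and one cannot have an unordered pair $\{x,z\}$ occurring as a $1$--$1$ pair in one codeword and as a $1$--$2$ pair in another through the shared $1$-coordinate. Translated to the cyclic group, this says precisely that the multiset of ``$1$--$1$ differences'' and the multiset of ordered ``$1$--$2$ differences'' arising from the base codewords each cover every admissible group element at most once, and do so compatibly; all of these differences avoid $\langle t\rangle$ by the group condition. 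Checking this is a routine finite computation once the base codewords are in hand.

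The main obstacle is therefore not the verification but the \emph{search}: producing, for each $t\in\{4,5,6\}$, a set of $3t-1$ base codewords over $I_{27t+9}$ whose $1$--$1$ and $1$--$2$ differences satisfy the above compatibility, while avoiding the group-difference subgroup $\langle t\rangle$. This is a constrained packing problem best handled by computer search, as in the earlier small-case lemmas; I would not expect a slick recursive shortcut here. In particular, the obvious inflation route---building a $[3,1]$-GDC$(6)$ of type $9^t3^1$ and inflating by weight $3$ via a TD$(4,3)$ (which exists by Theorem~\ref{TD})---is awkward, since a type $9^t3^1$ code of the required size $3t(3t-1)$ cannot be developed with full orbits over $\bbZ_{9t}$ (as $3\nmid(3t-1)$ for $t\in\{4,5,6\}$), so the direct construction on $I_{27t+9}$ is the natural path.
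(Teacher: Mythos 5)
Your setup coincides exactly with the paper's: the point set $I_{27t+9}$, the $t$ groups $\{i,i+t,\dots,i+26t\}$ plus the tail group $\{27t,\dots,27t+8\}$, the automorphism $(0\ 1\ \cdots\ 27t-1)(27t\ \cdots\ 27t+8)$ of order $27t$, and $3t-1$ base codewords each generating a full orbit, giving size $27t(3t-1)$. Your distance criterion $d_H(u,v)=8-c-e$, hence $c+e\le 2$, and its translation into conditions on $1$--$1$ and $1$--$2$ differences avoiding the subgroup $\langle t\rangle$, is also the right verification scheme.

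The gap is that you stop precisely where the paper's proof begins: the entire content of this lemma is the exhibition of the $11$, $14$, and $17$ explicit base codewords for $t=4,5,6$ respectively, and you do not produce them. Describing the constrained packing problem that a computer search must solve does not establish that a solution exists; for a purely existential statement of this kind, the data \emph{is} the proof. Until the base codewords are listed (and checked against your own difference conditions), the lemma remains unproven. Your side remark that the inflation route via type $9^t3^1$ is blocked because $3\nmid(3t-1)$ is a reasonable observation, but it only reinforces that the direct search cannot be avoided.
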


\begin{proof}
For each $t$, let $X_t=I_{27t+9}$, and ${\cal G}_t=\{\{i,i+t,i+2t,\dots,i+26t\}:0\leq i <t\}\cup\{\{27t,27t+1,\dots,27t+8\}\}$. Then $(X_t,{\cal G}_t,{\cal C}_t)$ is a $[3,1]$-GDC$(6)$ of type $27^{t} 9^1$, where ${\cal C}_t$ is obtained by developing the following codewords under the automorphism group $G=\langle(0\ \ 1\ \ 2\ \ \cdots\ \ 27t-1)(27t\ \ 27t+1\ \ \cdots\ \ 27t+8)\rangle$.

\noindent $t=4$:
$$\begin{array}{lllll}
\langle108, 6, 35, 57\rangle &
\langle108, 16, 90, 77\rangle &
\langle83, 81, 0, 6\rangle &
\langle12, 67, 5, 108\rangle &
\langle27, 64, 105, 10\rangle \\
\langle29, 78, 52, 99\rangle &
\langle108, 65, 64, 22\rangle &
\langle0, 5, 94, 43\rangle &
\langle12, 102, 9, 3\rangle &
\langle70, 20, 59, 37\rangle \\
\langle92, 19, 82, 61\rangle&
\end{array}$$
\noindent $t=5$:
$$\begin{array}{lllll}
\langle43, 0, 79, 42\rangle &
\langle0, 8, 91, 84\rangle &
\langle135, 49, 81, 98\rangle &
\langle135, 92, 30, 14\rangle &
\langle126, 58, 72, 109\rangle \\
\langle135, 52, 73, 24\rangle &
\langle26, 48, 72, 79\rangle &
\langle80, 77, 118, 11\rangle &
\langle29, 40, 31, 133\rangle &
\langle33, 59, 107, 135\rangle \\
\langle3, 21, 109, 37\rangle &
\langle22, 35, 93, 94\rangle &
\langle97, 124, 1, 130\rangle &
\langle25, 48, 44, 126\rangle
\end{array}$$
\noindent $t=6$:
$$\begin{array}{lllll}
\langle162, 26, 96, 119\rangle &
\langle16, 23, 123, 104\rangle &
\langle162, 59, 34, 144\rangle &
\langle0, 26, 89, 13\rangle &
\langle59, 39, 80, 24\rangle \\
\langle75, 146, 119, 16\rangle &
\langle78, 75, 118, 157\rangle &
\langle162, 111, 139, 82\rangle &
\langle147, 60, 2, 37\rangle &
\langle53, 54, 7, 130\rangle \\
\langle33, 161, 157, 52\rangle &
\langle94, 108, 103, 15\rangle &
\langle89, 134, 156, 145\rangle &
\langle37, 70, 135, 86\rangle &
\langle86, 18, 49, 33\rangle \\
\langle0, 50, 101, 130\rangle &
\langle83, 73, 81, 162\rangle &
\end{array}$$
\end{proof}

\begin{lemma}
\label{GDC-27^t 18^1}
There exists a $[3,1]$-GDC$(6)$ of type $27^t 18^1$ with size $27t(3t+1)$ for each $t\in\{4,6\}$.
\end{lemma}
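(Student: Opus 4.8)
The plan is to give a direct construction via a cyclic automorphism group, exactly in the style of Lemma~\ref{GDC-27^t 9^1}, simply replacing the short group of size $9$ by one of size $18$. Concretely, for $t\in\{4,6\}$ I would take the point set $X_t=I_{27t+18}$ together with the groups
$$
{\cal G}_t=\{\{i,i+t,i+2t,\dots,i+26t\}:0\le i<t\}\cup\{\{27t,27t+1,\dots,27t+17\}\},
$$
so that there are $t$ groups of size $27$ and one group of size $18$. The code would then be obtained by developing a suitable list of base codewords under
$$
G=\langle(0\ \ 1\ \ \cdots\ \ 27t-1)(27t\ \ \cdots\ \ 27t+8)(27t+9\ \ \cdots\ \ 27t+17)\rangle,
$$
in which the long $27t$-cycle permutes the points of the size-$27$ groups (with period $t$ on the groups themselves) and the two $9$-cycles act on the two halves of the size-$18$ group, matching the cycle structure already used for the size-$18$ group in Lemma~\ref{GDC-9^10(18&27)^1}. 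This $G$ has order $\mathrm{lcm}(27t,9)=27t$, and since group-divisibility forces every codeword to meet at least three of the size-$27$ groups, each base codeword has a full orbit of length $27t$ under $G$.

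First I would pin down the arithmetic of the size. With every orbit of length $27t$, I need exactly $3t+1$ base codewords to reach the target size $27t(3t+1)$: that is, $13$ base codewords for $t=4$ (yielding $108\cdot 13=1404$) and $19$ base codewords for $t=6$ (yielding $162\cdot 19=3078$). Each base codeword would be recorded as a $4$-tuple $\langle a_1,a_2,a_3,a_4\rangle$ in the paper's convention, the first three entries carrying the symbol $1$ and the last entry the symbol $2$, so that the composition $[3,1]$ is built in automatically and only distance and group-divisibility remain to be checked.

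The substantive step, and the main obstacle, is to \emph{find and verify} these base codewords. By $G$-invariance it suffices to compare each base codeword against all translates of all base codewords, which reduces to a finite difference computation. Recalling that two weight-$4$ codewords $u,v$ satisfy $d_H(u,v)\ge 6$ precisely when they share at most two support positions and, whenever they share exactly two, carry distinct symbols on both, the verification amounts to checking that no nontrivial translate aligns two base codewords on three or more positions and that the $1$-versus-$2$ symbol pattern never coincides on any doubly-shared pair. In practice this is a short computer search for the required $13$ (resp.\ $19$) base codewords followed by a routine check of the associated differences; I would carry it out by computer, exactly as the analogous lists in Lemmas~\ref{GDC-27^t 9^1} and \ref{GDC-9^10(18&27)^1} were produced, and then display the resulting base codewords.

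As a fallback, should a clean cyclic list prove hard to locate, I would instead try to realize $27^t18^1$ recursively: applying Construction~\ref{Inflation} with weight $3$ through a TD$(4,3)$ (which exists by Theorem~\ref{TD}) to a $[3,1]$-GDC$(6)$ of type $9^t6^1$ of size $3t(3t+1)$ produces the desired type $27^t18^1$ of size $9\cdot 3t(3t+1)=27t(3t+1)$, thereby reducing the task to building the smaller ingredient $9^t6^1$. I expect the direct cyclic construction to be the cleaner route, however, and anticipate that the only genuine difficulty is the search for the explicit base codewords rather than any structural obstruction.
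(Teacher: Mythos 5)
Your plan coincides exactly with the paper's proof: the authors use the same point set $I_{27t+18}$, the same groups, and the same automorphism group $G=\langle(0\ \ 1\ \ \cdots\ \ 27t-1)(27t\ \ \cdots\ \ 27t+8)(27t+9\ \ \cdots\ \ 27t+17)\rangle$, and they list precisely the $13$ (for $t=4$) and $19$ (for $t=6$) base codewords, each in a full orbit of length $27t$, that your orbit count predicts. The only thing separating your proposal from the paper's proof is the explicit exhibition of those base codewords, which you correctly identify as the substantive (computer-search) step.
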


\begin{proof}
For each $t$, let $X_t=I_{27t+18}$, and ${\cal G}_t=\{\{i,i+t,i+2t,\dots,i+26t\}:0\leq i <t\}\cup\{\{27t,27t+1,\dots,27t+17\}\}$. Then $(X_t,{\cal G}_t,{\cal C}_t)$ is a $[3,1]$-GDC$(6)$ of type $27^{t} 18^1$, where ${\cal C}_t$ is obtained by developing the following codewords under the automorphism group $G=\langle(0\ \ 1\ \ 2\ \ \cdots\ \ 27t-1)(27t\ \ 27t+1\ \ \cdots\ \ 27t+8) (27t+9\ \ 27t+10\ \ \cdots\ \ 27t+17)\rangle$.

\noindent $t=4$:
$$\begin{array}{lllll}
\langle108, 31, 86, 96\rangle &
\langle108, 46, 21, 7\rangle &
\langle108, 26, 9, 20\rangle &
\langle97, 47, 96, 108\rangle &
\langle37, 15, 28, 18\rangle \\
\langle117, 88, 55, 94\rangle &
\langle117, 35, 5, 20\rangle &
\langle54, 9, 27, 16\rangle &
\langle117, 12, 54, 33\rangle&
\langle84, 43, 38, 81\rangle \\
\langle31, 29, 60, 117\rangle &
\langle53, 6, 88, 107\rangle &
\langle0, 23, 74, 37\rangle &
\end{array}$$
\noindent $t=6$:
$$\begin{array}{llll}
\langle162, 22, 110, 0\rangle &
\langle162, 24, 158, 46\rangle &
\langle171, 110, 161, 69\rangle &
\langle18, 35, 82, 162\rangle \\
\langle171, 139, 30, 46\rangle &
\langle136, 135, 13, 120\rangle &
\langle171, 160, 149, 81\rangle &
\langle99, 42, 91, 171\rangle \\
\langle162, 79, 3, 134\rangle &
\langle39, 119, 114, 128\rangle &
\langle128, 95, 154, 141\rangle &
\langle118, 12, 83, 74\rangle \\
\langle143, 106, 39, 54\rangle &
\langle156, 136, 113, 86\rangle &
\langle131, 135, 160, 66\rangle &
\langle48, 50, 29, 129\rangle \\
\langle91, 84, 81, 125\rangle &
\langle60, 105, 137, 91\rangle &
\langle0, 38, 99, 65\rangle &
\end{array}$$
\end{proof}

\begin{lemma}
\label{GDC-36^6 27^1}
There exists a $[3,1]$-GDC$(6)$ of type $36^6 27^1$ with size $5616$.
\end{lemma}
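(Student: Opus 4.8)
The plan is to build the desired GDC by Wilson's Fundamental Construction (Construction~\ref{FundCtr}) rather than by a direct difference construction. Since the target type is $36^6 27^1$ and $36=9\cdot 4$, $27=9\cdot 3$, the natural master design is a group divisible design of type $4^6 3^1$: assigning the constant weight $9$ to every point sends a group of $4$ points to a group of $36$ and the group of $3$ points to a group of $27$. For each block $B$ of the master one must supply an ingredient $[3,1]$-GDC$(6)$ of type $9^{|B|}$; these are available for every $|B|\ge 4$ by Theorem~\ref{9^t}. Note that $d=6\le 2(w-1)=6$ with $w=4$, so the hypothesis of Construction~\ref{FundCtr} is met, and the output is a $[3,1]$-GDC$(6)$ of type $36^6 27^1$ as required.

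The size is forced and does not depend on the internal block structure of the master. A GDD of type $4^6 3^1$ has a point set of size $27$, so the number of cross-group pairs is $\binom{27}{2}-6\binom{4}{2}-\binom{3}{2}=351-36-3=312$, and every such pair lies in exactly one block. A block $B$ contributes an ingredient of size $9\,|B|(|B|-1)=18\binom{|B|}{2}$, so summing over all blocks gives $\sum_B 18\binom{|B|}{2}=18\cdot 312=5616$ codewords, independently of how the $312$ pairs are distributed among the block sizes. Thus any master GDD of type $4^6 3^1$ with all block sizes at least $4$ yields exactly the claimed size $5616$.

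The only real work is to exhibit the master GDD, and this is the step I expect to be the main obstacle. A \emph{pure} $4$-GDD of type $4^6 3^1$ cannot exist: a point in a group of size $4$ must be joined to the $27-4=23$ points outside its group, and blocks of size $4$ cover these three at a time, but $23\not\equiv 0\pmod 3$. Hence the master must use at least two block sizes, so I would look for a $\{4,5\}$-GDD of type $4^6 3^1$ (through each such point $\sum(|B|-1)=23$ is now solvable, e.g.\ five $4$-blocks and two $5$-blocks, and $=24$ through a point of the size-$3$ group). Concretely I would obtain it by deleting a point lying on six blocks of size $5$ and one block of size $4$ from a $(28,\{4,5\},1)$-PBD, or, failing a convenient such PBD, construct the $\{4,5\}$-GDD directly by developing a short list of base blocks over $\bbZ_{24}$ (with the three points of the last group held fixed), exactly as the earlier group divisible codes in this section are built. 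Once the master is in hand, the construction and the size count above complete the proof.
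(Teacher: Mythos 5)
Your overall strategy is coherent and your arithmetic is right: with constant weight $9$ every block $B$ of a master GDD of type $4^6 3^1$ would receive an ingredient $[3,1]$-GDC$(6)$ of type $9^{|B|}$ of size $9|B|(|B|-1)=18\binom{|B|}{2}$, and since the master must cover exactly $\binom{27}{2}-6\binom{4}{2}-\binom{3}{2}=312$ cross pairs, the output has size $18\cdot 312=5616$ regardless of the block-size distribution. Your parity argument ruling out a pure $4$-GDD of type $4^6 3^1$ is also correct. The ingredients you need ($9^4$ and $9^5$, or more generally $9^s$ for $s\ge 4$) are indeed supplied by Theorem~\ref{9^t}.

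The gap is that the entire proof rests on the existence of a master $K$-GDD of type $4^6 3^1$ with $K\subseteq\{4,5,\dots\}$, and you never establish it. Neither of your two suggested routes is carried out: a $(28,\{4,5\},1)$-PBD certainly exists (e.g.\ the $(28,4,1)$-BIBD), but in that design every point lies on nine blocks of size $4$, whereas you need a point lying on six $5$-blocks and one $4$-block, and no result quoted in this paper (Theorems~\ref{PBD4-9}, \ref{PBD5-9}, \ref{TD}) controls the block sizes through a distinguished point; the natural truncation source, a TD$(7,4)$ or TD$(6,4)$, does not exist since there are only three MOLS of order $4$, so you cannot get a $\{5,6\}$- or $\{6,7\}$-GDD of type $4^6 3^1$ that way either. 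The fallback of ``developing a short list of base blocks over $\bbZ_{24}$'' is exactly the nontrivial work that would constitute the proof, and it is left undone. The paper avoids this issue entirely: its proof of this lemma is a direct construction of the code itself, exhibiting $26$ explicit base codewords on $I_{243}$ developed under a prescribed automorphism group. So your argument, as written, reduces one unconstructed object to another unconstructed object and does not yet prove the lemma.
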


\begin{proof}
Let $X=I_{243}$, and ${\cal G}=\{\{i,i+6,i+12,\dots,i+210\}:0\leq i <6\}\cup\{\{216,217,\dots,242\}\}$. Then $(X,{\cal G},{\cal C})$ is a $[3,1]$-GDC$(6)$ of type $36^6 27^1$, where ${\cal C}$ is obtained by developing the following codewords under the automorphism group $G=\langle(0\ \ 1\ \ 2\ \ \cdots\ \ 215)(216\ \ 217\ \ \cdots\ \ 224)$ $(225\ \ 226\ \ \cdots\ \ 233)$ $(234\ \ 235\ \ \cdots\ \ 242)\rangle$.
$$\begin{array}{llll}
\langle33, 34, 92, 41\rangle &
\langle14, 148, 61, 107\rangle &
\langle159, 204, 154, 61\rangle &
\langle184, 211, 162, 165\rangle \\
\langle199, 99, 0, 89\rangle &
\langle175, 80, 11, 190\rangle &
\langle134, 42, 173, 216\rangle &
\langle225, 183, 107, 145\rangle \\
\langle0, 25, 105, 119\rangle &
\langle159, 96, 73, 116\rangle &
\langle225, 160, 23, 186\rangle &
\langle175, 105, 208, 225\rangle \\
\langle152, 27, 214, 7\rangle &
\langle34, 135, 146, 131\rangle &
\langle225, 153, 92, 157\rangle &
\langle234, 118, 149, 159\rangle \\
\langle216, 34, 91, 26\rangle &
\langle170, 103, 29, 156\rangle &
\langle234, 124, 207, 89\rangle &
\langle166, 134, 132, 234\rangle \\
\langle216, 14, 54, 51\rangle &
\langle142, 198, 121, 33\rangle &
\langle234, 200, 48, 193\rangle &
\langle181, 209, 165, 200\rangle \\
\langle7, 20, 155, 129\rangle &
\langle216, 129, 184, 56\rangle &
\end{array}$$
\end{proof}

\begin{lemma}
\label{GDCa} There is a $[3,1]$-GDC$(6)$ of type $36^u(9x)^1$ with size $72u(2u+x-2)$ for $u\geq 4$, $u\not\in \{6,10\}$ and $0\leq x\leq u$.
\end{lemma}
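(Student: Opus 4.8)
The plan is to realize the desired code through the Fundamental Construction (Construction \ref{FundCtr}) with the constant weight $9$ applied to a master $\{4,5,6,7,8,9\}$-GDD of type $4^u x^1$. For each block $A$ of such a master all points receive weight $9$, so the required ingredient is a $[3,1]$-GDC$(6)$ of type $9^{|A|}$; since $|A|\ge 4$, every such ingredient is supplied by Theorem \ref{9^t}. The Fundamental Construction then yields a $[3,1]$-GDC$(6)$ whose groups are the master groups scaled by $9$, namely of type $(4\cdot 9)^u (x\cdot 9)^1 = 36^u (9x)^1$, as wanted.

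Before building the master I would verify that the size is automatically $72u(2u+x-2)$, which is a good sanity check on the whole scheme. In the output every ingredient on a block of size $k$ has size $9k(k-1)=18\binom{k}{2}$ by Theorem \ref{9^t}, and the ingredient blocks collectively cover exactly the cross-group pairs of the master GDD once each; hence the total size equals $18$ times the number of cross-group pairs of a type-$4^u x^1$ GDD, i.e.
\[
18\left[\binom{4u+x}{2} - u\binom{4}{2} - \binom{x}{2}\right] = 18\cdot 4u(2u+x-2) = 72u(2u+x-2),
\]
in agreement with the statement.

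The heart of the matter, and the step I expect to be the main obstacle, is the construction of a $\{4,5,6,7,8,9\}$-GDD of type $4^u x^1$ for $u\ge 4$, $u\notin\{6,10\}$ and $0\le x\le u$. My plan is to build it from a resolvable transversal design $\mathrm{RTD}(4,u)$, equivalently a $\mathrm{TD}(5,u)$, which exists precisely when $u\ge 4$ and $u\notin\{6,10\}$ by Theorem \ref{TD}(ii); this is exactly where the two excluded values originate. Writing the $\mathrm{RTD}(4,u)$ as a $4$-GDD of type $u^4$ with groups $G_1,\dots,G_4$ and resolution classes $\mathcal{R}_1,\dots,\mathcal{R}_u$ (each a parallel class of $u$ blocks of size $4$), I would take the blocks of $\mathcal{R}_1$ as the $u$ groups of size $4$, adjoin $x$ new points $\infty_1,\dots,\infty_x$, one to every block of each of the classes $\mathcal{R}_2,\dots,\mathcal{R}_{x+1}$, to form the single group of size $x$, and keep the remaining classes as blocks. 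Since each $G_\ell$ meets every new group in exactly one point, its internal pairs can be covered by placing a $(u,\{4,5,6,7,8,9\},1)$-PBD (Theorem \ref{PBD4-9}) on $G_\ell$ without violating the transversal condition.

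The delicate points in this last step are to confirm that every cross-group pair of the output type is covered exactly once: pairs meeting two distinct $G_\ell$'s are covered uniquely by the surviving $\mathrm{RTD}$ blocks, pairs inside a single $G_\ell$ by the adjoined PBD on that group, and pairs involving an $\infty_k$ by the completed class $\mathcal{R}_{k+1}$; one must also check that no pair is covered twice and that the pairs inside the new groups remain uncovered. The residual difficulties are the small orders $u$ (and the PBD exceptions of Theorem \ref{PBD4-9}) for which a $(u,\{4,5,6,7,8,9\},1)$-PBD is unavailable, together with the boundary value $x=u$; for these I would instead fill the groups $G_\ell$ with ad hoc small designs and adjust which classes receive the infinite points. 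As a sanity anchor, the case $x=0$ admits a much shorter route: inflating the type-$9^u$ code of Theorem \ref{9^t} by a factor $4$ via a $\mathrm{TD}(4,4)$ and Construction \ref{Inflation} already produces a $[3,1]$-GDC$(6)$ of type $36^u$ of the correct size $144u(u-1)=72u(2u-2)$.
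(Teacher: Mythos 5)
Your overall strategy -- apply the Fundamental Construction with weight $9$ to a master GDD of type $4^u x^1$ derived from a TD$(5,u)$, feeding in the $[3,1]$-GDC$(6)$s of type $9^s$ from Theorem~\ref{9^t} -- is the same as the paper's, and your size count $18\bigl[\binom{4u+x}{2}-6u-\binom{x}{2}\bigr]=72u(2u+x-2)$ is correct. However, as you yourself flag, your construction of the master has two genuine holes that you only wave at (``ad hoc small designs''). First, you insist on covering the internal pairs of each original group $G_\ell$ by a $(u,\{4,5,6,7,8,9\},1)$-PBD, which by Theorem~\ref{PBD4-9} fails to exist for $u\in\{11,12,14,15,18,19,23\}$; this restriction is self-inflicted, since the ingredient $9^{|A|}$ exists for \emph{every} block size $|A|\geq 4$, so you could simply leave $G_\ell$ as a single block of size $u$. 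Second, and more seriously, the case $x=u$ cannot be reached by your scheme at all: an RTD$(4,u)$ has only $u$ parallel classes, one of which you consume as groups, so at most $u-1$ ideal points can be adjoined (two ideal points attached to the same class would put the pair $\{\infty_j,\infty_k\}$ -- which lies inside a group -- into $u$ blocks). Since the lemma claims the full range $0\leq x\leq u$, this is a real gap, not a boundary technicality.

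The paper sidesteps both problems with a different truncation of the same TD$(5,u)$: fix a point $p$ in the fifth group, let the $u$ blocks through $p$ (with $p$ deleted) become the groups of size $4$, adjoin one ideal point $\infty$ to $G_5\setminus\{p\}$ to form a group of size $u$, and retain the remaining TD blocks (size $5$) together with the four old groups each extended by $\infty$ (size $u+1$) as blocks. This yields a $\{5,u+1\}$-GDD of type $4^u u^1$ with no PBD input, and the parameter $x$ is then realized inside the Fundamental Construction by giving weight $9$ to exactly $x$ points of the size-$u$ group and weight $0$ to the rest -- which handles every $0\leq x\leq u$, including $x=u$, uniformly. To repair your write-up you should either adopt this master or, at minimum, replace the PBDs by single blocks of size $u$ and move the parameter $x$ from ``number of adjoined points'' to ``number of points of the long group receiving weight $9$.''
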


\begin{proof}
Take a TD$(5,u)$ with $u\geq 4$, $u\not\in \{6,10\}$ from
Theorem~\ref{TD}; remove one point from one group and adjoin an ideal point to obtain a
$\{5,u+1\}$-GDD of type $4^u u^1$. Apply the Fundamental
Construction with weight $9$ to the points in the groups of
size $4$ and $x$ points in the group of size $u$ for $0\leq
x\leq u$. Note that there exist $[3,1]$-GDC$(6)$s of type
$9^s$ for $s\geq 4$ by Theorem~\ref{9^t}. The result is the
desired GDC.
\end{proof}

\begin{lemma}
\label{GDCb} The following $[3,1]$-GDC$(6)$s all exist:
\begin{enumerate}
\item[i)] type $45^5(9x)^1$ and size  $450(10+x)$ for $x\in \{0,1,2\}$;
\item[ii)] type $36^6 (9x)^1$ and size $432(10+x)$ for $x\in \{0,1,3\}$;
\item[iii)] type $36^{10} (9x)^1$ and size $720(18+x)$ for $x\in \{0,1,2,3\}$;
\item[iv)] type $27^t$ and size $81t(t-1)$ for $t\in \{4,5,7\}$;
\item[v)] type $18^7$ and size $1512$.
\end{enumerate}
\end{lemma}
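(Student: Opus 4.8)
The plan is to dispose of the five families by three mechanisms: uniform inflation (Construction~\ref{Inflation}), the Fundamental Construction (Construction~\ref{FundCtr}) with constant weight $9$ over a group-divisible master of type $g^u x^1$, and direct reuse of an explicit code built just above. Throughout, $w=4$ and $d=6=2(w-1)$, so both constructions are licensed. For the inflation cases, note first that $\mathrm{TD}(4,3)$ and $\mathrm{TD}(4,4)$ exist by Theorem~\ref{TD}. Thus for iv) I would inflate a type $9^t$ code from Theorem~\ref{9^t} ($t\in\{4,5,7\}$) with weight $3$, obtaining type $27^t$ of size $9\cdot 9t(t-1)=81t(t-1)$; for v) I would inflate the type $6^7$ code of Lemma~\ref{s6^t} (take $t=2$, size $168$) with weight $3$, obtaining type $18^7$ of size $1512$; and the $x=0$ instance of ii) comes from inflating the type $9^6$ code with weight $4$, giving $36^6$ of size $270\cdot16=4320=432\cdot 10$.

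For the families carrying a nontrivial last group I would weight a GDD of type $(g/9)^u x^1$ by $9$ and fill each block $A$ with a type $9^{|A|}$ code from Theorem~\ref{9^t}, so that the resulting type is $g^u(9x)^1$ and the size is $\sum_A 9|A|(|A|-1)$. For i), the master is a truncated $\mathrm{TD}(6,5)$ (which exists by Theorem~\ref{TD}): deleting all but $x$ points of one group yields a $\{5,6\}$-GDD of type $5^5 x^1$ for $x\in\{0,1,2\}$, whose blocks have size $5$ or $6$, so the ingredients $9^5,9^6$ are available and a direct count gives size $450(10+x)$. For iii) I would begin from a $4$-RGDD of type $4^{10}$ (which exists since $10\equiv 1\pmod 3$) and complete $x$ of its parallel classes with distinct ideal points, producing a $\{4,5\}$-GDD of type $4^{10}x^1$ for $0\le x\le 3$; here $10x$ blocks grow to size $5$ while $120-10x$ stay size $4$, so filling with $9^4$ (size $108$) and $9^5$ (size $180$) yields $(120-10x)\cdot108+10x\cdot180=12960+720x=720(18+x)$, as required.

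Two instances of ii) remain. The case $x=3$ is precisely the type $36^6 27^1$ code already produced in Lemma~\ref{GDC-36^6 27^1} (size $5616=432\cdot 13$), so it is quoted directly; the case $x=1$ I would obtain by weighting a $4$-GDD of type $4^6 1^1$ by $9$. The hard part is exactly the absence of $\mathrm{TD}(5,6)$ and $\mathrm{TD}(5,10)$: these are what force $u\in\{6,10\}$ out of the general Lemma~\ref{GDCa}, so for $u=10$ and for the $4^6 1^1$ master I cannot truncate a transversal design and must instead supply the master GDDs by hand---namely the resolvable $4$-GDD of type $4^{10}$ and an incomplete $4$-GDD of type $4^6 1^1$---while the single leftover type $36^6 27^1$, which no such GDD delivers, is precisely what Lemma~\ref{GDC-36^6 27^1} was built to cover. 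The delicate points I expect to spend most effort on are verifying that these auxiliary resolvable/incomplete GDDs exist and, crucially, that every block retains size at least $4$, since the $9^{|A|}$-ingredients from Theorem~\ref{9^t} require $|A|\ge 4$.
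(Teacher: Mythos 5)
Your treatment of i), iv) and v) is sound and your arithmetic checks out throughout: i) is exactly the paper's construction (Fundamental Construction with weight $9$ on a truncated TD$(6,5)$), iv) is exactly the paper's (inflate $9^t$ by weight $3$), and v) reaches the same code by inflating the $6^7$ GDC of Lemma~\ref{s6^t} by weight $3$ instead of the paper's weighting of a $4$-GDD of type $2^7$ by $9$ --- an equally valid route. Likewise ii) with $x=0$ (inflation of $9^6$ by weight $4$ via TD$(4,4)$) and $x=3$ (quoting Lemma~\ref{GDC-36^6 27^1}) are fine.

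The gap is in ii) with $x=1$ and in all of iii), where you hang the construction on two master designs whose existence you never establish: a $4$-GDD of type $4^6 1^1$ (asserted with no justification) and a $4$-RGDD of type $4^{10}$ with at least three parallel classes (``justified'' only by $10\equiv 1\pmod 3$, which is merely a necessary condition, not an existence proof). Neither design is provided by any result quoted in the paper --- Theorem~\ref{TD} cannot produce them precisely because TD$(5,6)$ and TD$(5,10)$ are unavailable, as you yourself observe --- so as written these two subcases rest on unproved ingredients. The paper closes exactly this hole by a different mechanism: it inflates the GDCs $9^6$ and $9^{10}$ using $4$-MGDDs of type $4^4$ (cited to \cite{DS:1983}) to obtain $4$-DGDDs of types $(36,9^4)^6$ and $(36,9^4)^{10}$ with the CCC property, then adjoins $9x$ ideal points and fills the four holes with GDCs of types $9^6(9x)^1$ and $9^{10}(9x)^1$; the needed fillers are $9^7$ and $9^{11}$ from Theorem~\ref{9^t} together with the explicitly constructed $9^{10}18^1$ and $9^{10}27^1$ of Lemma~\ref{GDC-9^10(18\&27)^1}, which is precisely why that lemma exists. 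To repair your version you would have to import the existence of the $4$-RGDD of type $4^{10}$ and the $4$-GDD of type $4^6 1^1$ from the design-theory literature (both are plausibly true, but that is an external citation the argument currently lacks); otherwise the MGDD/DGDD route is the one that works with only the tools already on the table.
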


\begin{proof}
For i), take a TD$(6,5)$ from Theorem~\ref{TD}, and apply the
Fundamental Construction with weight $9$ to all the points in
the first five groups and $x$ points in the last group.
For ii), take a $[3,1]$-GDC$(6)$ of type $9^6$ from Theorem~\ref{9^t}; apply the Inflation Construction with weight $4$, using $4$-MGDDs of type $4^4$ as input designs, to obtain a $4$-DGDD of type $(36,9^4)^6$ with the CCC property. Then adjoin $9x$ ideal points with $x\in\{0,1\}$ and fill in $[3,1]$-GDC$(6)$s of type $9^6 (9x)^1$ to obtain a $[3,1]$-GDC$(6)$ of type $36^6 (9x)^1$, as desired; for $x=3$, see Lemma~\ref{GDC-36^6 27^1}.
For iii), we proceed similarly, starting instead with a $[3,1]$-GDC$(6)$ of type $9^{10}$ to obtain a $4$-DGDD of type $(36,9^4)^{10}$ with the CCC property. Then adjoin $9x$ ideal points and fill in $[3,1]$-GDC$(6)$s of type $9^{10} (9x)^1$ from Theorem~\ref{9^t} and Lemma~\ref{GDC-9^10(18&27)^1} to obtain the desired GDCs.
For iv), inflate a
$[3,1]$-GDC$(6)$ of type $9^t$ from Theorem~\ref{9^t} with weight $3$.
For v), apply the Fundamental
Construction with weight $9$ to a $4$-GDD of type $2^7$ (see \cite{Ge:2007}) to
obtain the desired GDC.
\end{proof}

\subsection{Cases of Length $n\equiv 0,1,2,3\pmod{9}$}

\begin{lemma}
\label{s1(9)} $A_3(9t+1,6,[3,1])=U(9t+1,6,[3,1])$ for $t\in
\{1,3\}$.
\end{lemma}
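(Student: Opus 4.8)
The plan is to separate the two values of $t$, because both reduce immediately to a construction problem. Since $9t+1\equiv1\pmod 9$, the quantity $U(9t+1,6,[3,1])=\left\lfloor\frac{9t+1}{3}\left\lfloor\frac{9t}{3}\right\rfloor\right\rfloor=\lfloor(9t+1)t\rfloor=9t^2+t$ is an integer, and the matching upper bound $A_3(9t+1,6,[3,1])\le 9t^2+t$ is already delivered by the corollary to Lemma~\ref{bound1} and Theorem~\ref{bound2}. So in each case it only remains to exhibit a code of size $9t^2+t$. For $t=1$ we have $n=10$ and $9t^2+t=10$; this value is listed as a known result in Table~\ref{presult}, so no construction is needed and the equality $A_3(10,6,[3,1])=U(10,6,[3,1])$ holds at once.

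For $t=3$ the length is $n=28$ and the target size is $9\cdot 9+3=84$. The approach I would take is a direct cyclic construction: realize the code on $\bbZ_{28}$ and generate it by developing a short list of base codewords under $x\mapsto x+1\pmod{28}$. Because $84=3\cdot 28$, it suffices to find three base codewords, written $\langle a_1,a_2,a_3,a_4\rangle$ (symbol $1$ on $a_1,a_2,a_3$ and symbol $2$ on $a_4$, as in the paper's convention for composition $[3,1]$), whose $\bbZ_{28}$-orbits are full (generic $4$-tuples have trivial stabilizer, hence orbit length $28$) and pairwise disjoint; this produces exactly $84$ distinct codewords, all of composition $[3,1]$.

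The distance check reduces to a finite difference condition. For two distinct weight-$4$ codewords of composition $[3,1]$, if $s$ denotes the number of shared support coordinates and $e$ the number of shared coordinates carrying equal values, then $d_H=8-s-e$, so $d_H\ge 6$ is equivalent to $s+e\le 2$. For a cyclically developed code this is a statement about the coordinate differences occurring within and between the three base blocks, which can be verified by hand or by machine once the blocks are fixed. The main obstacle is therefore the search itself: locating three base $4$-tuples in $\bbZ_{28}$ whose developed orbits never create a pair with $s+e\ge 3$—in particular whose differences among the three ``$1$-positions'' and between the ``$2$-position'' and the ``$1$-positions'' avoid all collisions forbidden by the $s+e\le 2$ criterion. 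Once such base codewords are exhibited, disjointness of the orbits yields the count $84$, and combining this with the upper bound gives $A_3(28,6,[3,1])=U(28,6,[3,1])$, completing the case $t=3$.
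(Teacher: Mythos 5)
Your reduction to a size-$9t^2+t$ construction, the treatment of $t=1$ via Table~\ref{presult}, and the distance criterion $s+e\le 2$ are all correct and match the paper. The problem is the case $t=3$: you never exhibit the three base codewords, and an existence proof by explicit construction is not complete until the construction is written down. You acknowledge that ``the main obstacle is therefore the search itself,'' but that search is precisely the content of the lemma for $n=28$; deferring it leaves a plan rather than a proof.

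Moreover, the specific plan is stronger than what the paper establishes and may well fail. The paper does not develop $3$ base codewords under the full group $\bbZ_{28}$ (orbits of length $28$); it develops $12$ base codewords under $+4\pmod{28}$, i.e.\ under a subgroup of order $7$, so that $12\times 7=84$. A fully $\bbZ_{28}$-invariant optimal code is an extremely tight object: both Johnson-type inequalities $3x_i\le n-1$ and $2x_i+3y_i\le n-1$ are met with equality at every coordinate, and the $18$ ordered differences internal to the three $1$-supports together with the differences between the $1$-positions and the $2$-positions (and their negatives) must pack into the $27$ nonzero residues with essentially no slack --- in particular the union of the $1$-to-$2$ differences is forced to be a symmetric set of odd size $9$, hence to contain the involution $14$. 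Nothing you wrote shows such a triple of base $4$-tuples exists, and the authors' retreat to an order-$7$ automorphism group is circumstantial evidence that one may not. To repair the argument you must either produce and verify three such base codewords, or weaken the automorphism group (as the paper does) and supply the correspondingly longer list of base codewords.
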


\begin{proof}
For $t=1$, see Table~\ref{presult}. For $t=3$, let $X=\bbZ_{28}$. Then $(X,{\cal
C})$ is the desired optimal $(28,6,[3,1])_3$-code, where ${\cal C}$ is obtained by developing the following base codewords $+4\pmod{28}$.
$$\begin{array}{llllll}
\langle1,3,26,8\rangle & \langle2,8,18,26\rangle & \langle2,16,17,22\rangle &
\langle3,16,18,7\rangle & \langle0,8,25,21\rangle & \langle1,21,22,16\rangle \\
\langle2,3,23,6\rangle & \langle0,3,24,12\rangle & \langle1,17,27,23\rangle &
\langle0,9,26,23\rangle & \langle3,19,20,25\rangle & \langle3,13,22,17\rangle \\
\end{array}$$
\end{proof}

\begin{lemma}
\label{s0(9)} $A_3(9t,6,[3,1])=U(9t,6,[3,1])$ for each $t\in
\{1,2,3\}$.
\end{lemma}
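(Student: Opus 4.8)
The plan is to supply the matching lower bounds, since the Johnson-type upper bound (the corollary to Lemma~\ref{bound1} and Theorem~\ref{bound2}) already yields $A_3(9t,6,[3,1])\le U(9t,6,[3,1])$; for $n=9t\equiv 0\pmod 9$ this bound is $U(9t,6,[3,1])=\lfloor 3t\lfloor(9t-1)/3\rfloor\rfloor$, which evaluates to $6$, $30$ and $72$ for $t=1,2,3$ respectively. Hence it suffices to exhibit $(9,6,[3,1])_3$, $(18,6,[3,1])_3$ and $(27,6,[3,1])_3$ codes of sizes $6$, $30$ and $72$. For $t=1$ the required code of size $6$ already appears in Table~\ref{presult}, so that case needs no new work.

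For $t=2$ and $t=3$ I would give direct cyclic constructions on $\bbZ_{9t}$, in the same spirit as Lemmas~\ref{s3^3t+1} and \ref{s9^t}. The idea is to realize the codeword set as a union of orbits under a shift such as $\sigma:x\mapsto x+3\pmod{9t}$. Since $\langle 3\rangle$ has order $6$ in $\bbZ_{18}$ and order $9$ in $\bbZ_{27}$, I would search for $5$ base codewords of composition $[3,1]$ when $n=18$ and for $8$ such base codewords when $n=27$, so that developing them under $\sigma$ produces exactly $5\cdot 6=30$ and $8\cdot 9=72$ codewords. The composition $[3,1]$ is automatically preserved by the shift, so the only property that must be checked is that the minimum Hamming distance equals $6$.

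The real work, and the step I expect to be the main obstacle, is the distance verification together with the search that produces admissible base codewords. Writing each weight-$4$, composition-$[3,1]$ word as a $3$-subset carrying the symbol $1$ and a single point carrying the symbol $2$, a short count shows that two distinct such words are at distance $8-2a-b$, where $a$ counts coordinates in which both carry the \emph{same} nonzero symbol and $b$ counts coordinates in which both are nonzero but carry \emph{different} symbols; thus the distance is at least $6$ precisely when $2a+b\le 2$. Translating this into the cyclic setting, every ordered pair of base codewords $B,B'$ together with each shift $\sigma^k$ imposes a finite list of conditions on the labelled coordinate differences of $B$ and $\sigma^k B'$, and one must also rule out short distances between a single base codeword and its own translates $\sigma^k B$. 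Checking that a proposed list of base codewords satisfies all of these simultaneously is routine but error-prone, and is best delegated to a computer; the natural way to finish is then to tabulate explicit base codewords and assert that their development gives the stated codes. Should the direct search for $n=27$ prove awkward, an alternative would be to construct a $[3,1]$-GDC$(6)$ of type $9^3$ directly and fill its three groups, via Construction~\ref{FillGroups}, with optimal $(9,6,[3,1])_3$-codes, giving $9\cdot 3\cdot 2+3\cdot 6=72$; this still requires a direct construction of the type-$9^3$ GDC, which Theorem~\ref{9^t} does not cover.
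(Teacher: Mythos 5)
Your reduction to the lower bound is right, and your distance criterion $d=8-2a-b$ for two composition-$[3,1]$ words is correct, but the proposal has a genuine gap: for $t=2$ and $t=3$ it never exhibits the codes. For an existence statement of this kind the proof \emph{is} the explicit construction, and you stop at ``search for base codewords and delegate the check to a computer.'' For $t=2$ your plan coincides exactly with what the paper does --- five base codewords of composition $[3,1]$ developed $+3\pmod{18}$ giving $5\cdot 6=30$ codewords --- but the paper actually lists them ($\langle0,1,2,3\rangle$, $\langle8,11,13,2\rangle$, $\langle0,7,10,16\rangle$, $\langle0,5,12,9\rangle$, $\langle2,12,16,8\rangle$), whereas you leave the existence of an admissible quintuple unestablished. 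For $t=3$ the paper does not search at all: it takes the optimal $(28,6,[3,1])_3$-code of size $84$ from Lemma~\ref{s1(9)} and shortens it. Since $4\cdot 84/28=12$, some coordinate meets at most $12$ codewords, and deleting it leaves at least $84-12=72=U(27,6,[3,1])$ codewords of length $27$. (The same shortening from the length-$10$ code handles $t=1$; your appeal to Table~\ref{presult} is equally fine there.) You should have noticed that Lemma~\ref{s1(9)} makes the $t=3$ case free.

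Your fallback for $t=3$ is moreover structurally impossible: a $[3,1]$-GDC$(6)$ of type $9^3$ has only three groups, and a codeword of weight $4$ must have at most one nonzero coordinate per group, so such a GDC can contain no codewords at all. The count $9\cdot 3\cdot 2+3\cdot 6=72$ therefore rests on a nonexistent object. If you want a direct route for $n=27$ you must either complete the cyclic search you describe (and print the eight base codewords) or use the shortening argument above.
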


\begin{proof}
For each $t\in \{1,3\}$, the desired code can be shorten from an
optimal $(9t+1,6,[3,1])_3$-code.
For $t=2$, let $X=\bbZ_{18}$. Then an
$(18,6,[3,1])_3$-code with $30$ codewords is obtained by
developing the codewords $\langle0,1,2,3\rangle$, $\langle8,11,13,2\rangle$, $\langle0,7,10,16\rangle$, $\langle0,5,12,9\rangle$, and $\langle2,12,16,8\rangle$ $+3\pmod{18}$.
\end{proof}

\begin{lemma}
\label{s1^9m^1} There exists a $[3,1]$-GDC$(6)$ of type $1^9 2^1$
with size $11$ and a  $[3,1]$-GDC$(6)$ of type $1^9 3^1$
with size $12$.
\end{lemma}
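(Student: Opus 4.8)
The plan is to realize both codes on the point set $\bbZ_9 \cup I$, where the nine elements of $\bbZ_9$ are the singleton groups and $I$ is the large group: $I=\{\infty_1,\infty_2\}$ for the type $1^9 2^1$, and $I=\{\infty_1,\infty_2,\infty_3\}$ for the type $1^9 3^1$. Using the tuple representation of the paper, each codeword is a $4$-tuple $\langle a,b,c,d\rangle$ in which $a,b,c$ carry the symbol $1$ and $d$ carries the symbol $2$; I call $T=\{a,b,c\}$ its \emph{triple} and $d$ its \emph{special point}. The group condition for the singleton groups is automatic, and for $I$ it says exactly that the support $T\cup\{d\}$ meets $I$ in at most one point, i.e.\ no codeword uses two of the infinite points.

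First I would translate the distance-six requirement into a usable combinatorial statement. For two codewords with triples $T_1,T_2$ and special points $p_1,p_2$, a direct count of disagreeing coordinates gives $d_H(u,v)=8-2\alpha-\beta$, where $\alpha=|T_1\cap T_2|+[\,p_1=p_2\,]$ counts the common coordinates on which the two symbols agree and $\beta=[\,p_1\in T_2\,]+[\,p_2\in T_1\,]$ counts those on which they disagree. Hence $d_H(u,v)\ge 6$ is equivalent to $2\alpha+\beta\le 2$. Unwinding this: any two triples meet in at most one point; if two triples meet in exactly one point then their special points must be distinct and each must avoid the other's triple; and two codewords sharing a special point must have disjoint triples. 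These inequalities, together with the no-two-infinite-points condition, are the only things to be checked.

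With this in hand I would produce the eleven (respectively twelve) codewords explicitly. The natural device for cutting down the list is to develop base codewords under the order-three subgroup $\langle 3\rangle=\{0,3,6\}$ of $\bbZ_9$ acting by $x\mapsto x+3$ (fixing the $\infty_i$): an orbit has length $3$ unless the codeword is fixed, and a fixed codeword is precisely one of the shape $\langle j,\,j+3,\,j+6,\,\infty_i\rangle$. Taking three length-three orbits ($9$ codewords) together with two such fixed codewords gives $11=9+2$ for the type $1^9 2^1$, and with three fixed codewords gives $12=9+3$ for the type $1^9 3^1$; in both cases the fixed codewords use the three pairwise-disjoint cosets as triples and distinct infinite points as special points, so they are mutually at distance eight. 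Translation invariance then reduces the remaining verification to a finite ``difference'' check among the few base codewords.

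The hard part is not the packing condition on the triples, which is easy to arrange, but the coupling recorded by the term $\beta$: whenever two triples are forced to share a point one must certify that neither special point falls into the other's triple and that the special points differ. This coupling genuinely obstructs the most naive recipes --- for instance a single full $\bbZ_9$-orbit placed against a coset-based orbit is always killed by a residue-mod-$3$ clash between the moving special point and one of the fixed triples --- so the base codewords cannot be chosen carelessly; in practice one must mix codewords whose special point is an infinite point, whose triple contains an infinite point, and which live entirely inside $\bbZ_9$, so as to spread the special points across the residue classes. Once a consistent set of base codewords is exhibited, confirming $2\alpha+\beta\le 2$ for every pair (or every nonzero shift-difference) and the no-two-infinite-points condition is a short mechanical scan, completing the proof.
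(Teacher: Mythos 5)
Your reduction of the distance condition is correct: for two codewords with triples $T_1,T_2$ and special points $p_1,p_2$ one has $d_H(u,v)=8-2\alpha-\beta$ with $\alpha=|T_1\cap T_2|+[p_1=p_2]$ and $\beta=[p_1\in T_2]+[p_2\in T_1]$, so the requirement is exactly $2\alpha+\beta\le 2$, and your unwinding of that inequality is accurate. But the lemma is a pure existence statement whose entire content is the exhibition of the $11$ and $12$ codewords, and you never exhibit them. You describe a symmetry ansatz (develop base codewords under the order-three subgroup $\{0,3,6\}$ of $\bbZ_9$, plus two or three fixed codewords of the form $\langle j,j+3,j+6,\infty_i\rangle$) and then say that \emph{once} a consistent set of base codewords is found the verification is mechanical. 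That is precisely the step that constitutes the proof, and it is absent. The paper's own proof is nothing but the explicit list of $11$ codewords on $I_{11}$ and $12$ codewords on $I_{12}$ (with no group action at all); without an analogous list, or at least named base codewords that one could check against your criterion, nothing has been proved.

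The gap is not merely formal, because your own analysis shows the ansatz is delicate and its feasibility is not self-evident. For instance, in the $1^9 3^1$ case your three fixed codewords use all three cosets of $\{0,3,6\}$ as triples; then any further codeword whose special point $p_1$ lies in $\bbZ_9$, say in coset $C_j$, has $\beta\ge 1$ against the fixed codeword with triple $C_j$, forcing $T_1\cap C_j=\emptyset$, and since the two remaining cosets can each contribute at most one point of $T_1$, that codeword's triple is forced to contain an infinite point. Constraints of this kind cascade, and you acknowledge that naive choices fail. So it is genuinely open, on the basis of what you have written, whether base codewords satisfying all of your conditions exist with the orbit structure you prescribe. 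To close the gap you must either write down the codewords (as the paper does) or exhibit base codewords and carry out the finite check.
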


\begin{proof}
The codewords for each desired code constructed on $I_{11}$ or $I_{12}$ are listed below.

\noindent $1^9 2^1:$
$$\begin{array}{llllll}
\langle1,3,9,0\rangle & \langle1,4,7,8\rangle & \langle2,8,9,4\rangle & \langle5,8,10,1\rangle & \langle0,4,5,9\rangle &  \langle1,2,6,10\rangle\\ \langle0,7,10,2\rangle & \langle2,3,5,7\rangle &
\langle0,6,8,3\rangle & \langle3,4,10,6\rangle & \langle6,7,9,5\rangle \\
\end{array}$$

\noindent $1^9 3^1:$
$$\begin{array}{llllll}
\langle3,7,11,0\rangle & \langle2,4,7,10\rangle & \langle7,8,9,1\rangle & \langle0,2,9,3\rangle & \langle0,1,10,7\rangle & \langle6,8,10,4\rangle \\ \langle0,5,8,11\rangle & \langle1,4,11,8\rangle & \langle1,3,6,9\rangle & \langle4,5,9,6\rangle & \langle2,6,11,5\rangle & \langle3,5,10,2\rangle \\
\end{array}$$
\end{proof}

\begin{lemma}
\label{s2(9)} $A_3(9t+2,6,[3,1])=U(9t+2,6,[3,1])$ for $t\in
\{1,2,3\}$.
\end{lemma}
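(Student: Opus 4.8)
Since the inequality $A_3(9t+2,6,[3,1])\le U(9t+2,6,[3,1])$ is already supplied by the Johnson-type upper bound recorded in Section~II, the task reduces to exhibiting, for each $t\in\{1,2,3\}$, a single $(9t+2,6,[3,1])_3$-code whose size meets $U(9t+2,6,[3,1])$. Here $9t+2\equiv 2\pmod 9$ lies in the ``easy'' residue class, so $U(9t+2,6,[3,1])=\lfloor \frac{9t+2}{3}\lfloor\frac{9t+1}{3}\rfloor\rfloor=(9t+2)t=9t^2+2t$; thus I must produce codes of sizes $11$, $40$, and $87$ for $t=1,2,3$ respectively. The common device is to build the code as an orbit union under the cyclic shift $x\mapsto x+1$ on $\bbZ_{9t+2}$. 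A codeword of composition $[3,1]$ has a unique $2$-coordinate, so any shift fixing the codeword must map that coordinate to itself, forcing the shift to be trivial; hence every orbit has full length $9t+2$, and a list of $k$ base codewords develops into exactly $k(9t+2)$ codewords. Taking $k=2$ for $t=2$ (length $20$) and $k=3$ for $t=3$ (length $29$) already yields the target sizes $40$ and $87$, so the whole problem collapses to pinning down suitable base codewords.

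For $t=1$ I would not even need a fresh construction: the $[3,1]$-GDC$(6)$ of type $1^9 2^1$ produced in Lemma~\ref{s1^9m^1} has size $11$, and, forgetting its group structure, it is already an $(11,6,[3,1])_3$-code, since its codewords have composition $[3,1]$ and pairwise Hamming distance at least $6$. As $U(11,6,[3,1])=11$, this code is optimal, giving $A_3(11,6,[3,1])=11$. For $t=2,3$ I would set $X=\bbZ_{20}$ and $X=\bbZ_{29}$ respectively and record a short list of base codewords (two on $\bbZ_{20}$, three on $\bbZ_{29}$) to be developed by $+1\pmod{9t+2}$, in exact parallel with the companion Lemmas~\ref{s1(9)} and~\ref{s0(9)}.

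The only real content is verifying that the developed code has minimum distance $6$. For two weight-four codewords $u,v$ with supports $S_u,S_v$ one computes $d_H(u,v)=8-|S_u\cap S_v|-a$, where $a$ counts the coordinates on which $u$ and $v$ carry the same nonzero symbol; hence $d_H(u,v)\ge 6$ is equivalent to $|S_u\cap S_v|+a\le 2$. Translating this into the cyclic setting, I would check (i) the \emph{internal} condition that each base codeword and all of its nonzero shifts stay at distance $\ge 6$ — a constraint on the differences among the three $1$-positions together with the $2$-position — and (ii) the \emph{mutual} condition between every ordered pair of base codewords under all relative shifts. This differencing verification is where essentially all the work lies, and it is precisely what forces the base codewords to be chosen carefully; the expected outcome is that the two (resp.\ three) listed base codewords pass every such test, so that the cyclic development is an optimal code of the required size.
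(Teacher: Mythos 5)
Your strategy coincides exactly with the paper's: for $t=1$ the $[3,1]$-GDC$(6)$ of type $1^9 2^1$ from Lemma~\ref{s1^9m^1} is taken as the optimal $(11,6,[3,1])_3$-code, and for $t=2,3$ the code is the $+1\pmod{9t+2}$ development of $t$ base codewords over $\bbZ_{9t+2}$. Your supporting observations are all correct: $U(9t+2,6,[3,1])=(9t+2)t$ so the targets are $11,40,87$; every orbit is full because the unique $2$-coordinate can only be fixed by the trivial shift; and $d_H(\vu,\vv)=8-|\supp(\vu)\cap\supp(\vv)|-a\geq 6$ is the right criterion to check. The $t=1$ case is therefore complete.

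The gap is that for $t=2$ and $t=3$ you never actually exhibit the base codewords, and in a lemma of this kind the explicit codewords \emph{are} the proof. Saying that ``the expected outcome is that the listed base codewords pass every such test'' does not establish that such codewords exist; the differencing conditions you describe are nontrivial constraints (for instance, the three $1$-positions of a single base codeword together with its translates already impose a Sidon-type condition on their difference set), and there is no a priori guarantee that $2$ (resp.\ $3$) orbits can be packed simultaneously without a search. The paper closes this by listing $\langle0,1,9,15\rangle$, $\langle0,3,7,5\rangle$ for $t=2$ and $\langle0,1,26,15\rangle$, $\langle0,6,13,11\rangle$, $\langle0,8,20,10\rangle$ for $t=3$; with those in hand your verification scheme does finish the argument, but without them the proof of the $t=2,3$ cases is only a proof outline.
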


\begin{proof}
For $t=1$, the $[3,1]$-GDC of type $1^92^1$ constructed in
Lemma~\ref{s1^9m^1} is the desired code. For each $t\in \{2,3\}$,
let $X_t=\bbZ_{9t+2}$. Then $(X_t,{\cal C}_t)$ is the desired
optimal $(9t+2,6,[3,1])_3$-code, where ${\cal C}_t$ is  obtained by developing the elements of $\bbZ_{9t+2}$ in
the following codewords $+1\pmod{9t+2}$.

\noindent $t=2$: $\langle0,1,9,15\rangle$
$\langle0,3,7,5\rangle$

\noindent $t=3$: $\langle0,1,26,15\rangle$
$\langle0,6,13,11\rangle$ $\langle0,8,20,10\rangle$
\end{proof}

\begin{lemma}
\label{s3(9)} $A_3(9t+3,6,[3,1])=U(9t+3,6,[3,1])$ for each
$t\in \{1,2,3\}$.
\end{lemma}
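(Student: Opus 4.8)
The plan is to exploit the arithmetic identity
$U(9t+3,6,[3,1])=\left\lfloor\frac{9t+3}{3}\left\lfloor\frac{9t+2}{3}\right\rfloor\right\rfloor=(3t+1)(3t)=3t(3t+1)$,
which is exactly the size of a $[3,1]$-GDC$(6)$ of type $3^{3t+1}$. The key structural remark is that a codeword of composition $[3,1]$ has Hamming weight $4$, so it cannot be placed inside a group of size $3$; equivalently $A_3(3,6,[3,1])=0$. Therefore, viewing the point set of a $[3,1]$-GDC$(6)$ of type $3^{3t+1}$ as the coordinate set, the GDC is already a $(9t+3,6,[3,1])_3$-code of the same size (filling the groups via Construction~\ref{FillGroups}, legal since $d=6\le 2(w-1)=6$, adds nothing). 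Hence for each $t$ it suffices to produce a $[3,1]$-GDC$(6)$ of type $3^{3t+1}$, namely of types $3^4$, $3^7$, $3^{10}$ for $t=1,2,3$.

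For $t=2$ I would simply invoke Lemma~\ref{s3^3t+1}, which already supplies a $[3,1]$-GDC$(6)$ of type $3^7$ with size $42=U(21,6,[3,1])$; reading it as a code of length $21$ finishes this case at once. For $t=1$ the family in Lemma~\ref{s3^3t+1} does not list type $3^4$, but the $[3,1]$-GDC$(6)$ of type $1^9 3^1$ from Lemma~\ref{s1^9m^1} has the same order $12$ and size $12=U(12,6,[3,1])$; since its single group of size $3$ again holds no codeword, this GDC is directly the desired optimal $(12,6,[3,1])_3$-code (a direct cyclic construction on $\bbZ_{12}$ would serve equally well).

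The genuinely new case is $t=3$, i.e.\ a $[3,1]$-GDC$(6)$ of type $3^{10}$ of size $90$ on $30$ points. Note that inflating a $(10,6,[3,1])_3$-code (of size $6$) with weight $3$ yields only size $6\cdot 3^2=54$, far below $90$, so an explicit direct construction is required. Here I would take the point set $\bbZ_{30}$ with groups $\{\{i,i+10,i+20\}:0\le i\le 9\}$ and develop a short list of base codewords under the cyclic map $+1\pmod{30}$, which sends the group through $i$ to the group through $i+1$ and thus preserves the type $3^{10}$. A base codeword $\langle a_1,a_2,a_3,a_4\rangle$ has trivial stabilizer under $+1\pmod{30}$ (its unique coordinate carrying the symbol $2$ must be fixed, forcing shift $0$), so each orbit has length $30$; consequently three suitably chosen base codewords produce exactly $90$ codewords. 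Correctness reduces to the routine finite check that every developed codeword has composition $[3,1]$, meets each group in at most one position, and that distinct codewords differ in at least six coordinates.

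I expect the main obstacle to be precisely this last step: exhibiting base codewords for $n=30$ whose development simultaneously respects the $3^{10}$ group partition and maintains minimum distance $6$. The distance condition is equivalent to a difference condition on the base codewords (no two distinct translates may agree in three positions with matching nonzero values), which turns the search into a constrained combinatorial problem best handled by computer. Once a valid triple of base codewords is displayed, the verification is mechanical, and combined with the two reductions above it yields $A_3(9t+3,6,[3,1])=U(9t+3,6,[3,1])$ for all $t\in\{1,2,3\}$.
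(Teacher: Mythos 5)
Your reduction to constructing $[3,1]$-GDC$(6)$s of types $1^9 3^1$, $3^7$ and $3^{10}$ is sound, and your treatment of $t=1$ and $t=2$ coincides with the paper's proof: Lemma~\ref{s1^9m^1} supplies the type $1^9 3^1$ object of size $12$ and Lemma~\ref{s3^3t+1} supplies the type $3^7$ object of size $42$, each of which is already the required optimal code because a codeword of weight $4$ meets every group in at most one point. The trouble is your case $t=3$. You dismiss the inflation route on the grounds that a $(10,6,[3,1])_3$-code has size $6$, but Table~\ref{presult} gives $A_3(10,6,[3,1])=10$ (the value $6$ is the entry for length $9$); indeed $U(10,6,[3,1])=\left\lfloor \frac{10}{3}\left\lfloor\frac{9}{3}\right\rfloor\right\rfloor=10$. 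Inflating an optimal $(10,6,[3,1])_3$-code, viewed as a $[3,1]$-GDC$(6)$ of type $1^{10}$, with weight $3$ via Construction~\ref{Inflation} (a TD$(4,3)$ exists by Theorem~\ref{TD}) yields a $[3,1]$-GDC$(6)$ of type $3^{10}$ of size $10\cdot 3^2=90=U(30,6,[3,1])$, which is exactly what the paper does.

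Having ruled that route out on the basis of the wrong numerical value, you fall back on a direct cyclic construction over $\bbZ_{30}$ but never exhibit the three base codewords; you only describe the search that would produce them and note that verification would then be mechanical. As written, the $t=3$ case is therefore not proved: the existence of a suitable triple of base codewords satisfying the group and distance conditions is precisely the nontrivial content of that step, and it is left as an unexecuted computer search. Either supply and verify explicit base codewords, or replace this step by the inflation argument above, which closes the gap immediately.
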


\begin{proof}
For $t=1$, the $[3,1]$-GDC of type $1^93^1$ constructed in
Lemma~\ref{s1^9m^1} is the desired code. For $t=2$, the
$[3,1]$-GDC$(6)$ of type $3^7$ constructed in Lemma~\ref{s3^3t+1} is the
desired code. For $t=3$, inflate a $[3,1]$-GDC$(6)$ of type
$1^{10}$ with weight $3$ to obtain a $[3,1]$-GDC$(6)$ of type
$3^{10}$, which is the desired code.
\end{proof}

\begin{theorem}
$A_3(9t+i,6,[3,1])=U(9t+i,6,[3,1])$ for each $t\geq 1$ and
$i\in \{0,1,2,3\}$, except possibly for $(t,i)=(2,1)$.
\end{theorem}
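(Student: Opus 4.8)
The plan is to prove the matching lower bound $A_3(9t+i,6,[3,1])\ge U(9t+i,6,[3,1])$; the reverse inequality is the Johnson-type bound of the corollary in Section~II, which for $i\in\{0,1,2,3\}$ (so that $9t+i\not\equiv 4,5,7\pmod 9$) reads $A_3(n,6,[3,1])\le \lfloor\frac{n}{3}\lfloor\frac{n-1}{3}\rfloor\rfloor=U(n,6,[3,1])$. I would split into the small orders $t\in\{1,2,3\}$, already settled by Lemmas~\ref{s0(9)},~\ref{s1(9)},~\ref{s2(9)} and~\ref{s3(9)} (with the single gap $(t,i)=(2,1)$ left open there), and the generic range $t\ge4$, which I would handle by one GDC-based recipe covering all four residues simultaneously.

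For the generic range I would begin with the $[3,1]$-GDC$(6)$ of type $9^t$ and size $9t(t-1)$ furnished by Theorem~\ref{9^t} for every $t\ge4$. Since $[3,1]$ has weight $w=4$ and $d=6=2(w-1)$, Constructions~\ref{FillGroups} and~\ref{AdjoinPoints} are applicable. For $i=0$ I would fill the $t$ groups with an optimal $(9,6,[3,1])_3$-code of size $6$ (Table~\ref{presult}) via Construction~\ref{FillGroups}. For $i\in\{1,2,3\}$ I would adjoin $i$ ideal points and invoke Construction~\ref{AdjoinPoints}, whose ingredient~(i) is an optimal $(9+i,6,[3,1])_3$-code and whose ingredient~(iii) (valid since $t\ge2$) is a $[3,1]$-GDC$(6)$ of type $1^9\,i^1$. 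These ingredients are available: $A_3(9,6,[3,1])=6$ and $A_3(10,6,[3,1])=10$ from Table~\ref{presult}, together with the $[3,1]$-GDC$(6)$s of types $1^9 2^1$ and $1^9 3^1$ of sizes $11$ and $12$ from Lemma~\ref{s1^9m^1} (which are simultaneously optimal codes of lengths $11$ and $12$ by Lemmas~\ref{s2(9)} and~\ref{s3(9)}).

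It then remains to check that the sizes meet the bound. Because in each case the sizes of ingredients~(i) and~(iii) coincide, the per-group contribution is $6,10,11,12$ for $i=0,1,2,3$, so the construction yields a code of size $9t(t-1)+6t=9t^2-3t$, $9t(t-1)+10t=9t^2+t$, $9t(t-1)+11t=9t^2+2t$, and $9t(t-1)+12t=9t^2+3t$, respectively. A one-line floor computation gives $U(9t)=3t(3t-1)=9t^2-3t$, $U(9t+1)=t(9t+1)=9t^2+t$, $U(9t+2)=t(9t+2)=9t^2+2t$ and $U(9t+3)=3t(3t+1)=9t^2+3t$, so each construction is optimal and the lower bound holds for all $t\ge4$.

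The only pair not reached is $(t,i)=(2,1)$, i.e. $n=19$: Lemma~\ref{s1(9)} establishes $i=1$ only for $t\in\{1,3\}$, while the uniform recipe needs $t\ge4$ because Theorem~\ref{9^t} supplies the type $9^t$ GDC exactly for $t\ge4$, so $n=19$ escapes both, which is precisely the stated exception. I do not expect a genuine analytic obstacle: the substantive work (the type-$9^t$ GDCs of Theorem~\ref{9^t}, the small optimal codes, and the auxiliary GDCs of Lemma~\ref{s1^9m^1}) is already in place, and what remains is the bookkeeping matching each residue to its ingredient family together with the routine floor arithmetic confirming optimality.
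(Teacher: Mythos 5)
Your proposal is correct and follows essentially the same route as the paper: the small cases $t\le 3$ are delegated to the same lemmas, and for $t\ge 4$ the paper likewise adjoins $i$ ideal points to the type-$9^t$ $[3,1]$-GDC$(6)$ of Theorem~\ref{9^t} and fills in with the type-$1^9 i^1$ ingredients from Lemmas~\ref{s1(9)}, \ref{s0(9)} and \ref{s1^9m^1}. Your explicit size and floor computations only make precise the bookkeeping the paper leaves implicit.
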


\begin{proof} For $t\leq 3$, see Lemmas~\ref{s1(9)}--\ref{s0(9)}, \ref{s2(9)} and \ref{s3(9)}.
For each $t\geq 4$, take a $[3,1]$-GDC$(6)$ of type $9^t$ constructed
in Theorem~\ref{9^t}. Adjoin $i$ ideal points and fill in the groups
with $[3,1]$-GDC$(6)$s of type $1^{9}i^1$ constructed in
Lemmas~\ref{s1(9)}, \ref{s0(9)} and \ref{s1^9m^1} to obtain the
desired optimal $(9t+i,6,[3,1])_3$-code.
\end{proof}

\subsection{Case of Length $n\equiv 6\pmod{9}$}

\begin{lemma}
\label{s1^9t6^1} There is a $[3,1]$-GDC$(6)$ of type $1^{9t}6^1$
with size $9t(t+1)$ for each $t\in \{2,3,5,7,9\}$.
\end{lemma}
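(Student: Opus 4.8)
The plan is to build each of the five required codes directly, by cyclically developing a short list of base codewords, following the template already used in Lemma~\ref{s9^t} and in the explicit constructions of this subsection. For each $t\in\{2,3,5,7,9\}$ I would take the point set $X_t=I_{9t+6}$, declare the $9t$ singletons $\{0\},\{1\},\dots,\{9t-1\}$ to be the groups of size $1$, and let $\{9t,9t+1,\dots,9t+5\}$ be the single group of size $6$. As automorphism group I would use $G_t=\langle\sigma_t\rangle$, where $\sigma_t$ acts as the $9t$-cycle $(0\ 1\ \cdots\ 9t-1)$ on the singleton points and as a product of two $3$-cycles, for instance $(9t\ \ 9t+1\ \ 9t+2)(9t+3\ \ 9t+4\ \ 9t+5)$, on the size-$6$ group. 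Since $3\mid 9t$ always, the order of $\sigma_t$ is exactly $9t$, and $G_t$ maps groups to groups. Using two $3$-cycles rather than a single $6$-cycle is precisely what makes the construction uniform for the odd values $t\in\{3,5,7,9\}$ as well as for $t=2$.

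Because the target size is $9t(t+1)$ and each base codeword generates an orbit of full length $9t$ under $G_t$, I would list exactly $t+1$ base codewords for each $t$, i.e. $3,4,6,8,10$ codewords respectively. Writing each base codeword in the four-tuple form $\langle a_1,a_2,a_3,a_4\rangle$ (value $1$ at $a_1,a_2,a_3$ and value $2$ at $a_4$) makes the composition $[3,1]$ automatic for every developed codeword. Three things then have to be checked: (i) every codeword meets each group at most once, which it suffices to verify on the base codewords since $G_t$ preserves the partition; (ii) the $t+1$ orbits are disjoint and each has full length $9t$, so that there are exactly $9t(t+1)$ codewords; and (iii) the minimum Hamming distance is $6$.

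The real content, and the main obstacle, is (iii). For two weight-$4$ codewords of composition $[3,1]$ one computes $d_H(u,v)=8-k-a$, where $k$ is the number of common support coordinates and $a$ is the number of those at which $u$ and $v$ carry the same value; hence $d_H(u,v)\ge 6$ is equivalent to the forbidden-overlap condition $k+a\le 2$. Under cyclic development this becomes a finite difference condition on the base codewords: one must guarantee that for every pair of base codewords $B_i,B_j$ and every shift $g\in G_t$ (including the case $i=j$ with $g\neq 0$) the pair $B_i$ and $B_j+g$ never realises $k+a\ge 3$. I would therefore exhibit, in a table indexed by $t$, base codewords found by computer search so as to meet this condition simultaneously for all pairs and all shifts; the verification then reduces to the routine finite check of the induced difference patterns, exactly the ``it is easy to check'' step in the companion propositions. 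The entire difficulty lies in producing base-codeword lists that pass this check at once; once they are displayed, no further argument is required.
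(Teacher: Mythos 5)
Your framework is the right one and, for $t\in\{5,7,9\}$, coincides exactly with what the paper does: point set $I_{9t+6}$, singleton groups plus one group of size $6$, automorphism $(0\ 1\ \cdots\ 9t-1)(9t\ 9t{+}1\ 9t{+}2)(9t{+}3\ 9t{+}4\ 9t{+}5)$, and $t+1$ base codewords with full orbits. Your distance criterion $d_H(u,v)=8-k-a$, hence $k+a\le 2$, is also correct and is exactly the condition one checks on the base codewords.

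There are, however, two problems. First and decisively, the lemma is a pure existence statement whose entire content is the explicit data: a proof must actually display base codewords that pass the finite check. Saying you ``would exhibit a table found by computer search'' is a promissory note, not a proof; nothing in your argument establishes that such base codewords exist, and the reduction to ``a routine finite check'' only helps once the objects to be checked are on the table. Second, for $t\in\{2,3\}$ you commit to the full $9t$-cycle on the singleton points with only $t+1$ base orbits, whereas the paper switches for precisely these two values to the smaller automorphism group generated by $(0\ 3\ \cdots\ 9t{-}3)(1\ 4\ \cdots\ 9t{-}2)(2\ 5\ \cdots\ 9t{-}1)(9t\ 9t{+}1\ 9t{+}2)(9t{+}3\ 9t{+}4\ 9t{+}5)$, of order $3t$, and uses $3(t+1)$ base codewords with orbits of length $3t$. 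For $t=2$ your prescription would force $3$ base codewords on $\bbZ_{18}$, whose $36$ ordered internal differences must be squeezed into $17$ nonzero residues; the resulting mass of repeated differences makes the condition $k+a\le 2$ very hard (and quite possibly impossible) to satisfy, which is presumably why the authors relaxed the group. Your claim that the two $3$-cycles are ``precisely what makes the construction uniform'' for all five values of $t$ is therefore not supported: uniformity fails at $t=2,3$, and a correct proof must either supply data for your more rigid prescription (which may not exist) or adopt a different automorphism group there, as the paper does.
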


\begin{proof}
Let $X_t=I_{9t+6}$, and ${\cal G}_t=\{\{i\}:0\leq
i\leq 9t-1\}\cup\{\{9t,9t+1,\dots,9t+5\}\}$. Then $(X_t,{\cal
G}_t,{\cal C}_t)$ is a $[3,1]$-GDC$(6)$ of type $1^{9t}6^1$,
where ${\cal C}_t$ is obtained by developing the codewords in Table~\ref{t1^9t6^1} under the automorphism group
$G_t$.

For $t\in\{2,3\}$, $G_t=\langle(0\ \ 3\ \ 6\ \ \cdots\ \
9t-3)(1\ \ 4\ \ 7\ \ \cdots\ \ 9t-2)(2\ \ 5\ \ 8\ \ \cdots\ \
9t-1)(9t\ \ 9t+1\ \ 9t+2)(9t+3\ \ 9t+4\ \ 9t+5)\rangle$.

For $t\in\{5,7,9\}$, $G_t=\langle(0\ \ 1\ \ 2\ \ \cdots\ \ 9t-1)(9t\
\ 9t+1\ \ 9t+2)(9t+3\ \ 9t+4\ \ 9t+5)\rangle$.
\end{proof}

\begin{table*}
\centering
\renewcommand{\arraystretch}{1}
\setlength\arraycolsep{3pt} \caption{Base Codewords of Small
$[3,1]$-GDC$(6)$s of type $1^{9t}6^1$ in Lemma~\ref{s1^9t6^1}}
\label{t1^9t6^1}
\begin{tabular}{c|l}
\hline $t$ & {\hfill Codewords \hfill} \\
\hline $2$ & $\begin{array}{lllllll}
\langle0,1,19,2\rangle & \langle1,3,22,16\rangle & \langle0,7,23,14\rangle & \langle2,21,5,6\rangle & \langle1,9,20,6\rangle & \langle2,18,14,1\rangle & \langle0,8,4,17\rangle \\ \langle0,6,11,9\rangle & \langle2,10,4,13\rangle \\
\end{array}$ \\
\hline $3$ & $\begin{array}{lllllll}
\langle1,6,17,30\rangle & \langle1,31,14,9\rangle & \langle0,20,32,16\rangle & \langle0,29,25,5\rangle & \langle1,29,22,15\rangle & \langle2,11,1,12\rangle \\
\langle2,23,4,26\rangle & \langle0,21,18,17\rangle & \langle2,17,29,21\rangle & \langle2,15,0,7\rangle & \langle0,4,30,8\rangle & \langle0,1,10,13\rangle \\
\end{array}$ \\
\hline $5$ & $\begin{array}{lllllll}
\langle0,39,15,20\rangle & \langle0,35,36,33\rangle & \langle0,13,41,8\rangle & \langle0,31,49,38\rangle & \langle0,47,23,25\rangle & \langle0,11,29,3\rangle \\
\end{array}$ \\
\hline $7$ & $\begin{array}{llllll}
\langle0,5,21,53\rangle & \langle0,20,65,7\rangle & \langle0,25,54,31\rangle & \langle0,2,3,15\rangle & \langle0,55,68,44\rangle & \langle0,33,37,56\rangle \\ \langle0,27,41,51\rangle & \langle0,17,35,11\rangle \\
\end{array}$ \\
\hline $9$ & $\begin{array}{lllllll}
\langle0,19,58,28\rangle & \langle0,6,32,44\rangle & \langle0,33,50,30\rangle & \langle0,41,85,52\rangle & \langle0,4,67,20\rangle & \langle0,8,82,7\rangle \\
\langle0,2,15,3\rangle & \langle0,25,35,72\rangle & \langle0,5,27,70\rangle & \langle0,21,45,74\rangle &
\end{array}$ \\
\hline
\end{tabular}
\end{table*}

\begin{lemma}
\label{s6(9)} $A_3(9t+6,6,[3,1])=U(9t+6,6,[3,1])$ for $0\leq t\leq
10$.
\end{lemma}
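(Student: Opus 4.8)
The plan is to exploit the numerical identity $U(9t+6,6,[3,1])=9t^2+9t+2=9t(t+1)+2$, which is precisely the size produced by taking a $[3,1]$-GDC$(6)$ of type $1^{9t}6^1$ (of size $9t(t+1)$) and filling its single group of size $6$ with an optimal $(6,6,[3,1])_3$-code of size $2$ (Table~\ref{presult}) through Construction~\ref{FillGroups}; the singleton groups contribute nothing, since a $(1,6,4)_3$-code is empty. Hence optimality is immediate whenever the relevant $1^{9t}6^1$ GDC is in hand. First I would dispatch $t=0$ directly from Table~\ref{presult}. Then, for $t\in\{2,3,5,7,9\}$, I would feed the $1^{9t}6^1$ GDCs of Lemma~\ref{s1^9t6^1} into Construction~\ref{FillGroups} together with the size-$2$ code, obtaining codes of size $9t(t+1)+2=U(9t+6,6,[3,1])$.

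For the remaining values $t\in\{1,4,6,8,10\}$ the corresponding $1^{9t}6^1$ GDCs are not available, so I would argue separately. The base case $t=1$ (length $15$) I would settle by a direct cyclic construction, listing base codewords developed over $\bbZ_{15}$ to build an optimal $(15,6,[3,1])_3$-code meeting $U(15,6,[3,1])=20$; in the same manner I would construct a $[3,1]$-GDC$(6)$ of type $1^9 6^1$ of size $18$, to be reused as an ingredient. For the even values $t\in\{4,6,8,10\}$ I would take a $[3,1]$-GDC$(6)$ of type $9^t$ of size $9t(t-1)$, which exists by Theorem~\ref{9^t} as $t\geq 4$, and apply Construction~\ref{AdjoinPoints} with $y=6$ adjoined points; the ingredients are the optimal $(15,6,[3,1])_3$-code (playing the role of the $(g_1+y,6,4)_3$-code) and the $1^9 6^1$ GDC of size $18$ (required by item (iii), since $t_1=t\geq 2$). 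The length of the result is $9t+6$ and its size is
\begin{equation*}
9t(t-1)+20+(t-1)\cdot 18=9t^2+9t+2=U(9t+6,6,[3,1]),
\end{equation*}
so these codes are optimal too.

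The recursive steps are routine applications of the constructions recalled in the preliminaries, so the real work — and the main obstacle — lies in the two explicit small objects feeding the recursion: an optimal $(15,6,[3,1])_3$-code of size $20$ and a $[3,1]$-GDC$(6)$ of type $1^9 6^1$ of size $18$. For each I would need to verify that developing the chosen base codewords under the prescribed automorphism group yields codewords that are pairwise at Hamming distance at least six, all of composition $[3,1]$ and respecting $\|u|_{G_i}\|\le 1$. The crux is the optimality of the length-$15$ code: attaining the Johnson-type value $20$ is what makes the $t=1$ case sharp, and since the bookkeeping in Construction~\ref{AdjoinPoints} closes to exactly $U$, the optimality of all of $t\in\{4,6,8,10\}$ is inherited from the exactness of these two ingredients.
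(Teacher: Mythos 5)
Your treatment of $t=0$ and $t\in\{2,3,5,7,9\}$ coincides with the paper's (fill the size-$6$ group of the type-$1^{9t}6^1$ GDCs of Lemma~\ref{s1^9t6^1} with the two-codeword $(6,6,[3,1])_3$-code), and $t=1$ is indeed settled in the paper by a direct cyclic construction over $\bbZ_{15}$, so those parts are fine modulo actually exhibiting base codewords. The gap is in your handling of $t\in\{4,6,8,10\}$: the entire argument rests on a $[3,1]$-GDC$(6)$ of type $1^9 6^1$ with $18$ codewords, which you promise to construct ``in the same manner'' but never do, and which appears nowhere in the paper --- note that Lemma~\ref{s1^9t6^1} starts at $t=2$. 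This object is not a routine by-product of the length-$15$ code; a counting argument shows it would have to be extremely rigid. With $x_i$ (resp.\ $y_i$) the number of codewords having a $1$ (resp.\ a $2$) at a singleton position $i$, the distance-$6$ condition gives $3x_i\le 14$ and $2x_i+3y_i\le 14$, hence $x_i+y_i\le 6$; since every codeword has at least three nonzero entries among the nine singletons, $54=3\cdot 18\le\sum_i(x_i+y_i)\le 9\cdot 6=54$, so equality holds throughout. Consequently every codeword must place exactly one nonzero, and in fact a $1$ (never a $2$), in the group of size $6$; each of the six group positions must then carry exactly three codewords whose singleton supports partition the nine singletons; and the resulting $18$ triples must cover every pair of singletons at most twice with prescribed value patterns. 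Whether such a configuration exists is precisely the kind of question that requires an explicit construction, so the cases $t\in\{4,6,8,10\}$ remain unproved as written.

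The paper avoids this issue entirely: for even $t$ it takes the $[3,1]$-GDC$(6)$ of type $6^{3t/2+1}$ from Lemma~\ref{s6^t} and fills every group with the optimal two-codeword $(6,6,[3,1])_3$-code via Construction~\ref{FillGroups}, giving, with $s=t/2$, a code of size $12s(3s+1)+2(3s+1)=(3t+1)(3t+2)=U(9t+6,6,[3,1])$. If you wish to keep your Construction~\ref{AdjoinPoints} route from a master of type $9^t$, you must actually exhibit the type-$1^9 6^1$ ingredient of size $18$; otherwise, switch to the $6^{3t/2+1}$ masters, which are already available.
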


\begin{proof}
For $t=0$, see Table~\ref{presult}. For $t=1$, let $X=\bbZ_{15}$. Then $(X,{\cal
C})$ is the desired optimal $(15,6,[3,1])_3$-code, where
${\cal C}$ is obtained by developing the codewords $\langle2,5,0,7\rangle$,
$\langle1,8,2,9\rangle$, $\langle0,4,9,8\rangle$, and $\langle1,13,0,11\rangle$ $+3\pmod{15}$.
For each $t\in \{4,6,8,10\}$, take a $[3,1]$-GDC$(6)$ of type
$6^{3t/2+1}$ from Lemma~\ref{s6^t}, and fill in the groups with
optimal $(6,6,[3,1])_3$-codes to obtain the desired code.
For each $t\in \{2,3,5,7,9\}$, take a $[3,1]$-GDC$(6)$ of type
$1^{9t}6^1$ from Lemma~\ref{s1^9t6^1}, and fill in the group of size
$6$ with an optimal $(6,6,[3,1])_3$-code to obtain the desired code.
\end{proof}

\begin{lemma}
\label{6(9)}$A_3(9t+6,6,[3,1])=U(9t+6,6,[3,1])$ for $t\geq 11$.
\end{lemma}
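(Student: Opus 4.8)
The plan is to realize each length $n=9t+6$ as the output of Construction~\ref{AdjoinPoints} applied to a master $[3,1]$-GDC$(6)$ all of whose group sizes are multiples of $9$. First note that, since $9t+6\equiv 6\pmod 9$, the target value is $U(9t+6,6,[3,1])=\lfloor\frac{9t+6}{3}\lfloor\frac{9t+5}{3}\rfloor\rfloor=(3t+1)(3t+2)=9t^2+9t+2$. Suppose we have a $[3,1]$-GDC$(6)$ of type $(9m_1)(9m_2)\cdots(9m_r)$ with size $A$, adjoin $6$ ideal points, fill one group of size $9m_1$ together with the ideal points by an optimal $(9m_1+6,6,[3,1])_3$-code (from Lemma~\ref{s6(9)}, valid for $m_1\le 10$), and fill each remaining group of size $9m_i$ by a $[3,1]$-GDC$(6)$ of type $1^{9m_i}6^1$ (from Lemma~\ref{s1^9t6^1}, available for $m_i\in\{2,3,5,7,9\}$, of size $9m_i(m_i+1)$). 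Using $A_3(9m+6,6,[3,1])=9m^2+9m+2$, Construction~\ref{AdjoinPoints} yields a length-$(9t+6)$ code of size $A+2+9\sum_i(m_i^2+m_i)$ with $t=\sum_i m_i$. Hence the result is optimal exactly when $A=18\sum_{i<j}m_im_j$, and this is precisely the size carried by the master GDCs built in Section~IV (e.g.\ type $9^s$ has $A=9s(s-1)=18\binom{s}{2}$, and type $36^u(9x)^1$ has $A=72u(2u+x-2)$, as in Lemma~\ref{GDCa}). So optimality is \emph{automatic}, and the problem reduces to producing, for every $t\ge 11$, a master $[3,1]$-GDC$(6)$ with group sizes $9m_i$ where $\sum_i m_i=t$, with $m_i\in\{2,3,5,7,9\}$ for all but one group.

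For the bulk of the range I would use the engine of weight-$9$ fundamental constructions (Construction~\ref{FundCtr}). Starting from a $\mathrm{TD}(k,9)$ (which exists for $k\le 10$ by Theorem~\ref{TD}), assign weight $9$ to every point of the first four groups and weight $0$ or $9$ to the points of the remaining $k-4$ groups, giving weight $9$ to $a_i$ points of group $i$. Because the first four groups are fully weighted, every block meets at least four weighted points, so each ingredient is a code of type $9^j$ with $j\ge 4$, supplied by Theorem~\ref{9^t}; counting weighted pairs block-by-block gives $\sum_B\binom{j_B}{2}=\sum_{i<j}m_im_j$, so the resulting type-$81^4(9a_5)^1\cdots(9a_k)^1$ code has exactly the required size. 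Adjoining $6$ points and filling as above then produces optimal codes of length $9t+6$ with $t=36+\sum_{i\ge 5}a_i$, where each $a_i$ may be taken in $\{0,2,3,5,7,9\}$ and one group may absorb a residual value $\le 10$. Larger $t$ are obtained by the same recipe run on $4$-GDDs of type $9^u g^1$ with $g\in\{2,3,5,7,9\}$ (weight $9$ on every point), giving fillable types $81^u(9g)^1$, and by inflating $9^s$ via Construction~\ref{Inflation} to $(9m)^s$ for $m\in\{3,5,7,9\}$.

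The sporadic small and medium values $11\le t\le 35$ I would dispatch with the explicit master GDCs already assembled: the inflations $27^s$ (giving $t=3s$), together with $27^s9^1$ (Lemma~\ref{GDC-27^t 9^1}, $t=3s+1$), $27^s18^1$ (Lemma~\ref{GDC-27^t 18^1}, $t=3s+2$), $18^7$ (Lemma~\ref{GDCb}, $t=14$), and $45^5(9x)^1$ (Lemma~\ref{GDCb}, $t=25+x$), in each case adjoining $6$ points and filling. The structural subtlety to respect everywhere is that the types $1^{9}6^1$ and $1^{36}6^1$ are \emph{not} available (Lemma~\ref{s1^9t6^1} supplies $1^{9m}6^1$ only for $m\in\{2,3,5,7,9\}$), so a group of size $9$ or $36$ may occur in a master GDC only as the single exceptional group filled by a full small optimal code — which is why, for instance, $27^s9^1$ is usable (its lone $9$-group becomes the optimal $(15)$-code) while $36^u(9x)^1$ with $u\ge 2$ is not.

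The hard part will be the bookkeeping of the medium range: checking that these families, subject to the constraint that every group but one has size in $9\cdot\{2,3,5,7,9\}$, genuinely cover every $t$ from $11$ up to where the uniform transversal-design construction takes over. The value $t=11$ is the most delicate, since it cannot be realized with fewer than four groups — a $[3,1]$-GDC$(6)$ on two or three groups is empty, as each codeword has weight $4$ yet meets each group at most once — so it will require a dedicated small GDC of a type such as $18^4 27^1$ or $18^3 45^1$, which I would build by a short ad hoc weight-$9$ construction from a $4$-GDD. Once every $t\ge 11$ is accounted for, combining the construction sizes with the Johnson-type upper bound gives $A_3(9t+6,6,[3,1])=U(9t+6,6,[3,1])$.
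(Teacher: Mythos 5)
Your size bookkeeping is correct and the overall template (master GDC, adjoin $6$ ideal points, fill groups) is the paper's, but there is a genuine gap caused by a missed idea. You assert that type $36^u(9x)^1$ with $u\ge 2$ is unusable because no $[3,1]$-GDC$(6)$ of type $1^{36}6^1$ is available. The paper circumvents exactly this: after adjoining the $6$ ideal points it fills each $36$-point group \emph{together with} the ideal points by a $[3,1]$-GDC$(6)$ of type $6^7$ (Lemma~\ref{s6^t}, size $168$), then fills the resulting size-$6$ groups with optimal $(6,6,[3,1])_3$-codes of size $2$; the count $168+6\cdot 2=180=9\cdot4\cdot5$ is exactly what a hypothetical $1^{36}6^1$ of the right size would contribute. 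This two-step filling makes the uniform family $36^u(9x)^1$ of Lemmas~\ref{GDCa}--\ref{GDCb} usable and settles all $t\ge 16$ except $t=26$ in one stroke. Without it, your framework forces every non-exceptional group size into $9\cdot\{2,3,5,7,9\}$, and your listed sporadic masters ($27^s$, $27^s9^1$, $27^s18^1$, $18^7$, $45^5(9x)^1$) cover only $t\in\{12,\dots,16,19,20,21,24,25,26,27,30,33,\dots\}$ or so, leaving $t\in\{11,17,18,22,23,28,29,31,32,34,35\}$ unresolved; your TD$(k,9)$ engine only starts at $t=36$.

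The case $t=11$ is worse than "delicate bookkeeping": the types you propose cannot be built the way you suggest. A $4$-GDD of type $2^43^1$ fails the replication condition (a point of the size-$3$ group would lie in $8/3$ blocks), and a $4$-GDD of type $2^35^1$ fails a pair count (every block must be a transversal of the four groups, forcing simultaneously $b=4$ from the pairs between size-$2$ groups and $b=10$ from the pairs meeting the size-$5$ group). The paper's solution for $t=11$ abandons the "group sizes divisible by $9$ plus $6$ ideal points" framework entirely: it inflates the $[3,1]$-GDC$(6)$ of type $3^7$ (Lemma~\ref{s3^3t+1}) by weight $5$ to get type $15^7$ on $105=9\cdot 11+6$ points and fills each group directly with an optimal $(15,6,[3,1])_3$-code, with no ideal points at all. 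You should not assume the $6$ adjoined points are always necessary.
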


\begin{proof}
For each $t\geq 16$ and $t\not=26$, write $t=4u+x$ with $u\geq 4$ and $x\in\{0,1,2,3\}$. Take a
$[3,1]$-GDC$(6)$ of type $36^u(9x)^1$ from Lemmas~\ref{GDCa}--\ref{GDCb}.
Adjoin six ideal points, and fill in the groups of size $36$
together with the ideal points with $[3,1]$-GDC$(6)$s of type
$6^7$ from Lemma~\ref{s6^t} to obtain a $[3,1]$-GDC$(6)$ of
type $6^{6u}(9x+6)^1$. Then fill in the groups with optimal
$(6,6,[3,1])_3$-codes and an optimal $(9x+6,6,[3,1])_3$-code
from Lemma~\ref{s6(9)} to obtain an optimal
$(36u+9x+6,6,[3,1])_3$-code, as desired.
For $t=26$, take a $[3,1]$-GDC$(6)$ of type
$45^5 9^1$ from Lemma~\ref{GDCb}.
Adjoin $6$ ideal points, fill in the groups of size $45$
together with these ideal points with $[3,1]$-GDC$(6)$s of type
$1^{45}6^1$ from Lemma~\ref{s1^9t6^1}, and fill in the group of size
$9$ together with these ideal points with an optimal
$(15,6,[3,1])_3$-code to obtain the desired code.

For each $t\in \{12,14,15\}$, take a $[3,1]$-GDC$(6)$ of type
$27^4$, type $18^7$, or type $27^5$ from Lemma~\ref{GDCb}. Adjoin
$6$ ideal points and fill in the groups to obtain the desired code.
For $t=13$, we proceed similarly, starting instead with a
$[3,1]$-GDC$(6)$ of type $27^4 9^1$ from Lemma~\ref{GDC-27^t 9^1}.

For $t=11$, inflate a $[3,1]$-GDC$(6)$ of type $3^7$ from
Lemma~\ref{s3^3t+1} with weight $5$ to obtain a $[3,1]$-GDC$(6)$ of
type $15^7$. Fill in the groups with optimal $(15,6,[3,1])_3$-codes to complete the proof.
\end{proof}

Combining Lemmas~\ref{s6(9)} and \ref{6(9)}, we have the
following result.

\begin{theorem}
$A_3(9t+6,6,[3,1])=U(9t+6,6,[3,1])$ for each $t\geq 0$.
\end{theorem}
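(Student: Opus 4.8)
The plan is to prove the two matching bounds separately and then patch together the whole range of $t$. Since $n=9t+6\equiv 6\pmod 9$, the Johnson-type estimate coming from Lemma~\ref{bound1} and Theorem~\ref{bound2} gives the upper bound $A_3(9t+6,6,[3,1])\le U(9t+6,6,[3,1])=(3t+1)(3t+2)$, which is precisely the quantity $U$ defined for this residue class (none of the sharper bounds reserved for $n\equiv 4,5,7$ apply here). So all the content is in the lower bound: I must exhibit, for every $t\ge 0$, an $(9t+6,6,[3,1])_3$-code of size $U(9t+6,6,[3,1])$. Throughout I will use that such an optimal code is the same object as a $[3,1]$-GDC$(6)$ of type $1^{9t+6}$ meeting the bound, and build it by Constructions~\ref{FillGroups} and~\ref{AdjoinPoints} out of smaller group divisible codes, checking at each stage that the claimed size agrees with $U$.

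For the bulk of the range (large $t$) I would set up a single recursive engine. Writing $t=4u+x$ with $u\ge 4$ and $x\in\{0,1,2,3\}$, the idea is to produce a $[3,1]$-GDC$(6)$ of type $36^u(9x)^1$: take a TD$(5,u)$ from Theorem~\ref{TD}, delete a point to get a $\{5,u+1\}$-GDD of type $4^u u^1$, and apply the Fundamental Construction (Construction~\ref{FundCtr}) with weight $9$, inflating blocks by the $9^s$-codes guaranteed for all $s\ge 4$ by Theorem~\ref{9^t}; this is exactly Lemma~\ref{GDCa}, with Lemma~\ref{GDCb} supplying the few values of $u$ outside the reach of the transversal design. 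Then I adjoin six ideal points and fill each group of size $36$, together with the ideal points, with the $6^7$-codes of Lemma~\ref{s6^t}, refining the picture into a $[3,1]$-GDC$(6)$ of type $6^{6u}(9x+6)^1$. Filling the size-$6$ groups with optimal $(6,6,[3,1])_3$-codes and the last group with an optimal $(9x+6,6,[3,1])_3$-code from Lemma~\ref{s6(9)} yields the desired code, and a short arithmetic check confirms the total size is $(3t+1)(3t+2)$. This handles all sufficiently large $t$ (concretely $t\ge 16$, with $t=26$ as a single outlier).

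The remaining work, which I expect to be the real obstacle, is covering the small and medium values of $t$ that fall outside the reach of the generic construction. For $0\le t\le 10$ (Lemma~\ref{s6(9)}) I would rely on direct base-codeword constructions together with the $6^{3s+1}$-codes of Lemma~\ref{s6^t} and the $1^{9t}6^1$-codes of Lemma~\ref{s1^9t6^1}, each filled with optimal $(6,6,[3,1])_3$-codes. For the awkward middle band $t\in\{11,12,13,14,15,26\}$ one needs tailor-made seed GDCs---for instance types $27^4$, $18^7$, $27^5$, $27^4 9^1$ and $45^5 9^1$, and a $15^7$-code obtained by inflating the $3^7$-code of Lemma~\ref{s3^3t+1} with weight $5$---after which adjoining ideal points and filling in groups again produces optimal codes (Lemma~\ref{6(9)}). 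The genuine difficulty is not the recursion but the existence of these explicit small ingredient designs: each must be found (typically by a short search over base blocks under a prescribed cyclic automorphism group) and each must have size matching $U$ \emph{exactly}, since any deficit would propagate through the filling constructions and destroy optimality. Once every residue of $t$ modulo $4$ and every exceptional value is accounted for, combining the small-$t$ lemma (Lemma~\ref{s6(9)}) with the large-$t$ lemma (Lemma~\ref{6(9)}) gives $A_3(9t+6,6,[3,1])=U(9t+6,6,[3,1])$ for all $t\ge 0$.
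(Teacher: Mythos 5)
Your proposal is correct and follows essentially the same route as the paper: the combination of Lemma~\ref{s6(9)} for $0\le t\le 10$, the $36^u(9x)^1$ engine of Lemma~\ref{6(9)} for $t\ge 16$ (filled via $6^7$-GDCs into type $6^{6u}(9x+6)^1$), and the same exceptional seeds ($27^4$, $18^7$, $27^5$, $27^4 9^1$, $45^5 9^1$, and $15^7$ from inflating $3^7$) for $t\in\{11,\dots,15,26\}$. You also correctly identify that the burden lies in the explicit small ingredient GDCs meeting the bound exactly, which is precisely what the paper's direct constructions supply.
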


\subsection{Case of Length $n\equiv 4\pmod{9}$}

\begin{lemma}
\label{s4^t} There exists a $[3,1]$-GDC$(6)$ of type $4^{9t+1}$ with size $16t(9t+1)$
for each $t\in \{1,2,3\}$.
\end{lemma}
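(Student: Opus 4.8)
The plan is to realize the point set cyclically and read the code off a short table of base codewords, exactly in the style of the constructions in Lemmas~\ref{s3^3t+1}, \ref{s6^t} and \ref{s9^t}. Concretely, I would set $X_t=\bbZ_{36t+4}$ and take the groups to be the cosets of the unique subgroup of order $4$, namely
\[
{\cal G}_t=\{\{i,\,i+(9t+1),\,i+2(9t+1),\,i+3(9t+1)\}:0\le i\le 9t\},
\]
which partitions $X_t$ into $9t+1$ classes of size $4$, giving the type $4^{9t+1}$. The code ${\cal C}_t$ would then be obtained by developing a list of base codewords of the form $\langle a_1,a_2,a_3,p\rangle$ (three $1$'s at $a_1,a_2,a_3$ and the single $2$ at $p$) under the shift $x\mapsto x+1\pmod{36t+4}$.

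The next step is the count. Since each codeword carries exactly one symbol $2$, any shift $\delta\neq0$ fixing a codeword would in particular fix its $2$-position, forcing $\delta=0$; hence every orbit under the cyclic group has full length $36t+4$. Taking $4t$ base codewords from distinct orbits (that is, $4$, $8$, $12$ for $t=1,2,3$) therefore produces exactly $4t(36t+4)=16t(9t+1)$ codewords, matching the claimed size; the disjointness of the orbits is something I would arrange as part of the search.

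The composition $[3,1]$ is automatic from the chosen form, and both the group-divisibility condition and the distance-$6$ condition are invariant under the cyclic shift, so each reduces to a finite check on the base codewords. For group divisibility I would verify that within each base codeword no two of the four positions differ by an element of $\{9t+1,18t+2,27t+3\}$, so that the four support coordinates always land in four distinct groups. For the distance, a short computation (with $s$ the number of common support coordinates of two codewords and $e$ the number of these at which the values differ, so $d_H=8-2s+e$) shows that two distinct composition-$[3,1]$ codewords are at Hamming distance at least $6$ precisely when $s\le1$, or when $s=2$ and both shared coordinates are value-disagreements, i.e.\ the $2$ of each sits on a $1$ of the other. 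I would therefore require the developed codewords to form a packing whose only admissible coincidence between two codewords is a single common coordinate or this cross pattern; by translation invariance this is again a finite condition on the differences generated by the base codewords.

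The main obstacle will be producing base codewords that simultaneously satisfy all of these difference constraints: this is a small combinatorial search problem rather than a conceptual argument. Since $t\in\{1,2,3\}$, the ambient groups $\bbZ_{36t+4}$ have orders only $40$, $76$ and $112$, so a computer search locates suitable base codewords without difficulty, and I would record them in a table analogous to Table~\ref{t9^t}. Once such a table is fixed, the three verifications above (composition, group divisibility, distance) are finite and routine, which completes the proof.
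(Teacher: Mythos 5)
Your framework is sound: the choice of $X_t=\bbZ_{36t+4}$ with groups the cosets of the order-$4$ subgroup, the full-orbit count $4t(36t+4)=16t(9t+1)$, and the distance analysis $d_H=8-2s+e$ (so that distance $6$ amounts to ``$s\le 1$, or $s=2$ with each codeword's $2$ sitting on a $1$ of the other'') are all correct, and this is exactly how the paper handles $t=2$, where eight base codewords over $\bbZ_{76}$ are developed $+1\pmod{76}$. The gap is that you never produce the base codewords: for a lemma of this type the explicit table \emph{is} the proof, and ``a computer search locates suitable base codewords without difficulty'' asserts precisely the existence statement that is to be established. Nothing in your setup guarantees a priori that a strictly cyclic solution with $4t$ full-length orbits exists over $\bbZ_{40}$ or $\bbZ_{112}$; indeed, for $t=1$ the $160$ codewords carry $960$ support pairs while only $\binom{40}{2}-10\binom{4}{2}=720$ cross-group pairs are available, so pair repetitions (hence instances of the cross pattern) are forced and the search constraints are genuinely tight rather than routinely satisfiable.

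You could also avoid two of the three searches entirely, as the paper does. For $t\in\{1,3\}$, Lemma~\ref{s1(9)} already supplies optimal $(9t+1,6,[3,1])_3$-codes, i.e.\ $[3,1]$-GDC$(6)$s of type $1^{9t+1}$ of sizes $10$ and $84$; applying the Inflation Construction (Construction~\ref{Inflation}) with a TD$(4,4)$ from Theorem~\ref{TD} multiplies the size by $16$ and converts the type to $4^{9t+1}$, giving exactly $16t(9t+1)$ codewords with no new search. Only $t=2$ then requires a direct construction, and there the base codewords must actually be exhibited.
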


\begin{proof}
For $t=2$, let $X=\bbZ_{76}$, and ${\cal
G}=\{\{0,19,38,57\}+i:0\leq i\leq 18\}$. Then $(X,{\cal
G},{\cal C})$ is a $[3,1]$-GDC$(6)$ of type $4^{19}$, where
${\cal C}$ is obtained by developing the elements of $\bbZ_{76}$
in the following codewords $+1\pmod{76}$.
$$\begin{array}{llll}
\langle0,26,22,9\rangle &
\langle0,62,74,13\rangle &
\langle0,37,6,67\rangle &
\langle0,44,65,41\rangle \\
\langle0,5,56,29\rangle &
\langle0,43,53,46\rangle &
\langle0,42,60,1\rangle &
\langle0,8,36,7\rangle
\end{array}$$

For each $t\in \{1,3\}$, inflate a $[3,1]$-GDC$(6)$ of type
$1^{9t+1}$ from Lemma~\ref{s1(9)} with weight $4$ to obtain the desired code.
\end{proof}

\begin{lemma}
\label{s4(9)} $A_3(9t+4,6,[3,1])=U(9t+4,6,[3,1])$ for $t\in
\{0,2,3,4,8,12\}$.
\end{lemma}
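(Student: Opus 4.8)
The plan is to match the upper bound $U(9t+4,6,[3,1])=9t^2+6t+1+\lfloor t/4\rfloor$ with a recursive construction for each of the six listed values of $t$, grouping them by the tool used. The case $t=0$ is immediate, since $A_3(4,6,[3,1])=1=U(4,6,[3,1])$ by Table~\ref{presult}. For the remaining values I would first record the target sizes, namely $U=49,100,170,627,1372$ for $t=2,3,4,8,12$ respectively, and then exhibit constructions whose output sizes land exactly on these numbers.

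For the three values $t\in\{4,8,12\}$ divisible by $4$, write $t=4s$ with $s\in\{1,2,3\}$ and note $9t+4=36s+4=4(9s+1)$. The plan is to take the $[3,1]$-GDC$(6)$ of type $4^{9s+1}$ of size $16s(9s+1)$ supplied by Lemma~\ref{s4^t}, and apply Construction~\ref{FillGroups} (legitimate since $d=6\le 2(w-1)=6$), filling each of the $9s+1$ groups of size $4$ with an optimal $(4,6,[3,1])_3$-code of size $A_3(4,6,[3,1])=1$. The result has length $9t+4$ and size $16s(9s+1)+(9s+1)=(9s+1)(16s+1)=144s^2+25s+1$. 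Since $t=4s$ gives $\lfloor t/4\rfloor=s$, one checks $U(9t+4,6,[3,1])=144s^2+24s+1+s=144s^2+25s+1$, so the bound is met. The arithmetic is clean precisely because the size claimed in Lemma~\ref{s4^t} is calibrated to absorb the extra $\lfloor t/4\rfloor$ term.

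For the two residual values $t\in\{2,3\}$ the point count $9t+4$ is one more than a multiple of $3$, so the plan is to adjoin a single ideal point to a GDC whose groups have size $3$ via Construction~\ref{AdjoinPoints} (again $d=6\le 2(w-1)$, with $y=1$). For $t=2$ I would take the optimal $[3,1]$-GDC$(6)$ of type $3^7$ of size $42$ from Lemma~\ref{s3^3t+1}; for $t=3$ I would take the $[3,1]$-GDC$(6)$ of type $3^{10}$ of size $90$ obtained by inflating an optimal $(10,6,[3,1])_3$-code (Construction~\ref{Inflation} with a ${\rm TD}(4,3)$, as in Lemma~\ref{s3(9)}). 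In both cases the only ingredients required are a $(4,6,[3,1])_3$-code of size $1$ (for the first group together with the ideal point) and $[3,1]$-GDC$(6)$s of type $1^3\,1^1=1^4$ of size $1$ (for the remaining groups), both trivially available. Construction~\ref{AdjoinPoints} then yields a code of length $3\cdot 7+1=22$ and size $42+1+(7-1)\cdot 1=49$ when $t=2$, and of length $3\cdot 10+1=31$ and size $90+1+(10-1)\cdot 1=100$ when $t=3$, each equal to the respective upper bound.

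The main point is that essentially no obstacle remains at this stage: the genuinely hard work lives in the prerequisite direct constructions, especially the $[3,1]$-GDC$(6)$ of type $4^{19}$ in Lemma~\ref{s4^t} (needed for $t=8$) and the optimal $3^7$ in Lemma~\ref{s3^3t+1}. Given those, the lemma reduces to bookkeeping: verifying $d\le 2(w-1)$ so the constructions apply, confirming that each ingredient code or GDC exists, and checking that the additive size formula of each construction coincides with $U(9t+4,6,[3,1])$. The one subtlety worth flagging is the matching of the floor term: it is only because $t\equiv 0\pmod 4$ for $t\in\{4,8,12\}$ (so $\lfloor t/4\rfloor$ is exact) and because $\lfloor t/4\rfloor=0$ for $t\in\{0,2,3\}$ that these particular lengths admit a clean exact match, which is exactly why the lemma isolates this sparse set of $t$.
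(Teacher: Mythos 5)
Your proposal is correct, and for $t\in\{0,4,8,12\}$ it coincides with the paper's proof: the paper likewise cites Table~\ref{presult} for $t=0$ and, for $t\in\{4,8,12\}$, fills the groups of the $[3,1]$-GDC$(6)$s of type $4^{9t/4+1}$ from Lemma~\ref{s4^t} with optimal $(4,6,[3,1])_3$-codes; your arithmetic $(9s+1)(16s+1)=144s^2+25s+1=U(36s+4,6,[3,1])$ is the right verification. Where you genuinely diverge is at $t\in\{2,3\}$: the paper constructs these two codes directly, exhibiting base codewords on $\bbZ_{9t+3}\cup\{\infty\}$ developed $+3\pmod{9t+3}$, whereas you obtain them recursively by adjoining one ideal point (Construction~\ref{AdjoinPoints} with $y=1$) to the $[3,1]$-GDC$(6)$ of type $3^7$ of size $42$ from Lemma~\ref{s3^3t+1} and to the type-$3^{10}$ GDC of size $90$ obtained by inflating an optimal $(10,6,[3,1])_3$-code with a ${\rm TD}(4,3)$. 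Your size counts $42+1+6=49$ and $90+1+9=100$ match $U(22,6,[3,1])$ and $U(31,6,[3,1])$, the ingredient $(4,6,[3,1])_3$-codes and type-$1^4$ GDCs of size $1$ are trivially available, the hypothesis $d=6\le 2(w-1)$ holds, and there is no circularity since Lemma~\ref{s3^3t+1}, Table~\ref{presult} and Theorem~\ref{TD} are all prior to this lemma. Your route buys a shorter, computer-search-free argument for lengths $22$ and $31$ at the cost of leaning on the earlier direct construction of the $3^7$ frame; the paper's explicit codewords are self-contained for these two lengths but require verification by machine. Either way the lemma reduces, as you say, to bookkeeping once the prerequisite GDCs are in hand.
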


\begin{proof}
For $t=0$, see Table~\ref{presult}.

For $t\in \{2,3\}$, let $X_t=\bbZ_{9t+3}\cup\{\infty\}$. Then
$(X_t,{\cal C}_t)$ is the desired optimal $(9t+4,6,[3,1])_3$-code, where ${\cal C}_t$ is developed $+3\pmod{9t+3}$.

\noindent $t=2$: $\langle1,6,12,5\rangle$
$\langle0,12,8,19\rangle$ $\langle1,8,18,15\rangle$
$\langle0,1,2,3\rangle$ $\langle2,5,11,7\rangle$
$\langle1,10,7,20\rangle$ $\langle0,5,\infty,13\rangle$

\noindent $t=3$: $\langle0,2,16,6\rangle$
$\langle0,1,25,23\rangle$ $\langle0,7,28,15\rangle$
$\langle2,5,22,4\rangle$ $\langle2,7,18,26\rangle$
$\langle2,21,11,17\rangle$ $\langle0,9,12,22\rangle$
$\langle2,28,25,13\rangle$ $\langle0,17,29,24\rangle$
$\langle0,\infty,4,5\rangle$

For $t\in \{4,8,12\}$, take $[3,1]$-GDC$(6)$s of type
$4^{9t/4+1}$ from Lemma~\ref{s4^t}, and fill in the groups with
optimal $(4,6,[3,1])_3$-codes to obtain the desired codes.
\end{proof}

\begin{lemma}
$A_3(9t+4,6,[3,1])=U(9t+4,6,[3,1])$ for each $t\not\equiv
1\pmod{4}$, $t\geq 16$ and $t\not=26$.
\end{lemma}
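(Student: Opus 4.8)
The plan is to mirror the recursive construction used for the case $n\equiv 6\pmod 9$, but built around groups of size $4$ instead of size $6$. Write $t=4u+x$. Since $t\not\equiv 1\pmod 4$, the remainder $x$ lies in $\{0,2,3\}$, and because $t\ge 16$ one checks that $u\ge 4$ in every admissible case (for each residue the smallest $t\ge16$ already forces $t-x\ge16$, hence $u\ge4$). The crucial point is that $x\ne 1$, so that the filler code I will eventually need, an optimal $(9x+4,6,[3,1])_3$-code, is always one of the lengths $4$, $22$, $31$ already produced in Lemma~\ref{s4(9)} (for $s\in\{0,2,3\}$); the missing length $13$ (i.e.\ $x=1$) is exactly what forces the congruence restriction on $t$.

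First I would take a master $[3,1]$-GDC$(6)$ of type $36^u(9x)^1$. For $u\ge 4$ with $u\notin\{6,10\}$ this comes from Lemma~\ref{GDCa} (valid since $0\le x\le u$), and the two excluded values are supplied by Lemma~\ref{GDCb}: type $36^6(9x)^1$ for $x\in\{0,1,3\}$ and type $36^{10}(9x)^1$ for $x\in\{0,1,2,3\}$. Note that the type $36^6 18^1$ is not among the available ones, which is precisely why $t=26$ (that is, $u=6$, $x=2$) must be excluded from the statement.

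Next I would adjoin $4$ ideal points. For each group of size $36$, I overlay a $[3,1]$-GDC$(6)$ of type $4^{10}$ (Lemma~\ref{s4^t} with $t=1$) on that group together with the $4$ ideal points, splitting the $36$ points into nine groups of size $4$ and letting the $4$ ideal points form the tenth group. Since the same ideal points are reused for every size-$36$ group, and these ideal points are then merged with the original group of size $9x$, the result is a $[3,1]$-GDC$(6)$ of type $4^{9u}(9x+4)^1$: every cross-group pair is covered either by the master GDD (pairs across distinct size-$36$ groups, and pairs from a size-$4$ group to the $9x$ part) or by one of the overlaid $4^{10}$ codes (pairs inside a single size-$36$ group, and pairs from a size-$4$ group to the ideal points), while pairs inside the merged group of size $9x+4$ need no coverage.

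Finally I would invoke Construction~\ref{FillGroups}: fill each group of size $4$ with an optimal $(4,6,[3,1])_3$-code and fill the group of size $9x+4$ with the optimal $(9x+4,6,[3,1])_3$-code from Lemma~\ref{s4(9)}. This yields a code of length $4\cdot 9u+(9x+4)=36u+9x+4=9t+4$ whose size meets $U(9t+4,6,[3,1])$, as desired. I expect no serious obstacle: the construction is routine once the decomposition is fixed, and the only real bookkeeping is verifying that $t=4u+x$ with $x\in\{0,2,3\}$ and $u\ge4$ covers every admissible $t\ge 16$ and that the ingredient GDCs of types $36^u(9x)^1$ and $4^{10}$ are all on hand; the sole genuine gap, $t=26$, is already accounted for in the statement.
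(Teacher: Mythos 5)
Your proposal is correct and follows essentially the same route as the paper: write $t=4u+x$ with $x\in\{0,2,3\}$, take the master $[3,1]$-GDC$(6)$ of type $36^u(9x)^1$ from Lemmas~\ref{GDCa}--\ref{GDCb}, adjoin four ideal points and fill the size-$36$ groups with $[3,1]$-GDC$(6)$s of type $4^{10}$ to reach type $4^{9u}(9x+4)^1$, then fill the groups with optimal codes of lengths $4$ and $9x+4$. Your added remarks correctly pinpoint why $t=26$ (missing type $36^6\,18^1$) and $t\equiv 1\pmod 4$ (missing optimal length-$13$ code) are excluded.
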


\begin{proof}
For each $t\not\equiv 1\pmod{4}$, $t\geq 16$ and $t\not=26$, write $t=4u+x$ such that $x\in\{0,2,3\}$. Take a $[3,1]$-GDC$(6)$ of type $36^u(9x)^1$ from Lemmas~\ref{GDCa}--\ref{GDCb}, adjoin four ideal points, and fill in the groups of size $36$ together with these ideal points with $[3,1]$-GDC$(6)$s of type $4^{10}$ from Lemma~\ref{s4^t} to
obtain a $[3,1]$-GDC$(6)$ of type $4^{9u}(9x+4)^1$. Fill in the
groups with optimal $(4,6,[3,1])_3$-codes and an optimal
$(9x+4,6,[3,1])_3$-code from Lemma~\ref{s4(9)} to obtain an
optimal $(36u+9x+4,6,[3,1])_3$-code, as desired.
\end{proof}

\begin{lemma}
$A_3(9t+4,6,[3,1])=U(9t+4,6,[3,1])$ for $t\in\{17,33\}$ or $t\equiv 1\pmod{4}, t\geq 85$.
\end{lemma}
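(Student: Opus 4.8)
The target is to realize the bound $U(9t+4,6,[3,1])=9t^2+6t+1+\lfloor t/4\rfloor$ by a recursive group-divisible construction followed by filling, exactly as in the preceding residue classes. The plan is first to isolate why $t\equiv 1\pmod 4$ resists the straightforward approach. The construction used for $t\not\equiv 1\pmod 4$ produces a $[3,1]$-GDC$(6)$ of type $36^u(9x)^1$ with $t=4u+x$, adjoins four ideal points, and fills the resulting special group of size $9x+4$ with an optimal $(9x+4,6,[3,1])_3$-code from Lemma~\ref{s4(9)}. For $x\in\{0,2,3\}$ the ingredient codes of lengths $4$, $22$, $31$ are available, but for $x=1$ one would need an optimal $(13,6,[3,1])_3$-code, which is exactly the smallest $1\pmod 4$ case and is not among the codes established so far. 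The crux, therefore, is to reach length $9t+4$ with $t\equiv 1\pmod 4$ while never requiring a length $\equiv 4\pmod 9$ fill at an index $\equiv 1\pmod 4$.

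My plan is to route the unique ``full-code'' group through a good index. In Construction~\ref{AdjoinPoints} with $y=4$ only one group of the master GDC is completed by a genuine $(\cdot,6,[3,1])_3$-code, while every other group is absorbed by an auxiliary $[3,1]$-GDC$(6)$ of type $1^{9m}4^1$; hence I only need the single full code to have a good length $9s+4$ with $s\in\{2,3,4,8,12\}$, and the indices of the remaining groups stay unconstrained. Concretely, for large $t$ I would take a TD$(k,m)$ from Theorem~\ref{TD}, assign weight $9$ to all points except in the last group, where weight $9$ is given to only $s$ points (and $0$ to the rest); applying the Fundamental Construction with the type-$9^{j}$ $[3,1]$-GDC$(6)$s from Theorem~\ref{9^t} yields a master GDC of type $(9m)^{k-1}(9s)^1$. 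Adjoining four ideal points, I complete the $(9s)^1$ group together with them by an optimal $(9s+4,\ldots)$ code from Lemma~\ref{s4(9)}, and complete each $(9m)^1$ group by a $1^{9m}4^1$ GDC. The output has length $9\bigl(m(k-1)+s\bigr)+4$, so by varying $k$, $m$ and the good index $s$ one realizes every $t=m(k-1)+s\equiv 1\pmod 4$ with $t\geq 85$. The two small values $t=17$ and $t=33$ lie below the usable TD range and are treated individually, by taking bespoke master GDCs (for example from Lemmas~\ref{GDCa} and~\ref{GDCb}, or the available families of type $27^{t}9^{1}$ and $27^{t}18^{1}$) arranged so that again the only group needing a full code has good length.

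The step I expect to be the main obstacle is the exact size accounting, because the optimum $U(9t+4,6,[3,1])$ carries the arithmetic term $\lfloor t/4\rfloor$, which the construction must reproduce on the nose. The size of the master GDC, the size $U(9s+4)=9s^2+6s+1+\lfloor s/4\rfloor$ of the single full code, and the sizes of the auxiliary $1^{9m}4^1$ GDCs must sum to $9t^2+6t+1+\lfloor t/4\rfloor$; this forces me to construct the auxiliary $1^{9m}4^1$ $[3,1]$-GDC$(6)$s at their maximum possible size and to control the cross terms coming from the Fundamental Construction. Verifying that the accumulated floors and remainders collapse into a single $\lfloor t/4\rfloor$ is the delicate part, and it is presumably also what pins down both the threshold $t\geq 85$ and the fact that only $t=17,33$ among the smaller $1\pmod 4$ values survive. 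A secondary but routine check is the covering argument that $t=m(k-1)+s$, with admissible $(k,m,s)$ and $s$ good, indeed hits every residue-$1$ value of $t$ in $[85,\infty)$.
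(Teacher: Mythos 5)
Your diagnosis of the obstruction is exactly right: the residue $t\equiv 1\pmod 4$ fails in the generic $36^u(9x)^1$ construction because $x=1$ would demand an optimal code of length $13$, and the cure is indeed to reroute the single ``full-code'' group through a length where an optimal code is already known. But your plan stops short of a proof at the two places where the actual work lies. First, the base cases: for $t=17$ and $t=33$ the paper does not use masters of type $27^t9^1$, $27^t18^1$ or anything from Lemmas~\ref{GDCa}--\ref{GDCb}; it constructs a brand-new $[3,1]$-GDC$(6)$ of type $39^4$ directly (Lemma~\ref{GDC-39^4}) and builds a type $75^4$ GDC by inflating the explicit type $15^4$ GDC of Lemma~\ref{s15^t} through a TD$(4,5)$, then in each case adjoins a \emph{single} ideal point and fills the four groups with optimal codes of lengths $40$ and $76$ (the good indices $s=4$ and $s=8$). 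None of the sources you name yields groups of size $39$ or $75$, so these two values are not reachable by your gesture; they require new explicit objects, which is why they appear as isolated cases in the statement.

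Second, for $t\geq 85$ your parametrization $t=m(k-1)+s$ with $s\in\{2,3,4,8,12\}$ runs into two concrete problems. The auxiliary $1^{9m}4^1$ fills you need exist (via the type $4^{9j+1}$ GDCs of Lemma~\ref{s4^t}) only for $m\in\{4,8,12\}$, and the floor-term accounting you correctly worry about forces $m\equiv 0\pmod 4$ anyway, since each side group must contribute exactly $\lfloor m/4\rfloor$ to $\lfloor t/4\rfloor$; with $m$ so constrained and the good $s$ satisfying $s\bmod 4\in\{0,2,3\}$ only, the set $\{m(k-1)+s\}$ does not cover the progression $t\equiv 1\pmod 4$, $t\geq 85$. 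The paper's resolution is different and is the key idea you are missing: having just settled $t=17$, it takes the master $36^u153^1$ from Lemma~\ref{GDCa} (note $153=9\cdot 17$), adjoins four ideal points, fills each $36$-group with a type $4^{10}$ GDC, and fills the special group of size $157=9\cdot 17+4$ with the newly obtained optimal code. Writing $t=4u+17$ with $u\geq 17$ then gives every $t\equiv 1\pmod 4$ with $t\geq 85$ in one stroke, and the count $\lfloor t/4\rfloor=u+\lfloor 17/4\rfloor$ matches because each filled $40$-point group contributes exactly one extra codeword. In short, the special group must carry the large good index $17$ produced by the base case, not a small one.
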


\begin{proof}
For $t=17$, take a $[3,1]$-GDC$(6)$ of type $39^4$ from Lemma~\ref{GDC-39^4} and adjoin an ideal point. Then fill in the groups together with the ideal point with optimal codes of length $40$ from Lemma~\ref{s4(9)} to obtain the desired code.

For $t=33$, take a TD$(4,5)$ and assign weight $15$ to each point of this TD. Note that there exists a $[3,1]$-GDC$(6)$ of type $15^4$ by Lemma~\ref{s15^t}. We obtain a $[3,1]$-GDC$(6)$ of type $75^4$. Now adjoin an ideal point and fill in the groups together with the ideal point with optimal codes of length $76$ to obtain the desired code.

For $t\geq 85$, write $t=4u+17$ with $u \geq 17$. Take a $[3,1]$-GDC$(6)$ of type $36^u 153^1$ from Lemma~\ref{GDCa}, adjoin four ideal points, and fill in the groups to obtain an
optimal $(36u+157,6,[3,1])_3$-code.
\end{proof}

Summarizing the above results, we have:

\begin{theorem}
\label{4(9)} $A_3(9t+4,6,[3,1])=U(9t+4,6,[3,1])$ for each
$t\geq 0$ and $t\not\in \{1,5,6,7,9,10,$ $11,13,14,15,21,25,26,29,37,41,45,49,53,57,61,65,69,73,77,81\}$.
\end{theorem}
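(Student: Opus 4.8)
The plan is to obtain this theorem purely by assembling the three lemmas immediately preceding it, and then verifying that, taken together, they settle every residue $t\ge 0$ lying outside the advertised exceptional set. First I would record the exact ranges handled by each ingredient. Lemma~\ref{s4(9)} disposes of the small values $t\in\{0,2,3,4,8,12\}$; the next lemma handles all $t\ge 16$ with $t\not\equiv 1\pmod 4$ except $t=26$; and the final lemma handles the sporadic values $t\in\{17,33\}$ together with all $t\equiv 1\pmod 4$ satisfying $t\ge 85$. On each such value the corresponding lemma asserts $A_3(9t+4,6,[3,1])=U(9t+4,6,[3,1])$, so the theorem reduces to showing that the union of these three ranges is precisely $\{t\ge 0\}$ minus the stated exceptional list.

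The core of the argument is therefore a bookkeeping check, which I would split into two regimes. For $0\le t\le 15$ the only covered values are $\{0,2,3,4,8,12\}$, so the uncovered ones are exactly $\{1,5,6,7,9,10,11,13,14,15\}$, all of which appear in the exceptional list. For $t\ge 16$ I would separate by residue modulo $4$: when $t\not\equiv 1\pmod 4$ the second lemma covers everything except $t=26$; when $t\equiv 1\pmod 4$ the third lemma covers $t=17$, $t=33$, and all $t\ge 85$, leaving uncovered exactly the values congruent to $1\pmod 4$ in $[16,84]$ other than $17$ and $33$, namely $\{21,25,29,37,41,45,49,53,57,61,65,69,73,77,81\}$, together with the isolated value $t=26$. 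Concatenating the uncovered sets from both regimes reproduces the exceptional set in the statement verbatim, which completes the proof.

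Since all the genuine constructions reside in the cited lemmas, there is no real combinatorial obstacle here; the only delicate point is the arithmetic of the case split. The step most prone to error—and the one I would double-check—is the enumeration of the class $t\equiv 1\pmod 4$ inside $[16,84]$, making sure that removing the two separately handled values $17$ and $33$ yields exactly the fifteen listed exceptions, and that nothing is double-counted or dropped at the boundaries $t=16$ and $t=85$. Once that tally is confirmed to match the stated set, the theorem follows immediately.
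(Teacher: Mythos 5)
Your proposal is correct and matches the paper exactly: the paper proves this theorem simply by "summarizing" the three preceding lemmas, and your bookkeeping — $\{0,2,3,4,8,12\}$ from Lemma~\ref{s4(9)}, all $t\ge 16$ with $t\not\equiv 1\pmod 4$ except $26$, and $\{17,33\}\cup\{t\equiv 1\pmod 4:t\ge 85\}$ — reproduces the stated exceptional set precisely. Your explicit enumeration of the residue class $t\equiv 1\pmod 4$ in $[16,84]$ is the only nontrivial check, and it is carried out correctly.
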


\subsection{Case of Length $n\equiv 5\pmod{9}$}

\begin{lemma}
\label{GDC-1^36 5^1}
There exists a $[3,1]$-GDC$(6)$ of type $1^{36} 5^1$ with 173 codewords.
\end{lemma}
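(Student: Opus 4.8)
The plan is to exhibit the code directly, in the spirit of the explicit constructions used throughout Section IV (cf.\ Lemma~\ref{[2,2]-GDC-1^27 4^1} and the developed families in Lemmas~\ref{GDC-27^t 9^1}--\ref{GDC-36^6 27^1}). First I would fix the point set as $X=I_{41}=\{0,1,\dots,40\}$, declaring $\{36,37,38,39,40\}$ to be the single group of size five and each of the remaining $36$ points to be a group of size one. With this labelling the three conditions that the $173$ codewords must satisfy are: each codeword has exactly three coordinates equal to $1$ and one equal to $2$ (composition $[3,1]$); no codeword meets $\{36,\dots,40\}$ in more than one coordinate (the group condition $\|u|_{G}\|\le 1$); and every pair of codewords has Hamming distance at least $6$.

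The crucial preliminary step is to recast the distance requirement combinatorially. For two weight-four codewords $u,v$ with supports $S_u,S_v$, a direct count gives $d_H(u,v)=8-k-j$, where $k=|S_u\cap S_v|$ and $j$ is the number of shared coordinates on which $u$ and $v$ carry the same nonzero value. Thus $d_H(u,v)\ge 6$ is equivalent to $k+j\le 2$: two codewords may share at most two coordinates, and when they share exactly two, they must disagree in value at both, which (since each codeword has a single $2$) forces one codeword to place its $2$ on the first shared coordinate and the other to place its $2$ on the second. This packing-type condition is exactly what a proposed list must obey, and once the $173$ codewords are written out it is mechanical (indeed computer-checkable) to verify $k+j\le 2$ over all $\binom{173}{2}$ pairs together with the group and composition conditions.

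The substantive work is producing the codewords. Note that, unlike the neighbouring lemmas whose sizes factor as (group order)$\times$(number of base codewords) under a uniform full-orbit development, here $173$ is prime; hence no single-orbit-length cyclic development can yield exactly $173$ codewords. The construction must therefore either be presented as an explicit raw list (as in the $n=13$ and $n=14$ cases earlier in the paper), or be organized as a few full orbits under a small automorphism $\sigma$ of $X$ (for instance a $\bbZ_3$ or $\bbZ_4$ action on a $3\times 12$ or $4\times 9$ labelling of the $36$ singletons, fixing the five special points) together with a handful of extra codewords that make the orbit lengths sum to $173$. In either case the divisibility bookkeeping is what dictates how many ``leftover'' codewords are needed beyond the developed family.

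I expect the main obstacle to be precisely this search, because the target $173$ is not slack but tight. Indeed, filling the size-five group with the unique optimal $(5,6,[3,1])_3$-codeword (recall $A_3(5,6,[3,1])=1$ from Table~\ref{presult}) produces a $(41,6,[3,1])_3$-code of size $173+1=174$, which meets the bound $U(41,6,[3,1])=9\cdot 4^2+7\cdot 4+1+\lfloor 5/4\rfloor=174$ for $n=9\cdot 4+5$. Since the construction is optimal, a naive greedy or random placement is unlikely to reach $173$, so I would rely on a structured computer search over orbit representatives under the chosen $\sigma$, shrinking the search space before checking the condition $k+j\le 2$; the verification afterwards is routine but lengthy.
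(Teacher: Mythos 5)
Your framework is sound and matches the paper's method in spirit: the reformulation $d_H(u,v)=8-k-j$, hence the packing condition $k+j\le 2$ (supports meet in at most two points, and when they meet in exactly two, the two codewords must carry their unique $2$s on opposite shared coordinates), is correct; the observation that $173$ is prime and therefore forces a mix of orbit lengths is correct; and the remark that filling the group of size $5$ with the single codeword of an optimal $(5,6,[3,1])_3$-code yields a code of size $174=U(41,6,[3,1])$, so the target is tight, is both correct and a useful sanity check (it is exactly how the paper uses this lemma for $t=4$ in the $n\equiv 5\pmod 9$ case).

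However, the proof is not complete, because the object whose existence is asserted is never exhibited. For a lemma of this type the explicit list of codewords \emph{is} the proof; everything you have written is a (correct) description of what a valid certificate would look like and how one would search for it, and the sentence ``I would rely on a structured computer search'' concedes that the essential step has not been carried out. The paper closes this gap by giving the certificate: it labels the $36$ singletons as $\bbZ_{12}\times\{0,1,2\}$ and the group of size $5$ as $(\bbZ_3\times\{3\})\cup\{\infty_0,\infty_1\}$, takes the automorphism $\alpha: x_y\mapsto (x+1)_y$ (addition modulo $12$ for $y\in\{0,1,2\}$ and modulo $3$ for $y=3$, with $\infty_0,\infty_1$ fixed --- note that three of the five group points are \emph{not} fixed, contrary to your guess), and develops $13$ base codewords in full orbits of length $12$ together with one short orbit of length $6$ and three further short orbits of lengths $4$, $4$, $3$, for a total of $13\cdot 12+6+4+4+3=173$. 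Until you produce such a list (or derive the code verifiably from known ingredients), the lemma remains unproved.
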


\begin{proof}
Let $X=(\bbZ_{12} \times \{0, 1, 2\})\cup (\bbZ_{3} \times \{3\}) \cup \{\infty_0, \infty_1\}$. The point set $(\bbZ_{3} \times \{3\}) \cup \{\infty_0, \infty_1\}$ forms the group of size 5. Define $\alpha: X \rightarrow X$ as $x_y \rightarrow (x+1)_y$ where the addition is modulo 12 if $y \in \{0, 1, 2\}$, and modulo 3 if $y =3$. The points $\infty_0$ and $\infty_1$ are fixed by $\alpha$.
Develop the following 13 base codewords with $\alpha$:
$$\begin{array}{lllll}
\langle0_3, 6_0, 2_2, 0_2\rangle &
\langle3_2, 11_1, 7_2, 0_3\rangle &
\langle10_2, 10_1, 3_2, 1_0\rangle &
\langle4_2, 0_0, 9_0, 7_1\rangle &
\langle8_0, 10_0, 8_2, 7_2\rangle \\
\langle1_0, 0_1, 2_2, 3_2\rangle &
\langle0_3, 4_0, 11_0, 8_1\rangle  &
\langle1_1, 11_1, 10_0, 9_0\rangle &
\langle0_1, 1_2, 7_1, 2_0\rangle &
\langle\infty_0, 2_1, 1_2, 7_0\rangle \\
\langle0_3, 3_1, 4_1, 1_2\rangle &
\langle\infty_1, 9_2, 4_0, 6_1\rangle &
\langle1_1, 5_0, 8_2, 10_1\rangle &
\end{array}$$
Further develop the codeword $\langle0_0, 6_0, 0_1, 6_1\rangle$ making a short orbit of length 6.\\
Finally develop the following codewords making 2 short orbits of length 4 and one short orbit of length 3:
$$\begin{array}{lll}
\langle0_0, 4_0, 8_0, \infty_0\rangle &
\langle0_1, 4_1, 8_1, \infty_1\rangle &
\langle0_2, 3_2, 6_2, 9_2\rangle
\end{array}$$
\end{proof}

\begin{lemma}
\label{s5(9)} $A_3(9t+5,6,[3,1])=U(9t+5,6,[3,1])$ for $t\in
\{0,1,2,3,4\}$.
\end{lemma}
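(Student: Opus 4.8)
The upper bound $A_3(9t+5,6,[3,1])\le U(9t+5,6,[3,1])$ is already in hand from the Johnson-type argument proved above, so the task is purely constructive: for each $t\in\{0,1,2,3,4\}$ I must exhibit a $(9t+5,6,[3,1])_3$-code meeting $U(9t+5,6,[3,1])$. The case $t=0$ is immediate from Table~\ref{presult}, where $A_3(5,6,[3,1])=1=U(5,6,[3,1])$, so the real content lies in $t\in\{1,2,3,4\}$.

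The plan for $t=4$ (length $41$) is to invoke Construction~\ref{FillGroups} on the $[3,1]$-GDC$(6)$ of type $1^{36}5^1$ produced in Lemma~\ref{GDC-1^36 5^1}, which is legitimate since $d=6\le 2(w-1)=6$. Because $36+5=41=9\cdot 4+5$, each of the $36$ singleton groups carries no codeword (a $(1,6,[3,1])_3$-code is empty), while the unique group of size $5$ is filled with an optimal $(5,6,[3,1])_3$-code of size $1$; the resulting code has $173+1=174$ codewords. A direct check that $U(41,6,[3,1])=9\cdot 16+7\cdot 4+1+\lfloor 5/4\rfloor=174$ then establishes optimality. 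This step is routine once the GDC of Lemma~\ref{GDC-1^36 5^1} is available.

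For the remaining values $t\in\{1,2,3\}$ (lengths $14$, $23$, $32$) the approach is to build the codes directly by developing a short list of base codewords under a cyclic automorphism group, adjoining one or two fixed points when the length is not itself the size of the acting group. Composition $[3,1]$ is then automatic, because in the representation $\langle a_1,a_2,a_3,a_4\rangle$ a translation sends it to $\langle a_1+c,a_2+c,a_3+c,a_4+c\rangle$, permuting the three "$1$''-coordinates among themselves and fixing the single "$2$''-coordinate; minimum distance six is verified by the usual orbit/difference analysis on the base blocks. The subtlety, and the main obstacle, is that the target sizes $17$, $51$, $104$ are not clean multiples of any single full-orbit length, so I must combine full orbits with carefully chosen short orbits (codewords stabilised by a proper subgroup of the acting group) so that the total hits $U(9t+5,6,[3,1])$ exactly, in particular reproducing the extra contribution of the $\lfloor(t+1)/4\rfloor$ term. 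Finding base codewords whose full and short orbits are simultaneously distance-six-compatible is the part that requires an explicit search rather than a uniform formula; once such base lists are tabulated, checking distance and counting the orbits is mechanical.
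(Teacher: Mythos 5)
Your treatment of $t=0$ and $t=4$ matches the paper exactly: the table entry handles length $5$, and filling the size-$5$ group of the $[3,1]$-GDC$(6)$ of type $1^{36}5^1$ from Lemma~\ref{GDC-1^36 5^1} with the single-codeword $(5,6,[3,1])_3$-code gives $173+1=174=U(41,6,[3,1])$. Those two cases are complete as written.

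For $t\in\{1,2,3\}$, however, what you have written is a description of a search strategy, not a proof. The entire mathematical content of the lemma for lengths $14$, $23$ and $32$ is the existence of codes of sizes $17$, $51$ and $104$, and existence is established only by exhibiting them (or deriving them from previously constructed objects, which is not available here). Saying that suitable base codewords ``require an explicit search'' and that verification ``is mechanical once such base lists are tabulated'' concedes that the tabulation has not been done; the lemma therefore remains unproved for these three lengths. The paper supplies exactly this missing data: an explicit list of $17$ codewords on $I_{14}$ for $t=1$, a set of $17$ base codewords on $\bbZ_{21}\cup\{\infty_1,\infty_2\}$ developed $+7\pmod{21}$ for $t=2$, and $13$ base codewords on $\bbZ_{32}$ developed $+4\pmod{32}$ for $t=3$. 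A secondary inaccuracy: you assert that $17$, $51$, $104$ force a mixture of full and short orbits, but the paper avoids short orbits entirely by acting with a proper cyclic subgroup so that the target is a clean multiple of the orbit length ($51=17\cdot 3$ under $+7\pmod{21}$, with the two infinite points fixed; $104=13\cdot 8$ under $+4\pmod{32}$), and for $t=1$ it uses no group action at all. So even the search you outline is aimed at a more complicated orbit structure than is actually needed.
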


\begin{proof}
For $t=0$, see Table~\ref{presult}.

For $t=1$, an optimal $(14,6,[3,1])_3$-code is constructed on
$I_{14}$ with $17$ codewords listed below.
$$\begin{array}{llllll}
\langle0,4,9,5\rangle & \langle7,4,10,6\rangle & \langle2,10,9,3\rangle & \langle13,10,0,8\rangle &
\langle2,12,8,4\rangle & \langle0,1,7,12\rangle \\ \langle3,5,8,0\rangle & \langle2,5,7,13\rangle &
\langle6,9,8,11\rangle & \langle3,6,12,10\rangle & \langle3,13,4,2\rangle & \langle12,13,9,7\rangle \\
\langle2,11,0,6\rangle & \langle11,4,1,8\rangle & \langle11,3,7,9\rangle & \langle11,10,5,12\rangle &
\langle13,1,6,5\rangle &  \\
\end{array}$$

For $t=2$, the desired code is constructed on
$\bbZ_{21}\cup\{\infty_1,\infty_2\}$ and is obtained by developing the
elements of $\bbZ_{21}$ in the following codewords $+7\pmod{21}$, where the infinite points $\infty_0, \infty_1$ are fixed when developed.
$$\begin{array}{llllll}
\langle0,1,2,3\rangle &
\langle9,20,16,3\rangle &
\langle7,16,11,19\rangle &
\langle5,8,19,2\rangle &
\langle2,12,7,20\rangle &
\langle0,11,18,10\rangle \\
\langle10,5,4,6\rangle &
\langle15,3,10,2\rangle &
\langle1,14,20,\infty_2\rangle &
\langle6,7,3,14\rangle &
\langle2,\infty_2,4,15\rangle &
\langle9,\infty_1,18,15\rangle \\
\langle1,7,5,\infty_1\rangle &
\langle\infty_2,12,3,0\rangle &
\langle10,\infty_1,20,12\rangle &
\langle8,1,13,11\rangle &
\langle13,18,6,12\rangle &
\end{array}$$

For $t=3$, the desired code is constructed on $\bbZ_{32}$ and is obtained by developing the
elements of $\bbZ_{32}$ in the following codewords $+4\pmod{32}$.
$$\begin{array}{llllll}
\langle2,8,31,1\rangle & \langle2,9,24,23\rangle & \langle2,14,17,30\rangle &
\langle1,12,23,2\rangle & \langle3,20,22,26\rangle & \langle0,18,29,13\rangle \\
\langle0,28,8,3\rangle & \langle2,26,27,28\rangle & \langle2,20,25,21\rangle &
\langle3,1,27,31\rangle & \langle3,23,18,28\rangle & \langle3,17,29,21\rangle \\
\langle0,9,19,16\rangle &
\end{array}$$

Finally, for $t=4$, take the $[3,1]$-GDC$(6)$ of type $1^{36} 5^1$ constructed in Lemma~\ref{GDC-1^36 5^1} and fill in the group of size 5 with an optimal code of length 5.
\end{proof}

\begin{lemma}
$A_3(9t+5,6,[3,1])=U(9t+5,6,[3,1])$ for each
$t\geq 16$ and $t\not=26$.
\end{lemma}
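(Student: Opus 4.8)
The plan is to mirror the recursive construction used for the residues $n\equiv 4$ and $n\equiv 6\pmod 9$ (cf.\ Lemma~\ref{6(9)}), replacing the ``uniform'' filling GDC by the type $1^{36}5^1$ code just built in Lemma~\ref{GDC-1^36 5^1}. First I would write $t=4u+x$ with $u\geq 4$ and $x\in\{0,1,2,3\}$; since $x\leq 3\leq u$ this parametrizes exactly the integers $t\geq 16$, and the value $t=26$ corresponds precisely to $(u,x)=(6,2)$.

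The first step is to secure a master $[3,1]$-GDC$(6)$ of type $36^u(9x)^1$. For $u\geq 4$ with $u\notin\{6,10\}$ this is supplied directly by Lemma~\ref{GDCa}, the hypothesis $0\leq x\leq u$ holding automatically. The two values $u=6$ and $u=10$ are not covered there, but Lemma~\ref{GDCb} provides type $36^6(9x)^1$ for $x\in\{0,1,3\}$ and type $36^{10}(9x)^1$ for $x\in\{0,1,2,3\}$. The only pair with $u\geq 4$, $x\in\{0,1,2,3\}$ that remains unavailable is $(u,x)=(6,2)$, i.e.\ $t=26$, which is exactly the value excluded from the statement.

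Next I would adjoin five ideal points $Y$ and, for each group $G$ of size $36$, overlay a $[3,1]$-GDC$(6)$ of type $1^{36}5^1$ on $G\cup Y$, taking $Y$ as its group of size $5$ and splitting $G$ into $36$ singletons. Because the five ideal points are shared across the $u$ overlays while the remainder of each overlay lives in a distinct $G$, the new codewords are pairwise distinct. For the distance: two codewords from different overlays, or one new codeword and one master codeword, share at most one support position (distinct overlays are disjoint off $Y$, and the master never meets $Y$), whence their Hamming distance is at least $4+4-2=6$; codewords within a single overlay inherit distance $6$ from the $1^{36}5^1$ code, and master--master pairs from the master GDC. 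Merging $Y$ with the group of size $9x$ then yields a $[3,1]$-GDC$(6)$ of type $1^{36u}(9x+5)^1$. Finally I would invoke Construction~\ref{FillGroups}, filling the group of size $9x+5$ with an optimal $(9x+5,6,[3,1])_3$-code from Lemma~\ref{s5(9)} (available for each $x\in\{0,1,2,3\}$, i.e.\ lengths $5,14,23,32$), the singleton groups contributing nothing.

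The main work is the arithmetic confirming optimality rather than mere validity. Writing $M_0=72u(2u+x-2)$ for the size of the master GDC (Lemma~\ref{GDCa}) and $173$ for the size of each type $1^{36}5^1$ ingredient, the final code has $M_0+173u+A_3(9x+5,6,[3,1])$ codewords. Substituting $A_3(9x+5,6,[3,1])=9x^2+7x+1+\lfloor(x+1)/4\rfloor$ from Lemma~\ref{s5(9)} and using $\lfloor(t+1)/4\rfloor=u+\lfloor(x+1)/4\rfloor$ for $t=4u+x$, this simplifies to $144u^2+72ux+29u+9x^2+7x+1+\lfloor(x+1)/4\rfloor=U(9t+5,6,[3,1])$, so the code meets the upper bound and is optimal. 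I expect this bookkeeping, together with the routine verification that the $1^{36}5^1$ overlay preserves distance $6$ and composition $[3,1]$, to be the only genuine obstacle; the existence of every ingredient is already in hand.
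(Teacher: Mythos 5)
Your proposal is correct and follows essentially the same route as the paper: write $t=4u+x$, take the master $[3,1]$-GDC$(6)$ of type $36^u(9x)^1$ from Lemmas~\ref{GDCa} and~\ref{GDCb}, adjoin five ideal points and fill with the type $1^{36}5^1$ ingredient from Lemma~\ref{GDC-1^36 5^1}, then fill the remaining group with an optimal code of length $9x+5$. The extra details you supply (the identification of $t=26$ with the unavailable pair $(u,x)=(6,2)$, the distance check for the overlay, and the size count matching $U(9t+5,6,[3,1])$) are all accurate and consistent with what the paper leaves implicit.
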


\begin{proof}
For each $t\geq 16$ and $t\not=26$, write $t=4u+x$ such that $0\leq x\leq 3$. Take a $[3,1]$-GDC$(6)$ of type $36^u(9x)^1$ from Lemmas~\ref{GDCa} and~\ref{GDCb}. Adjoin five ideal points, and fill in the groups of size $36$ together with the ideal points with $[3,1]$-GDC$(6)$s of type
$1^{36} 5^1$ from Lemma~\ref{GDC-1^36 5^1} to obtain a $[3,1]$-GDC$(6)$ of type $1^{36u}(9x+5)^1$. Then fill in the group with an optimal code of length $9x+5$ from Lemma~\ref{s5(9)} to obtain an optimal
code of length $9(4u+x)+5$, as desired.
\end{proof}

\begin{theorem}
\label{5(9)} $A_3(9t+5,6,[3,1])=U(9t+5,6,[3,1])$ for each
$t\geq 0$ and $t\not\in [5,15]\cup \{26\}$.
\end{theorem}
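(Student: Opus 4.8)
The plan is to obtain this theorem purely by collating the two lemmas that immediately precede it, since every admissible value of $t$ has already been settled by one of them. First I would record that the permitted index set, namely $\{t\geq 0\}\setminus([5,15]\cup\{26\})$, splits as the disjoint union of the small range $\{0,1,2,3,4\}$ and the large range $\{t\geq 16:t\neq 26\}$; the excluded block $[5,15]\cup\{26\}$ is precisely the complement of this union.

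For the small range $t\in\{0,1,2,3,4\}$ I would cite Lemma~\ref{s5(9)}, which already proves $A_3(9t+5,6,[3,1])=U(9t+5,6,[3,1])$ for exactly these values, using the explicit codes for $t\leq 3$ together with the $[3,1]$-GDC$(6)$ of type $1^{36}5^1$ from Lemma~\ref{GDC-1^36 5^1} for $t=4$. For the large range $t\geq 16$ with $t\neq 26$ I would cite the lemma stated immediately above, which disposes of these cases by writing $t=4u+x$ with $0\leq x\leq 3$, taking a $[3,1]$-GDC$(6)$ of type $36^u(9x)^1$ from Lemmas~\ref{GDCa} and~\ref{GDCb}, adjoining five ideal points, filling the groups of size $36$ with copies of the type $1^{36}5^1$ code to produce a $[3,1]$-GDC$(6)$ of type $1^{36u}(9x+5)^1$, and finally filling the last group with an optimal code of length $9x+5$ supplied by Lemma~\ref{s5(9)}.

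Since the two cited ranges exhaust the admissible index set, the theorem follows at once. I do not expect any genuine obstacle at this stage: all of the constructive effort---building the group divisible codes and exhibiting the small optimal codes---has already been carried out in the earlier lemmas, and the only point needing care is the routine verification that the two index ranges partition $\{t\geq 0\}\setminus([5,15]\cup\{26\})$ without gaps or overlaps, which is immediate by inspection.
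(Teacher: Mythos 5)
Your proposal is correct and matches the paper exactly: the theorem is stated as a summary of Lemma~\ref{s5(9)} (covering $t\in\{0,1,2,3,4\}$) and the immediately preceding lemma (covering $t\geq 16$, $t\neq 26$), and these two ranges are precisely the complement of $[5,15]\cup\{26\}$ in $\{t\geq 0\}$. No further argument is needed, and your description of how each ingredient lemma is proved is faithful to the paper.
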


\subsection{Case of Length $n\equiv 7\pmod{9}$}

\begin{lemma}
\label{GDC-1^36 7^1}
There exists a $[3,1]$-GDC$(6)$ of type $1^{36} 7^1$ with 190 codewords.
\end{lemma}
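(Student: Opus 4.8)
The plan is to give a direct construction entirely parallel to the two preceding lemmas (the $[3,1]$-GDC$(6)$s of types $1^{27}4^1$ and $1^{36}5^1$), replacing the distinguished group by one of size $7$. Concretely, I would take the point set
\[
X=(\bbZ_{12}\times\{0,1,2\})\cup(\bbZ_3\times\{3\})\cup\{\infty_0,\infty_1,\infty_2,\infty_3\},
\]
declare the $36$ points of $\bbZ_{12}\times\{0,1,2\}$ to be the singleton groups and the $7$ points $(\bbZ_3\times\{3\})\cup\{\infty_0,\infty_1,\infty_2,\infty_3\}$ to be the group of size $7$, and use the automorphism $\alpha\colon x_y\mapsto(x+1)_y$ acting modulo $12$ on the coordinates $y\in\{0,1,2\}$, modulo $3$ on $y=3$, and fixing each $\infty_i$. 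The whole construction then reduces to exhibiting an explicit list of base codewords of composition $[3,1]$ and developing them under $\langle\alpha\rangle$, exactly as in the size-$4$ and size-$5$ cases.

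Second, I would fix the orbit bookkeeping so that the development yields exactly $190$ codewords. Since $\langle\alpha\rangle$ is cyclic of order $12$, every orbit has length dividing $12$; a generic base codeword gives a full orbit of length $12$, while codewords supported on $\alpha$-symmetric point sets give short orbits. The two useful short types are length $4$ (the three $1$'s forming an $\alpha^4$-triple such as $\{0_0,4_0,8_0\}$, with the $2$ placed on a fixed point) and length $6$ (an $\alpha^6$-invariant configuration with one $1$ and the $2$ on fixed points and the remaining two $1$'s swapped by $\alpha^6$). A clean target is the decomposition $190=15\cdot12+6+4$, i.e. fifteen full orbits together with one length-$6$ and one length-$4$ short orbit; any decomposition of $190$ into admissible orbit lengths would serve, and I would tune the base list so that the two new fixed points $\infty_2,\infty_3$ (absent from the size-$5$ version) get paired correctly with the singletons.

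Third, I would verify the two defining properties. Because $\alpha$ preserves both the composition and the Hamming distance and each group is $\alpha$-invariant, it suffices to check these on the base codewords and their shifts. The composition check and the group condition (each codeword meets the size-$7$ group in at most one point, which forbids using two points of $(\bbZ_3\times\{3\})\cup\{\infty_i\}$ in the same word) are immediate from the list. The distance condition is the substantive part: for weight-$4$, composition-$[3,1]$ words, $d_H(u,v)\ge6$ is equivalent to requiring that the supports of $u$ and $v$ meet in at most one point, or else meet in exactly two points on which $u$ and $v$ carry different values. I would turn this into a difference computation relative to $\langle\alpha\rangle$ and confirm, over all ordered pairs of base codewords and all $12$ shifts, that no forbidden intersection pattern occurs.

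The main obstacle is precisely this last verification: the base list must hit the count $190$ while keeping all codewords mutually at distance $\ge6$ and covering exactly the required cross-group pairs. Confirming the distance-$6$ property across every shifted pair of base words is an unavoidable finite but sizeable case analysis, best carried out by computer, and finding a list that simultaneously realizes the count and avoids every two-point coincidence is the real work. Once such a list is displayed, the remainder is routine verification of exactly the kind already carried out for the $1^{27}4^1$ and $1^{36}5^1$ lemmas.
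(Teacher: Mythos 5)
Your overall strategy is exactly the paper's: a direct construction on $(\bbZ_{12}\times\{0,1,2\})$ together with a $7$-point group, using an order-$12$ automorphism $\alpha$ and the orbit count $190=15\cdot 12+6+4$ (the paper realizes the size-$7$ group as $(\bbZ_3\times\{3,4\})\cup\{\infty\}$ rather than your $(\bbZ_3\times\{3\})\cup\{\infty_0,\dots,\infty_3\}$, but that difference is immaterial). The genuine gap is that you never exhibit the base codewords. For a lemma whose content is the existence of one specific combinatorial object, the explicit list \emph{is} the proof; a plan that says you would ``tune the base list'' and concedes that ``finding a list that simultaneously realizes the count and avoids every two-point coincidence is the real work'' establishes nothing until that list is produced and verified. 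Everything you set up beforehand (the automorphism, the orbit bookkeeping, the reformulation of $d_H\ge 6$ as a condition on support intersections checkable via differences) is correct scaffolding, but the substance of the paper's proof is its $17$ displayed base codewords, and those are missing from your argument.

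A secondary, concrete error: your proposed mechanism for the length-$6$ short orbit --- ``one $1$ and the $2$ on fixed points and the remaining two $1$'s swapped by $\alpha^6$'' --- cannot occur here. The only points fixed by $\alpha^6$ lie in the size-$7$ group (on $\bbZ_{12}\times\{0,1,2\}$ the map $x\mapsto x+6$ is fixed-point-free), so such a codeword would meet the size-$7$ group in two points, violating the GDC condition $\|u|_{G}\|\le 1$. The paper instead takes the word $\langle 0_0,6_0,0_1,6_1\rangle$, whose support lies entirely among the singleton groups, and develops it only through $\alpha^0,\dots,\alpha^5$ (a half-orbit; the full orbit would contain pairs at Hamming distance $2$). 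Any successful search along your lines will need a short configuration of that kind, not the one you describe.
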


\begin{proof}
Let $X=(\bbZ_{12} \times \{0, 1, 2\})\cup (\bbZ_{3} \times \{3,4\}) \cup \{\infty\}$. The point set $(\bbZ_{3} \times \{3,4\}) \cup \{\infty\}$ forms the group of size 7. Define $\alpha: X \rightarrow X$ as $x_y \rightarrow (x+1)_y$ where the addition is modulo 12 if $y \in \{0, 1, 2\}$, and modulo 3 if $y \in \{3,4\}$. The point $\infty$ is fixed by $\alpha$.
Develop the following 15 base codewords with $\alpha$:
$$\begin{array}{lllll}
\langle0_3, 4_2, 6_1, 8_0\rangle &
\langle0_3, 9_2, 4_1, 8_1\rangle &
\langle0_3, 5_2, 7_0, 0_0\rangle &
\langle8_0, 6_0, 11_1, 0_3\rangle &
\langle0_4, 6_1, 11_1, 6_2\rangle \\
\langle0_4, 0_0, 1_1, 7_2\rangle &
\langle0_4, 8_2, 7_0, 8_0\rangle &
\langle0_2, 8_0, 7_2, 0_4\rangle &
\langle0_0, 3_0, 11_1, 3_2\rangle &
\langle3_0, 0_2, 10_1, 1_1\rangle \\
\langle8_1, 4_0, 6_1, 9_2\rangle &
\langle6_0, 2_0, 8_2, 1_0\rangle &
\langle\infty, 4_0, 1_1, 0_2\rangle &
\langle8_2, 10_2, 7_2, 4_1\rangle &
\langle10_1, 7_2, 11_1, 7_1\rangle \\
\end{array}$$
Further develop the following codewords making a short orbit of length 6 and a short orbit of length 4:
$$\begin{array}{ll}
\langle0_0, 6_0, 0_1, 6_1\rangle &
\langle0_2, 4_2, 8_2, \infty\rangle \\
\end{array}$$
\end{proof}

\begin{lemma}
\label{GDC-1^54 7^1}
There exists a $[3,1]$-GDC$(6)$ of type $1^{54} 7^1$ with 393 codewords.
\end{lemma}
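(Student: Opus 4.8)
The plan is to prove this by an explicit direct construction in the same spirit as Lemmas~\ref{GDC-1^36 5^1} and~\ref{GDC-1^36 7^1}, exhibiting a family of base codewords together with a cyclic automorphism group whose orbits produce the whole code. First I would fix a point set carrying the required type, for instance $X=(\bbZ_{18}\times\{0,1,2\})\cup(\bbZ_3\times\{3,4\})\cup\{\infty\}$, declaring the $54$ points of $\bbZ_{18}\times\{0,1,2\}$ to be the singleton groups and the $7$ points of $(\bbZ_3\times\{3,4\})\cup\{\infty\}$ to be the single group of size $7$. The governing automorphism would be $\alpha\colon x_y\mapsto(x+1)_y$, with the addition taken modulo $18$ when $y\in\{0,1,2\}$ and modulo $3$ when $y\in\{3,4\}$, and with $\infty$ fixed. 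This $\alpha$ preserves the group partition and generates a cyclic group of order $18$ acting on $X$, so the task reduces to selecting base codewords and developing them under $\langle\alpha\rangle$.

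With the point set and $\alpha$ fixed, three things must be checked on the developed code. Each codeword must meet the group of size $7$ in at most one point; this is automatic for the singleton groups and is enforced on the base codewords simply by forbidding two coordinates lying in $(\bbZ_3\times\{3,4\})\cup\{\infty\}$ in a single block. The composition $[3,1]$ (three coordinates equal to $1$, one equal to $2$) is preserved by $\alpha$, so it need only be verified on the base codewords. The crux is the minimum distance: for weight-four, composition-$[3,1]$ codewords one computes that $d_H\ge 6$ is equivalent to requiring that no two codewords share three support positions and that any two sharing exactly two support positions carry the symbol $1$ in one and $2$ in the other at both shared coordinates. Via the usual orbit/difference analysis this becomes a statement that the differences attached to the base codewords --- separately for pairs of $1$-coordinates, for a $1$-coordinate paired with the $2$-coordinate (tracked in both cyclic directions because of the $1$/$2$ asymmetry), and for the mixed differences involving the fixed point $\infty$ and the short $\bbZ_3$-orbits --- contain no forbidden repetition.

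To hit the prescribed size $393$ I would combine full orbits with a few short orbits, just as the two preceding lemmas used length-$6$ and length-$4$ short orbits. Under the order-$18$ group a generic base codeword yields an orbit of length $18$, and $393=18\cdot 21+15$ suggests roughly twenty-one long orbits supplemented by short orbits (of lengths dividing $18$, arising from coordinates stabilised by a power of $\alpha$ or by the fixed point $\infty$) whose lengths total $15$. Once the base codewords are listed, the count is a straightforward bookkeeping verification.

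The main obstacle is not the verification but the discovery of a valid family of base codewords: one must simultaneously satisfy the group condition, the composition condition, and --- hardest --- the difference conditions guaranteeing distance $6$, while landing on the exact total $393$ and correctly accounting for the short orbits and the fixed point $\infty$. In practice this is a constrained combinatorial (computer) search, and the delicate part of the write-up is to present the chosen base codewords and confirm, orbit by orbit, that the self- and mixed differences are all distinct so that no pair of codewords violates the distance-$6$ condition. Should a purely direct search prove unwieldy, a fallback would be a recursive route that fills or adjoins starting from the already-constructed $1^{36}7^1$ code, but the house style of the surrounding lemmas strongly favours the direct construction sketched above.
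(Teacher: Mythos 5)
Your plan coincides exactly with the paper's proof: the same point set $(\bbZ_{18}\times\{0,1,2\})\cup(\bbZ_3\times\{3,4\})\cup\{\infty\}$, the same automorphism $\alpha$ (adding $1$ modulo $18$ on the long fibres and modulo $3$ on the short ones, fixing $\infty$), and the same orbit accounting --- the paper develops $21$ base codewords into full orbits of length $18$ and supplements them with two short orbits, of lengths $9$ (from $\langle0_0,9_0,0_1,9_1\rangle$) and $6$, so that $21\cdot 18+9+6=393$, matching your decomposition $393=18\cdot 21+15$. Your reduction of the distance-$6$ condition to ``no two codewords share three support positions, and two codewords sharing exactly two support positions must disagree in symbol at both'' is also the correct criterion for weight-$4$, composition-$[3,1]$ codes.

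The gap is that you never produce the base codewords. For a lemma of this kind the explicit list \emph{is} the proof: the general framework (point set, automorphism, orbit lengths, difference conditions) only tells you what a witness would have to look like, not that one exists, and existence is exactly what is being asserted. You acknowledge that finding the $21$ base codewords plus the two short-orbit generators requires a constrained search, but you do not carry it out, so no code with $393$ codewords has actually been exhibited and none of the difference conditions has been checked against concrete data. The fallback you mention (building recursively from the $1^{36}7^1$ code) is likewise only named, not executed. As it stands the proposal is a correct and well-targeted plan, but it does not prove the statement.
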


\begin{proof}
Let $X=(\bbZ_{18} \times \{0, 1, 2\})\cup (\bbZ_{3} \times \{3,4\}) \cup \{\infty\}$. The point set $(\bbZ_{3} \times \{3,4\}) \cup \{\infty\}$ forms the group of size 7. Define $\alpha: X \rightarrow X$ as $x_y \rightarrow (x+1)_y$ where the addition is modulo 18 if $y \in \{0, 1, 2\}$, and modulo 3 if $y \in \{3,4\}$. The point $\infty$ is fixed by $\alpha$.
Develop the following 21 base codewords with $\alpha$:
$$\begin{array}{lllll}
\langle3_1, 5_2, 8_2, 4_0\rangle &
\langle0_4, 1_2, 8_0, 15_2\rangle &
\langle0_3, 17_2, 8_1, 10_0\rangle &
\langle0_0, 2_1, 13_2, 12_1\rangle &
\langle0_4, 3_0, 13_0, 11_2\rangle \\
\langle3_0, 5_0, 9_1, 7_2\rangle &
\langle14_2, 1_2, 2_0, 5_2\rangle &
\langle1_2, 2_2, 13_1, 17_0\rangle &
\langle0_3, 4_1, 11_0, 12_2\rangle &
\langle11_0, 12_0, 5_0, 10_1\rangle \\
\langle0_0, 5_0, 8_1, 1_1\rangle &
\langle0_5, 7_0, 4_1, 17_2\rangle &
\langle6_0, 3_2, 11_2, 13_1\rangle &
\langle0_3, 13_2, 9_1, 15_0\rangle &
\langle1_1, 13_2, 14_1, 15_2\rangle \\
\langle6_1, 3_1, 3_2, 11_0\rangle &
\langle0_4, 10_1, 9_1, 17_1\rangle &
\langle3_0, 12_2, 6_0, 16_1\rangle &
\langle15_2, 15_0, 1_0, 0_3\rangle &
\langle0_1, 12_1, 16_1, 11_0\rangle \\
\langle8_1, 0_2, 11_2, 0_4\rangle &
\end{array}$$
Further develop the following codewords making a short orbit of length 9 and a short orbit of length 6:
$$\begin{array}{ll}
\langle0_0, 9_0, 0_1, 9_1\rangle &
\langle0_2, 6_2, 12_2, 0_5\rangle
\end{array}$$
\end{proof}

\begin{lemma}
\label{s7(9)} $A_3(7,6,[3,1])=2$, $A_3(9t+7,6,[3,1])=U(9t+7,6,[3,1])$ for each $t\in
\{1,2,3,6,24\}$.
\end{lemma}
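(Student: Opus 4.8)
The upper bounds $A_3(9t+7,6,[3,1])\le U(9t+7,6,[3,1])$ are already available, so for each listed $t$ the task reduces to exhibiting a code attaining the bound (with $n=7$ handled by the tabulated value). The plan is to treat the five cases in three groups: the base value $n=7$, the small direct cases $t\in\{1,2,3\}$, and the linked pair $t\in\{6,24\}$, where $t=6$ is built first and then fed into a recursion for $t=24$. For $n=7$ I would read $A_3(7,6,[3,1])=2$ off Table~\ref{presult}; it is worth recording that this is strictly less than $U(7,6,[3,1])=3$, and this single-codeword gap is exactly what later complicates $t=24$.

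For $t\in\{1,2,3\}$ (lengths $16,25,34$, target sizes $24,62,119$) I would give explicit base-codeword constructions developed under a cyclic, or short direct-product, automorphism group, in the style of the earlier lemmas of this section. Since none of the target sizes is a multiple of the code length, the orbits cannot all be full, so I would combine a few full orbits with one or two short orbits (chosen so the orbit lengths sum to the required size) and then verify directly that any two codewords differ in at least six coordinates and that every codeword has composition $[3,1]$.

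The crux is $t=6$ (length $61$, size $396$), which must be settled before $t=24$. Here the standard recursive tools break down: simply filling the size-$7$ group of the type $1^{54}7^1$ code of Lemma~\ref{GDC-1^54 7^1} gives only $393+A_3(7,6,[3,1])=395=U-1$ codewords, and the obvious cyclic direct constructions are obstructed by divisibility (over $\bbZ_{61}$ every nontrivial orbit has the full length $61\nmid 396$, and over $\bbZ_{60}\cup\{\infty\}$ the admissible orbit lengths force the total size into residue classes modulo $15$ that exclude $396$). Because $396=33\cdot 12$, I would instead build the code directly over a mixed group carrying an order-$12$ cyclic action, of the $\bbZ_{12}\times\{0,\ldots,4\}\cup\{\infty\}$ type already used in Lemmas~\ref{GDC-1^36 7^1} and~\ref{GDC-1^54 7^1}, developing $33$ base codewords in full length-$12$ orbits. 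Assembling a valid distance-$6$ list of exactly $396$ codewords is the genuine obstacle and the step I expect to need a targeted search.

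Finally, for $t=24$ (length $223$, size $5463$) I would apply Construction~\ref{AdjoinPoints} with $y=7$ adjoined points. As master I would take the type $54^4$ code obtained by inflating the type $18^4$ code of Lemma~\ref{GDC-18^4} with weight $3$ through Construction~\ref{Inflation} (size $432\cdot 3^2=3888$); as ingredient $(i)$ the optimal length-$61$ code from the $t=6$ case (size $396$); and as ingredient $(iii)$ the type $1^{54}7^1$ code of Lemma~\ref{GDC-1^54 7^1} (size $393$, applicable since $t_1=4\ge 2$). This produces a code of length $7+4\cdot 54=223$ and size $3888+396+3\cdot 393=5463=U(223,6,[3,1])$. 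The simpler overlay (master $36^6$, filled with $1^{36}7^1$) is deliberately avoided: it would require a length-$7$ ingredient of size only $2$ instead of $3$ and hence fall one codeword short, whereas routing through the length-$61$ ingredient recovers exactly the missing codeword.
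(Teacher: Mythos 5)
Your handling of $n=7$ (table lookup), of $t\in\{1,2,3\}$ (direct constructions developed under an automorphism group, mixing full and short orbits to hit the target sizes $24$, $62$, $119$), and of $t=24$ (inflate the type $18^4$ code of Lemma~\ref{GDC-18^4} with weight $3$ to get a master of type $54^4$ and size $3888$, adjoin seven points, use the optimal length-$61$ code once and the type $1^{54}7^1$ code of Lemma~\ref{GDC-1^54 7^1} three times, giving $3888+396+3\cdot 393=5463$) coincides with the paper's proof.

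The genuine gap is at $t=6$. You assert that the standard recursive tools break down for length $61$ and therefore fall back on a direct construction over $\bbZ_{12}\times\{0,\dots,4\}\cup\{\infty\}$ requiring $33$ base codewords that you concede you have not produced; as written, the case $t=6$ --- and hence $t=24$, which feeds on it --- remains unproved. But the recursion does not break down: the paper takes the $[3,1]$-GDC$(6)$ of type $15^4$ with size $300$ from Lemma~\ref{s15^t}, adjoins a single ideal point, and fills each group of size $15$ together with that point with the optimal $(16,6,[3,1])_3$-code of size $24$ already built in the $t=1$ case, giving $300+4\cdot 24=396=U(61,6,[3,1])$. Your analysis only excluded the overlay through the type $1^{54}7^1$ ingredient (which indeed loses one codeword because $A_3(7,6,[3,1])=2$ falls short of $\lfloor 7/3\cdot 2\rfloor$-type expectations); you never considered routing through the type $15^4$ GDC, which avoids the deficient length-$7$ code entirely. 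A smaller caveat of the same kind applies to $t\in\{1,2,3\}$: the orbit-counting plan is the paper's, but the explicit base codewords still have to be exhibited and verified, as the paper does.
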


\begin{proof} For $t=0$, see Table~\ref{presult}.

For $t=1$, the desired code is constructed on
$I_{16}$ with $24$ codewords listed below.
$$\begin{array}{llll}
\langle4,6,9,5\rangle & \langle9,0,3,13\rangle & \langle5,2,10,11\rangle & \langle9,15,2,14\rangle \\
\langle1,0,6,8\rangle & \langle4,8,14,2\rangle & \langle15,6,12,7\rangle & \langle11,0,14,5\rangle \\
\langle8,5,7,6\rangle & \langle0,12,2,4\rangle & \langle10,7,0,15\rangle & \langle8,13,15,0\rangle \\
\langle3,2,1,7\rangle & \langle3,5,15,4\rangle & \langle10,6,3,14\rangle & \langle11,4,1,15\rangle \\
\langle1,5,13,9\rangle & \langle11,7,9,12\rangle & \langle14,7,13,3\rangle & \langle13,10,4,12\rangle \\
\langle9,10,8,1\rangle & \langle3,8,12,11\rangle & \langle13,11,6,2\rangle & \langle1,12,14,10\rangle \\
\end{array}$$

For $t=2$, the desired code is constructed on $\bbZ_{24}\cup\{\infty\}$. Develop the following codewords $+4\pmod{24}$.
$$\begin{array}{llllll}
\langle0, 9, 10, 5\rangle &
\langle21, 5, 18, 9\rangle &
\langle10, 17, 4, 22\rangle &
\langle7, 16, 18, 11\rangle &
\langle11, 17, 0, 12\rangle \\
\langle6, 7, 2, 12\rangle &
\langle9, 8, 12, 11\rangle &
\langle1, 23, 15, 11\rangle &
\langle0, 7, 22, 14\rangle &
\langle\infty, 23, 17, 2\rangle
\end{array}$$
Then add two codewords  $\langle0, 8, 16, \infty\rangle$ and $\langle4, 12, 20, \infty\rangle$.

For $t=3$, the desired code is constructed on $X=(\bbZ_{6} \times \{0, 1, 2, 3, 4\})\cup (\bbZ_{2} \times \{5,6\})$. Define $\alpha: X \rightarrow X$ as $x_y \rightarrow (x+1)_y$ where the addition is modulo $6$ if $y \in \{0, 1, 2, 3, 4\}$, and modulo $2$ if $y \in \{5, 6\}$.
Develop the following $19$ base codewords with $\alpha$:
$$\begin{array}{llllll}
\langle0_5, 1_4, 1_1, 0_2\rangle &
\langle0_5, 3_0, 2_4, 5_2\rangle &
\langle0_5, 5_3, 2_1, 4_3\rangle &
\langle0_2, 5_2, 2_3, 0_5\rangle &
\langle0_6, 3_1, 2_0, 1_3\rangle &
\langle0_6, 0_3, 2_2, 5_4\rangle \\
\langle0_6, 0_4, 5_0, 2_1\rangle &
\langle3_2, 5_4, 3_3, 0_6\rangle &
\langle1_2, 0_1, 4_0, 3_0\rangle &
\langle3_1, 3_3, 3_0, 4_0\rangle &
\langle5_4, 4_1, 3_1, 2_4\rangle &
\langle1_0, 5_3, 0_2, 4_0\rangle \\
\langle0_1, 2_0, 3_2, 0_2\rangle &
\langle5_0, 1_3, 2_4, 2_3\rangle &
\langle4_2, 4_0, 4_4, 5_3\rangle &
\langle2_2, 0_1, 4_2, 3_4\rangle &
\langle3_3, 1_4, 0_4, 2_2\rangle &
\langle4_3, 0_3, 5_1, 2_1\rangle \\
\langle0_0, 2_4, 4_4, 5_1\rangle &
\end{array}$$
Then add the following $5$ codewords.
$$\begin{array}{llllll}
\langle0_0, 2_0, 4_0, 0_5\rangle &
\langle1_0, 3_0, 5_0, 1_5\rangle &
\langle0_5, 1_5, 0_6, 1_6\rangle &
\langle0_1, 2_1, 4_1, 0_6\rangle &
\langle1_1, 3_1, 5_1, 1_6\rangle
\end{array}$$

For $t=6$, take a $[3,1]$-GDC$(6)$ of type $15^4$ from
Lemma~\ref{s15^t}. Adjoin one ideal point, and fill in the
groups together with the ideal point with optimal codes of
length $16$ constructed above to obtain the desired code.

Finally, for $t=24$, take a $[3,1]$-GDC$(6)$ of type $18^4$ from
Lemma~\ref{GDC-18^4} and assign each point with weight 3 to obtain a $[3,1]$-GDC$(6)$ of type $54^4$.  Adjoin seven ideal points and fill in the
groups together with these ideal points with $[3,1]$-GDC$(6)$s of type $1^{54}7^1$ from Lemma~\ref{GDC-1^54 7^1} to obtain a $[3,1]$-GDC$(6)$ of type $1^{162} 61^1$. Then fill in the group with an optimal code of length $61$ to obtain the desired code.
\end{proof}

\begin{lemma}
$A_3(9t+7,6,[3,1])=U(9t+7,6,[3,1])$ for each
$t\geq 17$ and $t\not\in \{20,26,28\}$.
\end{lemma}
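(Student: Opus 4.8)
The plan is to mimic the recursive machinery already used for $n\equiv 4,5,6\pmod 9$: write $t=4u+x$ with $u\ge 4$ and $x\in\{0,1,2,3\}$, and use the codes of type $36^u(9x)^1$ as master designs. First I would take a $[3,1]$-GDC$(6)$ of type $36^u(9x)^1$ from Lemmas~\ref{GDCa} and~\ref{GDCb}, adjoin seven ideal points, and fill each group of size $36$ together with the seven ideal points using a $[3,1]$-GDC$(6)$ of type $1^{36}7^1$ from Lemma~\ref{GDC-1^36 7^1}. This produces a $[3,1]$-GDC$(6)$ of type $1^{36u}(9x+7)^1$, after which the single remaining group of size $9x+7$ is filled with an optimal $(9x+7,6,[3,1])_3$-code.

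For $x\in\{1,2,3\}$ the leftover group has length $9x+7\in\{16,25,34\}$, and these lengths are optimal by Lemma~\ref{s7(9)} (they are the values $9t'+7$ with $t'=1,2,3$). A direct size count, using that each type $1^{36}7^1$ ingredient contributes $190$ codewords and that the master has size $72u(2u+x-2)$, then gives total size $72u(2u+x-2)+190u+U(9x+7,6,[3,1])$, which I would check equals $U(9t+7,6,[3,1])$ for each of the three residues. The only value of $t\equiv 1,2,3\pmod 4$ not reached this way is $t=26=4\cdot 6+2$, since the required master $36^6 18^1$ is not supplied by Lemma~\ref{GDCb} (which yields $36^6(9x)^1$ only for $x\in\{0,1,3\}$); as $x$ is forced into $\{0,1,2,3\}$ there is no alternative decomposition, and $t=26$ is therefore a genuine exception.

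The delicate case, and the main obstacle, is $x=0$, i.e.\ $t\equiv 0\pmod 4$. The naive construction would now fill a group of size $7$, but $A_3(7,6,[3,1])=2$ while $U(7,6,[3,1])=3$, so the output falls one short of $U(9t+7,6,[3,1])$. To recover the missing codeword one must not isolate a length-$7$ group, but instead arrange the seven adjoined points to merge into a larger group whose length is one of the \emph{optimal} values $16,25,34,61,223$ (the lengths $9t'+7$ with $t'\in\{1,2,3,6,24\}$ of Lemma~\ref{s7(9)}). Following the $t=24$ model, which replaces $36^u$ by $54^4$ and fills a length-$61$ group via type $1^{54}7^1$ (Lemma~\ref{GDC-1^54 7^1}), I would construct, for each $t=4u$ with $u\ge 8$ (the value $u=6$ being already settled in Lemma~\ref{s7(9)}), a master possessing one designated large group of optimal length, realizing the needed group types with transversal designs from Theorem~\ref{TD}.

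The hardest part will be verifying that these restructured constructions cover \emph{all} $t\equiv 0\pmod 4$ with $t\ge 32$ while leaving precisely $t=20$ and $t=28$ uncovered: for those two values ($u=5$ and $u=7$) the masters $36^5$ and $36^7$ do exist, but no design with a large optimal fillable group of the right total size is available, so they must be set aside as exceptions rather than deficiencies of the upper bound. I expect each individual size verification to be routine, with the real labor lying in the case division over the ranges of $u$ and in choosing transversal designs so that the residual group always lands on one of the admissible optimal lengths.
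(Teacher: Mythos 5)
Your treatment of $t\equiv 1,2,3\pmod 4$ coincides with the paper's: master $36^u(9x)^1$ from Lemmas~\ref{GDCa} and~\ref{GDCb}, seven ideal points, ingredients $1^{36}7^1$ from Lemma~\ref{GDC-1^36 7^1}, and an optimal code of length $9x+7\in\{16,25,34\}$ from Lemma~\ref{s7(9)}; the size count you defer does check out, and your explanation of why $t=26$ escapes this scheme (no $36^6 18^1$ master) is the correct one, although strictly speaking $t=26$ remains a \emph{possible} exception rather than a proven one.

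The genuine gap is the case $t\equiv 0\pmod 4$, which you correctly diagnose (the seven adjoined points would force an optimal code of length $43=9\cdot4+7$, which is unavailable) but do not resolve: you only announce that one should ``construct a master possessing one designated large group of optimal length, realizing the needed group types with transversal designs,'' and you yourself flag the coverage verification as the hardest part. That is precisely where the content of the lemma lies, and the paper supplies it with two concrete families. For $t\ge 120$ it writes $t=4u+24$ and takes the master $36^u 216^1$ from Lemma~\ref{GDCa}, so that the distinguished group plus the seven ideal points has length $223=9\cdot 24+7$, optimal by Lemma~\ref{s7(9)}. For $32\le t<120$ it writes $t=6u-12+x+y+z$ with $u\in\{7,9,11,13,16,19\}$ and $x,y,z\in\{0,2,4,6\}$ (at most one equal to $2$), builds a $\{7,u\}$-GDD of type $6^u(u-1)^1$ from a TD$(7,u)$, and applies the Fundamental Construction with weights $0$ and $9$ to get a master $54^{u-2}(9x)^1(9y)^1(9z)^1$; the $54$-groups are handled by the type $1^{54}7^1$ ingredient of Lemma~\ref{GDC-1^54 7^1}, and the distinguished group lands on length $25$ or $61$, both optimal. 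Without these (or equivalent) explicit decompositions and a check that they sweep out all $t\equiv 0\pmod 4$ with $t\ge 32$, the proof of the lemma is incomplete; your parenthetical claim that $t=20$ and $t=28$ are ``genuine exceptions'' is likewise unsubstantiated --- they are simply values the construction does not reach.
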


\begin{proof}
For each $t\equiv 1,2,3\pmod{4}$, $t\geq 17$ and $t\not =26$, write $t=4u+x$ such that $1\leq x\leq 3$. Take a $[3,1]$-GDC$(6)$ of type $36^u(9x)^1$ from Lemmas~\ref{GDCa} and~\ref{GDCb}. Adjoin seven ideal points, and fill in the groups of size $36$ together with these ideal points with $[3,1]$-GDC$(6)$s of type
$1^{36} 7^1$ from Lemma~\ref{GDC-1^36 7^1} to obtain a $[3,1]$-GDC$(6)$ of type $1^{36u}(9x+7)^1$. Then fill in the group with an optimal code of length $9x+7$ from Lemma~\ref{s7(9)} to obtain an optimal
$(9(4u+x)+7,6,[3,1])_3$-code, as desired.

Now we consider the case of $t\equiv 0 \pmod{4}$. For $t=24$, see Lemma~\ref{s7(9)}. For $t\geq 120$, write $t=4u+24$. Take a $[3,1]$-GDC$(6)$ of type $36^u 216^1$ from Lemma~\ref{GDCa}. Adjoin seven ideal points and fill in the groups.
For $32\leq t <120 $, write $t=6u-12+x+y+z$ with  $u\in \{7,9,11,13,16,19\}$ such that $x,y,z \in \{0,2,4,6\}$ and at most one of $x$, $y$, $z$ takes the value 2. Take a TD$(7,u)$ from Theorem~\ref{TD}, and remove one point from one group to obtain a
$\{7,u\}$-GDD of type $6^u(u-1)^1$. Apply the Fundamental
Construction with weight $9$ to the points in $u-2$ groups of
size $6$ and weights $0$ or $9$ to the remaining points. Noting that there exist $[3,1]$-GDC$(6)$s of type $9^s$ for $s\geq 4$ by Theorem~\ref{9^t}, we obtain a $[3,1]$-GDC$(6)$ of type $54^{u-2}(9x)^1 (9y)^1 (9z)^1$. Then adjoin seven ideal points and fill in the groups. Note that there exist $[3,1]$-GDC$(6)$s of types $1^{36} 7^1$ and $1^{54} 7^1$ and optimal codes of lengths $25$ and $61$. We get an optimal
$(9(6u-12+x+y+z)+7,6,[3,1])_3$-code, as desired.
\end{proof}

Summarizing the above results, we have:

\begin{theorem}
\label{7(9)} $A_3(7,6,[3,1])=2$, $A_3(9t+7,6,[3,1])=U(9t+7,6,[3,1])$ for each
$t\geq 1$ and $t\not\in \{4,5\}\cup [7,16] \cup \{20,26,28\}$.
\end{theorem}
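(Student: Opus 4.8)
The plan is to derive this theorem purely by assembling the two preceding lemmas of the present subsection; since it is a summarizing statement, no new construction is required. First I would record the base case $A_3(7,6,[3,1])=2$, which corresponds to $t=0$ and is read directly from Table~\ref{presult} (it is also the opening assertion of Lemma~\ref{s7(9)}).

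Next I would gather the positive indices $t$ for which optimality of $A_3(9t+7,6,[3,1])$ has already been established. Lemma~\ref{s7(9)} supplies $t\in\{1,2,3,6,24\}$, and the lemma immediately preceding this theorem supplies every $t\geq 17$ with $t\notin\{20,26,28\}$. The union of these two index sets is $\{1,2,3,6\}\cup\{t\geq 17:t\notin\{20,26,28\}\}$, the value $t=24$ already lying in the second set. The only remaining task is then to describe the complement of this union inside $\{t:t\geq 1\}$.

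The real work is the bookkeeping. Between the initial block $\{1,2,3\}$ and the isolated value $6$ lies the gap $\{4,5\}$; between $6$ and the onset of the infinite range at $17$ lies the gap $[7,16]$; and within that infinite range the three removed values are $\{20,26,28\}$. Hence the complement is exactly $\{4,5\}\cup[7,16]\cup\{20,26,28\}$, matching the exceptional set in the statement and completing the proof.

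I do not expect any genuine obstacle here, since all the constructive content lives in the earlier lemmas. The single point requiring care is to verify the boundary indices — in particular that $6$ is covered while $4,5,7,\dots,16$ are not, that the infinite range begins precisely at $17$, and that the three sporadic exceptions $20,26,28$ are correctly excluded — so that the stated exceptional set is neither enlarged nor shrunk by an off-by-one slip.
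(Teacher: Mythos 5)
Your proposal is correct and matches the paper exactly: the paper offers no separate argument for this theorem beyond the phrase ``Summarizing the above results,'' so the content is precisely the union of Lemma~\ref{s7(9)} (covering $t\in\{1,2,3,6,24\}$ plus the $t=0$ value from Table~\ref{presult}) with the immediately preceding lemma (covering $t\geq 17$, $t\notin\{20,26,28\}$), and your computation of the complement $\{4,5\}\cup[7,16]\cup\{20,26,28\}$ is accurate.
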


\subsection{Case of Length $n\equiv 8\pmod{9}$}

\begin{lemma}
\label{s8(9)a} $A_3(8,6,[3,1])=4$, $A_3(9t+8,6,[3,1])=U(9t+8,6,[3,1])$ for $t\in
\{1,2\}$.
\end{lemma}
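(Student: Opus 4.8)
The plan is to match the Johnson-type upper bound against explicit optimal codes in each of the three cases. Since $8,17,26\equiv 8\pmod 9$, the relevant bound is $U(9t+8,6,[3,1])=\lfloor\frac{9t+8}{3}\lfloor\frac{9t+7}{3}\rfloor\rfloor$, which evaluates to $5$, $28$, $69$ for $t=0,1,2$ and is supplied by the Johnson-type bound of Theorem~\ref{bound2} combined with Lemma~\ref{bound1}. The case $n=8$ is special: here the true value $4$ is strictly below the Johnson value $5$, so no new construction is needed and I would simply invoke the entry $A_3(8,6,[3,1])=4$ already recorded (from Svanstr\"om et al.) in Table~\ref{presult}. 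Thus only $t=1$ and $t=2$ require work, and for those the upper bound from the Corollary reduces the problem to exhibiting $(17,6,[3,1])_3$- and $(26,6,[3,1])_3$-codes of sizes exactly $28$ and $69$.

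For the two constructions I would follow the direct-development style used in Lemmas~\ref{s5(9)} and~\ref{s7(9)}: realize each code as the union of the orbits of a short list of base codewords under a cyclic (or near-cyclic) automorphism group, choosing the point set and development step so that the target size factors cleanly into orbit lengths. For $n=17$, the factorization $28=4\cdot 7$ suggests the point set $\bbZ_{16}\cup\{\infty\}$ developed under $x\mapsto x+4\pmod{16}$ (with $\infty$ fixed), so that $7$ suitably chosen base codewords each contribute a full length-$4$ orbit. For $n=26$, the factorization $69=3\cdot 23$ suggests $\bbZ_{23}$ adjoined with three fixed points, developing $3$ base codewords under $x\mapsto x+1\pmod{23}$, each giving a full length-$23$ orbit; a mixed-orbit development over $\bbZ_{26}$ (using orbit lengths $26,13,2$) or a small product structure is an alternative if a clean base set is easier to locate there. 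One could also attempt a GDC route building on the ingredients of Lemma~\ref{s1^9m^1}, but for these two isolated small lengths a direct development is the most economical.

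Each candidate must be checked to have composition $[3,1]$, which is automatic from the way the base codewords are written (three coordinates carrying $1$ and one carrying $2$), and minimum Hamming distance $6$. Since the weight is $w=4$, distance $6$ is equivalent to the condition that no two distinct codewords agree in two or more coordinate positions with equal symbols; under a group development this collapses to a finite check on differences/translates of the base codewords, so the verification is routine once the base list is fixed.

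The main obstacle is therefore the \emph{search} rather than the verification: I must find base codewords whose orbits are simultaneously distance-$6$, collision-free across different orbits, and of the exact optimal cardinality. The delicate point is selecting an automorphism group whose orbit-length structure is compatible with the non-multiplicative targets $28$ and $69$ (neither being a multiple of the natural full-orbit length), since a poor choice of group forces awkward short orbits and makes a clean count impossible. After fixing a group with a compatible orbit structure, a standard backtracking search over base codewords produces the required lists, which I would then display explicitly and let the distance check confirm optimality against the Corollary's bound.
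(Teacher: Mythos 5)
There is a genuine gap: the lemma is an existence statement, and for $t\in\{1,2\}$ you never actually exhibit the required $(17,6,[3,1])_3$- and $(26,6,[3,1])_3$-codes of sizes $28$ and $69$. You correctly reduce $n=8$ to Table~\ref{presult} (exactly as the paper does) and correctly identify the targets $28$ and $69$, but for the two remaining cases you only describe a search strategy and explicitly defer the search itself; without the base codewords the proof is incomplete. The paper supplies these objects explicitly: for $n=17$ a flat list of $28$ codewords on $I_{17}$ (no group action at all), and for $n=26$ a development of $17$ base codewords of orbit length $4$ over $(\bbZ_4\times I_6)\cup(\bbZ_2\times\{6\})$ plus one extra codeword $\langle 0_0,1_0,2_0,3_0\rangle$, giving $17\cdot 4+1=69$.

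Moreover, your proposed orbit structure for $n=26$ is provably infeasible, so the search you outline there would never terminate successfully. If the point set is $\bbZ_{23}$ together with three points fixed by the translation $x\mapsto x+1\pmod{23}$, then any base codeword meeting a fixed point $p$ places a nonzero symbol at $p$ in all $23$ translates of that orbit. But the counting argument in the paper's Lemma 2.4 (with $3x_p\le 25$ and $2x_p+3y_p\le 25$) caps the number of codewords through any single position at $x_p+y_p\le\lfloor(25+x_p)/3\rfloor\le 11<23$. Hence no base codeword may use a fixed point, the three adjoined points carry no symbols, and the construction collapses to a $(23,6,[3,1])_3$-code of size $69$, contradicting $A_3(23,6,[3,1])\le U(23,6,[3,1])\le 51$. (The alternative you mention, orbit lengths $26,13,2$ under $\bbZ_{26}$, also cannot produce length-$2$ orbits, since the stabilizer would force the support to be a union of cosets of size $13$.) The target sizes here are indeed, as you suspected, awkward for a single cyclic development, which is precisely why the paper resorts to a bare list for $n=17$ and a product-type group with one short orbit for $n=26$.
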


\begin{proof} For $t=0$, see Table~\ref{presult}.

For $t=1$, an optimal $(17,6,[3,1])_3$-code is constructed on
$I_{17}$ with $28$ codewords listed below.
$$\begin{array}{llllll}
\langle5,0,4,2\rangle & \langle0,3,16,9\rangle & \langle2,15,10,5\rangle & \langle0,10,6,8\rangle &
\langle12,5,6,15\rangle & \langle7,13,11,5\rangle \\ \langle7,6,9,3\rangle  & \langle1,12,9,0\rangle&
\langle10,7,1,14\rangle &  \langle11,8,4,3\rangle  & \langle16,5,14,7\rangle & \langle8,12,13,14\rangle \\
\langle7,2,8,0\rangle & \langle1,3,5,13\rangle & \langle7,4,15,12\rangle & \langle3,15,14,8\rangle &
\langle9,14,2,13\rangle & \langle15,11,9,16\rangle \\ \langle6,4,14,1\rangle & \langle1,2,11,6\rangle &
\langle13,15,0,1\rangle & \langle4,13,10,9\rangle & \langle1,16,8,15\rangle & \langle14,11,0,12\rangle \\
\langle5,8,9,10\rangle & \langle3,12,2,4\rangle & \langle6,16,13,2\rangle & \langle16,10,12,11\rangle
\end{array}$$

For $t=2$, the desired code is constructed on $X=(\bbZ_{4} \times \{0, 1, 2, 3, 4, 5\})\cup (\bbZ_{2} \times \{6\})$. Define $\alpha: X \rightarrow X$ as $x_y \rightarrow (x+1)_y$ where the addition is modulo $4$ if $y \in \{0, 1, 2, 3, 4, 5\}$, and modulo $2$ if $y \in \{6\}$.
Develop the following $17$ base codewords with $\alpha$:
$$\begin{array}{llllll}
\langle1_3, 2_1, 0_6, 2_5\rangle &
\langle3_5, 3_4, 0_6, 2_3\rangle &
\langle0_4, 1_0, 0_6, 3_1\rangle &
\langle0_0, 2_2, 0_6, 1_2\rangle &
\langle0_3, 3_1, 3_2, 0_6\rangle &
\langle0_3, 3_3, 1_2, 1_0\rangle \\
\langle1_5, 3_2, 3_3, 0_4\rangle &
\langle2_4, 2_3, 1_5, 0_1\rangle &
\langle3_5, 0_5, 0_2, 3_0\rangle &
\langle3_4, 3_1, 0_1, 1_4\rangle &
\langle2_0, 2_1, 2_3, 0_3\rangle &
\langle1_0, 2_3, 1_4, 2_5\rangle \\
\langle0_4, 1_4, 1_2, 3_2\rangle &
\langle3_2, 2_1, 0_5, 0_1\rangle &
\langle2_1, 3_0, 1_5, 3_5\rangle &
\langle0_0, 1_1, 3_2, 0_2\rangle &
\langle0_5, 1_0, 2_4, 0_3\rangle &
\end{array}$$
Further add the codeword $\langle0_0, 1_0, 2_0, 3_0\rangle$.
\end{proof}

\begin{lemma}
\label{GDC-1^9t 8^1} There is a $[3,1]$-GDC$(6)$ of type $1^{9t} 8^1$ with size $9t^2+14t$ for each $t\in \{3,4,5,6\}$.
\end{lemma}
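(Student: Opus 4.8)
The plan is to dispose of each of the four values $t\in\{3,4,5,6\}$ by an explicit direct construction, exactly in the spirit of Lemmas~\ref{GDC-1^36 5^1}, \ref{GDC-1^36 7^1} and \ref{GDC-1^54 7^1}, where the types $1^{36}5^1$, $1^{36}7^1$ and $1^{54}7^1$ were produced by developing short lists of base codewords under a suitable automorphism. First I would fix the point set $X=(\bbZ_{3t}\times\{0,1,2\})\cup G$, where the $9t$ points of $\bbZ_{3t}\times\{0,1,2\}$ are the singleton groups and $G$ is the distinguished group of size $8$. Mimicking the size-$7$ group $(\bbZ_3\times\{3,4\})\cup\{\infty\}$ of Lemma~\ref{GDC-1^36 7^1}, a convenient choice is $G=(\bbZ_3\times\{3,4\})\cup\{\infty_0,\infty_1\}$, which has the right cardinality $3+3+2=8$. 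I would then define the automorphism $\alpha\colon X\to X$ by $x_y\mapsto (x+1)_y$, reducing the first coordinate modulo $3t$ when $y\in\{0,1,2\}$ and modulo $3$ when $y\in\{3,4\}$, while fixing $\infty_0,\infty_1$. Since $3\mid 3t$, the group $\langle\alpha\rangle$ has order $3t$; moreover $\alpha$ permutes the singleton points among themselves and maps $G$ to itself, so the group constraint of a GDC is automatically respected (the singletons trivially, and $G$ because each base codeword will be chosen to meet $G$ in at most one point).

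Second, for each $t$ I would exhibit a list of base codewords of composition $[3,1]$ (three $1$-coordinates and one $2$-coordinate in the $\langle\cdot\rangle$-notation) and set $\C_t$ equal to the development of these codewords under $\langle\alpha\rangle$. The bookkeeping of the size is the delicate point: a generic orbit has length $3t$, and since $9t^2+14t=3t(3t+4)+2t$ one must include, besides $3t+4$ full orbits, a small number of short orbits (of length a proper divisor of $3t$, coming from codewords fixed by a power of $\alpha$) whose lengths sum to $2t$. Such short orbits arise from base codewords supported on the structured parts of $X$ — for instance codewords meeting only the subgroup $\{x,x+t,x+2t\}$ of a single layer $\bbZ_{3t}\times\{y\}$, or codewords incident with $\infty_0,\infty_1$ — just as the length-$6$ and length-$4$ short orbits appear in Lemma~\ref{GDC-1^36 7^1}. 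I would arrange the base codewords so that the full and short orbits together give precisely $9t^2+14t$ codewords.

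Finally, I would verify the minimum-distance condition. Writing $u=\langle a,b,c,d\rangle$ and $v=\langle a',b',c',d'\rangle$, and letting $s=|\supp(u)\cap\supp(v)|$ and $e$ the number of common support positions on which $u$ and $v$ carry the same symbol, one has $d_H(u,v)=8-s-e$; hence distance $6$ is equivalent to $s+e\le 2$, that is, any two distinct codewords may share at most one support position, or share two support positions provided they disagree in symbol value at both. By the usual method-of-differences argument this reduces to checking, over the finitely many base codewords, that the pure differences (between two $1$-coordinates, or between two $2$-coordinates) and the mixed differences (between a $1$-coordinate and a $2$-coordinate) never combine to force $s+e\ge 3$ for any nonzero shift, with the fixed points $\infty_0,\infty_1$ handled separately. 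The main obstacle is purely computational: locating base-codeword lists that simultaneously realize the exact count $9t^2+14t$ and satisfy all the difference conditions, which I would carry out by computer search. Once such lists are displayed, the checks of the composition, the group property, the distance and the size are routine, so I would simply present the four tables of base codewords together with the developing automorphisms.
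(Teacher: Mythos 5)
Your plan coincides with the paper's proof in every structural detail: the paper uses exactly the point set $(\bbZ_{3t}\times\{0,1,2\})\cup(\bbZ_3\times\{3,4\})\cup\{\infty_0,\infty_1\}$ with the same automorphism $\alpha$, and realizes the excess $2t$ in $9t^2+14t=3t(3t+4)+2t$ by two short orbits of length $t$ generated by $\langle 0_0,t_0,(2t)_0,\infty_0\rangle$ and $\langle 0_1,t_1,(2t)_1,\infty_1\rangle$, followed by $3t+4$ full orbits of base codewords. The only thing your proposal omits is the explicit tables of base codewords themselves (which the paper exhibits and which is the actual content of the verification), but the approach and all the bookkeeping are identical.
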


\begin{proof} For each $t$, the desired GDC is constructed on $X=(\bbZ_{3t} \times \{0, 1, 2\})\cup (\bbZ_{3} \times \{3,4\}) \cup \{\infty_0, \infty_1\}$, where the point set $(\bbZ_{3} \times \{3,4\}) \cup \{\infty_0, \infty_1\}$ forms the group of size 8. Define $\alpha: X \rightarrow X$ as $x_y \rightarrow (x+1)_y$ where the addition is modulo $3t$ if $y \in \{0, 1, 2\}$, and modulo 3 if $y \in \{3,4\}$. The points $\infty_0, \infty_1$ are fixed by $\alpha$. First develop the following two codewords with $\alpha$, making two short orbits of length $t$:
$$\begin{array}{ll}
\langle 0_0, t_0, (2t)_0, \infty_0 \rangle &
\langle 0_1, t_1, (2t)_1, \infty_1 \rangle
\end{array}$$
Then develop the following base codewords with $\alpha$:

\noindent $t=3$:
$$\begin{array}{llllll}
\langle0_3, 0_1, 8_1, 7_0\rangle &
\langle0_3, 0_0, 8_0, 1_2\rangle &
\langle0_3, 0_2, 4_1, 5_2\rangle &
\langle8_2, 4_0, 7_2, 0_3\rangle &
\langle0_4, 2_2, 0_1, 1_0\rangle &
\langle0_4, 6_2, 4_2, 5_1\rangle \\
\langle0_4, 0_0, 2_0, 1_1\rangle &
\langle1_0, 7_1, 5_1, 0_4\rangle &
\langle0_0, 0_2, 4_0, 0_1\rangle &
\langle\infty_0, 6_1, 4_2, 6_0\rangle &
\langle6_1, 8_0, 2_1, 6_2\rangle &
\langle3_1, 0_2, 6_2, 7_0\rangle \\
\langle\infty_1, 6_2, 0_0, 2_1\rangle &
\end{array}$$

\noindent $t=4$:
$$\begin{array}{lllll}
\langle\infty_0, 4_1, 11_2, 7_0\rangle &
\langle5_1, 1_2, 6_2, 7_2\rangle &
\langle0_3, 3_1, 10_1, 8_2\rangle &
\langle3_2, 5_2, 8_0, 0_3\rangle &
\langle0_0, 0_1, 1_0, 7_0\rangle \\
\langle2_1, 11_1, 0_1, 11_2\rangle &
\langle0_4, 0_2, 8_2, 6_1\rangle &
\langle5_0, 3_2, 0_1, 10_0\rangle &
\langle0_4, 0_0, 10_1, 4_2\rangle &
\langle1_2, 9_1, 5_0, 0_4\rangle \\
\langle0_4, 10_0, 11_1, 8_0\rangle &
\langle0_3, 2_1, 6_0, 8_0\rangle &
\langle7_2, 5_0, 8_0, 2_1\rangle &
\langle0_3, 4_2, 10_0, 3_2\rangle &
\langle3_0, 3_2, 6_2, 6_1\rangle \\
\langle\infty_1, 9_2, 8_0, 10_1\rangle &
\end{array}$$

\noindent $t=5$:
$$\begin{array}{lllll}
\langle9_1, 5_2, 1_0, 3_2\rangle &
\langle0_1, 3_1, 7_1, 4_2\rangle &
\langle0_4, 2_2, 12_1, 13_1\rangle &
\langle11_0, 6_1, 5_0, 8_0\rangle &
\langle0_3, 11_0, 2_2, 0_0\rangle \\
\langle7_1, 0_0, 9_1, 0_3\rangle &
\langle0_3, 2_1, 9_2, 1_1\rangle &
\langle5_1, 14_0, 11_1, 6_0\rangle &
\langle0_4, 12_2, 5_0, 1_0\rangle &
\langle5_2, 8_2, 10_1, 10_2\rangle \\
\langle0_4, 2_1, 0_0, 1_2\rangle &
\langle4_1, 4_0, 7_2, 0_4\rangle&
\langle2_1, 13_0, 14_0, 10_2\rangle &
\langle11_0, 1_2, 5_2, 4_0\rangle &
\langle\infty_0, 2_1, 4_2, 6_0\rangle \\
\langle0_2, 1_2, 7_2, 1_1\rangle &
\langle0_3, 7_2, 7_0, 6_1\rangle &
\langle\infty_1, 11_0, 10_2, 1_1\rangle &
\langle12_2, 2_0, 4_0, 0_1\rangle  &
\end{array}$$

\noindent $t=6$:
$$\begin{array}{lllll}
\langle17_0, 14_1, 17_1, 13_2\rangle &
\langle0_0, 2_2, 12_1, 9_2\rangle &
\langle14_1, 1_0, 7_1, 12_2\rangle &
\langle3_0, 16_0, 14_0, 0_3\rangle &
\langle0_4, 4_1, 0_1, 13_0\rangle \\
\langle16_2, 10_0, 2_2, 13_2\rangle &
\langle0_4, 0_0, 15_2, 8_1\rangle &
\langle12_1, 4_2, 12_2, 1_0\rangle &
\langle12_1, 13_0, 2_1, 4_0\rangle &
\langle14_1, 2_2, 4_0, 0_4\rangle \\
\langle5_2, 11_2, 10_0, 14_2\rangle &
\langle17_2, 6_1, 0_2, 1_1\rangle &
\langle0_4, 13_2, 14_0, 8_2\rangle &
\langle0_3, 0_1, 17_1, 15_0\rangle &
\langle0_0, 0_2, 10_0, 3_1\rangle \\
\langle15_0, 16_0, 12_0, 17_1\rangle &
\langle0_3, 6_2, 4_2, 10_0\rangle &
\langle\infty_1, 12_2, 5_0, 3_1\rangle &
\langle0_1, 2_2, 16_1, 7_2\rangle &
\langle0_3, 17_2, 4_1, 8_0\rangle \\
\langle\infty_0, 15_1, 0_2, 14_0\rangle &
\langle1_1, 2_2, 15_0, 6_1\rangle &
\end{array}$$
\end{proof}

\begin{lemma}
\label{s8(9)b} $A_3(9t+8,6,[3,1])=U(9t+8,6,[3,1])$ for each $t\geq 13$ and $t\not \in \{15,23,28\}$.
\end{lemma}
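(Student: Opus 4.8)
The plan is to build every code by the recursive ``adjoin-and-fill'' scheme used for the other residue classes. I start from a master $[3,1]$-GDC$(6)$ whose group sizes are all multiples of $9$, adjoin $8$ ideal points, absorb each group into singletons by a copy of the type-$1^{9s}8^1$ GDC of Lemma~\ref{GDC-1^9t 8^1} (available for $s\in\{3,4,5,6\}$, i.e.\ for groups of size $27,36,45,54$), and finally complete the $8$ ideal points together with one \emph{reserved} group by a short optimal code, all via Construction~\ref{AdjoinPoints}. Writing $U(9t+8)=9t^2+14t+5$, the decisive bookkeeping fact is that a $1^{9s}8^1$ has size exactly $U(9s+8)-5$, so the assembled code attains $U$ precisely when the terminal code on the reserved group together with the $8$ points attains \emph{its own} bound. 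Since $A_3(8,6,[3,1])=4=U(8)-1$, a bare length-$8$ terminal is short by one, whereas the optimal length-$17$ and length-$26$ codes of Lemma~\ref{s8(9)a} meet $U$; hence the reserved group must have size $9$ or $18$ (or, inductively, a larger size already settled).

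The residues $t\equiv1,2\pmod4$ come almost for free, together with an induction that powers the whole proof. Lemma~\ref{GDCa} supplies $36^u(9x)^1$ for all $0\le x\le u$, $u\ge4$ (with Lemma~\ref{GDCb} patching $u=6,10$); filling its $u$ groups of size $36$ with $1^{36}8^1$ and completing the reserved group of size $9x$ by an optimal $(9x+8,6,[3,1])_3$-code gives, by the identity above, a code of size exactly $U(9(4u+x)+8)$ \emph{whenever $A_3(9x+8)=U$ is already known}. The base values $x=1,2$ of Lemma~\ref{s8(9)a} then yield $A_3=U$ for all $t\equiv1\pmod4$ with $t\ge17$ and all $t\equiv2\pmod4$ with $t\ge18$; and since any newly settled length may itself be reused as a reserved block, the construction propagates to arbitrarily large $t$.

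The two smallest targets $t=13,14$, and the whole of the classes $t\equiv0,3\pmod4$, form the delicate part: here the naive masters $36^u$ and $36^u27^1$ force a deficient length-$8$ (resp.\ length-$35$) terminal and lose one codeword. I would anchor them with the tailor-made GDCs of Lemmas~\ref{GDC-27^t 9^1} and~\ref{GDC-27^t 18^1}: the masters $27^j9^1$ ($j=4,5,6$) give the targets $t=13,16,19$ and $27^j18^1$ ($j=4,6$) give $t=14,20$, each completed through a reserved group of size $9$ or $18$ by an optimal length-$17$ or length-$26$ code. Feeding $x=16,19,20$ into the induction above then reaches all large $t\equiv0,3\pmod4$, while the mid-range values are produced, exactly as for $n\equiv7\pmod9$, by the Fundamental Construction (Construction~\ref{FundCtr}) with weight $9$ applied to transversal designs of Theorem~\ref{TD} truncated so that one group retains only $1$ or $2$ points; with the $9^k$-codes of Theorem~\ref{9^t} as block inputs this yields masters $27^{a}36^{b}45^{c}54^{d}(9\epsilon)^1$, $\epsilon\in\{1,2\}$, carrying exactly the reserved group needed.

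The hard part is the combinatorial bookkeeping, not the idea: I must show that for every $t\ge13$ outside $\{15,23,28\}$ an admissible master exists — one carrying a reserved group of size $9$ or $18$, equivalently a decomposition $t=\sum_i s_i+\epsilon$ with $s_i\in\{3,4,5,6\}$ and $\epsilon\in\{1,2\}$ realized by a GDD with blocks of size at least $4$ — and that the only gaps are the three stated. These three are exactly where the reserved-group route collapses onto a design that has not been (and, for $15$, cannot be) built: $t=28$ would require $27^99^1$, $t=23$ would require $27^718^1$, and $t=15$ admits no suitable small master at all, so each is thrown back on the length-$8$ terminal and reaches only $U-1$. Pinning down the explicit $(a,b,c,d,\epsilon)$-data across the mid-range, and confirming that these are the sole obstructions, is where the bulk of the effort will lie.
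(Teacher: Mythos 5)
Your skeleton is the same as the paper's: master GDCs with all group sizes in $\{27,36,45,54\}$ plus one reserved group of size $9$ or $18$ (or an already-settled larger size), eight adjoined ideal points, groups absorbed by the $1^{9s}8^1$ codes of Lemma~\ref{GDC-1^9t 8^1}, and the reserved group completed by an optimal code of length $17$ or $26$. Your anchors $t=13,16,19$ via $27^j9^1$ and $t=14,20$ via $27^j18^1$ are exactly the paper's, and your size identity ($1^{9s}8^1$ has size $U(9s+8,6,[3,1])-5$, so the total meets $U$ iff the terminal code meets its own bound) is the correct bookkeeping that explains why a bare length-$8$ terminal loses a codeword.

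There are, however, two concrete problems. First, your claim that $t\equiv 2\pmod 4$, $t\ge 18$ follows from masters $36^u18^1$ fails at $t=26$: Lemma~\ref{GDCa} excludes $u=6$, and Lemma~\ref{GDCb} only supplies $36^6(9x)^1$ for $x\in\{0,1,3\}$, so $36^618^1$ is not available. The paper covers $t=26$ (and $21\le t\le 27$ generally) with a truncated ${\rm TD}(6,5)$ giving $45^4(9x)^1(9y)^1$, $y\in\{1,2\}$. Second, and more seriously, the whole mid-range for $t\equiv 0,3\pmod 4$ — your induction with $x\in\{16,19,20\}$ only reaches $t\ge 80$ (resp.\ $t\ge 95$), leaving roughly $21\le t\le 91$ to the unexecuted ``bookkeeping'' — is precisely where the paper does its real work, via explicit truncations of ${\rm TD}(7,u)$, $u\in\{7,8,9,11,12,13\}$, into $\{7,u\}$-GDDs of type $6^u(u-1)^1$ yielding masters $54^{u-2}(9x)^1(9y)^1(9z)^1$ with $z\in\{1,2\}$, plus ${\rm TD}(6,u)$ truncations for $t\in\{83,87,91\}$ and $36^u(9x)^1$ patches for $t\in\{33,57\}$. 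Asserting that suitable decompositions exist and that $\{15,23,28\}$ are the only obstructions is not a proof of it; your stated reasons for the three exceptions (e.g.\ that $t=28$ ``would require'' $27^99^1$) are also not established, since other masters could in principle serve. Until the admissible $(a,b,c,d,\epsilon)$-data are exhibited for every $t$ in range, the lemma is not proved.
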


\begin{proof}
For each $t\in\{13,16,19\}$, take a $[3,1]$-GDC$(6)$ of type $27^s 9^1$ with $s\in\{4,5,6\}$ from Lemma~\ref{GDC-27^t 9^1} and adjoin eight ideal points. Then fill in the groups together with the ideal points with $[3,1]$-GDC$(6)$s of type $1^{27} 8^1$ and an optimal code of length 17 to obtain the desired code. For each $t\in\{14,20\}$, take a $[3,1]$-GDC$(6)$ of type $27^s 18^1$ with $s\in\{4,6\}$ from Lemma~\ref{GDC-27^t 18^1} and adjoin eight ideal points. Then fill in the groups with $[3,1]$-GDC$(6)$s of type $1^{27} 8^1$ and an optimal code of length 26 to obtain the desired code.
For $t\in \{17,18\}$, take a $[3,1]$-GDC$(6)$ of type $36^4 9^1$ or type $36^4 18^1$ from Lemma~\ref{GDCa}, adjoin eight ideal points and fill in the groups.

For $t\in \{21,22,24,25,26,27\}$, take a TD$(6,5)$ from Theorem~\ref{TD} and apply the Fundamental
Construction to this TD, assigning weight $9$ to the points in the first four groups and weights $0$ or $9$ to the remaining points. Thus we can get a $[3,1]$-GDC$(6)$ of type $45^4 (9x)^1 (9y)^1$ with $x\in\{0,3,4,5\}$ and $y\in\{1,2\}$. Then adjoin eight ideal points and fill in the groups. For $t\in \{29,30\}$, take a $[3,1]$-GDC$(6)$ of type $36^7 9^1$ or type $36^7 18^1$ from Lemma~\ref{GDCa}, adjoin eight ideal points and fill in the groups.

For $31\leq t \leq 80$ and $t\not \in \{33,57\}$, take a TD$(7,u)$ with $u\in \{7,8,9,11,12,13\}$ from Theorem~\ref{TD}, and remove one point from one group to obtain a $\{7,u\}$-GDD of type $6^{u} (u-1)^1$. Apply the Fundamental Construction to this GDD, assigning  weight $9$ to the points in the first $u-2$ groups of size $6$ and weights $0$ or $9$ to the remaining points. We get a $[3,1]$-GDC$(6)$ of type $54^{u-2} (9x)^1 (9y)^1 (9z)^1$ with $x,y\in\{0,3,4,5,6\}$ and $z\in\{1,2\}$. Then adjoin eight ideal points and fill in the groups. For $t\in \{33,57\}$, take a $[3,1]$-GDC$(6)$ of type $36^8 9^1$ or type $36^{14} 9^1$ from Lemma~\ref{GDCa}, adjoin eight ideal points and fill in the groups.

Finally, for $t\geq 80$ and $t\not \in \{83, 87, 91\}$, take a $[3,1]$-GDC$(6)$ of type $36^u (9x)^1$ with $u\geq 16$ and $x\in\{1,2,16\}$ or $u\geq 19$ and $x=19$ from Lemma~\ref{GDCa}. Then adjoin eight ideal points and fill in the groups. For each $t \in \{83, 87, 91\}$, take a TD$(6,u)$ with $u=15$, $17$ or $19$ from Theorem~\ref{TD}, and remove one point from one group to obtain a $\{6,u\}$-GDD of type $5^{u} (u-1)^1$. Apply the Fundamental Construction to this GDD, assigning  weight $9$ to the points in the first $u-1$ groups of size $5$ and weights $0$ or $9$ to the remaining points. Thus we can get a $[3,1]$-GDC$(6)$ of type $45^{14} 117^1$, type $45^{17} 18^1$ or type $45^{18} 9^1$. Then adjoin eight ideal points and fill in the groups to complete the proof.
\end{proof}

Summarizing the above results, we have:

\begin{theorem}
\label{8(9)} $A_3(8,6,[3,1])=4$, $A_3(9t+8,6,[3,1])=U(9t+8,6,[3,1])$ for each
$t\geq 1$ and $t\not\in [3,12] \cup \{15,23,28\}$.
\end{theorem}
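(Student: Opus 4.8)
The plan is to recognize that this statement is a direct consolidation of the two construction lemmas that immediately precede it, namely Lemma~\ref{s8(9)a} and Lemma~\ref{s8(9)b}, together with the upper bound already encoded in the notation $U(9t+8,6,[3,1])$. No new construction is required; the proof amounts to bookkeeping of which values of $t$ each lemma covers, and verifying that their union is exactly the claimed range.

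First I would dispatch the small cases. Lemma~\ref{s8(9)a} establishes $A_3(8,6,[3,1])=4$ (the case $t=0$) and $A_3(9t+8,6,[3,1])=U(9t+8,6,[3,1])$ for $t\in\{1,2\}$, in each instance by exhibiting an explicit optimal code whose size meets the bound. These handle precisely $t\in\{0,1,2\}$. Next I would invoke Lemma~\ref{s8(9)b}, which establishes the equality $A_3(9t+8,6,[3,1])=U(9t+8,6,[3,1])$ for every $t\geq 13$ with $t\notin\{15,23,28\}$ via the recursive machinery (the Fundamental Construction applied to transversal designs and truncated GDDs, together with the ingredient $[3,1]$-GDC$(6)$s of types $1^{27}8^1$ and $1^{9t}8^1$ and the small optimal codes).

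Taking the union of the two covered ranges, $\{0,1,2\}$ and $\{t\geq 13\}\setminus\{15,23,28\}$, I would observe that the gap between them is exactly $\{3,4,\dots,12\}=[3,12]$ together with the three sporadic values $\{15,23,28\}$. Thus the combined range is $\{t\geq 0\}\setminus\bigl([3,12]\cup\{15,23,28\}\bigr)$, which is precisely the set stated in the theorem. Consequently the two lemmas together yield the full statement, and the proof closes by citing them.

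Since all of the genuine work has already been carried out in the lemmas, there is no substantive obstacle at this stage; the only point demanding care is the arithmetic of the exceptional set. I would double-check that the endpoints line up, in particular that $t=12$ is genuinely uncovered (it lies below the threshold $t\geq 13$ of Lemma~\ref{s8(9)b} and above the reach $t\leq 2$ of Lemma~\ref{s8(9)a}) and that none of the values in $[3,12]$ is inadvertently settled elsewhere, so that the exception list $[3,12]\cup\{15,23,28\}$ is neither larger nor smaller than the actual gap.
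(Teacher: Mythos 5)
Your proposal is correct and matches the paper exactly: the theorem is stated there with only the remark ``Summarizing the above results,'' i.e.\ it is precisely the union of Lemma~\ref{s8(9)a} (covering $t\in\{0,1,2\}$, with $t=0$ giving the sporadic value $A_3(8,6,[3,1])=4$ below the Johnson bound) and Lemma~\ref{s8(9)b} (covering $t\geq 13$, $t\notin\{15,23,28\}$), whose complementary gap is exactly $[3,12]\cup\{15,23,28\}$.
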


\section{Conclusion}

In this paper, we determine almost completely the spectrum of
sizes for optimal ternary constant-composition codes with
weight four and minimum distance six. We summarize our
main results of this paper as follows:

\begin{theorem} For any integer $n \ge 4$, let $Q=\{13, 16, 22, 59, 65, 71, 76, 88, 94, 124\}$,  $Q_1=\{14, 23, 29, 35, 41, 47, 53, 83\} \cup \{n:  95 \leq n \leq 323, n\equiv 11, 17, 23 \pmod{24}\} \cup \{347, 353, 359,$ $371, 377\}$, $Q_2=\{17, 89\}$. Then
\begin{equation*}
A_3(n,6,[2,2])=\begin{cases}
1,&\text{if $n\in \{4,5\}$}; \\
3,&\text{if $n=7$}; \\
5,&\text{if $n=8$}; \\
15, &\text{if $n=11$}; \\
\left\lfloor \frac{n}{2} \left\lfloor \frac{n-1}{3}\right\rfloor \right\rfloor , & \text{if  $n\ge 6$ and $n \not\in \{7,8,11\}\cup Q\cup Q_1 \cup Q_2$.} \\
\end{cases}\\
\end{equation*}
Furthermore, we have
\begin{enumerate}
\item $\left\lfloor \frac{n}{2} \left\lfloor \frac{n-1}{3}\right\rfloor \right\rfloor-1 \le A_3(n,6,[2,2]) \le \left\lfloor \frac{n}{2} \left\lfloor \frac{n-1}{3}\right\rfloor \right\rfloor$ when $n\in Q_1$;
\item $\left\lfloor \frac{n}{2} \left\lfloor \frac{n-1}{3}\right\rfloor \right\rfloor-2 \le A_3(n,6,[2,2]) \le \left\lfloor \frac{n}{2} \left\lfloor \frac{n-1}{3}\right\rfloor \right\rfloor$ when $n\in Q_2$.
\end{enumerate}
\end{theorem}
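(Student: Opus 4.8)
The plan is to assemble the final statement purely by synthesizing the per-residue results already proved in this section, namely Theorems~\ref{6t+1}, \ref{6t}, \ref{6t+2}, the (unlabelled) theorem treating the class $n\equiv 5\pmod 6$, Theorem~\ref{6t+4}, and Theorem~\ref{s6t+3}, together with the uniform upper bound $A_3(n,6,[2,2])\le U(n,6,[2,2])=\lfloor \tfrac n2\lfloor\tfrac{n-1}{3}\rfloor\rfloor$ from the Corollary. First I would record the five genuinely small lengths $n\in\{4,5,7,8,11\}$, for which $A_3$ is strictly below $U$ yet has been pinned down exactly ($1,1,3,5,15$) in Table~\ref{presult}, Lemma~\ref{s6t+5}, and Theorems~\ref{6t+1} and~\ref{6t+2}; these account for the separate branches of the case distinction and for the excluded set $\{7,8,11\}$ (note $n=4,5$ are handled by their own value $1$).

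Next I would partition the remaining lengths $n\ge 6$ by residue mod $6$ and, in each class, quote the corresponding theorem to obtain $A_3(n,6,[2,2])=U(n,6,[2,2])$ off a finite exceptional list. The classes $n\equiv 0,3\pmod 6$ (Theorems~\ref{6t} and~\ref{s6t+3}) contribute no exceptions. The class $n\equiv 1\pmod 6$ (Theorem~\ref{6t+1}) is exact for all $n\ge 19$, leaving only $n=13$, where Theorem~\ref{6t+1} gives merely $A_3(13,6,[2,2])\ge 21<26=U(13,6,[2,2])$; as the gap exceeds $2$, the length $13$ is placed in the fully undetermined set $Q$. The class $n\equiv 2\pmod 6$ (Theorem~\ref{6t+2}) is exact for $n\ge 20$, and the residual length $n=14$ satisfies $A_3(14,6,[2,2])\ge 27=U(14,6,[2,2])-1$, so $14$ enters $Q_1$. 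The class $n\equiv 4\pmod 6$ (Theorem~\ref{6t+4}) is exact except for $t\in\{2,3,12,14,15,20\}$, i.e.\ the lengths $16,22,76,88,94,124$, which carry no useful lower bound and thus join $Q$.

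The delicate step is the class $n\equiv 5\pmod 6$, whose governing theorem is stated in the index $t$ with $n=6t+5$ and sets $Q^{(5)},Q_1^{(5)},Q_2^{(5)}$. I would translate each index set into a length set: $t\in Q^{(5)}=\{9,10,11\}$ yields the undetermined lengths $59,65,71\in Q$; $t\in Q_2^{(5)}=\{2,14\}$ yields $\{17,89\}=Q_2$ with the weaker bound $U-2\le A_3\le U$; and $t\in Q_1^{(5)}$ yields the one-off gap lengths with $U-1\le A_3\le U$. The only genuine arithmetic here is that the infinite family $\{t:15\le t\le 53,\ t\not\equiv 0\pmod 4\}$ maps, via $n=6t+5$, exactly onto $\{n:95\le n\le 323,\ n\equiv 11,17,23\pmod{24}\}$: since $6t\equiv 6,12,18\pmod{24}$ precisely when $t\equiv 1,2,3\pmod 4$, one gets $n\equiv 11,17,23\pmod{24}$, with the endpoints $t=15,53$ giving $n=95,323$; the isolated indices $t\in\{3,\dots,8,13,57,58,59,61,62\}$ give the finite lengths $23,29,35,41,47,53,83,347,353,359,371,377$ of $Q_1$.

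Finally I would verify that the union of all exceptional lengths collected above is exactly $Q\cup Q_1\cup Q_2$ as defined, that the displayed inequalities (1) and (2) coincide with the bounds attached to $Q_1$ and $Q_2$ in the source theorems, and that every $n\ge 6$ outside $\{7,8,11\}\cup Q\cup Q_1\cup Q_2$ does fall into a residue class where exact equality was established. The main obstacle is not any individual construction but this global bookkeeping: ensuring no length is double-counted or dropped, and that the $t$-indexed and $n$-indexed exceptional families are reconciled without error, particularly in reconciling the congruence description of the large $Q_1$ block.
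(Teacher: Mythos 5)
Your proposal is correct and follows exactly the route the paper intends: the paper offers no explicit proof of this summary theorem, and it is obtained precisely by compiling Theorems~\ref{6t+1}, \ref{6t}, \ref{6t+2}, \ref{6t+4}, \ref{s6t+3}, the unlabelled theorem for $n\equiv 5\pmod 6$, and the small values from Table~\ref{presult}, with the only nontrivial work being the reindexing from $t$ to $n=6t+5$ that you carry out (and verify correctly, including the map of $\{15\le t\le 53,\ t\not\equiv 0\pmod 4\}$ onto $\{95\le n\le 323,\ n\equiv 11,17,23\pmod{24}\}$). Your bookkeeping of $Q$, $Q_1$, $Q_2$ agrees with the statement, so nothing further is needed.
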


\begin{theorem} For any integer $n \ge 4$, write $n=9t+i$ with $0\leq i<9$. Then we have:
\begin{enumerate}
\item $i=0$: $A_3(9t,6,[3,1])= 9t^2-3t$ for each $t\geq 1$;
\item $i=1$: $A_3(9t+1,6,[3,1])= 9t^2+t$ for each $t\geq 1,$ except possibly for $t=2$;
\item $i=2$: $A_3(9t+2,6,[3,1])= 9t^2+2t$ for each $t\geq 1$;
\item $i=3$: $A_3(9t+3,6,[3,1])= 9t^2+3t$ for each $t\geq 1$;
\item $i=4$: $A_3(9t+4,6,[3,1])= 9t^2+6t+1+\lfloor\frac{t}{4}\rfloor$ for each $t\geq 0$ and $t\not\in \{1,5,6,7,9,10,11,$ $13,14,15,21,25,26,29,37,41,45,49,53,57,61,65,69,73,77,81\}$;
\item $i=5$: $A_3(9t+5,6,[3,1])= 9t^2+7t+1+\lfloor\frac{t+1}{4}\rfloor$ for each $t\geq 0$ and $t\not\in [5,15]\cup \{26\}$;
\item $i=6$: $A_3(9t+6,6,[3,1])= 9t^2+9t+2$ for each $t\geq 0$;
\item $i=7$: $A_3(9t+7,6,[3,1])= 9t^2+11t+3+\lfloor\frac{t+1}{2}\rfloor$ for each $t\geq 1$ and $t\not\in \{4,5\}\cup [7,16] \cup \{20,26,28\}$;
\item $i=8$: $A_3(9t+8,6,[3,1])= 9t^2+14t+5$ for each $t\geq 1$ and $t\not\in [3,12] \cup \{15,23,28\}$.
\end{enumerate}
\end{theorem}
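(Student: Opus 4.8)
The plan is to read this theorem as a bookkeeping statement rather than a new result: its nine items are precisely the nine residue-class theorems of Section~4, with each occurrence of the upper bound $U(9t+i,6,[3,1])$ replaced by a closed-form expression. Consequently the only genuine work is (i) quoting the correct existence theorem for each residue $i$, and (ii) evaluating $U(9t+i,6,[3,1])$ explicitly; no new combinatorial construction is needed, since every component theorem already asserts the \emph{equality} $A_3(9t+i,6,[3,1])=U(9t+i,6,[3,1])$ and hence supplies both the matching upper bound and the optimal codes.

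For the three exceptional residues $i\in\{4,5,7\}$, step (ii) is immediate: the quantities $9t^2+6t+1+\lfloor t/4\rfloor$, $9t^2+7t+1+\lfloor (t+1)/4\rfloor$ and $9t^2+11t+3+\lfloor (t+1)/2\rfloor$ are, by the very definitions fixed in Section~2, the values $U(9t+4,6,[3,1])$, $U(9t+5,6,[3,1])$ and $U(9t+7,6,[3,1])$, so items 5, 6 and 8 are just Theorems~\ref{4(9)}, \ref{5(9)} and~\ref{7(9)} transcribed. For the six ordinary residues $i\in\{0,1,2,3,6,8\}$ I would expand $U(n,6,[3,1])=\lfloor \tfrac{n}{3}\lfloor\tfrac{n-1}{3}\rfloor\rfloor$ by a short floor computation. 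Writing $n=9t+i$, one finds $\lfloor (n-1)/3\rfloor=3t-1,\,3t,\,3t,\,3t,\,3t+1,\,3t+2$ for $i=0,1,2,3,6,8$ respectively, and multiplying by $n/3$ lands on an integer in every case except $i=8$: the products $\tfrac{n}{3}\lfloor (n-1)/3\rfloor$ equal $9t^2-3t,\,9t^2+t,\,9t^2+2t,\,9t^2+3t,\,9t^2+9t+2$ for $i=0,1,2,3,6$, while for $i=8$ one gets $\tfrac{9t+8}{3}(3t+2)=9t^2+14t+\tfrac{16}{3}$, whose floor is $9t^2+14t+5$. These agree with items 1--4, 7 and 9.

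Each item then follows by pairing the evaluated bound with its source theorem: items 1--4 (including the caveat ``except possibly for $t=2$'' in item 2) come from the theorem settling $n\equiv 0,1,2,3\pmod 9$, item 7 from the theorem settling $n\equiv 6\pmod 9$, and items 5, 6, 8 and 9 from Theorems~\ref{4(9)}, \ref{5(9)}, \ref{7(9)} and~\ref{8(9)}; in each case the excluded set of $t$-values is copied verbatim from the source theorem. I expect the main ``obstacle'' to be clerical rather than mathematical: one must confirm the floor arithmetic (only $i=8$ is non-degenerate, the fractional part $16/3$ there accounting for the additive constant $+5$) and must check that the list of undetermined lengths in each item reproduces exactly the exception set of the theorem from which it is drawn, so that no case is silently dropped or double-counted.
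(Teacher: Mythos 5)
Your proposal is correct and matches what the paper does: the theorem is stated as a summary of the residue-class theorems of Section~4, with $U(9t+i,6,[3,1])$ evaluated in closed form, and your floor computations (including the only nontrivial case $i=8$, where $\tfrac{(9t+8)(3t+2)}{3}=9t^2+14t+\tfrac{16}{3}$ rounds down to $9t^2+14t+5$) and your matching of exception sets to the source theorems are all accurate.
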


\subsection*{Acknowledgements}
Hui Zhang is now working as a postdoctoral fellow  with Division~of~Mathematical Sciences, School of Physical and Mathematical Sciences, Nanyang~Technological~University, Singapore~637371, Singapore.

\bibliographystyle{IEEEtran}

\end{document}